\newtheorem{fact}[theorem]{Fact}
\newcommand{\spant}{\operatorname{span}}
\newrobustcmd*{\citeallauthors}{%
  \AtNextCite{\AtEachCitekey{\defcounter{maxnames}{999}}}%
  \citeauthor}
\newcommand{\ks}{\mathsf{ks}}
\date{}
\begin{document}

\sloppy

\title{Lower Bounds against the Ideal Proof System in Finite Fields\thanks{%
This project was supported by the Engineering and Physical Sciences Research
Council (grant EP/Z534158/1) and the
European Research Council (ERC) under the European Unions Horizon 2020 research and innovation programme (grant agreement 101002742; EPRICOT project).%
    }%
}


\renewcommand\Authsep{\qquad }
\renewcommand\Authand{ }
\renewcommand\Authands{\qquad }
 
\author{Tal Elbaz\thanks{E-mail: t.elbaz22@imperial.ac.uk}}
\author{Nashlen Govindasamy\thanks{E-mail: nashlen.govindasamy@gmail.com}}
\author{Jiaqi Lu\thanks{E-mail: jiaqi.lu23@imperial.ac.uk}}
\author{Iddo Tzameret\thanks{E-mail: iddo.tzameret@gmail.com. \url{https://www.doc.ic.ac.uk/\~itzamere/}}}
\affil{Imperial College London\\ {\normalsize Department of Computing}}

\maketitle

\begin{abstract}
Lower bounds against strong algebraic proof systems and specifically fragments of the Ideal Proof System ($\IPS$), have been obtained in an ongoing line of work.
All of these bounds, however, are proved only over large (or characteristic $0$) fields,\footnote{Except for the \emph{placeholder} lower bound model, where the instance itself lacks small circuits \cite{HLT24}.}
yet finite fields are the more natural setting for propositional proof complexity, especially for progress toward lower bounds for Frege systems such as $\ACZ[p]$-Frege.
This work establishes lower bounds against fragments of $\IPS$ over fixed finite fields.
Specifically, we show that a variant of the knapsack instance studied by \citeallauthors{GHT22} (\begin{footnotesize}FOCS\end{footnotesize}'22) has no polynomial-size $\IPS$ refutation over finite fields when the refutation is multilinear and written as a constant-depth circuit.
The key ingredient of our argument is the recent set-multilinearization result of \citeallauthors{forbes24} (\begin{footnotesize}CCC\end{footnotesize}'24), which extends the earlier result of \citeallauthors{LST21} (\begin{footnotesize}FOCS\end{footnotesize}'21) to all fields, and an extension of the techniques of \citeallauthors{GHT22} to finite fields.
%
We also separate this proof system from the one studied by \citeallauthors{GHT22}.

In addition, we present new lower bounds for read-once algebraic branching program refutations, roABP-$\IPS$, in finite fields, extending results of \citeallauthors{FSTW21} (\begin{footnotesize}Theor.~of~Comput.\end{footnotesize}'21) and \citeallauthors{HLT24}~(\begin{footnotesize}STOC'24\end{footnotesize}).

Finally, we show that any lower bound against any proof system at least as strong as (non-multilinear) constant-depth $\IPS$ over finite fields for \emph{any} instance, even a purely algebraic instance (i.e., not a translation of a Boolean formula or CNF), implies a hard \emph{CNF formula} for the respective $\IPS$ fragment, and hence an $\ACZ[p]$-Frege lower bound by known simulations over finite fields (\citeallauthors{GP14}~(\begin{footnotesize}J.~ACM\end{footnotesize}'18)). 
\end{abstract}

\noindent\textit{Note on independent concurrent work:}
Independently and concurrently with our work, \citeallauthors{BLRS25} \cite{BLRS25} using different arguments, obtained related results for fragments of $\IPS$ over fields of positive characteristic.
Both works establish a lower bound for \emph{constant-depth multilinear $\IPS$} but the field assumptions differ:
\cite{BLRS25} requires the size of the field to grow with the instance, whereas our lower bound holds for any field of constant (or small) positive characteristic.
Hence, in the constant positive characteristic setting our constant-depth multilinear IPS lower bound strictly subsumes theirs as it also holds over any fixed finite field.
We provide a detailed comparison of these and, where relevant, other results of \cite{BLRS25} in \Cref{sec:results-and-technique}.








\section{Introduction}


This work investigates lower bounds against the Ideal Proof System ($\IPS$) over finite fields, motivated by two main considerations. First, existing lower bounds for $\IPS$ have not adequately addressed the case of finite fields. Second, focusing on finite fields---rather than large fields---offers a more natural setting for tackling a central open problem in proof complexity: proving super-polynomial lower bounds against $\ACZ[p]$-Frege.

\subsection{Algebraic Proof Complexity}

Proof complexity studies the size of proofs that certify membership in languages such as UNSAT, the set of unsatisfiable Boolean formulas.
In this context, a proof is a witness that can be verified efficiently, and for UNSAT, such a proof is typically called a refutation.
A central objective of the field is to establish lower bounds against increasingly powerful proof systems, with the overarching goal of demonstrating that no proof system admits polynomial-size refutations for all unsatisfiable formulas.
This approach is often referred to as \emph{Cook's Programme}, following Stephen Cook's suggestion in the 1970s that proof complexity lower bounds could yield insights into fundamental questions in computational complexity, such as the \P\ versus \NP\ problem.
In particular, showing that no proof system can efficiently refute all unsatisfiable formulas would separate \NP\ from \coNP\, and thereby separate \P\ from \NP.

An important thread of proof complexity is to investigate \emph{algebraic proof systems}, which certify that a given set of multivariate polynomials over a field has no common Boolean solution.
Some of the foundational proof systems in this line are the Polynomial Calculus (PC) \cite{CEI96} and its `static' variant, Nullstellensatz \cite{BeameIKPP96}.
In PC, proofs proceed by algebraic manipulation, adding and multiplying polynomials, until deriving the contradiction $1 = 0$.
Contrastingly, in Nullstellensatz,  a proof of the unsatisfiability of a set of axioms, written as polynomial equations $\{f_i(\vx)=0\}$ over a field, is a \emph{single} polynomial identity expressing $1$ as a combination of the axioms, that is:
\begin{equation}
\label{eq:NS-first}
\sum_i g_i(\vx)\cdot f_i(\vx) = 1\,,
\end{equation}
for some polynomials $\{g_i(\vx)\}$.
These systems measure proof size by sparsity, defined as the total number of monomials involved, which makes them comparatively weak.
An alternative way to measure proof size is by  algebraic circuit size. This was suggested initially by Pitassi \cite{Pit97,Pit98}, and further investigated in the work of Grigoriev and Hirsch \cite{GH03} and subsequently Raz and Tzameret \cite{RT06,RT07}, eventually leading to the Ideal Proof System \cite{GP14} described in what follows.

\subsection{Ideal Proof System}
The Ideal Proof System ($\IPS$, for short; \Cref{def:IPS}), introduced by Grochow and Pitassi \cite{GP14}, loosely speaking is the Nullstellensatz proof system where the polynomials $g_i(\bar{x})$ in \Cref{eq:NS-first} are represented by algebraic circuits. Formally, Forbes, Shpilka, Tzameret and Wigderson~\cite{FSTW21} showed that $\IPS$ is equivalent to Nullstellensatz in which the polynomials $g_i$ in Equation \Cref{eq:NS-first} are written as algebraic circuits. In other words, an $\IPS$ refutation of the set of axioms $\left\{f_i(\bar{x})=0\right\}_i$ can be defined similarly to Equation \Cref{eq:NS-first} (here we display explicitly the Boolean axioms $x_j^2-x_j$):
\begin{equation}
\label{eq:intro:IPS}
\sum_i g_i(\bar{x}) \cdot f_i(\bar{x})+\sum_j h_j(\bar{x}) \cdot\left(x_j^2-x_j\right)=1,
\end{equation}
for some polynomials $\left\{g_i(\bar{x})\right\}_i$, where we think of the polynomials $g_i, h_j$ written as algebraic circuits (instead of e.g., counting the number of monomials they have towards the size of the refutation). Thus, the size of the $\IPS$ refutation in Equation (0.2) is $\sum_i \operatorname{size}\left(g_i(\bar{x})\right)+\sum_j \operatorname{size}\left(h_j(\bar{x})\right)$, where $\operatorname{size}(g)$ stands for the (minimal) size of an algebraic circuit computing the polynomial $g$.

When considering algebraic circuit classes weaker than general algebraic circuits, one has to be a bit careful with the definition of $\IPS$. For technical reasons the formalization in \Cref{eq:intro:IPS} does not capture the precise definition of $\IPS$ restricted to the relevant circuit class, rather the fragment which is denoted by $\mathcal{C}$-$\lIPS$ (``LIN" here stands 
for the linearity of the axioms $f_i$ and the Boolean axioms; that is, they appear with power 1). In this work, we focus on $\mathcal{C}$-$\lIPS$ and a similar stronger variant denoted $\mathcal{C}$-$\lbIPS$. 
Throughout the introduction, refutations in the system $\mathcal{C}$-$\lIPS$ are defined as in Equation \Cref{eq:intro:IPS} where the polynomials $g_i, h_j$ are written as circuits in the circuit class $\mathcal{C}$.

Technically, our lower bounds are proved by lower bounding the algebraic circuit size of the $g_i$'s in \Cref{eq:intro:IPS}, namely the products of the axioms $f_i$, and not the products of the Boolean axioms (that is, we ignore the circuit size of the $h_i$'s). For this reason, our lower bounds are slightly stronger than lower bounds on $\mathcal{C}$-$\lIPS$, rather they are lower bounds on the system denoted $\mathcal{C}$-$\lbIPS$ (see \Cref{def:IPS}).


\paragraph{Lower bounds methods and known results.}

\citeallauthors{FSTW21} \cite{FSTW21} introduced two approaches for turning algebraic circuit lower bounds into lower bounds for $\IPS$: the \emph{functional lower bound} method and the \emph{lower bounds for multiples} method.
Of the two, the functional approach has proved more instrumental, proving several concrete proof complexity lower bounds against fragments of $\IPS$.
These include lower bounds for variants of the subset-sum instance against $\IPS$ refutations written as read-once (oblivious) algebraic branching programs (roABPs), depth-$3$ powering formulas, and multilinear formulas \cite{FSTW21}.
A similar method underpinned the \emph{conditional} lower bound against general $\IPS$ established by \citeallauthors{AGHT20} \cite{AGHT20} (leading to \cite{Ale21}).
\citeallauthors{GHT22} \cite{GHT22} combined the functional method with the constant-depth algebraic circuit lower bound result of \citeallauthors{LST21} \cite{LST21}, obtaining constant-depth multilinear $\IPS$ lower bounds.

By contrast, the multiples method has so far matched the functional method only within the weaker \emph{placeholder} model of $\IPS$, where the hard instances themselves do not have small circuits in the fragment under study \cite{FSTW21, AF22}.
Other approaches have emerged as well: the \emph{meta-complexity} approach of \citeallauthors{ST25} \cite{ST25}, which obtains a \emph{conditional} $\IPS$ size lower bound on a self-referential statement; the \emph{noncommutative} approach of \citeallauthors{LTW18} \cite{LTW18} (building on \cite{Tza11-I&C}, which reduced Frege lower bounds to matrix-rank lower bounds but has yet to yield concrete lower bounds; and recent lower bounds against \emph{PC with extension variables} over finite fields of \citeallauthors{IMP23_ccc} \cite{IMP23_ccc} (building on \cite{Sok20} and improved by \cite{DMM24_sat}) which can be considered as a fragment of $\IPS$ sitting between depth-$2$ and depth-$3$.

The functional lower bound method was further investigated by \citeallauthors{HLT24} \cite{HLT24}.
There, Nullstellensatz degree lower bounds for symmetric instances and vector invariant polynomials were established, which were then lifted to yield $\IPS$ size lower bounds for the roABP and multilinear formula fragments of $\IPS$.
With invariant polynomials, the bounds hold over \emph{finite fields}, though within the \emph{placeholder} model.
Building on recent advances in constant-depth algebraic circuit lower bounds from \cite{AGK+23}, they extend \cite{GHT22} to constant-depth $\IPS$ refutations computing polynomials with $O(\log \log n)$ individual degree.
Finally, they observe a barrier in that the functional method cannot yield lower bounds for any Boolean instance against sufficiently strong proof systems like constant-depth $\IPS$. 

\subsection{IPS over Finite Fields}

$\IPS$ lower bounds have so far been obtained almost exclusively over fields of large characteristic.
Finite fields, however, are a more natural setting for propositional proof complexity, particularly for the long-standing open problem of establishing super-polynomial lower bounds for $\ACZ[p]$-Frege.
This proof system operates with constant-depth propositional formulas equipped with modulo $p$ counting gates, where $p$ is a prime.
\citeallauthors{GP14} \cite{GP14} showed that constant-depth $\IPS$ refutations over $\FF_p$ simulate $\ACZ[p]$-Frege, thus obtaining lower bounds against $\IPS$ over finite fields provides a concrete approach to settle the problem of obtaining lower bounds against $\ACZ[p]$-Frege.
Although lower bounds against $\ACZ[p]$-Frege are sometimes thought to be within reach of current techniques, especially given existing lower bounds against both $\ACZ$-Frege and $\ACZ[p]$ circuits, this problem and the problem of obtaining lower bounds against constant-depth $\IPS$ over $\FF_p$ remain elusive.

$\IPS$ lower bounds over finite fields face additional challenges that are not present in the characteristic $0$ setting. 
A recurring obstacle is provided by Fermat's little theorem:
for a nonzero $a \in \FF_p$, $a^{p-1} = 1$ in $\FF_p$.
More generally, if $\FF$ is a finite field of size $q$, then for a nonzero $a \in \FF$, $a^{q-1} = 1$.
Hence, if a polynomial $f \in \FF[\vx]$ admits no satisfying Boolean assignment, then $(f(\vx))^{(q-2)} f(\vx) = (f(\vx))^{(q-1)} = 1$ over Boolean assignments.
The functional lower bound method of \cite{FSTW21} requires a lower bound on the size of circuits computing $g(\vx)$ such that $g(\vx) f(\vx) = 1$ over Boolean assignments, hence requires a lower bound on the size of $(f(\vx))^{(q-2)}$.
Thus we must simultaneously ensure that the hard instance $f(\vx)$ is easily computed by the subsystem of $\IPS$ under consideration while $(f(\vx))^{(q-2)}$ is hard in that same subsystem.
While this is not possible for proof systems closed under constant multiplication of polynomials, including certain constant-depth $\IPS$ subsystems (for example the one studied in \cite{HLT24} which considered $\log\log n$ individual degree refutations), for the multilinear constant-depth $\IPS$ subsystem considered in \cite{GHT22}, it is indeed possible, even for constant $q$.

\subsection{Our Results}
\label{sec:results-and-technique}
\subsubsection{Bounds for Constant-depth IPS over Finite Fields}
Our first contribution establishes a super-polynomial lower bound for constant-depth $\lbIPS$ refutations \emph{over finite fields}. 
As mentioned above, $\lbIPS$ is the Nullstellensatz proof system \Cref{eq:intro:IPS} whose refutations are algebraic circuits (see \Cref{def:IPS}).
This result is the finite field analogue of \cite{GHT22}, which was proved over characteristic $0$ fields.
Our hard instance is the knapsack mod $p$ polynomial $\ks_{w,p}$, a variant of the knapsack polynomial $\ks_w$ used in their work.
\begin{theorem}[Informal; see \Cref{thm:ips-modp-main}]
\label{thm:ips-modp-main-intro}
    Let $p \geq 5$ be a prime, and let $\FF$ be a field of characteristic $p$.
    Every constant-depth multilinear \lbIPS\ refutation over $\FF$ of the knapsack mod $p$ instance $\ks_{w,p}$ requires super-polynomial size.
\end{theorem}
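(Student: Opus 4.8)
The plan is to follow the functional lower bound method of \cite{FSTW21}, adapted to constant-depth multilinear circuits over $\FF$ of characteristic $p$, combined with the set-multilinearization result of \cite{forbes24} and the hardness-of-representation argument of \cite{GHT22}. First I would set up the knapsack mod $p$ instance $\ks_{w,p}$ so that it is itself computable by a small (indeed polynomial-size, low-depth, multilinear) circuit, while certifying that $\ks_{w,p}$ has no Boolean root. Given a constant-depth multilinear $\lbIPS$ refutation, the definition of $\lbIPS$ forces the ``coefficient'' polynomial $C(\vx)$ multiplying the instance to agree, on all Boolean points, with $1/\ks_{w,p}(\vx)$ — equivalently, on Boolean assignments $C(\vx)\cdot \ks_{w,p}(\vx)=1$. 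The core of the argument is then to prove a \emph{functional} circuit-size lower bound: any constant-depth multilinear circuit agreeing with $1/\ks_{w,p}$ on the Boolean cube must be of super-polynomial size. By the multilinearity of the refutation and a standard restriction/reduction argument, it suffices to show that a suitable multilinear polynomial extracted from this functional constraint — morally the part of $1/\ks_{w,p}$ living on a chosen ``balanced'' sub-assignment — is hard.

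Next I would perform the reduction to set-multilinear hardness. Using \cite{forbes24}'s extension of \cite{LST21} to all fields, a constant-depth multilinear circuit of size $s$ over $\FF$ can be converted, after a random restriction that partitions a suitable block of variables, into a set-multilinear circuit of size $\poly(s)\cdot n^{o(\log n)}$ (roughly) and still-constant depth computing (a restriction of) the same polynomial. So a super-polynomial lower bound for set-multilinear constant-depth circuits computing the relevant polynomial implies the desired bound for $\ks_{w,p}$. The polynomial I would target is (a variant of) the ``hard'' set-multilinear polynomial used in \cite{GHT22} — essentially an inner-product / word polynomial — now realized inside the Taylor-type expansion of $1/\ks_{w,p}$ over $\FF_p$. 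The new subtlety over finite fields is that $1/\ks_{w,p}$ must be re-examined as a \emph{function} on $\{0,1\}^n$ rather than as a power series; here is where the $p\ge 5$ hypothesis and the mod-$p$ structure of the knapsack enter, ensuring that the coefficients picked out are nonzero in $\FF_p$ and that Fermat's little theorem does not collapse the hard part (i.e., one must verify that $(\ks_{w,p})^{q-2}$, or the relevant functional inverse, genuinely encodes the hard word polynomial rather than something trivial).

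Then I would invoke the set-multilinear constant-depth lower bound of \cite{LST21}/\cite{forbes24} (valid over all fields, in particular fixed finite fields) against this extracted polynomial. Chaining the three implications — (i) small constant-depth multilinear $\lbIPS$ refutation $\Rightarrow$ small constant-depth multilinear circuit functionally computing $1/\ks_{w,p}$; (ii) random restriction + \cite{forbes24} $\Rightarrow$ small set-multilinear constant-depth circuit for the hard word polynomial; (iii) \cite{LST21}/\cite{forbes24} lower bound contradicts (ii) — yields the theorem.

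I expect the main obstacle to be step (i)–(ii), specifically controlling the \emph{functional} (rather than formal) behaviour of $1/\ks_{w,p}$ over a fixed finite field and extracting a provably hard set-multilinear polynomial from it. Over characteristic $0$ one has a clean power-series identity; over $\FF_p$ one has only an equality of functions on the cube, so one must argue carefully that a balanced restriction isolates a genuine multilinear polynomial whose coefficients survive modulo $p$, which is exactly the place where the $\ks_{w,p}$ instance (as opposed to plain subset-sum or knapsack) is designed to cooperate, and where one must dodge the Fermat's-little-theorem obstruction discussed above. A secondary technical point is ensuring the random-restriction step of \cite{forbes24} is compatible with the multilinearity and with preserving the functional constraint on the surviving sub-cube; I would handle this by choosing the variable partition to align with the block structure of $\ks_{w,p}$.
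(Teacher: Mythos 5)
Your proposal correctly identifies the paper's scaffolding: the functional lower bound method of \cite{FSTW21}, the use of set-multilinearization to move from constant-depth (multilinear) circuits to set-multilinear ones, the role of \cite{forbes24} in making that step field-independent, and the fact that the instance $\ks_{w,p}$ is engineered so that the multilinear functional inverse survives the Fermat's-little-theorem obstruction. That high-level chain (i) $\Rightarrow$ (ii) $\Rightarrow$ (iii) is essentially the paper's plan. However, several concrete steps differ or are missing.

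First, the \cite{forbes24} step is \emph{not} a random restriction: it is the deterministic set-multilinear projection $\Pi_w$ onto the set-multilinear monomials of the fixed word $w$; the variable partition $\overline{X}(w)$ is built into the definition of $\ks_{w,p}$, so nothing stochastic is needed and ``preserving the functional constraint under the restriction'' is a non-issue. Second, the paper does not extract a specific ``hard word polynomial'' from a Taylor-type expansion of $1/\ks_{w,p}$. Instead it proves directly that the coefficient matrix $M_w(f)$ of $\Pi_w(f)$ has \emph{full rank}, where $f$ is the unique multilinear polynomial agreeing with $1/\ks_{w,p}$ on the cube, and then invokes the relative-rank upper bound for small constant-depth set-multilinear formulas (from \cite{LST21} as sharpened in \cite{BDS24_journal}). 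The heart of the proof — and the part your proposal leaves as ``morally the part of $1/\ks_{w,p}$ living on a balanced sub-assignment is hard'' — is precisely this rank lower bound (\Cref{lem:rank-lower-bound}). Its proof is an induction on $\ybar$-monomials: writing $f=\sum_m g_m(\xbar)m$, one shows that setting the $\ybar$-variables in $m$ to $1$ and the rest to $0$ collapses $\ks_{w,p}$ to a subset-sum-type polynomial $\sum_j \prod_i (1-x_{\sigma_i}^{(i)})-\beta$; then a \emph{full-degree} bound for functional inverses of such polynomials (a generalization of \cite[Prop.~5.3]{FSTW21}, \Cref{lem:deg-bound-general}) pins down the leading monomial of $g_m(\xbar)$, and balancedness of $w$ forces these leading monomials to be distinct, giving full rank. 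This degree lemma, and the scattered-partition structure of $P_w$ that makes $\ks_{w,p}$ both low-depth/multilinear and collapse correctly under these restrictions, are the genuine content and are absent from your sketch. Finally, the reason $p\geq 5$ is needed is more specific than ``coefficients survive mod $p$'': the word alphabet $\{\alpha k,-k\}$ with $\alpha\geq 1/2$ forces the negative overlap (and hence the number $r$ of parts in the scattered partition) to be at most $3$, and the degree lemma requires $\mathrm{char}(\FF)>r$.

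In short: the outline is on track, but without the rank lower bound via the generalized subset-sum degree bound, the proof does not close — that is the step where the actual hardness comes from, and it is not something a random restriction or a Taylor expansion would supply.
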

The proof in \cite{GHT22} combines two main ingredients: first, the methods used by \citeallauthors{LST21} \cite{LST21} to establish super-polynomial lower bounds for constant-depth algebraic circuits; and second, the functional lower bound framework of \cite{FSTW21} for size lower bounds on $\IPS$ proofs.
We adopt the same overall strategy, showing how the finite field setting introduces additional obstacles, and how we circumvent them.

Following \cite{GHT22}, we reduce the task of lower-bounding the size of a constant-depth algebraic circuit computing the multilinear polynomial that constitutes the $\IPS$ proof into the task of lower-bounding the size of a  constant-depth \emph{set-multilinear} \emph{circuit} computing a related \emph{set-multilinear polynomial}.
We derive this set-multilinear polynomial from the original multilinear $\IPS$ proof (which is not necessarily set-multilinear by itself) via the same variant of the functional lower bound method used in \cite{GHT22}.

\cite{GHT22} subsequently applies a reduction presented in \cite{LST21}, which converts constant-depth general circuits into constant-depth set-multilinear circuits.
Because the reduction presented in \cite{LST21} requires fields of sufficiently large characteristic, we rely on the recent extension of \citeallauthors{forbes24} \cite{forbes24}, which shows that this set-multilinearization reduction holds over all fields thereby removing this obstacle.
\cite{GHT22} also invokes another reduction presented in \cite{LST21} from a size lower bound of a set-multilinear formula into a rank lower bound of its coefficient matrix.
This second reduction already holds over all fields, and we use the improved parameters obtained by \citeallauthors{BDS24_journal}
\cite{BDS24_journal}.

The problem therefore reduces to constructing an unsatisfiable instance whose refutations, after the preceding reductions, have full rank.
\cite{GHT22} achieves this by introducing the knapsack polynomial, an instance that embeds a family of subset-sum instances ($\sum_{i=1}^n x_i - \beta = 0$, for $\beta > n$).
They then use the \textit{full} degree lower bound established in \cite{FSTW21} for refutations of subset-sum instances to obtain the required full rank lower bound.
Because the knapsack polynomial is satisfiable over Boolean assignments in finite fields, our task is to design a hard instance that both admits no satisfying Boolean assignment in finite fields and embeds a family of subset-sum-type instances that require full degree to refute. 
We proceed in two steps: first, we extend the full degree bound of \cite{FSTW21} to more general subset-sum-type instances; and second, we introduce a variant of the knapsack instance, knapsack mod $p$, that embeds a family of these more general subset-sum-type instances.
Thus we obtain the result. Although the theorem is stated over finite fields, it also holds over characteristic $0$ fields, thereby providing additional hard instances for the proof system studied on \cite{GHT22}.

As noted earlier, \cite{BLRS25} proves a lower bound for the same proof system as \Cref{thm:ips-modp-main-intro}, but under different field assumptions. \cite{BLRS25} requires size of the field to grow with the instance, whereas our lower bound holds for any field of constant positive characteristic. Consequently, in the constant positive characteristic setting, our result subsumes that of \cite{BLRS25} as it also holds over fixed finite fields. By contrast, the \cite{BLRS25} lower bound also covers characteristic $2$ and $3$ fields, which our result does not.

Both our work and \cite{BLRS25} also establish upper bounds for subsystems of constant-depth $\IPS$ over fields of positive characteristic.
While \cite{BLRS25} proves a general upper bound for systems stronger than constant-depth multilinear $\lbIPS$, the system for which we obtain a super-polynomial lower bound, our upper bound is for a single explicit instance within the same system.
Moreover, our specific instance is hard to refute in constant depth multilinear $\lbIPS$ over characteristic $0$ fields, where a corresponding lower bound was shown in \cite{GHT22}.
Hence we obtain the first separation between constant depth multilinear $\lbIPS$ over finite fields and the same system over characteristic $0$ fields.

Our separating instance $\ks_{w,e_2}$, the symmetric knapsack of degree $2$, is another variant of the knapsack instance used in \cite{GHT22}.
Note that the subset-sum instance can be viewed as an elementary symmetric sum of degree $1$.
In the same spirit as the knapsack polynomial, $\ks_{w,e_2}$ is designed so that it embeds a family of elementary symmetric sum of degree $2$.
\begin{theorem}[Informal; see \Cref{thm:separation}]
\label{thm:separation-intro}
    Let $p \geq 3$ be a prime, and let $\F$ be a field of characteristic $p$.
    Then, for the symmetric knapsack $\ks_{w,e_2}$ of degree $2$:
    \begin{itemize}
        \item $\ks_{w,e_2}$ has no satisfying Boolean assignment over $\F$, and over any field of characteristic $0$;
        \item there is a polynomial-size, constant-depth multilinear $\lbIPS$ refutation of $\ks_{w,e_2}$ over $\F$;
        \item for every characteristic $0$ field $E$, every constant-depth multilinear $\lbIPS$ refutation over $E$ of $\ks_{w,e_2}$ requires super-polynomial size.
    \end{itemize}
\end{theorem}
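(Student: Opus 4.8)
The plan is to prove the three bullets by exploiting the different behaviour of the degree-$2$ elementary symmetric polynomial $e_2$ over $\mathbb{Q}$ and modulo $p$. Write $s=\sum_i x_i$; on Boolean points $e_2(\vx)=\binom{s}{2}=\tfrac{s(s-1)}{2}$, which over a field of odd characteristic can be read through the value of $s$ in the prime field $\FF_p\subseteq\F$. The basic fact — true for $p\ge 3$ and false for $p=2$ — is that $s\mapsto\tfrac{s(s-1)}{2}$ on $\FF_p$ attains only $(p+1)/2$ values, since $\tfrac{s(s-1)}{2}=\beta_0$ rewrites as $(s-\tfrac12)^2=\tfrac{1+8\beta_0}{4}$, which is solvable iff $1+8\beta_0$ is a square mod $p$. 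Fix once and for all a residue $\beta_0$ with $1+8\beta_0$ a quadratic non-residue mod $p$ (for $p=3$ take $\beta_0\equiv 2$). As in \cite{GHT22}, the instance $\ks_{w,e_2}$ is designed so that it embeds a family of axioms $e_2(\vx|_S)-\beta_S=0$ over blocks $S\subseteq[n]$; we additionally choose $\beta_S\equiv\beta_0\pmod p$ and $\beta_S>\binom{|S|}{2}$ for every block $S$ (the arguments below do not depend on the precise way this family is aggregated inside $\ks_{w,e_2}$). Bullet $1$ is then immediate: a Boolean zero of $\ks_{w,e_2}$ forces $e_2(\vx|_S)=\beta_S$ for some $S$; over a characteristic-$0$ field this is impossible since $e_2(\vx|_S)=\binom{k}{2}\le\binom{|S|}{2}<\beta_S$, and over $\F$ it is impossible since, by the displayed rewriting and $p$-periodicity of $\binom{\cdot}{2}\bmod p$, the value $\beta_0$ is not attained.

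For bullet $2$ it suffices to produce, over $\F$, a single $g$ with $g\cdot f\equiv 1$ modulo the Boolean ideal $\langle x_j^2-x_j\rangle_j$, where $f$ is one embedded axiom $e_2(\vx|_S)-\beta_S$, which over $\F$ reduces to $e_2(\vx|_S)-\beta_0$ (for a product-type $\ks_{w,e_2}$ one instead takes $g$ to be the product of the per-block inverses); setting the remaining axiom coefficients to $0$ gives a $\lbIPS$ refutation whose size is just $\operatorname{size}(g)$, the Boolean-axiom coefficients $h_j$ existing and not being counted. On Boolean points $f=\tfrac12\bigl((s_S-\tfrac12)^2-\tfrac{1+8\beta_0}{4}\bigr)$ with $s_S=\sum_{i\in S}x_i$, so $f$ is nowhere zero over $\FF_p$ (a square minus the non-residue $\tfrac{1+8\beta_0}{4}$); hence $1/f$, viewed as a function of the prime-field element $s_S$, is a univariate polynomial $P(s_S)$ of degree at most $p-1$. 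Let $g$ be the multilinear polynomial obtained from $P(\sum_{i\in S}x_i)$ by reducing modulo the Boolean axioms. Then $g\equiv P(\sum_{i\in S}x_i)$, so $g\cdot f\equiv 1$ modulo $\langle x_j^2-x_j\rangle_j$, while $g$ is multilinear of total degree at most $p-1=O(1)$, hence has $n^{O(1)}$ monomials and is computed by a depth-$2$ multilinear circuit of size $n^{O(1)}$. This is the desired polynomial-size constant-depth multilinear $\lbIPS$ refutation over $\F$. The contrast with characteristic $0$ is the whole point: $e_2$ is \emph{degenerate} modulo $p$ — the map $s\mapsto\binom{s}{2}$ is $p$-periodic — so a constant-degree inverse suffices, whereas over $\mathbb{Q}$ the function $1/(\binom{s}{2}-\beta_S)$ is a degree-$|S|$ polynomial in $s$.

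For bullet $3$ I would re-run the \cite{GHT22} argument with the degree-$2$ symmetric instances in the role subset-sums play there: the functional lower bound framework of \cite{FSTW21}, the set-multilinearization reduction (over $E$ of characteristic $0$ the \cite{LST21} version already applies, so the extension of \cite{forbes24} is not even needed here), and the coefficient-matrix rank bound with the parameters of \cite{BDS24_journal}. The only missing ingredient is a Nullstellensatz full-degree lower bound over $E$ for $e_2(\vx|_S)-\beta_S=0$, and by the functional method this reduces to showing that the functional inverse $1/(e_2(\vx|_S)-\beta_S)$ on Boolean points is a multilinear polynomial of full degree $|S|$: writing it as $P_S(\sum_{i\in S}x_i)$ and reducing modulo the Boolean axioms yields $\sum_{j=0}^{|S|}b_j\,e_j(\vx|_S)$ with $b_{|S|}=|S|!\cdot(\text{leading coefficient of }P_S)$, and this leading coefficient is nonzero in characteristic $0$ by a finite-difference computation (partial fractions $\tfrac{2}{(k-r_1)(k-r_2)}$ with, thanks to $\beta_S>\binom{|S|}{2}$, one root $r_1>|S|$ and the other $r_2<0$, so the $|S|$-th finite difference of $1/(\binom{k}{2}-\beta_S)$ over $k=0,\dots,|S|$ cannot vanish). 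Since $\ks_{w,e_2}$ is built, exactly like the knapsack of \cite{GHT22}, so that these per-block full-degree bounds aggregate to a full-rank lower bound on the coefficient matrix of any constant-depth multilinear refutation, the super-polynomial size bound follows.

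The main obstacle is engineering $\ks_{w,e_2}$ so that all three requirements hold at once: (i) unsatisfiability over every characteristic-$0$ field and every fixed field of characteristic $p\ge3$; (ii) over characteristic $0$, embedding instances each needing full Nullstellensatz degree, so that the \cite{GHT22} rank machinery bites; and (iii) collapsing to a trivial refutation modulo $p$ because those very instances acquire a constant-degree functional inverse. Reconciling (ii) and (iii) — the same polynomial being full-degree-hard over $\mathbb{Q}$ yet constant-degree-easy mod $p$ — is precisely what the $p$-periodicity of $\binom{s}{2}$ provides, and getting the parameters $|S|$, $p$, $\beta_S$ to satisfy all three constraints simultaneously is the part requiring care. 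By comparison, checking that the multilinearization in bullet $2$ preserves the polynomial-size and constant-depth guarantees is immediate, since the polynomial being multilinearized already has constant total degree.
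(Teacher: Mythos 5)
Your three bullets are addressed with essentially the right ideas, but the argument for bullet~2 (the upper bound over $\F$) has a real gap, and your working model of the instance is structurally off in a way that matters for that bullet. In the paper, $\ks_{w,e_2}$ is a \emph{single} polynomial $\ml(e_2(\{x_\sigma^{(i)}f_\sigma^{(i)}\}))-\beta$ with one fixed $\beta$; the per-block instances $e_2(\cdot|_S)-\beta$ that feed the rank/degree machinery arise via \emph{restrictions} of the $\ybar$-variables, exactly as in the $\ks_{w,p}$ rank lower bound --- they are not a decomposition of $\ks_{w,e_2}$ into blocks with separate $\beta_S$'s, and $\ks_{w,e_2}$ is neither a sum nor a product of per-block terms. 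Consequently your claim that ``the arguments below do not depend on the precise way this family is aggregated'' is false for bullet~2: the size of the refuting polynomial $g$ depends entirely on the aggregation. Your degree estimate ``$g$ is multilinear of total degree at most $p-1=O(1)$, hence $n^{O(1)}$ monomials'' is only valid in the toy case where $e_2$ is applied directly to Boolean \emph{variables} $\vx|_S$. In $\ks_{w,e_2}$, $e_2$ is applied to products $x_\sigma^{(i)}f_\sigma^{(i)}$ where each $f_\sigma^{(i)}$ is an indicator polynomial of degree up to $O(b2^b)$, so the Hamming-weight surrogate $s=\sum_{(i,\sigma)}x_\sigma^{(i)}f_\sigma^{(i)}$ is a polynomial of large degree, and $\ml(P(s))$ for $\deg P\le p-1$ can have degree $\Theta(n)$ in the underlying variables. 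The paper does not reason via a low-degree univariate $P(s)$ at all: it writes $g=\ml((\ks_{w,e_2})^{p-2})$ via Fermat's little theorem and argues by hand, from the product-depth-$2$ multilinear-formula structure of $\ks_{w,e_2}$ itself and the identity $\ml((1-y)^2)=1-y$, that this multilinearization stays a product-depth-$2$ multilinear formula of size $\poly(d^p,2^{bp})$. You need that structural bookkeeping, or something equivalent, to conclude bullet~2.

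For bullets~1 and~3 your routes are sound alternatives or close variants of the paper's. On unsatisfiability: you complete the square and use that $1+8\beta_0$ can be chosen to be a quadratic non-residue (so $\binom{s}{2}=\beta_0$ has no solution mod $p$), whereas the paper invokes Lucas's theorem (\Cref{lem:symmetric-sums-boolean}) to count the achievable values of $e_2$; both are correct, and both give the lift to characteristic~$0$ because a residue avoided mod~$p$ is a fortiori avoided over $\ZZ$. Your point that over characteristic $0$ the original \cite{LST21} set-multilinearization suffices (without \cite{forbes24}) is correct. On the degree lower bound for bullet~3, your partial-fraction/finite-difference sketch is essentially \Cref{lem:deg-lower-bound-e2-large-field}; your extra hypothesis $\beta_S>\binom{|S|}{2}$ (to place both roots outside $[0,|S|]$) is a convenient simplification, while the paper handles any integer $\beta$ making $e_2-\beta$ unsatisfiable by the inequality $|\gamma_1-k|>|\gamma_2-k|$ for $k>1$, which is needed because the single lifted $\beta\in\{1,\dots,p-1\}$ cannot in general dominate $\binom{|S|}{2}$ for all restricted block sizes $|S|$.
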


For our separation, we need an instance that has no Boolean satisfying assignment in either field, both the finite field and the characteristic $0$ field, rather than one whose satisfiability depends on the characteristic.
We establish this for $\ks_{w,e_2}$ by showing that elementary symmetric sums of certain degrees have no Boolean satisfying assignment in finite fields.
As $\ks_{w, e_2}$ admits no satisfying Boolean assignment in the finite field, it likewise admits none in the characteristic $0$ field.

The upper bound for refutations of $\ks_{w,e_2}$ in constant-depth multilinear $\lbIPS$ over finite fields follows from Fermat's little theorem.
What remains is the lower bound for $\ks_{w, e_2}$ over characteristic $0$ fields.
We show that, in characteristic $0$, every refutation of the elementary symmetric sum of degree $2$ must have full degree.
The corresponding $\IPS$ lower bound for $\ks_{w,e_2}$ then follows similarly to the argument in \cite{GHT22}.

\subsubsection{roABP-IPS Lower Bounds over Finite Fields}
We also present new lower bounds for \( \roAlbIPS \) over finite fields, using two distinct techniques: the Functional Lower Bound method and the Lower Bound by Multiples method. In both cases, we obtain finite-field analogues of results from~\cite{HLT24} and~\cite{FSTW21} respectively, which originally required fields of large characteristic. Moreover, our proofs are significantly simpler. As a first step, we establish an exponential lower bound for \( \roAlbIPS \) in any variable order.

\begin{theorem}[Informal; see \Cref{corollary: roABP lower bounds for any order}]
    Let $\F_q$ be a finite field with constant characteristic $q$. Then, there exists a polynomial $f \in \F[\wbar]$ such that any $\roAlbIPS$ refutation (in any variable order) of $f$ requires $2^{\Omega(n)}$-size.
\end{theorem}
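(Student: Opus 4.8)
The plan is to prove this via the functional lower bound method of \cite{FSTW21}, adapted to finite fields. First I would choose the hard instance to be a polynomial based on Fermat's little theorem: something like $f(\wbar) = \prod_{i} (x_i - a_i)$ type behavior, or more directly an instance of the form $g(\wbar)$ where $g$ is a roABP-easy polynomial in the given variable order but whose ``functional inverse'' on the Boolean cube — the polynomial representing $1/f$ over $\{0,1\}^n$ — requires large roABP complexity in \emph{every} variable order. Concretely, I expect the natural candidate to be the subset-sum-type instance $\sum_i x_i - \beta$ adapted mod $q$, or a symmetric instance à la \cite{HLT24}; the key point is that $\lbIPS$ refutations force us to compute a polynomial $g$ with $g \cdot f \equiv 1$ on the cube, and by Fermat's little theorem over $\F_q$ one has $f^{q-1} \equiv 1$, so the relevant hard object is $f^{q-2}$, which we must show has large roABP complexity in any order.

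The key steps, in order, would be: (1) recall the functional lower bound reduction from \cite{FSTW21}, which says that if every polynomial $g$ agreeing with $1/f$ on the Boolean cube has roABP complexity $\geq s$ in a given order, then $f$ has no $\roAlbIPS$ refutation of size $< s$ in that order; (2) pick $f$ so that the unique multilinear polynomial $\widehat{1/f}$ extending the function $\vx \mapsto 1/f(\vx)$ on $\{0,1\}^n$ has full or near-full support / high evaluation-dimension in every partition of the variables into a prefix and suffix; (3) use the standard characterization that roABP width in a fixed order equals the rank of the partial-evaluation (coefficient) matrix obtained by splitting variables at each point in the order, and lower-bound this rank — over \emph{every} order — for $\widehat{1/f}$; (4) conclude the $2^{\Omega(n)}$ bound and note that since the bound holds for all orders simultaneously, no order-choice escapes it. The simplification over \cite{HLT24,FSTW21} presumably comes from working with a cleaner explicit $f$ (e.g. a symmetric polynomial) whose Boolean-cube inverse has an easily analyzed Fourier/coefficient structure, so that the any-order rank bound becomes a direct combinatorial computation rather than requiring the invariant-theory machinery.

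The main obstacle I anticipate is step (3): proving the evaluation-rank lower bound \emph{uniformly over all variable orderings} rather than one fixed order. For a single order this is typically a rank computation for one specific matrix; for all orders one needs the inverse function $\widehat{1/f}$ to be ``spread out'' with respect to every balanced bipartition of $[n]$, which is a stronger symmetry-type requirement. I would handle this by choosing $f$ to be symmetric (or close to it), so that the coefficient structure of $\widehat{1/f}$ is invariant under permutations of variables and hence the rank at a prefix-suffix split depends only on the \emph{size} of the prefix, not which variables it contains; then it suffices to lower-bound that rank for the balanced split, reducing the ``all orders'' statement to a single rank bound. The secondary subtlety is ensuring the chosen $f$ is genuinely unsatisfiable over $\F_q$ (so that $\roAlbIPS$ refutations exist to speak about) while keeping $\widehat{1/f}$ analyzable; here I would lean on the elementary-symmetric-sum unsatisfiability facts already developed for the earlier theorems in the paper, and on Fermat's little theorem to pin down $\widehat{1/f}$ as $\widehat{f^{\,q-2}}$, whose coefficients can be read off from the symmetric structure of $f$.
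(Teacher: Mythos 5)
Your high-level outline (functional lower bound method, use Fermat's little theorem to handle unsatisfiability over $\F_q$, reduce to a rank/coefficient-dimension bound on the Boolean-cube inverse of $f$) is the right framing, and your choice of a $\prod(1-x_i)$-type polynomial is close to the paper's actual base instance $f(\xbar) = \prod_i(1-x_i) - 2$. But the crucial ingredient — a \emph{variable-lifting gadget} — is absent from your plan, and without it the plan fails at step (3).

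The specific gap is in your proposed handling of the ``all orders'' issue via symmetry. You suggest choosing $f$ symmetric so that the coefficient rank of $\widehat{1/f}$ at a prefix/suffix cut depends only on the cut size, reducing all orders to the balanced split. That reduction is valid as far as it goes, but a \emph{symmetric multilinear} polynomial in $n$ variables is a linear combination of $e_0, \dots, e_n$ and hence has coefficient dimension at most $n+1$ at any split — so $\widehat{1/f}$ would be computable by a width-$O(n)$ roABP in every order, and the functional lower bound method gives \emph{nothing}. For instance, for the paper's $f = \prod(1-x_i) - 2$, the multilinear inverse is literally $-\tfrac12 - \tfrac12\prod(1-x_i)$, of coefficient dimension $2$. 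Symmetry is exactly the wrong thing to want here: the point of the proof is to \emph{break} symmetry. The paper does this by first lifting $x_i \mapsto x_i y_i$ to get a fixed-order bound (one evaluates $g(\xbar, \mathbbm{1}_S)$ and shows, via the full-degree lemma, that each $S \subseteq [n]$ yields a distinct leading monomial, giving dimension $2^n$), and then uses the further gadget $w_{ij} \mapsto z_{ij}x_i x_j$ of \cite{HLT24} over $m=\binom{2n}{2}$ variables: for \emph{any} ordering, one splits $\xbar$ in half, sets $\zbar$ to the matching indicator embedding the fixed-order instance on the matched pairs, and invokes the fixed-order bound via non-increase of dimension under substitution. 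Your plan has no analogue of either gadget.

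A secondary issue: you describe $f$ as being ``roABP-easy in the given variable order,'' but the resulting lower bound is necessarily a \emph{placeholder} bound — the lifted instance $f^\star(\zbar,\xbar)$ is not itself roABP-easy. The paper is explicit about this, and in fact its Section 5.4 shows that for $f$ roABP-easy over $\F_q$, $f^{q-2}$ (and its multilinearization) is also roABP-easy, so the functional method cannot yield non-placeholder roABP-$\IPS$ bounds over finite fields at all. This means the version of your plan where $f$ itself is roABP-easy and $f^{q-2}$ is roABP-hard is impossible, independently of the symmetry issue.
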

This proof employs the Functional Lower Bound method and closely follows the strategy of~\cite{HLT24}. As in their work, we first establish a lower bound in a fixed variable order, and then extend the result to any order. However, our hard instance differs from theirs---this not only simplifies the argument, but also allows us to prove the result over fields of constant characteristic. Additionally, we provide a lower bound for an unsatisfiable system of equations.

\begin{theorem}[Informal; see \Cref{theorem: lower bounds by multiple}]
    Let $\F_q$ be a finite field of constant characteristic $q$. Then, there exist polynomials $f,g \in \F[x_1,\dots, x_n]$ such that the system of equations $f, g, \xbar^2- \xbar$ is unsatisfiable, and any $\roAlbIPS$ refutation (in any order of the variables) requires size $\exp\big({\Omega(n)}\big)$.
\end{theorem}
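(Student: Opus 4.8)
The plan is to adapt the \emph{lower bounds by multiples} method of~\cite{FSTW21} to a finite field $\F_q$ of constant size, the new difficulty being that the hard instance must be robust enough for its roABP-hardness to survive over a small field, and that the system must be genuinely Boolean-unsatisfiable over $\F_q$ rather than only over characteristic $0$. The backbone is an explicit polynomial $P = P(x_1,\dots,x_n)$ over $\F_q$ that is maximally hard for read-once oblivious ABPs \emph{simultaneously in every variable order}: for every partition of $x_1,\dots,x_n$ into two equal halves, the coefficient (``partial-derivative'') matrix of $P$ under that partition has full rank. Such polynomials are known to exist explicitly over $\F_q$; this property forces roABP width $2^{\Omega(n)}$ in every order, and, crucially, it is stable under adding or multiplying by any polynomial of $\mathrm{poly}(n)$ roABP width (adding or multiplying by such a polynomial cannot drop the rank below $2^{\Omega(n)}$ under any partition). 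From $P$ I would build two axioms $f,g\in\F_q[x_1,\dots,x_n]$, both of $\mathrm{poly}(n)$ roABP width in every order, in the knapsack style of~\cite{GHT22,FSTW21}: $g$ acts as a ``selector'' forcing $\{f=0,\,g=0,\,\xbar^2-\xbar=0\}$ to be unsatisfiable over $\F_q$, while $f$ hides $P$ in such a way that in any refutation the product of $f$ with its multiplier is forced to be a hard polynomial.

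First I would unpack a purported refutation. A size-$s$ $\roAlbIPS$ refutation in a fixed variable order $\pi$ is an identity $g_1 f+g_2 g+\sum_j h_j(x_j^2-x_j)=1$ in which $g_1$ and $g_2$ are computed by roABPs of width $O(s)$ in order $\pi$; here we use that $\lbIPS$ charges only the multipliers $g_1,g_2$ of the genuine axioms, so the $h_j$ are unconstrained, and that a coefficient of a width-$w$ roABP with respect to the degree-$1$ placeholder variables is again a width-$O(w)$ roABP in the induced order on $\xbar$. The central step is the algebra of the instance: the polynomial $g_1\cdot f$ equals $1-g_2 g-\sum_j h_j(x_j^2-x_j)$, which a priori depends on the free parts of the certificate, but the instance is engineered so that every polynomial of this form that is also a multiple of $f$ has roABP width $2^{\Omega(n)}$ in every order --- this is the finite-field analogue of the full-degree lower bound of~\cite{FSTW21} for subset-sum-type instances, with $P$ embedded so that its full-rank-in-every-partition property is inherited by $g_1\cdot f$. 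Since $f$ has $\mathrm{poly}(n)$ roABP width in order $\pi$ and roABP width is submultiplicative under products in a fixed order, this forces $\mathrm{width}_\pi(g_1)\ge \mathrm{width}_\pi(g_1 f)/\mathrm{width}_\pi(f) = 2^{\Omega(n)}$, hence $s=2^{\Omega(n)}$; and as $\pi$ was arbitrary the bound holds in every variable order. (Closure of roABPs under products, sums, and reduction modulo $x_j^2-x_j$ within a fixed order is also what keeps the auxiliary polynomials built from $f,g$ that certify the ``forced hard multiple'' claim of $\mathrm{poly}(n,s)$ width.)

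I expect the main obstacle to be exactly this ``forced hard multiple'' design: arranging $f$ so that no choice of the free $g_2,h_j$ can make the multiple $g_1 f$ of $f$ roABP-cheap in any order --- that is, ruling out every ``cheap B\'ezout'' cancellation that would erase the hard core of $P$ in some order --- while keeping the individual degrees of $f$ and $g$ (hence of a minimal $g_1$) small enough to control the multilinear reductions, and doing all of this over a field of constant size. The full-rank-under-every-partition property of $P$ is precisely what makes such cancellations impossible robustly. A secondary and more routine point is the unsatisfiability over $\F_q$: as elsewhere in this paper one must guard against Fermat-type collapses ($a^{q-1}=1$ for $a\neq 0$) that could make the system Boolean-satisfiable over $\F_q$ even when its characteristic-$0$ analogue is not, which we verify directly from the combinatorics of the knapsack-style construction, exactly as for the paper's other knapsack-type instances. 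The payoff of this choice of instance, compared with the large-field arguments of~\cite{FSTW21,HLT24}, is that all the hardness is supplied by a single robustly hard polynomial $P$, which both keeps the cancellation analysis clean and removes any dependence of the field size on $n$.
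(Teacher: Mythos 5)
There is a genuine gap, and it is internal to the proposal rather than a matter of missing details. You require simultaneously that $f$ has $\mathrm{poly}(n)$ roABP width in every order and that every nonzero multiple of $f$ of the form arising in a refutation has roABP width $2^{\Omega(n)}$ in every order. These two requirements cannot both hold: $f=1\cdot f$ is itself such a multiple, and over the paper's axiom $g=\prod_i(1-x_i)-1$ (whose unique Boolean root is $\bar 0$) one even has $f\equiv f(\bar 0)=1 \pmod{g,\xbar^2-\xbar}$, so $f$ is exactly of the ``relevant form.'' In fact the $f$ that makes the multiples method work, $f=\prod_{i<j}(x_i+x_j+\alpha_{i,j})$, is itself roABP-hard---that is the content of FSTW Corollary 6.23, and it is precisely the reason the multiples method lives in the \emph{placeholder} model, as the paper stresses. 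Consequently your division step $\mathrm{width}_\pi(g_1)\ge \mathrm{width}_\pi(g_1 f)/\mathrm{width}_\pi(f)$ collapses: the denominator is $2^{\Theta(n)}$, not $\mathrm{poly}(n)$, and the inequality gives no information.

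The paper's proof (\Cref{theorem: lower bounds by multiple}) avoids bounding $\mathrm{width}(g_1)$ entirely. It takes the refutation roABP $C(\xbar,y,z,\wbar)$ of size $s$, sets $y=0$, expands in the degree-$1$ placeholder $z$ as $C_0(\xbar,\wbar)+C_1(\xbar,\wbar)\,z$ (each $C_i$ a $\mathrm{poly}(s)$-width roABP by interpolation), substitutes $z\leftarrow g$ and $\wbar\leftarrow \xbar^2-\xbar$, and observes that the resulting polynomial $1-C(\xbar,0,g,\xbar^2-\xbar)$ is, by the ``refutation is a multiple'' lemma, a nonzero multiple of $f$ and has a $\mathrm{poly}(s,n)$-size \emph{read-twice} oblivious ABP. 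It then invokes FSTW Corollary~6.23, which gives a $2^{\Omega(n)}$ lower bound against \emph{read-twice} oblivious ABPs for nonzero multiples of $f$, yielding the contradiction directly. Your sketch neither manufactures nor mentions the read-twice step, which is essential for absorbing the uncontrolled $\sum_j h_j(x_j^2-x_j)$ term; reducing modulo $\xbar^2-\xbar$, as you suggest, would destroy the ``multiple of $f$'' structure you rely on. The one point you do get right is that the finite-field obstruction is in the \emph{other} axiom and must be handled to keep the system Boolean-unsatisfiable without blowing up the characteristic; the paper does exactly this by replacing $\sum_i x_i - n$ with $\prod_i(1-x_i)-1$. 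To salvage your plan you would have to abandon the requirement that $f$ be roABP-cheap, accept the placeholder model, and reintroduce the read-twice argument---at which point you are back to the paper's proof.
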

In this case, we apply the Lower Bound by Multiples method from~\cite{FSTW21}, and extend their result to finite fields. Our hard system of equations uses the same polynomial \( f \) as in their work, but a different choice of \( g \), which allows us to avoid their reliance on large characteristic fields. 

We emphasize that the lower bounds we obtain for \( \roAlbIPS \) are \emph{placeholder} lower bounds—that is, the hard instances considered are not efficiently computable by roABPs. This makes the model strictly weaker than the non-placeholder setting. In fact,  we show that it is \emph{impossible} to obtain non-placeholder lower bounds for \( \roAlbIPS \) over finite fields using the functional lower bound method (see \cite[Theorem 1 in full version]{HLT24} for a precise definition of ``the functional lower bound method'').

\begin{theorem}[\Cref{thm:flbm_limitation}]
    The functional lower bound method cannot establish non-placeholder lower bounds on the size of \( \roAlbIPS \) refutations when working in finite fields.
\end{theorem}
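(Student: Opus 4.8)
The plan is to show that over a finite field \emph{every} non‑placeholder instance admits a single, cheap ``functional refutation'', so that the quantity the functional lower bound method is forced to lower bound is itself small. First I would recall how the method is applied to $\roAlbIPS$: given an unsatisfiable instance $f\in\F_q[\vx]$ (I treat the single‑axiom case first), a size‑$s$ refutation $g\cdot f+\sum_j h_j(x_j^2-x_j)=1$ yields, by the linearity of the $\lIPS$ certificate and restriction to $\{0,1\}^n$, a polynomial $g$ of roABP size $\poly(s)$ in the same variable order with $g(a)f(a)=1$ for all $a\in\{0,1\}^n$. Hence, by the formalization of \cite{HLT24}, obtaining a size‑$s$ lower bound for refutations of $f$ through this method entails (up to a polynomial loss) a size‑$s$ lower bound on $\min\{\,\mathrm{roABPsize}(g):\ g\equiv 1/f\text{ on }\{0,1\}^n\,\}$. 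So it suffices to exhibit, for every non‑placeholder $f$, one such $g$ of polynomial roABP size.

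Next I would carry out the construction. Fix $\F_q$ with $q=O(1)$ and a non‑placeholder instance $f$, i.e.\ $f$ is unsatisfiable over $\{0,1\}^n$ and is computed by an roABP of size $n^{O(1)}$ in some order $\pi$. Since the axiom $f=0$ has no Boolean solution, $f(a)\neq 0$ for every $a\in\{0,1\}^n$, so Fermat's little theorem gives $f(a)^{q-1}=1$ on the cube. Setting $g:=f^{q-2}$, we get $g(a)f(a)=f(a)^{q-1}=1$ for all $a\in\{0,1\}^n$, so $g$ is a legitimate functional refutation; and the $(q-2)$‑fold tensor power of the roABP for $f$ (in the order $\pi$) computes $g$ with width raised to the constant power $q-2$ and individual degrees multiplied by $q-2$, hence of size $n^{O(1)}$ in order $\pi$. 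Combining the two steps, the minimum roABP size of a polynomial agreeing with $1/f$ on the cube is $n^{O(1)}$, so the functional lower bound method cannot certify a super‑polynomial lower bound for $f$; as $f$ was an arbitrary non‑placeholder instance, it cannot establish any non‑placeholder lower bound over $\F_q$. Because $g$ is small in \emph{every} order in which $f$ is small, the conclusion also covers the ``any variable order'' formulation.

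The conceptual input---Fermat's little theorem together with the fact that raising an roABP to a constant power blows up its size only polynomially---is immediate, so I do not expect the core argument to be the difficulty. The step that will need care is reconciling this counterexample with the precise definition of ``the functional lower bound method'' in \cite{HLT24}: one must verify that the method is genuinely constrained to use only the defining functional property $g\equiv 1/f$ on $\{0,1\}^n$ (so that a single small such $g$ blocks it), and, for a system of axioms $\{f_i=0\}_{i\le k}$, that the analogous target ``$\sum_i f_i C_i\equiv 1$ on $\{0,1\}^n$ with all $C_i$ small'' likewise has a solution of size $n^{O(1)}$ whenever all the $f_i$ do---which again follows from Fermat (using $e_i:=\prod_{j<i}(1-f_j^{q-1})$ to witness which axiom is violated) for $k=O(1)$, the regime relevant to $\roAlbIPS$ lower bounds. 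A secondary point to check is the exact polynomial relationship between refutation size and $\mathrm{roABPsize}(g)$ extracted from the $\lIPS$ certificate, so that the ``$n^{O(1)}$ upper bound on the target'' genuinely rules out the super‑polynomial bounds the method would need to prove.
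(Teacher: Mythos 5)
Your proof is correct and follows essentially the same route as the paper's: given a non-placeholder instance $f$ with a small roABP, Fermat's little theorem shows $g:=f^{q-2}$ satisfies $g\equiv 1/f$ on $\{0,1\}^n$, and closure of roABPs under constant powers (Fact~\ref{fact:roabp_width}) makes $g$ small, defeating the functional lower bound method. The paper's version additionally notes that even in a multilinear variant of the system one can take $g=\operatorname{ml}(f^{q-2})$ via closure of roABPs under multilinearization (Proposition 4.5 of \cite{FSTW21}); your write-up instead adds remarks about the multi-axiom case and the any-variable-order formulation, but the core Fermat/closure argument is the same.
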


\subsubsection{Towards Hard CNF Formulas}
Several lower bounds are known for purely algebraic instances  against subsystems of $\IPS$. This raises an important question: Could we get lower bounds for CNF formulas against subsystems of $\IPS$ from those lower bounds? 

Note that an instance consisting of a set of polynomials written as \emph{circuit equations} $\left\{f_i(\bar{x})=0\right\}_i$, for $f_i(\bar{x}) \in \mathbb{F}[\bar{x}]$, does not necessarily correspond to a Boolean instance or a CNF formula. Specifically, we say that such an instance is \emph{Boolean} whenever $f_i(\bar{x}) \in\{0,1\}$ for $\bar{x} \in\{0,1\}^{|\bar{x}|}$. For example, a CNF written as a set of (polynomials representing) clauses is a Boolean instance. Similarly, the standard arithmetization of propositional formulas leads to Boolean instances. On the other hand, the instances used in \Cref{thm:ips-modp-main-intro} as well as the standard  subset sum $\sum_i x_i-\beta$ is non-Boolean, and thus said to be ``purely algebraic'': the image of the latter under $\{0,1\}$-assignments is $\{-\beta, 1-\beta, \ldots, n-\beta\}$, and thus cannot be considered a propositional or Boolean formula (formally, there is no known way to yield propositional proof lower bounds, say, in constant-depth Frege, from lower bounds for such purely algebraic instance even when such lower bounds are against proof systems that simulate constant-depth Frege). 

We solve this problem, and show how to attain propositional proof lower bounds from purely algebraic instances lower bounds. 
This is done using efficient bit-arithmetic in finite fields: from a circuit we derive the statements that express its gate-by-gate bit-arithmetic description.
we establish a \emph{translation lemma}---that is, we show that CNF encoding can be efficiently derived from circuit equations and vice versa within these subsystems of $\IPS$ in finite fields. If a subsystem of $\IPS$ can efficiently derive the CNF encoding and then refute it, a lower bound for circuit equations implies a lower bound for CNF formulas. 

In \cite{ST25}, Santhanam and Tzameret presented a translation lemma with \emph{extension axioms} in $\IPS$. In other words, given some additional axioms, $\IPS$ can efficiently derive the CNF encoding for circuit equations and vice versa. \textit{We eliminate the need to add additional extension axioms and extension variables altogether}: we show that without those additional axioms, already  \emph{bounded-depth} $\IPS$ over a finite field can efficiently derive the CNF encoding for bounded-depth circuit equations. Following our translation lemma, every superpolynomial lower bound for bounded-depth circuit equations against bounded-depth $\IPS$ implies a superpolynomial lower bound for CNF formulas against bounded-depth $\IPS$ over a finite field, and hence an $\ACZ[p]$-Frege lower bound following standard simulation of $\ACZ[p]$-Frege by constant-depth $\IPS$ over $\F_p$.

We now explain our translation lemma. 
\cite{ST25} used \emph{unary} encoding to encode $\CNF$s for circuit equations over finite fields. Each variable $x$ over a finite field $\F_q$ corresponds to $q$ bits $x_{q-1},\dots, x_0$ where $x_j$ equals $1$ for $0 \leq j \leq q- 1$  if and only if $x = j$; thus, these $q$ bits can be viewed as ``unary bits''.

We use the \textit{Lagrange polynomial}
\begin{equation*}
     \frac{\prod_{i=0, i \neq j}^{q-1}(x-i)}{\prod_{i=0, i \neq j}^{q-1}(j-i)}
\end{equation*}
to express each unary bit $x_j$ with variable $x$, which we call $\UBIT$
\begin{equation*}
    \UBIT_j(x) = \begin{cases}
        1, \quad x = j,\\
        0, \quad \text{otherwise}.
    \end{cases}
\end{equation*}

We introduce a notation called semi-$\CNF$s, which are $\CNF$s where each Boolean variable is substituted by the corresponding $\UBIT$. Hence, $\SCNF$s are substitution instances of $\CNF$s, which means a lower bound for $\SCNF$s implies a lower bound for $\CNF$s against sufficiently strong subsystems of $\IPS$, including bounded-depth $\IPS$.

We show that the semi-$\CNF$ encoding of all the extension axioms in \cite{ST25} can be efficiently proved in bounded-depth $\IPS$ over finite fields. Following the proof in \cite{ST25}, bounded-depth $\IPS$ can efficiently derive the semi-$\CNF$s encoding of circuit equations. Hence, a lower bound for circuit equations implies a lower bound for $\CNF$s.

\begin{theorem}[\Cref{corollary: lower bounds for circuit equation implies lower bounds for CNFs}]
        \label{theorem: lower bounds for circuit equation implies lower bounds for CNFs, intro}
        Let $\mathbb{F}_q$ be a finite field, and let $\{C(\overline{x})\}$ be a set of circuits of depth at most $\Delta$ in the Boolean variable $\overline{x}$. Then, if a set of circuit equations $\{C(\overline{x}) =0\}$ cannot be refuted in $S$-size, $O(\Delta^\prime)$-depth $\IPS$, then the CNF encoding of the set of circuit equations $\{\CNF(C(\overline{x})=0)\}$ cannot be refuted in $(S-\poly(|C|))$-size, $O(\Delta^\prime + \Delta)$-depth $\IPS$.
\end{theorem}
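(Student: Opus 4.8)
The plan is to go through the reduction in two stages, mirroring the structure of \cite{ST25} but replacing their extension-axiom machinery with semi-$\CNF$ encodings that bounded-depth $\IPS$ over $\F_q$ can produce on its own. First I would fix the bit-arithmetic: for each $\F_q$-valued wire $v$ of a circuit $C$ introduce the $q$ unary bits $v_0,\dots,v_{q-1}$, and observe that each $\UBIT_j(x)$ is a degree-$(q-1)$ univariate polynomial in a single variable, hence computable by a constant-size, constant-depth circuit (since $q$ is a fixed constant). The key structural axioms to encode are: (i) \emph{unarity} of each wire, $\sum_j v_j = 1$ and $v_j v_k = 0$ for $j\ne k$; and (ii) the \emph{gate-consistency} axioms stating that for each gate $g = g_1 \star g_2$ (with $\star \in \{+,\times\}$ over $\F_q$), the unary bits of $g$ are the correct function of those of $g_1,g_2$ — i.e.\ $g_\ell = \sum_{j+k \equiv \ell} (g_1)_j (g_2)_k$ and similarly for products. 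Each such axiom is a fixed-size polynomial identity over a constant number of bit-variables, so its semi-$\CNF$ encoding has constant width and the $\IPS$ derivation of ``(semi-$\CNF$ of axiom) $\Leftrightarrow$ (axiom written via $\UBIT$s)'' is a constant-size, constant-depth derivation: this is exactly where I would invoke the fact that over a finite field a Boolean function on $O(1)$ bits is both a $\CNF$ and a low-degree polynomial, and the two representations agree on $\{0,1\}$-points, so their difference is in the ideal generated by $\{x^2-x\}$ with a small certificate.

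Second I would chain these. Given the set of depth-$\le\Delta$ circuit equations $\{C(\xbar)=0\}$ and a hypothetical $S$-size, $O(\Delta')$-depth $\IPS$ refutation $\pi$ of its $\CNF$ encoding, I convert $\pi$ into a refutation of the \emph{circuit equations themselves}: using the gate-consistency and unarity derivations above, bounded-depth $\IPS$ efficiently derives from $\{\CNF(C=0)\}$ the semi-$\CNF$ form of the statement ``the output wire of $C$ is $0$'', which via the $\UBIT$ substitution is just $C(\xbar)=0$ re-expressed; composing with $\pi$ gives an $\IPS$ refutation of $\{C=0\}$ of size $S + \poly(|C|)$ and depth $O(\Delta' + \Delta)$ — the added depth being the at most $\Delta$-fold composition needed to propagate bit-consistency through the circuit, each layer contributing $O(1)$ depth because $q = O(1)$. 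Taking the contrapositive yields the stated lower-bound transfer. I would present the direction we actually need (circuit-equation hardness $\Rightarrow$ $\CNF$ hardness) and remark that the reverse derivation also goes through, since $\SCNF$s are substitution instances of $\CNF$s and substitution is for free in $\IPS$.

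The main obstacle I anticipate is \emph{controlling the depth blow-up} when propagating bit-consistency through a depth-$\Delta$ circuit. A naive encoding would re-derive the unary bits of every gate by unfolding the whole subcircuit beneath it, which could cost depth proportional to the \emph{size} of $C$ rather than its depth; the fix is to introduce the unary bits of all gates simultaneously as (semi-$\CNF$-encoded) fresh Boolean variables, assert only \emph{local} gate-consistency axioms, and then do a single global derivation — so that the $\IPS$ proof manipulates these local axioms in parallel, layer by layer, incurring only $O(\Delta)$ additional depth. A secondary subtlety is making sure the $\UBIT$-to-$\CNF$ translation on each local axiom is genuinely depth-$O(1)$ and not depth-$O(\log q)$ or worse; since $q$ is a fixed constant this is a matter of bookkeeping, but it must be stated carefully so that the final depth bound reads $O(\Delta' + \Delta)$ with the implied constant depending only on $q$. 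Finally one must double-check the size accounting: there are $\poly(|C|)$ local axioms, each with an $O_q(1)$-size $\IPS$ derivation of its semi-$\CNF$ form, so the overhead is $\poly(|C|)$ as claimed, with the constant again absorbing the fixed $q$.
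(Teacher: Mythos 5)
There is a genuine directional error at the heart of your second stage, and it matters. You write that from $\{\CNF(C(\xbar)=0)\}$ one derives the semi-$\CNF$ form of the output constraint and then composes with the refutation $\pi$. That composition does not yield a refutation of $\{C(\xbar)=0\}$: $\pi$ refutes the $\CNF$ encoding, and you have then derived something \emph{from} that same $\CNF$ encoding, so nothing connects back to the circuit equation as hypothesis. The paper's chain runs the opposite way. Because $\SCNF(C(\xbar)=0)$ is by construction a \emph{substitution instance} of $\CNF(C(\xbar)=0)$ (replace every unary-bit variable $x_{g_j}$ by the polynomial $\UBIT_j(C_g)$, where $C_g$ is the subcircuit of $C$ computing gate $g$), the refutation $\pi$ of the $\CNF$ encoding yields, by substitution, a refutation $\pi'$ of $\SCNF(C(\xbar)=0)$ over the original variables $\xbar$ alone. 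The paper's \Cref{lemma: Translate semi-CNFs from circuit equations in Fixed Finite Fields} then shows that from $C(\xbar)=0$ together with the Boolean/field axioms on $\xbar$, bounded-depth $\IPS$ derives every axiom of $\SCNF(C(\xbar)=0)$ in size $\poly(|C|)$ and depth $O(\Delta)$. Concatenating this derivation with $\pi'$ gives a refutation of $C(\xbar)=0$ of size $S+\poly(|C|)$ and depth $O(\Delta'+\Delta)$; the corollary is the contrapositive.

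Two further points need correction. First, your claim that each $\UBIT_j(x)$ is ``a degree-$(q-1)$ univariate polynomial in a single variable, hence computable by a constant-size, constant-depth circuit'' is only true at the input layer. In the semi-$\CNF$ encoding, for an internal gate $g$ one substitutes $\UBIT_j(C_g)$, i.e.\ the degree-$(q-1)$ Lagrange polynomial applied to the entire subcircuit $C_g$; this has size $O(|C_g|^{q-1})=\poly(|C|)$ and depth $\depth(C_g)+2 \leq \Delta+2$, \emph{not} $O(1)$. Correspondingly, your statement that each per-axiom derivation is ``constant-size, constant-depth'' is too strong; the right accounting is $\poly(|C|)$ total size and $O(\Delta)$ total depth, with each axiom's derivation depending on the depth of the gate it concerns. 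Second, your proposed ``fix'' — to introduce the unary bits of all gates ``as (semi-$\CNF$-encoded) fresh Boolean variables'' and assert local consistency axioms — is exactly the extension-axiom approach of \cite{ST25} that the paper is designed to avoid. If the unary bits are fresh variables, then one again needs the extension axioms $x_g=\sum_i i\cdot x_{g_i}$ as hypotheses, which breaks the unconditional reduction. The entire purpose of $\SCNF$ is that there are \emph{no} fresh variables: the unary bits are concrete polynomials in $\xbar$, so the ``local gate-consistency axioms'' become genuine polynomial identities over $\F_q[\xbar]$ (the extension and unarity axioms) or members of the ideal generated by the field axioms (the Boolean axioms on $\UBIT_j(C_g)$), all of which $\IPS$ proves for free or derives from the Boolean axioms on $\xbar$.
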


Notice that our lower bound \Cref{thm:ips-modp-main-intro} is against constant-depth $\IPS$ refutations which are \emph{multilinear}. Since our algebraic-to-CNF translation lemma requires non-multilinear proofs it is unclear how to carry the translation lemma for our hard instance in constant-depth multilinear $\IPS$. For this reason we cannot apply the translation lemma to our lower bound to obtain $\ACZ[p]$-Frege lower bounds. 

This aligns with the barrier discovered in \cite{HLT24}, in which proof systems closed under AND-introduction (i.e., from a set of formulas derive their conjunction), cannot use the Functional Lower Bound method (note that our lower bound in \Cref{thm:ips-modp-main-intro} employs this method). 

\medskip

Bit-arithmetic arguments are used in proof complexity in many works (beginning from \cite{Bus12}, and further in works such as \cite{AGHT20,IMP20,Gro23}, and as mentioned above \cite{ST25}). However, in all prior works the bit-arithmetic of a given circuit was not efficiently derived \emph{within the system} from the circuits themselves, rather it was used externally to argue about certain simulations. Thus, as far as we are aware of, our result is the first that shows how to efficiently derive internally within the proof system the bit-arithmetic from a circuit.

\section{Preliminaries}


\subsection{ Polynomials and Algebraic Circuits}\label{sec:algebraic_circuits}

For excellent treatises on algebraic circuits and their complexity see Shpilka and Yehudayoff \cite{SY10} as well as Saptharishi \cite{Sap17-survey}.
Let \G\ be a ring. Denote by $\G[X]$ the ring of (commutative) polynomials with coefficients from $ \G $ and variables $X:=\{x_1,x_2,\,\dots\,\}$. A \emph{polynomial }is a formal linear combination of monomials, where a \emph{monomial} is a product of variables. Two polynomials are \emph{identical }if all their monomials have the same coefficients. 

The (total) degree of a monomial is the sum of all the powers of variables in it. The (total) \emph{degree} of a polynomial is the maximal total degree of a monomial in it. The degree of an \emph{individual} variable in a monomial is its power. The \emph{individual degree} of a monomial is the maximal individual degree of  its variables. The individual degree of a polynomial is the maximal individual degree of its monomials. For a polynomial $f$ in $\G[X,Y]$ with $X,Y$ being pairwise disjoint sets of variables,  the \emph{individual $Y$-degree} of $f$ is the maximal individual degree of a $Y$-variable only in $f$.

Algebraic circuits and formulas over the ring \G\   compute polynomials in $\G[X]$ via addition and multiplication gates, starting from the input variables and constants from the ring. More precisely, an \emph{algebraic circuit} $C$ is a finite directed acyclic graph (DAG) with \textit{input nodes} (i.e., nodes  of in-degree zero) and a single \textit{output node} (i.e.,  a node of out-degree zero).  Edges are labelled by ring \G\ elements.  Input nodes are labelled with variables or scalars from the underlying ring. In this work (since we work with constant-depth circuits) all  other nodes have unbounded \emph{fan-in} (that is, unbounded in-degree) and are labelled by either an addition gate $+$ or a product gate $\times$.
Every node in an algebraic circuit $C$ \emph{computes} a polynomial in $\G[X]$ as follows: an input node computes  %
%
the variable or scalar that   labels  it. A $+$ gate
computes the linear combination of all the polynomials computed by its incoming nodes, where the coefficients of the linear combination are determined by the corresponding incoming edge labels. A $\times$ gate computes the product of all the polynomials computed by its incoming nodes (so edge labels in this case are not needed). The polynomial computed by a node $u$ in an algebraic circuit $C$ is denoted $\widehat u$. Given a circuit $C$, we denote by $\widehat C$ the polynomial computed by $C$, that is, the polynomial computed by the output node of $C$.  The \emph{\textbf{size}} of a circuit $C$ is the number of nodes in it, denoted $|C|$, and the \emph{\textbf{depth}} of a circuit is the length of the longest directed path in it (from an input node to the output node). The \textbf{\emph{product-depth }}of the circuit is the maximal number of product gates in a directed path from  an input node to the output node.


We say that a polynomial is \emph{homogeneous} whenever every monomial in it has the same (total) degree. We say that a polynomial is \emph{multilinear} whenever the individual degrees of each of  its variables are at most 1. 

Let $\xbar = \langle X_1,\ldots,X_d\rangle$ be a sequence of pairwise disjoint sets of variables, called \emph{a variable-partition}. We call a monomial $m$ in the variables $\bigcup_{i\in [d]}X_i$  \emph{set-multilinear} over the variable-partition $\xbar$ if it contains exactly one variable from each of the sets $X_i$, i.e. if there are $x_i\in X_i$  for all $i\in [d]$ such that $m = \prod_{i\in [d]}x_i$. A polynomial $f$ is set-multilinear over $\xbar$ if it is a linear combination of set-multilinear monomials over $\xbar$. For a sequence $\xbar$  of sets of variables, we denote by $\F_{\sml}[\xbar]$ the space of all polynomials that are set-multilinear over $\xbar$.

We say that an algebraic circuit $C$ is set-multilinear over $\xbar$ if $C$ computes a polynomial that is set-multilinear over $\xbar$, and each internal node of $C$  computes a polynomial that is set-multilinear over some sub-sequence of $\xbar$.

\subsubsection{Oblivious Algebraic Branching Programs}
An algebraic branching program (ABP) is a graph-based computational model for computing multivariate polynomials, providing a structured alternative to algebraic circuits. We state the formal definition below.

\begin{definition}[\cite{Nis91}; ABP]
Let $\mathbb{F}$ be a field. An \emph{algebraic branching program} (ABP) of \emph{depth}~$D$ and \emph{width} $\leq r$ over variables $x_1, \ldots, x_n$ is a directed acyclic graph (DAG) with the following properties:
\begin{enumerate}
    \item The vertex set is partitioned into $D+1$ layers $V_0, V_1, \ldots, V_D$, where $V_0$ contains a unique \emph{source} node $s$ and $V_D$ contains a unique \emph{sink} node $t$.
    \item All edges are directed from layer $V_{i-1}$ to $V_i$, for $1 \leq i \leq D$.
    \item Each layer satisfies $|V_i| \leq r$ for all $0 \leq i \leq D$.
    \item Each edge $e$ is labeled by a polynomial $f_e \in \mathbb{F}[x_1, \ldots, x_n]$.
\end{enumerate}
The \emph{(individual) degree} of the ABP is the maximum individual degree of any polynomial label $f_e$. The \emph{size} of the ABP is defined as $n \cdot r \cdot d \cdot D$, where $d$ denotes the (individual) degree. Each $s$–$t$ path computes a polynomial equal to the product of the edge labels along the path. The ABP as a whole computes the sum of these polynomials over all $s$–$t$ paths.

We define the following restricted variants of ABPs:
\begin{itemize}
    \item An ABP is called \emph{oblivious} if, for every layer $1 \leq \ell \leq D$, all edge labels between $V_{\ell - 1}$ and $V_\ell$ are univariate polynomials in a single variable $x_{i_\ell} \in \{x_1, \ldots, x_n\}$.
    \item An oblivious ABP is said to be a \emph{read-once oblivious ABP} (roABP) if each variable $x_i$ appears in the edge labels of exactly one layer. In this case, we have $D = n$, and the layers define a variable order, which we assume to be $x_1 < x_2 < \cdots < x_n$, unless otherwise stated.
    \item An oblivious ABP is said to be a \emph{read-$k$ oblivious ABP} if each variable $x_i$ appears in the edge labels of exactly $k$ layers, so that $D = kn$.
\end{itemize}
\end{definition}
We have the following fact about roABPs.

\begin{fact}
\label{fact:roABP}
    roABPs are closed under the following operations:
    \begin{itemize}
        \item If $f(\overline{x},\overline{y})\in\mathbb{F}$ is computable by a width-$r$ roABP in some variable order then the partial substitution $f(\overline{x}, \overline{\alpha})$, for $\alpha\in\mathbb{F}^|\overline{y}|$, is computable by a width-$r$ roABP in the induced order on $\overline{x}$, where the degree of this roABP is bounded by the degree of the roABP for $f$.
        \item If $f(z_1, \ldots , z_n)$ is computable by a width-$r$ roABP in variable order $z_1 < \ldots < z_n$, then $f(x_1y_1,\ldots, x_ny_n)$ is computable by a $\poly(r, \ideg f)$-width roABP in variable order $x_1 < y_1 <\ldots < x_n < y_n$.
    \end{itemize}
\end{fact}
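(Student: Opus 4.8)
The plan is to prove both closure properties in \Cref{fact:roABP} by explicit surgery on the layered graph of the given roABP, checking at each step that the read-once property, the variable order, the width bound, and the degree bound are all maintained.

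For the first item, I would start with the width-$r$ roABP for $f(\overline{x},\overline{y})$ in its variable order and perform the substitution $\overline{y}=\overline{\alpha}$ layer by layer. A layer that reads a variable $y_i$ has all of its edge-labels univariate in $y_i$, so after plugging in $y_i=\alpha_i$ its transition matrix becomes a matrix of field constants; the layers reading $\overline{x}$-variables are untouched, and they still read distinct variables in the induced order. It then remains to eliminate the constant layers. A constant layer can be absorbed into an adjacent layer by replacing the two transition matrices with their matrix product: since each matrix has at most $r$ rows and at most $r$ columns, the product again has width $\le r$; and since one factor is a matrix of constants while the other is either constant or univariate in a single $\overline{x}$-variable, the product is univariate in that same variable with no increase in degree. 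Handling the leading and trailing constant layers by merging them toward the source/sink side, and iterating, leaves a width-$r$ roABP that reads exactly the $\overline{x}$-variables in the induced order and has degree at most that of the original (and if every layer was a $\overline{y}$-layer, $f(\overline{x},\overline{\alpha})$ is a constant, which is trivially an roABP). The read-once property survives because merging never duplicates a variable.

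For the second item, I would process the roABP for $f(z_1,\dots,z_n)$ one layer at a time, replacing the layer reading $z_\ell$ by two consecutive layers, the first reading $x_\ell$ and the second reading $y_\ell$, so that their composition effects the substitution $z_\ell\mapsto x_\ell y_\ell$. Write the transition matrix of the $z_\ell$-layer as $M(z_\ell)=\sum_{k=0}^{d} z_\ell^{k} M_k$ with $d:=\ideg f$ and the $M_k$ constant matrices; then $M(x_\ell y_\ell)=\sum_k x_\ell^{k} y_\ell^{k} M_k$ factors as a product $A(x_\ell)\,B(y_\ell)$, where $A(x_\ell)$ is the block-row matrix whose $k$-th block is $x_\ell^{k} I$ (an identity of the size of the left endpoint of the layer) and $B(y_\ell)$ is the block-column matrix whose $k$-th block is $y_\ell^{k} M_k$; a one-line computation recovers $A(x_\ell)B(y_\ell)=M(x_\ell y_\ell)$. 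The new intermediate layer has width $(d+1)$ times the width of the original layer, hence $\le(d+1)r=\poly(r,\ideg f)$; every new edge-label is univariate, in $x_\ell$ or in $y_\ell$ respectively, of degree $\le d$; and each of $x_\ell,y_\ell$ is read in exactly one layer, yielding an roABP in the order $x_1<y_1<\cdots<x_n<y_n$ of the claimed width.

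None of this is difficult; the only care needed is bookkeeping. I expect the one genuinely fiddly point to be phrasing the layer-merging step in the first item so that all boundary cases---a constant layer adjacent to the source or sink, several consecutive constant layers, and the fully degenerate case where $f(\overline{x},\overline{\alpha})$ collapses to a scalar---are visibly covered, together with the routine verification that the block factorization in the second item keeps each of the two new layers univariate in a single fresh variable.
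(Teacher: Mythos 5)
Your proof is correct. The paper states this as an unproved \emph{Fact} (a folklore closure property of roABPs, in the spirit of \cite{Nis91} and \cite{FSTW21}), so there is no proof in the paper to compare against; the layer-merging argument for the first item and the block factorization $M(x_\ell y_\ell) = A(x_\ell)B(y_\ell)$ with the $(d{+}1)r$-width intermediate layer for the second item are exactly the standard arguments one would give, and your handling of the degenerate cases (consecutive constant layers, boundary layers, $f(\overline{x},\overline{\alpha})$ collapsing to a scalar) is what makes the first item watertight.
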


\subsection{Strong Algebraic Proof Systems}
For a survey about algebraic proof systems and their relations to algebraic complexity see the survey \cite{PT16}.
Grochow and Pitassi~\cite{GP14}  suggested the following algebraic proof system  which is essentially a Nullstellensatz proof system \cite{BeameIKPP96} written as an algebraic circuit. A proof in the  Ideal Proof System is given as  a \emph{single} polynomial. We provide below the \emph{Boolean} version of  $\IPS$ (which includes the Boolean axioms), namely the version that establishes the unsatisfiability over 0-1 of a set of polynomial equations.  In what follows we follow the notation in \cite{FSTW21}:



\begin{definition}[Ideal Proof System ($\IPS$),
        Grochow-Pitassi~\cite{GP14}]\label{def:IPS} Let $f_1(\vx),\ldots,f_m(\vx),p(\vx)$ be a collection of polynomials in $\F[x_1,\ldots,x_n]$ over the field \F. An \demph{$\IPS$ proof of $p(\vx)=0$ from axioms $\{f_j(\vx)=0\}_{j\in [m]}$}, showing that $p(\vx)=0$ is semantically  implied from the assumptions $\{f_j(\vx)=0\}_{j\in [m]}$ over $0$-$1$ assignments, is an algebraic circuit $C(\vx,\vy,\vz)\in\F[\vx,y_1,\ldots,y_m,z_1,\ldots,z_n]$ such that (the equalities in what follows stand for  formal polynomial identities\footnote{That is, $C(\vx,\vnz,\vnz)$ computes the zero polynomial and $C(\vx,f_1(\vx),\ldots,f_m(\vx),x_1^2-x_1,\ldots,x_n^2-x_n)$ computes the polynomial $p(\vx)$.}; recall the notation $\widehat C$ for the \emph{polynomial} computed by circuit $C$):
        \begin{enumerate}
                \item $\widehat C(\vx,\vnz,\vnz) = 0$;
                \item $\widehat C(\vx,f_1(\vx),\ldots,f_m(\vx),x_1^2-x_1,\ldots,x_n^2-x_n)=p(\vx)$.
        \end{enumerate}
        The \demph{size of the $\IPS$ proof} is the size of the circuit $C$. An \IPS\ proof  $C(\vx,\vy,\vz)$ of  $1=0$ from $\{f_j(\vx)=0\}_{j\in[m]}$ is called an \demph{$\IPS$ refutation} of $\{f_j(\vx)=0\}_{j\in[m]}$ (note that in this case  it must hold that  $\{f_j(\vx)=0\}_{j\in [m]}$ have no common solutions in $\bits^n$).
        If $\widehat C$ is of individual degree $\le 1$ in each $y_j$ and $z_i$, then this is a \demph{linear} $\IPS$ refutation (called \emph{Hilbert} $\IPS$ by Grochow-Pitassi~\cite{GP14}), which we will abbreviate as \lIPS. If $\widehat C$ is of individual degree $\le 1$ only in the $y_j$'s then we say this is an \lbIPS\ refutation (following \cite{FSTW21}). If  $\widehat C(\vx,\vy,\vnz)$ is of individual degree $\le 1$ in each $x_j$ and $y_i$ variables, while $\widehat C(\vx,\vnz,\vz)$ is not necessarily multilinear, then this is a \demph{multilinear} \lbIPS\ refutation. 
        
        If $C$ is of depth at most $d$, then this is  called a depth-$d$ \IPS\ refutation, and further called a depth-$d$ \lIPS\ refutation if $\widehat C$ is linear in $\vy,\vz$, and a depth-$d$ \lbIPS\ refutation if $\widehat C$ is linear in $\vy$, and depth-$d$ multilinear \lbIPS\ refutation if $\widehat C(\vx,\vy,\vnz)$ is linear in $\vx,\vy$. 
\end{definition}

Notice that the definition above adds the  equations $\{x_i^2-x_i=0\}_{i=1}^n$, called the  \demph{Boolean axioms} denoted  $\vx^2-\vx$, to the system $\{f_j(\vx)=0\}_{j=1}^m$. This allows  to refute over $\bits^n$ unsatisfiable systems of equations. The variables $\vy,\vz$ are  called the \emph{placeholder} \emph{variables} since they are used as placeholders for the axioms. Also, note that the first equality in the definition of $\IPS$ means that the polynomial computed by $C$ is in the ideal generated by $\overline y,\overline z$, which in turn, following the second equality, means that $C$ witnesses the fact that $1$ is in the ideal generated by $f_1(\vx),\ldots,f_m(\vx),x_1^2-x_1,\ldots,x_n^2-x_n$ (the existence of this witness, for unsatisfiable set of polynomials, stems from the Nullstellensatz \cite{BeameIKPP96}).    
\medskip 

In this work we focus on multilinear \lbIPS\ refutations. This proof system is complete because its \emph{weaker} subsystem multilinear-formula \lbIPS\  was shown in \cite[Corollary 4.12]{FSTW21} to be complete (and to simulate Nullstellensatz with respect to sparsity by already depth-2 multilinear \lbIPS\ proofs). 

To build an intuition for multilinear \lbIPS\ it is useful to consider a subsystem of it in which refutations are written as 
$$
C(\vx,\vy,\vz)= \sum_i g_i(\vx)\cdot y_i + C'(\vx,\vz),$$ where $\widehat C'(\vx,\vnz)=0$ and the $g_i$'s are multilinear. Note indeed that $C(\vx,\vnz,\vnz)=0$ so that the first condition of $\IPS$ proofs holds, and that $C(\vx,\vy,\vnz)$ is indeed multilinear in $\vx,\vy$.


\medskip 

\textbf{Important remark}: Unlike the multilinear-formula \lbIPS\ in \cite{FSTW21}, in  multilinear \lbIPS\ refutations $C(\vx,\vy,\vz)$ we do \emph{not} require that the refutations are written as multilinear \emph{formulas} or multilinear \emph{circuits}, only that the \emph{polynomial} \emph{computed }by $C(\vx,\vy,\vnz)$ is multilinear, hence the latter proof system easily simulates the former. 

\medskip

We now formally state how we prove a functional lower bound for $\mathcal{C}$-$\IPS$ systems.

\begin{theorem}[Functional Lower Bound Method; Lemma 5.2 in {\cite{FSTW21}}]
\label{thm:func_lb_method}
Let \( \mathcal{C} \subseteq \mathbb{F}[\overline{x}] \) be a circuit class, and let \( f(\overline{x}) \in \mathcal{C} \) be a polynomial, which has no boolean roots. A \emph{functional lower bound against \( \mathcal{C}\text{-}\lbIPS \)} for \( f(\overline{x}) \) and \( \overline{x}^2 - \overline{x} \) is a lower bound argument using the following circuit lower bound for \( \frac{1}{f(\overline{x})} \): Suppose that \( g \notin \mathcal{C} \) for all \( g \in \mathbb{F}[\overline{x}] \) with
\[
g(\overline{x}) = \frac{1}{f(\overline{x})}, \qquad \forall \, \overline{x} \in \{0,1\}^n. \tag{1.1}
\]
Then, \( f(\overline{x}) \) and \( \overline{x}^2 - \overline{x} \) do not have \( \mathcal{C}\text{-}\lbIPS \) refutations. Moreover, if \( \mathcal{C} \) is a set of multilinear polynomials, then \( f(\overline{x}) \) and \( \overline{x}^2 - \overline{x} \) do not have \( \mathcal{C}\text{-}\IPS \) refutations.
\end{theorem}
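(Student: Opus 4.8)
The plan is to argue by contraposition: assuming a $\mathcal{C}\text{-}\lbIPS$ refutation of $f(\vx)$ and $\vx^2-\vx$ exists, I will extract from it a circuit in $\mathcal{C}$ that computes $1/f(\vx)$ on the Boolean cube, contradicting the hypothesis \eqref{eq:1.1}. First I would recall the shape of a $\mathcal{C}\text{-}\lbIPS$ refutation: it is a circuit $C(\vx,\vy,\vz)$ with $\widehat C(\vx,\vnz,\vnz)=0$ and $\widehat C(\vx,f(\vx),\vx^2-\vx)=1$, where (here $m=1$) $\widehat C$ has individual degree $\le 1$ in the single placeholder variable $y$. Writing $C$ as a polynomial in $y$ and using linearity in $y$, we get $\widehat C(\vx,y,\vz) = y\cdot g(\vx,\vz) + C_0(\vx,\vz)$ for polynomials $g, C_0$, and the first $\IPS$ condition $\widehat C(\vx,0,\vnz)=0$ forces $C_0(\vx,\vnz)=0$, i.e. every monomial of $C_0$ contains some $z_i$. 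Hence, after substituting $\vz \mapsto \vx^2-\vx$ and restricting to $\vx\in\{0,1\}^n$, the term $C_0$ vanishes pointwise on the cube, and the second $\IPS$ condition gives $f(\vx)\cdot g(\vx,\vx^2-\vx) = 1$ as a polynomial identity, so in particular $g(\vx,\vx^2-\vx) = 1/f(\vx)$ for all $\vx\in\{0,1\}^n$.

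The remaining point is that the function $\vx \mapsto g(\vx,\vx^2-\vx)$ is computed by a polynomial in $\mathcal{C}$. The polynomial $g(\vx,\vz)$ is obtained from the circuit $C$ by the standard homogenization-in-$y$ trick (take the coefficient of $y^1$), and then substituting $z_i \mapsto x_i^2 - x_i$ is a coordinate-wise substitution of a low-degree polynomial into each $z_i$-input; for the circuit classes $\mathcal{C}$ under consideration (constant-depth circuits, roABPs, multilinear circuits, etc.) both operations keep us inside $\mathcal{C}$ up to the polynomial blow-up allowed in the definition of a $\mathcal{C}\text{-}\lbIPS$ proof — indeed this closure is precisely why the fragment $\mathcal{C}\text{-}\lbIPS$ is formulated in terms of the multipliers $g_i$ rather than via \eqref{eq:intro:IPS} directly. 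So $g(\vx,\vx^2-\vx) =: \tilde g(\vx) \in \mathcal{C}$ and $\tilde g(\vx) = 1/f(\vx)$ on $\{0,1\}^n$, contradicting \eqref{eq:1.1}. This proves the first assertion.

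For the ``moreover'' clause, suppose instead we have a full $\mathcal{C}\text{-}\IPS$ refutation (not merely $\lbIPS$) and that $\mathcal{C}$ consists of multilinear polynomials. Now the placeholder variable $y$ may appear to higher powers, so write $\widehat C(\vx,y,\vz) = \sum_{k\ge 1} y^k g_k(\vx,\vz) + C_0(\vx,\vz)$ with $C_0(\vx,\vnz)=0$ as before. Substituting $y\mapsto f(\vx)$, $\vz\mapsto \vx^2-\vx$ and restricting to the cube yields $\sum_{k\ge1} f(\vx)^k g_k(\vx,\vx^2-\vx) = 1$, i.e. $f(\vx)\cdot\big(\sum_{k\ge1} f(\vx)^{k-1} g_k(\vx,\vx^2-\vx)\big) = 1$ pointwise; the bracketed function is a polynomial $h(\vx)$ with $h(\vx)=1/f(\vx)$ on $\{0,1\}^n$. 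Because $\mathcal{C}$ is a class of multilinear polynomials, one works with the multilinearization $\widetilde h$ of $h$: since $h$ and $\widetilde h$ agree on $\{0,1\}^n$, $\widetilde h$ still computes $1/f$ on the cube, and the hypothesis \eqref{eq:1.1} — which quantifies over \emph{all} $g\in\F[\vx]$ representing $1/f$ on the cube, in particular the multilinear one — is contradicted.

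The main obstacle, and the step deserving the most care, is the closure claim in the second paragraph: verifying that extracting the $y$-linear (or $y$-power-$k$) coefficient and then performing the substitution $z_i\mapsto x_i^2-x_i$ genuinely lands back in $\mathcal{C}$ with only a polynomial size increase. For general circuits this is routine, but for structured classes (roABPs, set-multilinear or constant-depth circuits) it is exactly the kind of bookkeeping that the $\mathcal{C}\text{-}\lbIPS$ formalism was set up to handle, and one must invoke the relevant closure properties of $\mathcal{C}$ (e.g. the second item of \Cref{fact:roABP} for the roABP case) rather than take them for granted. Everything else is a direct unwinding of the two defining identities of an $\IPS$ proof together with the observation that $\vx^2-\vx$ vanishes on $\{0,1\}^n$.
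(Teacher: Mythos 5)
Your overall strategy (contraposition, peeling off the coefficient of the placeholder variable, and landing on a polynomial that equals $1/f$ on the cube) is the right shape, but the specific decomposition you chose creates a closure problem that the paper's argument — and FSTW21's — deliberately avoids, and I don't think your version can be patched for the circuit classes this theorem actually gets applied to.

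The issue is your choice to substitute $z_i \mapsto x_i^2 - x_i$. You correctly observe that $g(\vx,\vx^2-\vx)$ agrees with $1/f$ on $\{0,1\}^n$, and you flag the resulting closure claim as ``the step deserving the most care.'' But for the classes that matter in this paper the claim simply fails: if $\mathcal{C}$ is a multilinear (or set-multilinear) circuit class and $g(\vx,\vz)$ depends on both $x_i$ and $z_i$, then $g(\vx,\vx^2-\vx)$ acquires degree $2$ in $x_i$ and leaves $\mathcal{C}$; for roABPs, reintroducing $x_i$ in a second layer turns the program into a read-twice ABP, and \Cref{fact:roABP} (the $x_i \mapsto x_i y_i$ gadget) does not cover this. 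The ``moreover'' clause, which is precisely about multilinear $\mathcal{C}$, is therefore the case where your argument breaks hardest — and your fix there (pass to the multilinearization $\widetilde h$ of $h$) is circular, because the hypothesis only tells you $\widetilde h \notin \mathcal{C}$; to contradict it you would need to independently show $\widetilde h \in \mathcal{C}$, which you never do.

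The correct move, used both by FSTW21 and verbatim in this paper's proof of \Cref{thm:ips-modp-main}, is to set $\vz \mapsto \vnz$ rather than $\vz \mapsto \vx^2 - \vx$. Linearity of $\widehat C$ in $y$ together with $\widehat C(\vx,\vnz,\vnz)=0$ gives $\widehat C(\vx,y,\vnz) = g(\vx)\cdot y$, and the witness $g(\vx)$ is literally the circuit $C(\vx,1,\vnz)$ — a \emph{scalar} substitution, which stays in $\mathcal{C}$ for any reasonable class with no bookkeeping at all. One then writes $\widehat C(\vx,y,\vz) = g(\vx)\, y + \sum_i h_i z_i$, plugs in $y=f$, $\vz = \vx^2-\vx$, and the $\sum h_i(x_i^2-x_i)$ term dies pointwise on the cube, giving $g\cdot f = 1$ there. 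For the ``moreover'' case, note that if $\mathcal{C}$ consists of multilinear polynomials then $\widehat C$ is already linear in $y$, so higher $y$-powers never arise and the same one-line extraction applies; your expansion $\sum_{k\ge 1} y^k g_k$ is considering a situation that the multilinearity hypothesis rules out. I'd also tighten the phrasing ``$f(\vx)\cdot g(\vx,\vx^2-\vx) = 1$ as a polynomial identity'' — this holds only on the Boolean cube, not identically; your next sentence shows you know this, but as written it is false.
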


\subsection{Coefficient Matrix and Dimension}
We define notions and measures used in this paper. Consider a polynomial \( f \in \mathbb{F}[\overline{x}, \overline{y}] \). We can construct this polynomial by organizing the coefficients of \( f \) into a matrix format: the rows are indexed by monomials \( \overline{x}^{\overline{a}} \) in the \( \overline{x} \)-variables, the columns are indexed by monomials \( \overline{y}^{\overline{b}} \) in the \( \overline{y} \)-variables, and the entry at position \((\overline{x}^{\overline{a}}, \overline{y}^{\overline{b}})\) is the coefficient of the monomial \( \overline{x}^{\overline{a}} \overline{y}^{\overline{b}} \) in \( f \).

\begin{definition}[Coefficient Matrix]
\label{def:coefficient_matrix}
Let $f \in \mathbb{F}[\overline{x}, \overline{y}]$ be a polynomial, where $\overline{x} = \{x_1, \ldots, x_n\}$ and $\overline{y} = \{y_1, \ldots, y_m\}$. Let $\coeff_{\overline{x}^{\overline{a}}\overline{y}^{\overline{b}}}(f)$ denote the coefficient of the monomial $\overline{x}^{\overline{a}} \overline{y}^{\overline{b}}$ in $f$. The \emph{coefficient matrix} of $f$ is the matrix $C_f$ with entries
\[
(C_f)_{\overline{a}, \overline{b}} := \coeff_{\overline{x}^{\overline{a}} \overline{y}^{\overline{b}}}(f),
\]
such that $\sum_{i=1}^n a_i + \sum_{j=1}^m b_j \leq \deg(f)$.
\end{definition}

For our purposes, we care about the dimension of this matrix.
\begin{definition}[Coefficient space]
\label{def:coefficient_space}
    Let $\coeff_{\overline{x}|\overline{y}}: \mathbb{F}[\overline{x}, \overline{y}] \rightarrow 2^{\mathbb{F}[\overline{x}]}$ be the space of $\mathbb{F}[\overline{x}][\overline{y}]$ coefficients, defined by
    $$
    \coeffs{{\overline{x}|\overline{y}}}(f) := \Big \{ \coeff_{\overline{x}|\overline{y}^{\overline{b}}} (f) \Big \}_{\overline{b}\in\mathbb{N}^n},
    $$
where the coefficients of $f$ are in $\mathbb{F}[\overline{x}][\overline{y}]$. Similarly we have $\coeffs{{\overline{y}|\overline{x}}}(f)$ by taking coefficients in $\mathbb{F}[\overline{y}][\overline{x}]$
\end{definition}
That is, we use the above in the context of \emph{coefficient dimension}, where we look at the dimension of the coefficient space of $f$, denoted $\dim \coeffs{{\overline{x}|\overline{y}}}(f)$. We state a result that connects this to the rank of the matrix.

\begin{lemma}[Coefficient matrix rank equals dimension of polynomial space; Nisan \cite{Nis91}]
    Consider $f\in \mathbb{F}[\overline{x}, \overline{y}]$, and let $C_f$ denote the coefficient matrix of $f$ (\Cref{def:coefficient_matrix}). Then, the following holds:
    $$
    \rank C_f = \dim \coeffs{{\overline{x}|\overline{y}}}(f) = \dim \coeffs{{\overline{y}|\overline{x}}}(f).
    $$
\end{lemma}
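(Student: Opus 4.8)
The plan is to recognize the three quantities as, respectively, the rank of $C_f$, the dimension of its column space, and the dimension of its row space, and then invoke the equality of row rank and column rank. Concretely, fix the (finite) index sets $M_{\overline x} = \{\overline x^{\overline a} : |\overline a| \le \deg f\}$ and $M_{\overline y} = \{\overline y^{\overline b} : |\overline b| \le \deg f\}$, so that $C_f$ is an $M_{\overline x} \times M_{\overline y}$ matrix over $\mathbb{F}$. Write $f = \sum_{\overline b} c_{\overline b}(\overline x)\, \overline y^{\overline b}$, where $c_{\overline b}(\overline x) = \coeff_{\overline x \mid \overline y^{\overline b}}(f) \in \mathbb{F}[\overline x]$. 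The first step is the bookkeeping observation that, since every monomial of $f$ has total degree at most $\deg f$, each $c_{\overline b}$ is supported on monomials in $M_{\overline x}$, and its coefficient vector in the basis $M_{\overline x}$ is exactly the column of $C_f$ indexed by $\overline y^{\overline b}$. Thus, under the coordinate isomorphism $\Phi \colon \mathbb{F}^{M_{\overline x}} \to \spant_{\mathbb F}(M_{\overline x}) \subseteq \mathbb{F}[\overline x]$ sending a vector to the corresponding polynomial, the column space of $C_f$ maps isomorphically onto $\spant_{\mathbb F}\{c_{\overline b}\}_{\overline b} = \coeffs{\overline x \mid \overline y}(f)$.

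Since $\Phi$ is a linear isomorphism onto its image, it preserves dimension, giving $\dim \coeffs{\overline x \mid \overline y}(f) = \dim(\text{column space of } C_f) = \rank C_f$. Applying the same argument with the roles of $\overline x$ and $\overline y$ interchanged---equivalently, replacing $C_f$ by its transpose $C_f^{T}$, which is the coefficient matrix of $f$ viewed in $\mathbb{F}[\overline y][\overline x]$---yields $\dim \coeffs{\overline y \mid \overline x}(f) = \dim(\text{column space of } C_f^{T}) = \dim(\text{row space of } C_f)$. Finally, the elementary fact that row rank equals column rank for any matrix over a field gives $\dim(\text{row space of } C_f) = \rank C_f$, so all three quantities coincide.

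There is no real obstacle here; the only point that needs a moment of care is the degree truncation built into \Cref{def:coefficient_matrix}: one must check that restricting rows and columns to monomials of degree $\le \deg f$ discards no monomial actually occurring in $f$ (immediate, as $\deg f$ bounds the total degree of every monomial of $f$) and, symmetrically, that each coefficient polynomial $c_{\overline b}(\overline x)$ genuinely lies in $\spant_{\mathbb F}(M_{\overline x})$ rather than in a larger space (which follows since $|\overline a| \le |\overline a| + |\overline b| \le \deg f$ whenever $\overline x^{\overline a}\overline y^{\overline b}$ appears in $f$). Once this is noted, the isomorphism $\Phi$ is well-defined and the argument above goes through verbatim.
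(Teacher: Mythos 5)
Your proof is correct. The paper states this lemma without proof, attributing it to Nisan \cite{Nis91}, and your argument is exactly the standard one: identify the column space of $C_f$ with $\spant\coeffs{\overline{x}\mid\overline{y}}(f)$ and the row space with $\spant\coeffs{\overline{y}\mid\overline{x}}(f)$ via the obvious coordinate isomorphisms, then invoke the equality of row rank and column rank.
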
 

We now show that the coefficient dimension in fact characterizes the width of roABPs.
\begin{lemma}[roABP width equals coefficient dimension]
\label{lem:roABP_width_equals_coeff_dim}
Let \( f \in \mathbb{F}[\overline{x}] \) be a polynomial. If \( f \) is computed by a roABP of width \( r \), then
\[
r \geq \max_i \dim \coeffs{{\overline{x}_{\leq i} \mid \overline{x}_{> i}}}(f).
\]
Conversely, \( f \) can be computed by a roABP of width $\max_i \dim \coeffs{{\overline{x}_{\leq i} \mid \overline{x}_{> i}}}(f)$.
\end{lemma}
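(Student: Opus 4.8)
## Proof Plan for Lemma (roABP width equals coefficient dimension)

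The plan is to prove both directions by exhibiting a tight correspondence between the layers of a read-once oblivious ABP and the coefficient spaces $\coeffs{\overline{x}_{\leq i}\mid\overline{x}_{>i}}(f)$. Throughout, fix the variable order $x_1 < x_2 < \cdots < x_n$ witnessing the roABP structure, so that layer $\ell$ reads only the variable $x_\ell$.

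\textbf{Lower bound direction (width $\Rightarrow$ coefficient dimension).} Suppose $f$ is computed by a width-$r$ roABP with layers $V_0,\dots,V_n$, $|V_i|\le r$. Fix a cut position $i$. For a vertex $v\in V_i$, let $L_v(\overline{x}_{\leq i})$ be the sum over all $s$–$v$ paths of the product of the edge labels (a polynomial in $x_1,\dots,x_i$ only, by the read-once property), and let $R_v(\overline{x}_{>i})$ be the sum over all $v$–$t$ paths of the product of edge labels (a polynomial in $x_{i+1},\dots,x_n$ only). By splitting every $s$–$t$ path at layer $V_i$ we get the standard identity
\[
f(\overline{x}) = \sum_{v\in V_i} L_v(\overline{x}_{\leq i})\cdot R_v(\overline{x}_{>i}).
\]
Now I expand each $R_v$ in the monomial basis of the $\overline{x}_{>i}$ variables: $R_v = \sum_{\overline b} c_{v,\overline b}\,\overline{x}_{>i}^{\,\overline b}$. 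Collecting the coefficient of a fixed $\overline{x}_{>i}^{\,\overline b}$ in $f$ gives $\coeff_{\overline{x}_{\leq i}\mid \overline{x}_{>i}^{\,\overline b}}(f) = \sum_{v\in V_i} c_{v,\overline b}\, L_v(\overline{x}_{\leq i})$, so every element of $\coeffs{\overline{x}_{\leq i}\mid\overline{x}_{>i}}(f)$ lies in the span of $\{L_v : v\in V_i\}$, a set of size at most $r$. Hence $\dim\coeffs{\overline{x}_{\leq i}\mid\overline{x}_{>i}}(f)\le r$; taking the max over $i$ gives the claim.

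\textbf{Upper bound direction (coefficient dimension $\Rightarrow$ roABP).} Let $w_i := \dim\coeffs{\overline{x}_{\leq i}\mid\overline{x}_{>i}}(f)$ and $w := \max_i w_i$; I build a roABP of width $w$ in the given order. For each $i$ choose a basis $B_i = \{b_{i,1},\dots,b_{i,w_i}\}\subseteq\mathbb{F}[\overline{x}_{\leq i}]$ of the coefficient space (pad $V_i$ to size $w$ with zero-labelled dummy vertices if needed); set $V_0=\{s\}$, $V_n=\{t\}$. The key observation is a transition relation: since each basis polynomial $b_{i,j}\in\mathbb{F}[\overline{x}_{\leq i}]$ is in particular a polynomial whose coefficients with respect to $\overline{x}_{\leq i-1}\mid x_i$ all lie in $\coeffs{\overline{x}_{\leq i-1}\mid \overline{x}_{>i-1}}(f)$ — this is the crucial point, and follows because a coefficient of $f$ w.r.t.\ $\overline{x}_{>i}$ is also obtained by further extracting the coefficient of $f$ w.r.t.\ a larger monomial in $\overline{x}_{>i-1}$ — we can write $b_{i,j}(\overline{x}_{\leq i}) = \sum_{k=1}^{w_{i-1}} \lambda_{i,j,k}(x_i)\, b_{i-1,k}(\overline{x}_{\leq i-1})$ for suitable univariate polynomials $\lambda_{i,j,k}(x_i)$. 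Label the edge from $b_{i-1,k}$ to $b_{i,j}$ by $\lambda_{i,j,k}(x_i)$. An induction on layers shows that the polynomial computed from $s$ to vertex $b_{i,j}$ equals $b_{i,j}(\overline{x}_{\leq i})$; at the last layer, writing $f$ itself in terms of the basis $B_{n-1}$ (times univariates in $x_n$) yields $f$ at $t$. Since every layer has width $\le w$ and each edge label is univariate in the single variable read at that layer, this is a valid width-$w$ roABP.

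\textbf{Main obstacle.} The delicate step is the transition relation in the converse direction: one must argue that a basis element $b_{i,j}$ of the coefficient space at cut $i$ can be expressed using (univariate multiples of) basis elements at cut $i-1$. The cleanest way to see this is via the nesting $\coeffs{\overline{x}_{\leq i}\mid\overline{x}_{>i}}(f) \subseteq \mathbb{F}[x_i]\text{-span of }\coeffs{\overline{x}_{\leq i-1}\mid\overline{x}_{>i-1}}(f)$, which holds because extracting the $\overline{x}_{>i}^{\,\overline b}$-coefficient of $f$ can be done in two stages — first extract the $\overline{x}_{>i-1}$-coefficients (landing in the smaller coefficient space), then collect the appropriate $x_i$-powers. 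Getting this containment and the resulting linear-algebraic bookkeeping exactly right (including the padding to uniform width $w$ and the treatment of the final layer, where $f$ plays the role of the "basis" on the $\overline{x}_{\leq n}$ side) is where the real work lies; everything else is the routine path-splitting argument. I would also remark that combined with Nisan's lemma equating coefficient matrix rank with coefficient dimension, this gives the familiar statement that roABP width in a fixed order is exactly $\max_i\rank C_{f,i}$.
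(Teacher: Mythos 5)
The paper does not prove this lemma: it is stated as a known fact (Nisan's characterization of roABP width in a fixed order, rephrased here via coefficient dimension, which is equated to rank of the coefficient matrix by the preceding lemma). Your proof correctly reconstructs the standard argument. The lower-bound direction by splitting $s$--$t$ paths at layer $V_i$ into $L_v\cdot R_v$ pieces is exactly right, and for the converse your key nesting observation
\[
\coeffs{{\overline{x}_{\leq i}\mid\overline{x}_{>i}}}(f)\ \subseteq\ \F[x_i]\text{-span of }\coeffs{{\overline{x}_{\leq i-1}\mid\overline{x}_{>i-1}}}(f)
\]
is precisely what is needed to manufacture the transition weights $\lambda_{i,j,k}(x_i)$; it follows by extracting $x_i$-coefficients from each $\coeff_{\overline{x}_{>i}^{\,\overline b}}(f)$, which produces coefficients of $f$ with respect to larger monomials in $\overline{x}_{>i-1}$. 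The induction over layers, the padding with dummy vertices to uniform width, and the treatment of the endpoints (with $b_{0,1}=1$ and $b_{n,1}=f$) are all handled correctly, so the construction yields a valid roABP of width $\max_i \dim\coeffs{{\overline{x}_{\leq i}\mid\overline{x}_{>i}}}(f)$.
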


The coefficient dimension of a polynomial \( f(\overline{x}, \overline{y}) \) measures its complexity by considering the span of all coefficient vectors with respect to the \( \overline{y} \)-monomials. In a similar vein, we also consider the \emph{evaluation dimension}, introduced by Saptharishi \cite{Saptharishi12}. Specifically, it captures the dimension of the span of all evaluations of \( f(\overline{x}, \cdot) \) at points \( \overline{y} \in \mathbb{F}^m \). 

\begin{definition}[Evaluation dimension] 
\label{def:eval}
    Let $S\subseteq \mathbb{F}$. Let $\mathbf{Eval}_{\overline{x}|\overline{y}, S}: \mathbb{F}[\overline{x},\overline{y}] \rightarrow 2^{\mathbb{F}[\overline{x}]}$ be the space of $\mathbb{F}[\overline{x},\overline{y}]$ evaluations over $S$, defined by
    $$
    \mathbf{Eval}_{\overline{x}|\overline{y}, S}(f(\overline{x},\overline{y})) := \{f(\overline{x},\overline{\beta})\}_{\overline{\beta} \in S^{|\overline{y}|}}.
    $$
    The \emph{evaluation dimension} is therefore the dimension of the above space, denoted $\dim \evals{\overline{x} \mid \overline{y}, S}(f)$.
\end{definition}
That is, we consider the span of functions $f$ over all assignments to the $\overline{y}$ variables. This measure is particularly useful for our applications, as it is directly related to the coefficient dimension.

\begin{lemma}[Evaluation dimension bounds coefficient dimension; Forbes-Shpilka~\cite{ForbesShpilka13b}]
\label{lem:eval_dim_lb_coeff_dim}
Let \( f \in \mathbb{F}[\overline{x}, \overline{y}] \), and let \( S \subseteq \mathbb{F} \). Then,
\[
\evals{\overline{x} \mid \overline{y}, S}(f) \subseteq \spant\coeffs{_{\overline{x} \mid \overline{y}}}(f),
\]
and hence,
\[
\dim \evals{\overline{x} \mid \overline{y}, S}(f) \leq \dim \coeffs{{\overline{x} \mid \overline{y}}}(f).
\]
Moreover, if \( |S| > \mathrm{ideg}(f) \), then equality holds:
\[
\dim \evals{\overline{x} \mid \overline{y}, S}(f) = \dim \coeffs{{\overline{x} \mid \overline{y}}}(f).
\]
\end{lemma}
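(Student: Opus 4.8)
The plan is to exhibit, for each evaluation point $\overline{\beta} \in S^{|\overline{y}|}$, an explicit expression of $f(\overline{x},\overline{\beta})$ as an $\mathbb{F}$-linear combination of the polynomials in $\coeffs{\overline{x}\mid\overline{y}}(f)$; this immediately gives the inclusion $\evals{\overline{x}\mid\overline{y},S}(f) \subseteq \spant\coeffs{\overline{x}\mid\overline{y}}(f)$ and hence the dimension inequality. First I would write $f = \sum_{\overline{b}} c_{\overline{b}}(\overline{x})\,\overline{y}^{\overline{b}}$ where $c_{\overline{b}}(\overline{x}) = \coeff_{\overline{x}\mid\overline{y}^{\overline{b}}}(f) \in \mathbb{F}[\overline{x}]$, the sum ranging over the finitely many $\overline{b}$ with nonzero coefficient. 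Then for any $\overline{\beta} \in S^{|\overline{y}|}$ we have $f(\overline{x},\overline{\beta}) = \sum_{\overline{b}} \overline{\beta}^{\overline{b}}\, c_{\overline{b}}(\overline{x})$, and since each scalar $\overline{\beta}^{\overline{b}} \in \mathbb{F}$, this is a linear combination of elements of $\coeffs{\overline{x}\mid\overline{y}}(f)$. Taking spans and then dimensions yields the first inequality.

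For the ``moreover'' part, I would argue that when $|S| > \mathrm{ideg}(f)$ the reverse inclusion $\spant\coeffs{\overline{x}\mid\overline{y}}(f) \subseteq \spant\evals{\overline{x}\mid\overline{y},S}(f)$ also holds, which forces equality of dimensions. The key is a univariate-interpolation argument applied one $\overline{y}$-variable at a time: fixing the $\overline{x}$-variables and all but one $y$-variable, $f$ becomes a univariate polynomial in $y_k$ of degree at most $\mathrm{ideg}(f) < |S|$, so its coefficients (which are the $\coeff$ polynomials, grouped by the power of $y_k$) lie in the span of its evaluations at the $\geq \mathrm{ideg}(f)+1$ distinct points of $S$ — this is exact Lagrange interpolation / the invertibility of the relevant Vandermonde matrix. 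Iterating this over all $m$ coordinates of $\overline{y}$ shows every $c_{\overline{b}}(\overline{x})$ is an $\mathbb{F}$-linear combination of the full evaluations $\{f(\overline{x},\overline{\beta})\}_{\overline{\beta}\in S^{|\overline{y}|}}$, giving the reverse inclusion. Combined with the first part and Nisan's lemma identifying $\dim\coeffs{}$ with the coefficient matrix rank, equality follows.

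The only mild subtlety — and the step I would be most careful about — is the iterated interpolation: one must check that after interpolating out $y_1$ the resulting polynomials are still of individual degree at most $\mathrm{ideg}(f)$ in the remaining variables $y_2,\dots,y_m$ (they are, since extracting a coefficient in $y_1$ cannot increase any other individual degree), so that the hypothesis $|S| > \mathrm{ideg}(f)$ continues to apply at each stage. Everything else is bookkeeping with finite sums and the definition of the coefficient space. No genuine obstacle is expected here; the lemma is essentially the standard evaluation-vs-coefficient duality (cf.\ Forbes–Shpilka), and the proof is short.
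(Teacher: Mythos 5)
The paper does not include a proof of this lemma; it is quoted from Forbes--Shpilka \cite{ForbesShpilka13b} without argument, so there is no in-paper proof to compare against. Your proof is correct and is the standard one. For the inclusion, writing $f = \sum_{\overline{b}} c_{\overline{b}}(\overline{x})\,\overline{y}^{\overline{b}}$ and evaluating at $\overline{\beta}$ exhibits $f(\overline{x},\overline{\beta}) = \sum_{\overline{b}} \overline{\beta}^{\overline{b}}\,c_{\overline{b}}(\overline{x})$ as a finite $\mathbb{F}$-linear combination of coefficient polynomials, which gives $\evals{\overline{x}\mid\overline{y},S}(f) \subseteq \spant\coeffs{\overline{x}\mid\overline{y}}(f)$ and hence the dimension bound. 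For the reverse inclusion under $|S| > \mathrm{ideg}(f)$, your iterated univariate interpolation (equivalently, inverting the Kronecker power $V^{\otimes |\overline{y}|}$ of a $(\mathrm{ideg}(f)+1)\times(\mathrm{ideg}(f)+1)$ Vandermonde matrix on a subset $S'\subseteq S$ of size $\mathrm{ideg}(f)+1$) recovers each $c_{\overline{b}}(\overline{x})$ as a linear combination of evaluations, giving $\coeffs{\overline{x}\mid\overline{y}}(f) \subseteq \spant\evals{\overline{x}\mid\overline{y},S}(f)$ and equality of dimensions. You also correctly identify and dispatch the one point that requires care: extracting the coefficient of $y_1^{b_1}$ cannot increase the individual degree in $y_2,\dots,y_m$, so the hypothesis $|S| > \mathrm{ideg}$ is preserved at every stage of the iteration. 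No gaps.
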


%

\subsection{Set-Multilinear Monomials over a Word}\label{sec:prelim:setmulti}

We recall some notation from \cite{LST21}. Let $w\in\ZZ^d$ be a word. For a subset $S\subseteq [d]$ denote by $w_S$ the sum $\sum_{i\in S} w_i$, and by $w|_S$ the \textbf{subword} of $w$ indexed by the set $S$. Let\footnote{The $P_w$ here is not to be confused with the canonical full-rank set-multilinear polynomial in \cite{LST21} denoted as well by $P_w$ mentioned in the introduction.}
\[
P_w := \{i\in [d] : w_i\geq 0\}
\]
be the set of \textbf{positive indices} of $w$ and let
\[
N_w := \{i\in [d] : w_i < 0 \}
\]
be the set of \textbf{negative indices} of $w$.

Given a word $w$, we associate with it a sequence $\overline{X}(w) = \langle X(w_1),\ldots,X(w_d)\rangle$ of sets of variables, where for each $i\in [d]$ the size of $X(w_i)$ is $2^{|w_i|}$. We call a monomial set-multilinear over a word $w$ if it is set-multilinear over the sequence $\xbar(w)$.

For a word $w$, let $\Pi_w$ denote the projection onto the space $\F_\sml[\xbar(w)]$, which maps set-multilinear monomials over $w$ identically to themselves and all other monomials to $0$.
When the underlying variable partition is clear from context, we simply write $\Pi_{\textnormal{sml}}$ to denote the set-multilinear projection.

\subsection{Relative Rank}

Let $M^P_w$ and $M^N_w$ denote the set-multilinear monomials over $w|_{P_w}$ and $w|_{N_w}$, respectively. Let $f\in\F_{\sml}[\xbar(w)]$ and denote by $M_w(f)$ the matrix with rows indexed by $M^P_w$ and columns indexed by $M^N_w$, whose $(m,m')$-th entry is the coefficient of the monomial $mm'$ in $f$.

For any $f\in\F_{\sml}[\xbar(w)]$ define the \textbf{relative rank} with respect to $w$ as follows
\[
\relrk_w(f) = \frac{\rank(M_w(f))}{\sqrt{|M^P_w|\cdot|M^N_w|}}.
\]

\subsection{Monomial Orders}

Finally we recall some basic notions related to monomial orders. For an in-depth introduction see \cite{CoxLittleOShea15}. A monomial order (in a polynomial ring $\F[X]$) is a well-order $\leq$ on the set of all monomials that respects multiplication:
\[
\text{if }m_1\leq m_2\text{, then } m_1m_3\leq m_2m_3\text{ for any }m_3.
\]
It is not hard to see that any monomial order extends the submonomial relation: if $m_1m_2 = m_3$  for some monomials $m_1,m_2$ and $m_3$, then $m_1\leq m_3$. This is essentially the only property we need of monomial orderings, and thus our results work for any monomial ordering. Given a polynomial $f\in\F[X]$, the leading monomial of $f$, denoted $\LM(f)$, is the highest monomial with respect to $\leq$ that appears in $f$  with a non-zero coefficient. We conclude this section with the following known fact.

\begin{lemma}
\label{lem:span}
    For any set of polynomials $S\subseteq \mathbb{F}[\overline{x}]$, the dimension of their span in $\mathbb{F}[\overline{x}]$ is equal to the number of unique distinct leading or trailing monomials in their span:
    $$
    \dim \spant S = | \text{LM}(\spant S)| = | \text{TM}(\spant S)|,
    $$
    where LM and TM stand for leading and trailing monomials respectively. In particular, we have
    $$
     \dim \spant S \geq  | \text{LM}(S)|, | \text{TM}(S)|.
    $$
\end{lemma}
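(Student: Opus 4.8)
The plan is to fix the monomial order $\leq$, set $V := \spant S$, and prove the core equality $\dim V = |\text{LM}(V)|$ by a filtration argument; the trailing‑monomial version follows by the symmetric argument, and the ``in particular'' clause is then immediate because $S\subseteq\spant S$ gives $\text{LM}(S)\subseteq\text{LM}(\spant S)$ and likewise for $\text{TM}$.

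For a monomial $m$, let $V_{\le m} := \{f\in V : \text{LM}(f)\le m\}\cup\{0\}$ and $V_{<m} := \{f\in V : \text{LM}(f)<m\}\cup\{0\}$. These are linear subspaces of $V$: if every monomial of $f$ and of $g$ is $\le m$, then the same holds for $f+g$ and for $cf$, so their leading monomials (when nonzero) are $\le m$. The first key step is to show $\dim(V_{\le m}/V_{<m})\le 1$, with equality iff $m\in\text{LM}(V)$. Indeed, if $f,g\in V_{\le m}$ both have leading monomial exactly $m$ with leading coefficients $a,b\neq 0$, then $bf-ag$ has zero coefficient on $m$ and all other monomials $<m$, hence $bf-ag\in V_{<m}$, so $f$ and $g$ are proportional modulo $V_{<m}$; and the quotient is nonzero precisely when some element of $V$ has leading monomial $m$. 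The second key step is the bookkeeping: since $\leq$ is a well‑order and (for finite‑dimensional $V$) only finitely many monomials occur, telescoping $\dim V = \sum_m \dim(V_{\le m}/V_{<m})$ over the monomials $m_1<\cdots<m_k$ of $\text{LM}(V)$ — using $V_{<m_1}=\{0\}$, $V_{\le m_k}=V$, and $V_{<m_{i+1}}=V_{\le m_i}$ — yields $\dim V = k = |\text{LM}(V)|$. (If $\dim V$ is infinite, the same accounting shows both sides are infinite.) Equivalently, one may phrase this as two inequalities: $\dim V\ge|\text{LM}(V)|$ because witnesses $f_i\in V$ with distinct leading monomials $\text{LM}(f_1)>\cdots>\text{LM}(f_k)$ are linearly independent (in a vanishing combination, the least‑index surviving term contributes its leading monomial, which no later term can cancel), and $\dim V\le|\text{LM}(V)|$ by Gaussian elimination into a basis with pairwise distinct leading monomials.

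For the trailing‑monomial identity, run the identical argument with the filtration $V^{\ge m} := \{f\in V : \text{TM}(f)\ge m\}\cup\{0\}$, indexing $m$ in decreasing order, and the cancellation $bf-ag$ now killing the lowest occurring monomial. One minor point to note is that the reversed order is in general not itself a monomial order; this is harmless, since the filtration argument only uses that $\leq$ linearly orders the finitely many monomials appearing in the relevant polynomials, not any compatibility with multiplication. Finally, $\text{LM}(S)\subseteq\text{LM}(\spant S)$ and $\text{TM}(S)\subseteq\text{TM}(\spant S)$ give $\dim\spant S = |\text{LM}(\spant S)|\ge|\text{LM}(S)|$ and similarly for $\text{TM}$.

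This is the standard ``row echelon form with respect to a monomial order'' fact, so there is no real obstacle; the only items requiring slight care are the cardinality bookkeeping when $\spant S$ is infinite‑dimensional and the remark that the trailing‑monomial case does not literally dualize to a monomial order, both handled by working with the finite‑support filtration above.
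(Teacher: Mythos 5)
Your proof is correct. The paper states this lemma as a ``known fact'' without supplying a proof, and your filtration argument (equivalently, the two-inequality version via linear independence of witnesses with distinct leading monomials and Gaussian elimination to a reduced basis) is exactly the standard argument; you have also correctly handled the two subtleties one might worry about, namely the infinite-dimensional case of $\spant S$ and the observation that the reverse of a monomial order need not itself be a monomial order (which does not matter since only the linear ordering on the finitely many occurring monomials is used).
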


\section{Lower Bounds for Constant-depth Multilinear IPS}

\label{sec:cd-IPS}
    \subsection{Notation for Knapsack}
    Before defining our hard instance, we introduce some notation.
    Our construction is based on the instance $\ks_w$ from \cite{GHT22}, and we adopt parts of their notation.
    
    Let $w \in \ZZ^d$ be an arbitrary word. 
    Consider the sequence $\overline{X}(w)=\langle X(w_1),\dots,X(w_d)\rangle $ of sets of variables and the following useful representation of the variables in $\overline{X}(w)$.
    For any $i\in P_w$, we write the variables of $X(w_i)$ in the form $x^{(i)}_\sigma$, where $\sigma$ is a binary string indexed by the set (formally, a binary string  \emph{indexed} by a set $A$ is a function from $A$ to $\{0,1\}$):   
\[
A_w^{(i)} := \left[\sum_{\substack{i'\in P_w\\i' < i }}w_{i'} + 1,\sum_{\substack{i'\in P_w\\i'\leq i}}w_{i'}\right].
\] 
Hence, the size of $A_w^{(i)}$ is precisely $w_i$, which implies that there are $2^{|A_w^{(i)}|}=2^{w_i}$ possible strings indexed by $A_w^{(i)}$, each corresponding to a distinct variable in $X(w_i)$.

Similarly, for any $j\in N_w$, we  write the variables of $X(w_i)$ in the form $y^{(j)}_\sigma$, where $\sigma$ is a binary string indexed by the set 
\[
B_w^{(j)} := \left[\sum_{\substack{j'\in N_w\\ j' < j}}|w_{j'}| + 1,\sum_{\substack{j'\in N_w\\ j' \leq j}}|w_{j'}|\right].
\]
We call the variables in $x^{(i)}_\sigma$ the \emph{positive variables}, or simply $\vx$-variables, and the variables $y^{(j)}_\sigma$ the \emph{negative variables}, or simply $\vy$-variables. We write $A_w^S$ for the set $\bigcup_{i\in S}A_w^{(i)}$ for any $S\subseteq P_w$, and $B_w^T$ for the set $\bigcup_{j\in T}B_w^{(j)}$ for any $T\subseteq N_w$.

Each monomial that is set-multilinear on $w|_{S}$ for some $S\subseteq P_w$ corresponds to a binary string indexed by the set $A_w^S$.
Similarly, each monomial that is set-multilinear on $w|_{T}$ for some $T\subseteq N_w$ corresponds to a binary string indexed by the set $B_w^T$.
For any set-multilinear monomial $m$ on some $w|_{S}$ with $S\subseteq P_w$, we denote by $\sigma(m)$ the corresponding binary string indexed by $A_w^S$.
Conversely, for any binary string $\sigma$ indexed by $A_w^S$, we denote by $m(\sigma)$ the monomial it defines.
The same correspondence holds for strings and monomials on the negative variables.
Thus, observe that for any (negative or positive) monomial $m$, we have  $m(\sigma(m))=m$.
Moreover, if $m$ is a negative monomial and $S\subseteq P_w$, we write $m(\sigma(m)|_{A^S_w})$ to denote the \emph{positive} monomial determined by the string $\sigma(m)|_{A^S_w}$, which is a substring of $\sigma(m)$ restricted to $A^S_w$.

Therefore, every set-multilinear monomial on $w$ has degree $d$, with each $\vx$-variable picked uniquely from the $X(w_i)$-variables for $i\in P_w$ (the positive indices in $w$), and each $\vy$-variable picked uniquely from the $X(w_j)$-variables for $j\in N_w$ (the negative indices).
Moreover, such a set-multilinear monomial on $w$ corresponds to a binary string of length $\sum_{i=1}^d |w_i|$.

We define the \textbf{overlap graph} $G$ of the word $w$ as the bipartite graph $(P_w, N_w, E)$, with an edge between $i \in P_w$ and $j\in N_w$ if $A_w^{(i)}$ and $B_w^{(j)}$ overlap, that is
\[
E = \{(i,j) \mid A_w^{(i)}\cap B_w^{(j)}\neq \emptyset\}.
\]
We say that the word $w$ is \textbf{balanced} if for every $i \in P_w \cup N_w$, the neighbourhood $N_{G}(i)$ is non-empty (see \Cref{fig:balanced}).
In what follows, we suppose that $|w_{N_w}|\geq |w_{P_w}|$, so that the negative monomials are determined by longer binary strings than the positive ones. Otherwise, we flip the roles of the positive and negative variables in the definition below.
We define the \textbf{positive overlap} of $w$, denoted $\Delta_G(P_w)$, as the maximum degree of a vertex in $P_w$.
Similarly, the \textbf{negative overlap}, denoted $\Delta_G(N_w)$, is the maximum degree of a vertex in $N_w$.
A partition of the set of positive indices $P_w = P_w^{(1)} \sqcup \cdots\sqcup P_w^{(r)}$ is called \textbf{scattered} if for every part, the neighbourhoods of positive indices in that part are pairwise disjoint, that is
\[N_{G}(i_1) \cap N_{G}(i_2) = \emptyset \quad
(\forall j \in [r], i_1, i_2 \in P_w^{(j)}, i_1 \neq i_2).
\]

\begin{figure}
    \centering
    \includegraphics[scale=0.7]{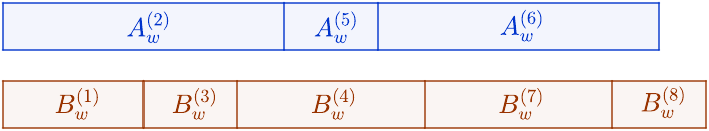}
    \caption{\small From \cite{GHT22}. Illustration of a word $w$. Each index $w_i$ of $w$ is shown as a box with $w_i$ slots, so every variable $x^{(i)}_\sigma$ in $X(w_i)$ appears as the string $\sigma$ written inside its corresponding box. The word $w$ shown is balanced.}
    \label{fig:balanced}
\end{figure}

\subsection{Hard Instance: Knapsack mod $p$}

 We now construct our hard instance $\ks_{w,p}$, knapsack mod $p$.
 Let $w \in \ZZ^d$ be a word with $|w_i| \leq b$ for every $i$.  
For $i \in P_w$ and $\sigma \in \{0,1\}^{A_w^{(i)}}$ let 
    \begin{equation*}
        f_\sigma^{(i)} := \prod_{\substack{j \in N_w \\ A_w^{(i)} \cap B_w^{(j)} \neq \emptyset}} f_\sigma^{(i,j)},
    \end{equation*}
where
\begin{equation}
    \label{eqn:knapsack-f-sigma-ij}
    f_\sigma^{(i,j)} := 1 - \prod_{\sigma_j \in \{0,1\}^{B_w^{(j)}}}\left(1-y_{\sigma_j}^{(j)} \right),
\end{equation}
where the product in \Cref{eqn:knapsack-f-sigma-ij} ranges over all those $\sigma_j$ that agree with $\sigma$ on $A_w^{(i)} \cap B_w^{(j)}$ (see \Cref{fig:matching}).
Let $P_w = P_w^{(1)} \sqcup \cdots\sqcup P_w^{(r)}$ be a scattered partition of the set of positive indices such that $r < p$.
We define our hard instance
\[
\ks_{w,p} := \sum_{j \in [r]} \prod_{i \in P_w^{(j)}} \ks_{w,p}^{(i)}  - \beta,
\]
where
\[
\ks_{w,p}^{(i)} := 1 - \ml  \left(\sum_{\sigma \in \{0,1\}^{A_w^{(i)}}} x_\sigma^{(i)}f_\sigma^{(i)} \right)^{p-1},
\]
and $\beta \in \FF$ is chosen such that $\ks_{w,p}$ is unsatisfiable over Boolean assignments.

\textbf{Comment} (the existence of $\beta$):
We observe that each $f_\sigma^{(i)}$ is a Boolean function. Hence, by Fermat's little theorem, each $\ks_{w,p}^{(i)}$ is a Boolean function.
It follows that
\[
\sum_{j \in [r]} \prod_{i \in P_w^{(j)}} \ks_{w,p}^{(i)} \in \{0,1, \ldots,r\}.
\]
Since $r<p$, we can always choose $\beta \in \FF$ such that $\ks_{w,p}$ is unsatisfiable over Boolean assignments.

\textbf{Comment} (computing $\ks_{w,p}$ by a $\poly(d,2^{bp})$-size, product-depth $3$, multilinear formula of degree $O(pdb2^b)$):
Fix $i \in P_w$ and consider computing $\ks_{w,p}^{(i)}$.
Let $\sigma_1, \sigma_2 \in \{0,1\}^{A_w^{(i)}}$ be distinct strings.
Suppose there exists $j \in N_w$ such that $A_w^{(i)} \cap B_w^{(j)} \neq \emptyset$ and the polynomials $f_{\sigma_1}^{(i,j)}$ and $f_{\sigma_2}^{(i,j)}$ share $\ybar$-variables. Then, by construction, $f_{\sigma_1}^{(i,j)} = f_{\sigma_2}^{(i,j)}$.
It follows that $\ml(f_{\sigma_1}^{(i)}f_{\sigma_2}^{(i)})$ can be computed in the same way as $f_{\sigma_1}^{(i)}f_{\sigma_2}^{(i)}$, but with the shared $f_{\sigma_2}^{(i,j)}$ terms excluded from the construction of $f_{\sigma_2}^{(i)}$.
Hence, $\ks_{w,p}^{(i)}$ can be computed by a product-depth $2$, multilinear formula of size $\poly(2^{bp})$.
Since $P_w = P_w^{(1)} \sqcup \cdots\sqcup P_w^{(r)}$ is a scattered partition of the positive indices, the variables in each $\ks_{w,p}^{(i)}$ are disjoint across distinct $i$.
Therefore, $\ks_{w,p}$ can be computed by a product-depth $3$, multilinear formula of size $\poly(d,2^{bp})$.
Moreover, each $f_\sigma^{(i)}$ has degree at most $O(b2^b)$, so $\ks_{w,p}^{(i)}$ has degree at most $O(pb2^b)$.
The overall degree of $\ks_{w,p}$ is therefore $O(pdb2^b)$.

\begin{figure}
    \centering
    \includegraphics[scale=0.7]{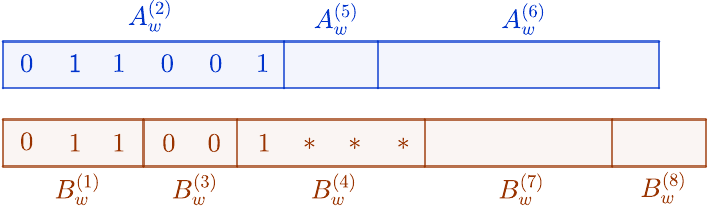}
    \caption{\small From \cite{GHT22}. Here $\ast$ represents either $0$ or $1$. In the construction of the polynomial $\ks_{w,p}$, for $i=2$ and $\sigma = 011001$, we see that $f_{011001}^{(2)} =  y_{011}^{(1)} \cdot y_{00}^{(3)} \cdot (1- (1-y_{1000}^{(4)}) (1-y_{1001}^{(4)}) \cdots (1-y_{1111}^{(4)} ))$. While our construction of $f_\sigma^{(i)}$ differs from \cite{GHT22}, it still functions as an indicator for the variable $x_\sigma^{(i)}$.}
    \label{fig:matching}
\end{figure}

\subsection{Degree Lower Bound}

We now state and prove the degree lower bound that we use in the rank lower bound.
We begin with the bound that was used in \cite{GHT22}.

\begin{lemma}[\cite{FSTW21} Proposition 5.3]
\label{subsetsumdegree}
    Let $n \geq 1$, $\chara(\FF) = 0$ or $\chara(\FF) > n$, and $\beta \in \mathbb{F} \setminus \{0,1,\ldots,n\}$.
    If $f \in \mathbb{F}[x_1, \ldots, x_n]$ is the multilinear polynomial such that
    \[
    f(\xbar) \left(\sum_{i \in [n]} x_i - \beta \right) = 1 \pmod {\xbar^2 - \xbar},
    \]
    then $\deg f = n$.
\end{lemma}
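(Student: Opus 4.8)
The plan is to work in the quotient ring $R := \F[x_1,\dots,x_n]/(\xbar^2-\xbar)$, whose monomials are the multilinear monomials $x_S := \prod_{i\in S}x_i$ indexed by subsets $S\subseteq[n]$, so that $f$ is the unique element of $R$ with $f\cdot(\sum_i x_i - \beta) = 1$ in $R$. First I would observe that this $f$ exists and is unique precisely because $\sum_i x_i - \beta$ is a unit in $R$: evaluated at any Boolean point $\xbar\in\{0,1\}^n$ it takes a value in $\{-\beta,1-\beta,\dots,n-\beta\}$, all of which are nonzero by the hypothesis on $\beta$ (this is where $\chara(\F)=0$ or $\chara(\F)>n$ is used — it guarantees $0,1,\dots,n$ are distinct and $\beta$ can genuinely avoid them). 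So the statement is really: the inverse of the linear form $\ell := \sum_i x_i-\beta$ in $R$ is a polynomial of full multilinear degree $n$, i.e.\ its expansion contains the monomial $x_1x_2\cdots x_n$ with nonzero coefficient.

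The key step is to extract the top coefficient. I would write $f = \sum_{S\subseteq[n]} c_S\, x_S$ and expand the identity $f\cdot\ell \equiv 1 \pmod{\xbar^2-\xbar}$, comparing coefficients of each multilinear monomial $x_T$. Multiplying $c_S x_S$ by $x_i$ and reducing mod $x_i^2-x_i$ gives $c_S x_{S\cup\{i\}}$; multiplying by $-\beta$ gives $-\beta c_S x_S$. Hence the coefficient of $x_T$ in $f\cdot\ell$ is
\[
-\beta c_T + \sum_{i\notin T} c_{T\setminus\{i\}}\cdot[\text{term from }S=T\setminus\{i\}] \;+\; \sum_{i\in T} c_{T}\cdot[\text{term from }S=T,\ x_i\cdot x_S=x_S],
\]
which, collecting carefully, becomes $(|T|-\beta)\,c_T + \sum_{i\in T} c_{T\setminus\{i\}}$. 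Setting this equal to $[T=\emptyset]$ yields the recurrence $c_\emptyset = -1/\beta$ and, for $T\neq\emptyset$,
\[
c_T \;=\; \frac{-1}{\,|T|-\beta\,}\sum_{i\in T} c_{T\setminus\{i\}}.
\]
Since $|T|-\beta\neq 0$ for all $0\le|T|\le n$ (again by the hypothesis on $\beta$ and the characteristic), this determines all $c_T$, and an easy induction on $|T|$ shows $c_T = \prod_{k=0}^{|T|} \frac{-1}{k-\beta}\cdot |T|!$ — in particular $c_T$ depends only on $|T|$ and, crucially, $c_T\neq 0$ for every $T$ because it is a product of nonzero field elements. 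Taking $T=[n]$ gives $c_{[n]}\neq 0$, so $\deg f = n$.

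The main obstacle I expect is purely bookkeeping: getting the coefficient-comparison recurrence exactly right, in particular handling the two sources of the monomial $x_T$ (from $S=T$ multiplied by $x_i$ with $i\in T$, where the reduction $x_i^2\to x_i$ collapses it, versus from $S=T\setminus\{i\}$ multiplied by $x_i$) and confirming the coefficient $|T|-\beta$ rather than something off by one. Once the recurrence is pinned down, the nonvanishing of each $c_T$ is immediate from the characteristic hypothesis, and there is no subtlety left. An alternative, slicker route that avoids the recurrence is to note $f$ agrees as a function on $\{0,1\}^n$ with $\xbar\mapsto 1/(\sum_i x_i-\beta)$, hence $c_{[n]}$ equals the top Fourier/Möbius coefficient $\sum_{S\subseteq[n]}(-1)^{n-|S|}\frac{1}{|S|-\beta}$; one then shows this alternating sum is nonzero, e.g.\ by recognizing it as (up to sign and factorials) a finite-difference of $1/(x-\beta)$, namely $n!\big/\prod_{k=0}^n(k-\beta)\neq 0$. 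I would present whichever of these is cleaner, but the recurrence version is the most self-contained.
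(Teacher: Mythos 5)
Your proposal is correct. The paper cites this lemma from \cite{FSTW21} without reproving it, but the argument there---and the paper's own proof of the analogous degree-$2$ statement, \Cref{lem:deg-lower-bound-e2-large-field}---follows exactly your ``alternative slicker route'': expand $f$ by multilinear interpolation over $\{0,1\}^n$, identify the coefficient of $x_1\cdots x_n$ with the alternating sum $\sum_{j=0}^n\binom{n}{j}(-1)^{n-j}/(j-\beta)$, and invoke the finite-difference identity (Subclaim~B.2 of \cite{FSTW21}, quoted in the paper) to recognize this as $-n!/\prod_{j=0}^n(\beta-j)\neq 0$. Your primary route, the coefficient recurrence in the quotient ring, is a genuinely different and somewhat more elementary derivation. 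The recurrence $(|T|-\beta)\,c_T+\sum_{i\in T}c_{T\setminus\{i\}}=[T=\emptyset]$ is exactly right, and nonvanishing of $c_{[n]}$ is immediate by induction on $|T|$: once one notes (by symmetry of $\sum_i x_i-\beta$, or directly within the same induction) that $c_T$ depends only on $|T|$, each step multiplies by $-|T|/(|T|-\beta)$, and both factors are nonzero under the characteristic and $\beta$ hypotheses. The recurrence route avoids the finite-difference identity at the mild cost of spelling out the symmetry; both are sound, and the recurrence version is indeed self-contained. One small nit: your closed form $c_T=\prod_{k=0}^{|T|}\frac{-1}{k-\beta}\cdot|T|!$ is off by a sign---unwinding the recurrence actually gives $c_T=\frac{(-1)^{|T|}\,|T|!}{\prod_{k=0}^{|T|}(k-\beta)}$, which agrees with Subclaim~B.2---but this has no bearing on the conclusion, which only needs that $c_{[n]}$ is a product of nonzero field elements.
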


\begin{lemma}
\label{lem:deg-bound-general}
    Let $\xbar = \bigsqcup_{i \in I} \xbar_i$ be a partition of the variables $\xbar = \{x_1, \ldots, x_n\}$, $\chara (\FF)= 0$ or $\chara(\FF) = p > |I|$, and $\beta \in \mathbb{F} \setminus \{0,1,\ldots,|I|\}$.
    For $i \in I$, let $\psi_i \in \FF[\xbar_i]$ be a polynomial over the $\xbar_i$-variables that is multilinear, full degree (that is, $\deg \psi_i = |\xbar_i|$) and a Boolean function.
    If $f \in \FF[\xbar]$ is the multilinear polynomial such that
    \begin{equation}
        f(\xbar) \left(\sum_{i \in I} \psi_i - \beta \right) = 1 \pmod {\xbar^2 - \xbar},
    \end{equation}
    then $\deg f = n$.
\end{lemma}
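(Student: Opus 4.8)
The plan is to reduce \Cref{lem:deg-bound-general} to the base case \Cref{subsetsumdegree} by a clever change of variables that collapses each full-degree Boolean polynomial $\psi_i$ to a single variable. First I would observe that since each $\psi_i$ is multilinear, full degree, and a Boolean function on $\{0,1\}^{\xbar_i}$, it takes the value $1$ on exactly one Boolean point of $\xbar_i$ (indeed, a multilinear polynomial of degree $|\xbar_i|$ that is $0/1$-valued must have its unique top-degree monomial $\prod_{x \in \xbar_i} x$ with coefficient $\pm 1$, and counting the number of $1$-inputs mod the structure forces exactly one such input — this is the kind of combinatorial fact I would want to pin down carefully). Actually, more robustly: the key property I need is simply that the map $\bar b \mapsto (\psi_i(\bar b))_{i\in I}$ is surjective onto $\{0,1\}^{|I|}$ when restricted appropriately, i.e.\ the $\psi_i$ realize all $0/1$ patterns; but even this may be more than I need.

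The cleanest route: introduce fresh variables $\overline{z} = \{z_i\}_{i\in I}$ and consider $\sum_{i\in I} z_i - \beta$. By \Cref{subsetsumdegree} (with $n$ replaced by $|I|$, using $\chara(\FF)=0$ or $\chara(\FF) = p > |I|$ and $\beta \notin \{0,\dots,|I|\}$), the unique multilinear $g\in\FF[\overline z]$ with $g(\overline z)(\sum_i z_i - \beta) = 1 \pmod{\overline z^2 - \overline z}$ has $\deg g = |I|$, so the coefficient of $\prod_{i\in I} z_i$ in $g$ is nonzero. Now I would substitute $z_i \mapsto \psi_i(\xbar_i)$. Since $\psi_i$ is a Boolean function, $\psi_i^2 = \psi_i \pmod{\xbar^2-\xbar}$, so the substitution is consistent with the Boolean axioms, and $g(\overline\psi)(\sum_i \psi_i - \beta) = 1 \pmod{\xbar^2-\xbar}$. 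Taking the multilinearization, $\ml(g(\overline\psi))$ is a multilinear polynomial satisfying the hypothesis, hence equals $f$ by uniqueness. It then remains to show $\deg f = n$, i.e.\ that the monomial $\prod_{x\in\xbar} x$ survives in $\ml(g(\overline\psi))$. Since the $\xbar_i$ are disjoint blocks, $\prod_{x\in\xbar}x = \prod_{i\in I}\prod_{x\in\xbar_i}x$, and this top monomial can only come from the $\prod_{i} z_i$ term of $g$ after substituting $z_i \mapsto \psi_i$ and then picking the top-degree monomial $\prod_{x\in\xbar_i}x$ out of each $\psi_i$ (no lower-degree monomial of $g$, and no lower-degree monomial of any $\psi_i$, can contribute a variable outside its own block or reach full degree in it). Because $\deg\psi_i = |\xbar_i|$, the coefficient of $\prod_{x\in\xbar_i}x$ in $\psi_i$ is nonzero, and multilinearization does not touch an already-multilinear top monomial, so the coefficient of $\prod_{x\in\xbar}x$ in $f$ is (coefficient of $\prod_i z_i$ in $g$) $\times \prod_i$(coefficient of $\prod_{x\in\xbar_i}x$ in $\psi_i$) $\neq 0$. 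Hence $\deg f = n$.

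The main obstacle I expect is the very last bookkeeping step: verifying that \emph{only} the top term of $g$ composed with the top terms of the $\psi_i$ can produce the monomial $\prod_{x\in\xbar}x$, and that multilinearization introduces no cancellation there. This is where I would be careful — I need that any monomial of $g(\overline\psi)$ of total degree $n$ in $\xbar$ must use each block $\xbar_i$ to its full degree $|\xbar_i|$, which forces the $z_i$-degree to be $1$ for every $i$ (since $\psi_i$ is multilinear, a power $z_i^{e_i}$ with $e_i \geq 1$ contributes a monomial of $\xbar_i$-degree at most $|\xbar_i|$, with equality only when $e_i = 1$ and we take the top monomial of $\psi_i$; and $e_i = 0$ contributes nothing from block $i$). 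This pins the contribution to the single product $\prod_i z_i$-term of $g$ times $\prod_i(\text{top of }\psi_i)$, with no other term reaching degree $n$, hence no cancellation. A secondary point to handle cleanly is the uniqueness of the multilinear $f$ satisfying the congruence, which is standard (reduction modulo $\xbar^2 - \xbar$ is a well-defined ring homomorphism onto multilinear polynomials, and $\sum_i\psi_i - \beta$ is invertible in that quotient precisely because it has no Boolean zero), and I would cite or reprove it in one line.
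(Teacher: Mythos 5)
Your proposal is correct and takes essentially the same route as the paper: reduce to \Cref{subsetsumdegree} by substituting $z_i \mapsto \psi_i$, then extract the top monomial. Two small observations: the multilinearization step $\ml(g(\overline\psi))$ is a no-op, since $g$ is multilinear in $\overline z$ and the $\psi_i$ are multilinear over pairwise disjoint blocks, so $g(\overline\psi)$ is already multilinear (the paper exploits this directly, writing $f = f_w(\psi_1,\ldots,\psi_{|I|})$ with no $\ml$); and your aside claiming that $\psi_i$ must attain the value $1$ at exactly one Boolean point is false (e.g.\ over $\FF_3$ with $|\xbar_i|=2$, the indicator of $\{(0,0),(1,1)\}$ is multilinear, Boolean, and has top coefficient $2\neq 0$), but you correctly flag it as unneeded and never use it.
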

\begin{proof}
    Let $\{w_i\}_{i \in I}$ be Boolean variables and $f_w \in \FF[\wbar]$ be the multilinear polynomial such that
    \begin{equation}
    \label{eq:deg-bound-general-aux}
        f_w(w_1, \ldots, w_{|I|}) \left( \sum_{i \in I} w_i - \beta \right) = 1 \pmod {\wbar^2 - \wbar}.
    \end{equation}
    By \Cref{subsetsumdegree}, we have $\deg f_w = |I|$.
    We show that the following polynomial identity over the $\xbar$-variables holds:
    \begin{equation}
    \label{eq:deg-bound-general-sub}
        f_w(\psi_1, \ldots, \psi_{|I|}) \left( \sum_{i \in I} \psi_i - \beta \right) = 1 \pmod {\xbar^2 - \xbar}.
    \end{equation}
    We note that $f_w$ is a polynomial over the $\wbar$-variables that is multilinear and for every $i \in I$, $\psi_i$ is a polynomial over the $\xbar_i$-variables that is multilinear.
    Therefore, as $\xbar = \bigsqcup_{i \in I} \xbar_i$ is a partition of the $\xbar$-variables, $f_w(\psi_1, \ldots, \psi_{|I|})$ is a polynomial over the $\xbar$-variables that is multilinear.
    Thus, to show \Cref{eq:deg-bound-general-sub}, it suffices to show that
    \begin{equation}
    \label{eq:deg-bound-general-bool}
        f_w(\psi_1, \ldots, \psi_{|I|}) \left( \sum_{i \in I} \psi_i - \beta \right) = 1
    \end{equation}
    holds for all $\xbar \in \{0,1\}^n$.
    Let $\alpha \in \{0,1\}^n$ be a Boolean assignment for the $\xbar$-variables and, for all $i \in I$, let $w_i = \psi_i(\alpha|_{\xbar_i})$.
    Since, for all $i \in I$, $\psi_i$ is a Boolean function, the assignments on the $\wbar$-variables are all Boolean assignments. 
    Therefore, for these Boolean assignments on the $\wbar$-variables, by \Cref{eq:deg-bound-general-aux},
    \[
         f_w(w_1, \ldots, w_{|I|}) \left( \sum_{i \in I} w_i - \beta \right) = 1.
    \]
    Since $w_i = \psi_i(\alpha|_{\xbar_i})$ for all $i \in I$, we see that \Cref{eq:deg-bound-general-bool} holds for the Boolean assignment $\alpha$ on the $\xbar$-variables.
    We therefore see that \eqref{eq:deg-bound-general-sub} holds.
    Thus, $f = f_w(\psi_1, \ldots, \psi_{|I|})$.
    Finally, as $f_w$ has full degree and for all $i \in I$, $\psi_i$ has full degree, we see that $f_w(\psi_1, \ldots, \psi_{|I|})$ has full degree.
    Therefore, $\deg f = \deg f_w(\psi_1, \ldots, \psi_{|I|}) = n$.
\end{proof}
    
\begin{corollary}
\label{lem:deg-bound-special}
     Let $\xbar = \bigsqcup_{i \in I} \xbar_i$ be a partition of the variables $\xbar = \{x_1, \ldots, x_n\}$, $\chara (\FF)= 0$ or $\chara(\FF) = p > |I|$, and $\beta \in \mathbb{F} \setminus \{0,1,\ldots,|I|\}$.
    If $f \in \FF[\xbar]$ is the multilinear polynomial such that
    \begin{equation}
    \label{eqnpartitiondegree}
        f(\xbar) \left(\sum_{i \in I} \prod_{x \in \xbar_i} (1-x) - \beta \right) = 1 \pmod {\xbar^2 - \xbar},
    \end{equation}
    then $deg (f) = n$.
\end{corollary}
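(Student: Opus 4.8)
The plan is to derive \Cref{lem:deg-bound-special} as an immediate application of \Cref{lem:deg-bound-general}, by simply exhibiting the polynomials $\psi_i$ that put Equation~\eqref{eqnpartitiondegree} into the form required by the more general lemma. Concretely, for each $i \in I$ I would set $\psi_i := \prod_{x \in \xbar_i}(1-x) \in \FF[\xbar_i]$, and then verify that $\psi_i$ satisfies the three hypotheses of \Cref{lem:deg-bound-general}: it is a polynomial in the $\xbar_i$-variables only (clear from the definition); it is multilinear (each variable $x \in \xbar_i$ appears to the first power in the product); it has full degree $\deg \psi_i = |\xbar_i|$ (the leading monomial is $\prod_{x \in \xbar_i} x$ up to sign, so no cancellation occurs); and it is a Boolean function, since for $x \in \{0,1\}^{|\xbar_i|}$ the product $\prod_{x\in \xbar_i}(1-x)$ evaluates to $1$ if all coordinates are $0$ and to $0$ otherwise.

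With these $\psi_i$ in hand, Equation~\eqref{eqnpartitiondegree} is exactly the hypothesis $f(\xbar)\bigl(\sum_{i \in I}\psi_i - \beta\bigr) = 1 \pmod{\xbar^2 - \xbar}$ of \Cref{lem:deg-bound-general}, under the same field-characteristic assumption ($\chara(\FF) = 0$ or $\chara(\FF) = p > |I|$) and the same condition on $\beta$ ($\beta \in \FF \setminus \{0,1,\ldots,|I|\}$). Invoking \Cref{lem:deg-bound-general} then yields $\deg f = n$ directly, which is the desired conclusion.

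There is essentially no obstacle here: the only points requiring a line of justification are (i) that $\psi_i$ has full degree — which holds because the top-degree term of $\prod_{x \in \xbar_i}(1-x)$ is $(-1)^{|\xbar_i|}\prod_{x \in \xbar_i} x$ and cannot be cancelled by any other term in the expansion — and (ii) that $\psi_i$ is a Boolean function, which is the elementary evaluation argument above. Both are routine, so the proof is a short paragraph: define $\psi_i$, check the hypotheses, apply \Cref{lem:deg-bound-general}. I would present it as a two- or three-sentence \texttt{proof} environment. If one wanted to be fully self-contained one could instead repeat the substitution argument of \Cref{lem:deg-bound-general} verbatim with this particular choice of $\psi_i$, but that would be redundant given the general lemma is already available.
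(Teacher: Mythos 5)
Your proposal is correct and matches the paper's proof exactly: both define $\psi_i := \prod_{x \in \xbar_i}(1-x)$, observe it is multilinear, full degree, and a Boolean function, and invoke \Cref{lem:deg-bound-general}. The extra lines you add justifying full degree and Booleanness are routine verifications the paper leaves implicit.
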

\begin{proof}
    This follows from \Cref{lem:deg-bound-general}, taking for all $i \in I$,
    \[
        \psi_i(\xbar_i) = \prod_{x \in \xbar_i} (1-x),
    \]  
    and noting that $\psi_i$ is a multilinear, full degree polynomial and a Boolean function.
\end{proof}

\subsection{Rank Lower Bound}

\begin{lemma}
\label{lem:rank-lower-bound}
    Let $\mathbb{F}$ be a field with characteristic $p$ and $w \in \mathbb{Z}^d$ be a balanced word.
    If $f$ is the multilinear polynomial such that
    \[
    f = \frac{1}{\ks_{w,p}} \text{ over Boolean assignments,}
    \]
    then $M_w(f)$ has full rank.
\end{lemma}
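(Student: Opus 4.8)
Following the strategy of \cite{GHT22}, the plan is to compute $M_w(f)$ one column at a time, showing that every column is a nonzero scalar multiple of a single standard basis vector of $\F^{M^P_w}$ and that the row indices that arise this way exhaust $M^P_w$; full rank then follows. Recall the standing assumption $|w_{N_w}|\ge|w_{P_w}|$, so $|M^N_w|\ge|M^P_w|$ and ``full rank'' here means $\rank M_w(f)=|M^P_w|$. Fix a negative set-multilinear monomial $m'\in M^N_w$, let $V$ be the set of $\ybar$-variables occurring in $m'$ (so $|V|=|N_w|$), and for $T\subseteq V$ let $\mathbf 1_T$ denote the Boolean $\ybar$-assignment sending the variables in $T$ to $1$ and all other $\ybar$-variables to $0$; set $\beta(m'):=\mathbf 1_V$. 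The $m'$-column of $M_w(f)$ records the coefficients $\coeff_{m\cdot m'}(f)$ for $m\in M^P_w$. First I would use multilinearity of $f$ and inclusion--exclusion over the Boolean subassignments of $\beta(m')$ to write
\[
\coeff_{m\cdot m'}(f)=\sum_{T\subseteq V}(-1)^{|N_w|-|T|}\,\coeff_m\!\big(f|_{\ybar\mapsto\mathbf 1_T}\big),
\]
so that, as a set-multilinear polynomial over $w|_{P_w}$, the $m'$-column equals $\sum_{T\subseteq V}(-1)^{|N_w|-|T|}\,\Pi_{P_w}\!\big(f|_{\ybar\mapsto\mathbf 1_T}\big)$, where $\Pi_{P_w}$ is the projection onto $\F_\sml[\xbar(w|_{P_w})]$. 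Note also that for any Boolean $\ybar$-assignment $\gamma$, the polynomial $f|_{\ybar\mapsto\gamma}$ is multilinear and equals $1/\ks_{w,p}|_{\ybar\mapsto\gamma}$ on the Boolean $\xbar$-cube (the denominator never vanishes there, as $\ks_{w,p}$ is Boolean-unsatisfiable), hence is the unique multilinear inverse of $\ks_{w,p}|_{\ybar\mapsto\gamma}$ over the cube, and therefore mentions only those $\xbar$-variables on which $\ks_{w,p}|_{\ybar\mapsto\gamma}$ actually depends. The task thus reduces to identifying $\Pi_{P_w}(f|_{\ybar\mapsto\mathbf 1_T})$ for every $T\subseteq V$.

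For the top term $T=V$ (where $\mathbf 1_T=\beta(m')$): I would first verify the combinatorial fact that under $\beta(m')$ each $f^{(i)}_\sigma$ becomes the $0/1$-valued constant that is $1$ exactly when $\sigma=\sigma(m')|_{A_w^{(i)}}$. This is where balancedness of $w$, together with $|w_{N_w}|\ge|w_{P_w}|$, is used: it guarantees $A_w^{(i)}=\bigcup_{j\in N_G(i)}\big(A_w^{(i)}\cap B_w^{(j)}\big)$, so that the conjunction over $j\in N_G(i)$ of ``$\sigma(m')$ agrees with $\sigma$ on $A_w^{(i)}\cap B_w^{(j)}$'' amounts to ``$\sigma(m')$ agrees with $\sigma$ on all of $A_w^{(i)}$''. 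Consequently $\sum_\sigma x^{(i)}_\sigma f^{(i)}_\sigma$ collapses under $\beta(m')$ to the single variable $x^{(i)}_{\sigma(m')|_{A_w^{(i)}}}$, and, tracking $\ml$ only on the Boolean cube (where it commutes with Boolean substitution), $\ks_{w,p}|_{\beta(m')}$ agrees over Boolean $\xbar$ with
\[
\sum_{l\in[r]}\ \prod_{i\in P_w^{(l)}}\Big(1-x^{(i)}_{\sigma(m')|_{A_w^{(i)}}}\Big)\;-\;\beta,
\]
a partition instance as in \Cref{lem:deg-bound-special} on the $|P_w|$ distinct variables $\{x^{(i)}_{\sigma(m')|_{A_w^{(i)}}}\}_{i\in P_w}$, with $|I|=r<p=\chara(\F)$ and $\beta\notin\{0,\dots,r\}$ by the choice of $\beta$. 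Hence \Cref{lem:deg-bound-special} yields that $f|_{\beta(m')}$ is a polynomial in precisely these $|P_w|$ variables, of full degree $|P_w|$; being multilinear, it therefore contains the unique degree-$|P_w|$ monomial $\pi(m'):=m\big(\sigma(m')|_{A_w^{P_w}}\big)=\prod_{i\in P_w}x^{(i)}_{\sigma(m')|_{A_w^{(i)}}}$ with a nonzero coefficient $\lambda_{m'}$, while every other monomial of $f|_{\beta(m')}$ omits some $X(w_i)$. Since $\pi(m')$ is the only monomial of $f|_{\beta(m')}$ that is set-multilinear over $w|_{P_w}$, we get $\Pi_{P_w}(f|_{\beta(m')})=\lambda_{m'}\pi(m')$.

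For a lower term $T\subsetneq V$: here $T$ omits the unique variable of $m'$ that lies in $X(w_{j^*})$ for some $j^*\in N_w$, so $\mathbf 1_T$ zeroes all of $X(w_{j^*})$. Pick any $i^*\in N_G(j^*)$, which is nonempty because $w$ is balanced. The factor $f^{(i^*,j^*)}_\sigma=1-\prod_{\sigma_{j^*}}\big(1-y^{(j^*)}_{\sigma_{j^*}}\big)$ vanishes when all of $X(w_{j^*})$ is zero, so $f^{(i^*)}_\sigma\equiv 0$ and $\ks_{w,p}^{(i^*)}|_{\ybar\mapsto\mathbf 1_T}\equiv 1$. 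Then $\ks_{w,p}|_{\ybar\mapsto\mathbf 1_T}$, and with it the multilinear inverse $f|_{\ybar\mapsto\mathbf 1_T}$, mentions no $X(w_{i^*})$-variable, so $f|_{\ybar\mapsto\mathbf 1_T}$ has no monomial set-multilinear over $w|_{P_w}$, giving $\Pi_{P_w}(f|_{\ybar\mapsto\mathbf 1_T})=0$. Combining the terms, the $m'$-column of $M_w(f)$ equals $\lambda_{m'}\pi(m')$, i.e.\ it is supported on the single row $\pi(m')$ with nonzero entry $\lambda_{m'}$. Finally, $m'\mapsto\pi(m')$ maps $M^N_w$ onto $M^P_w$ (extend any string indexed by $A_w^{P_w}$ to one indexed by $B_w^{N_w}$), so the columns of $M_w(f)$ span $\F^{M^P_w}$, whence $\rank M_w(f)=|M^P_w|$: full rank.

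I expect the top-term analysis to be the main obstacle: one must check carefully that the overlap-graph combinatorics genuinely turn each $f^{(i)}_\sigma$ into an indicator under $\beta(m')$, keep the multilinearization $\ml$ confined to the Boolean cube throughout the identification of $\ks_{w,p}|_{\beta(m')}$ with a partition instance, and confirm that the hypotheses of \Cref{lem:deg-bound-special} ($\chara(\F)>r$ and $\beta\notin\{0,1,\dots,r\}$) really are delivered by the construction and the constraint $r<p$. The vanishing of the lower terms is then a short consequence of balancedness on the $N_w$-side, and the concluding counting step is immediate.
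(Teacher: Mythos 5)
Your proposal is correct, and it takes a genuinely different route from the paper's proof. The paper writes $f=\sum_m g_m(\xbar)\,m$ and proves, by \emph{induction} on $|T|$ over submonomials $m$ set-multilinear on $w|_T$, an upper bound on the leading monomial of $g_m$; the inductive step restricts $f$ via $\tau_m$ (the same partial $\ybar$-assignment you use), applies \Cref{lem:deg-bound-special}, and then cancels contributions from proper submonomials by the inductive hypothesis. The conclusion there is only an upper-triangular structure: for each $m_N\in M^N_w$ the entry in row $\pi(m_N)=m(\sigma(m_N)|_{A_w^{P_w}})$ is nonzero and all entries in strictly larger rows vanish, which yields full rank by a standard elimination argument.

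Your approach inverts the relation $\tau_m(f)=\sum_{m'\mid m}g_{m'}$ explicitly via inclusion--exclusion, obtaining $g_{m'}=\sum_{T\subseteq V}(-1)^{|V|-|T|}f|_{\mathbf 1_T}$, and then projects onto $\F_{\sml}[\xbar(w|_{P_w})]$. The key structural gain is that you show every lower term $T\subsetneq V$ has \emph{zero} set-multilinear projection --- not merely a small leading monomial --- by the short balancedness argument (some $j^*$ is missing, pick $i^*\in N_G(j^*)$, then $\ks_{w,p}|_{\mathbf 1_T}$ and hence $f|_{\mathbf 1_T}$ are free of $X(w_{i^*})$). This removes the induction entirely and localizes the use of \Cref{lem:deg-bound-special} to the single top term $T=V$. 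It also yields the stronger conclusion that $M_w(f)$ is a scaled incidence matrix of the surjection $\pi:M^N_w\twoheadrightarrow M^P_w$ (each column is a nonzero multiple of one standard basis vector), rather than just being "column-wise upper-triangular." The technical points you flag --- that $\ml$ commutes with Boolean substitution, that $A_w^{(i)}=\bigcup_{j\in N_G(i)}(A_w^{(i)}\cap B_w^{(j)})$ follows from $|w_{N_w}|\ge|w_{P_w}|$, and that the hypotheses $r<p$ and $\beta\notin\{0,\dots,r\}$ of \Cref{lem:deg-bound-special} are met --- all check out and are the same facts the paper uses implicitly inside its inductive step. Overall your version is a legitimate and somewhat cleaner argument, at the cost of having to justify the inclusion--exclusion identity and the commutation of multilinearization with restriction explicitly.
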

\begin{proof}
    We recall the assumption that $|w_{N_w}| \geq |w_{P_w}|$ from the construction of $\ks_{w,p}$.
    Now write
    \begin{equation}
    \label{lem:rank-f-monomial}
        f = \sum_m g_m(\xbar) m,
    \end{equation}
    where the sum ranges over all multilinear monomials $m$ in the $\ybar$-variables and $g_m(\xbar)$ is some multilinear polynomial in the $\xbar$-variables.

    \begin{claim}
        For any monomial $m$ that is set-multilinear on some $w|_T$, where $T \subseteq N_w$, the leading monomial of $g_m(\xbar)$ is less than or equal to
        \[
        m (\sigma(m)|_{A_w^{S}}),
        \]
        where $S$ is the maximal subset of $P_w$ such that $A_w^S \subseteq B_w^T$.
        Moreover, if $m$ is set-multilinear on $w|_{N_w}$, then the leading monomial of $g_m(\xbar)$ equals 
        \[
        m (\sigma(m)|_{A_w^{P_w}}).
        \]
    \end{claim}
    \begin{proof}
        We prove this claim by induction on the size of $T$.\\\\
        \textit{Base case:} If $T = \emptyset$, consider the partial assignment $\tau_1$ that maps all the $\ybar$-variables to $0$.
        We have $\tau_1(f) = g_1(\xbar)$, where $g_1(\xbar)$ is the coefficient of the empty monomial $1$.
        On the other hand, $\tau_1(\ks_{w,p}^{(i)}) = 1$ for all $i$.
        Since
        \[
            f = \frac{1}{\ks_{w,p}} \text{ over Boolean assignments,}
        \]
        we see that $\tau_1(f) = 1/(r-\beta)$ over Boolean assignments.
        As $g_1(\xbar)$ is multilinear, $g_1(\xbar) = 1/(r-\beta)$ as a polynomial identity, so the the leading monomial of $g_1(\xbar)$ is the empty monomial $1$.
        \\\\
        \textit{Inductive step:}
        Suppose that $T$ is non-empty, and let $m$ be a set-multilinear monomial over $w|_T$.
        Consider the partial assignment $\tau_m$ that maps any $\ybar$-variable in $m$ to $1$ and any other $\ybar$-variable to $0$.
        By \Cref{lem:rank-f-monomial}
        \begin{equation}
        \label{eqn_f_inductive_assign}
            \tau_m(f) = \sum_{m'}g_{m'}(\xbar),
        \end{equation}
        where $m'$ ranges over all submonomials of $m$.
        On the other hand,
        \[
        \tau_m(\ks_{w,p}) = \sum_{j \in [r]} \prod_{i \in P_w^{(j)}} \tau_m(\ks_{w,p}^{(i)}) - \beta.
        \]
        For $i \in P_w$, if $A_w^{(i)} \not \subseteq B_w^T$, then $\tau_m(\ks_{w,p}^{(i)}) = 1$; however, if $A_w^{(i)} \subseteq B_w^T$, then $\tau_m(\ks_{w,p}^{(i)}) = 1- x_{\sigma_i}^{(i)}$, where $\sigma_i$ is the binary string indexed by $A_w^{(i)}$ that agrees with $\sigma(m)$ on $A_w^{(i)}$.
        Therefore
        \[
        \tau_m(f)  \left(\sum_{j \in [r]} \prod_{i \in P_w^{(j)} } (1- x_{\sigma_i}^{(i)}) - \beta \right)= 1  \text{ over Boolean assignments,}
        \]
        where the product ranges over $i \in P_w^{(j)}$ such that $A_w^{(i)} \subseteq B_w^T$.
        From \Cref{lem:deg-bound-special}, it follows that the leading monomial of $\tau_m(f)$ is the product of all the $x_{\sigma_i}^{(i)}$ appearing above, and thus the leading monomial is
        \begin{equation}
        \label{eqn_lead_monomial}
            m (\sigma(m)|_{A_w^{S}}),
        \end{equation}
        where $S$ is the maximal subset of $P_w$ such that $A_w^S \subseteq B_w^T$ (see \Cref{fig:leading-monomial}).
        If the leading monomial of $g_m(\xbar)$ were greater than $ m (\sigma(m)|_{A_w^{S}})$, then it must be cancelled by some monomial of $g_{m'}(\xbar)$ in \Cref{eqn_f_inductive_assign} for some proper submonomial of $m$; however, by the inductive hypothesis, for all such proper submonomials $m'$, the leading monomial of $g_{m'}(\xbar)$ is less than or equal to $m (\sigma(m')|_{A_w^{S}})$.
        Therefore, the leading monomial of $g_m(\xbar)$ must be less than or equal to \Cref{eqn_lead_monomial}, concluding the induction.\\\\
        It remains to show that the leading monomial of $g_m(\xbar)$ equals  $m (\sigma(m)|_{A_w^{P_w}})$ whenever $m$ is set-multilinear on $w|_{N_w}$.
        Let $m'$ be a proper submonomial of $m$ that is set-multilinear over $w|_{T}$ for some $T \subsetneq N_w$.
        As $w$ is a balanced word, there is some $i \in P_w$ such that $A_w^{(i)} \not\subseteq B_w^T$, and thus the leading monomial of $g_{m'}(\xbar)$ is strictly smaller than $m (\sigma(m)|_{A_w^{P_w}})$.
        From \Cref{eqn_lead_monomial}, it follows that the leading monomial of $g_m(\xbar)$ must equal $m (\sigma(m)|_{A_w^{P_w}})$.
    \end{proof}
    For each monomial $m_P$ that is set-multilinear over $w|_{P_w}$, there exists a monomial $m_N$, set-multilinear over $w|_{N_w}$, such that the leading monomial of $g_{m_N}(\xbar)$ is exactly $m_P$.
    Consequently, the $(m_P,m_N)$ entry of $M_w(f)$ is non-zero in $\FF$, while for every monomial $m_P' \neq m_P$, also set-multilinear over $w|_{P_w}$ and satisfying $m_P \leq m_P'$, the $(m_P',m_N)$ entry is zero.
    For $M_w(f)$, it follows that the dimension of the column space equals the number of rows, so $M_w(f)$ has full rank.
\end{proof}

\begin{corollary}
\label{cor:relrank-lower-bound}
    Let $\mathbb{F}$ be a field with characteristic $p$, and let $w \in \mathbb{Z}^d$ be a balanced word with $|w_i| \leq b$ for all $i \in [d]$.
    If $f$ is the multilinear polynomial such that
    \[
    f = \frac{1}{\ks_{w,p}} \text{ over Boolean assignments,}
    \]
    then $\relrk(f) \geq 2^{-b/2}$.
\end{corollary}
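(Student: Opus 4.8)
The plan is to obtain \Cref{cor:relrank-lower-bound} as a short corollary of \Cref{lem:rank-lower-bound}: that lemma already supplies full rank of $M_w(f)$, so all that remains is to unpack the definition of relative rank together with the dimension counts $|M^P_w| = 2^{|w_{P_w}|}$ and $|M^N_w| = 2^{|w_{N_w}|}$ (which hold since a set-multilinear monomial over $w|_{P_w}$, resp.\ $w|_{N_w}$, picks one of $2^{|w_i|}$ variables from each block), and then to control the gap $|w_{N_w}| - |w_{P_w}|$ using the balancedness of $w$.

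First I would note that the running assumption $|w_{N_w}| \ge |w_{P_w}|$ from the construction of $\ks_{w,p}$ gives $|M^N_w| = 2^{|w_{N_w}|} \ge 2^{|w_{P_w}|} = |M^P_w|$, so $M_w(f)$ has (weakly) at least as many columns as rows. Hence the statement ``$M_w(f)$ has full rank'' established in \Cref{lem:rank-lower-bound} --- whose proof in fact shows that the column space of $M_w(f)$ has dimension equal to the number of rows --- is exactly $\rank M_w(f) = |M^P_w|$. Substituting into the definition of relative rank,
$$
\relrk_w(f) = \frac{\rank M_w(f)}{\sqrt{|M^P_w|\cdot|M^N_w|}} = \frac{|M^P_w|}{\sqrt{|M^P_w|\cdot|M^N_w|}} = \sqrt{\frac{|M^P_w|}{|M^N_w|}} = 2^{-(|w_{N_w}|-|w_{P_w}|)/2}.
$$

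It remains to show $|w_{N_w}| - |w_{P_w}| \le b$, and here I would use that $w$ is balanced. Let $j^\ast := \max N_w$, so that $B_w^{(j^\ast)} = \bigl[\,|w_{N_w}| - |w_{j^\ast}| + 1,\ |w_{N_w}|\,\bigr]$ is the right-most of the negative blocks, which together partition $[1,|w_{N_w}|]$. Balancedness forces $j^\ast$ to have a neighbour in the overlap graph, i.e.\ $B_w^{(j^\ast)}$ intersects some positive block $A_w^{(i)}$ with $i \in P_w$; since the positive blocks partition $[1,|w_{P_w}|]$, this requires the left endpoint of $B_w^{(j^\ast)}$ to be at most $|w_{P_w}|$, that is $|w_{N_w}| - |w_{j^\ast}| + 1 \le |w_{P_w}|$. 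Using $|w_{j^\ast}| \le b$, this yields $|w_{N_w}| - |w_{P_w}| \le b - 1$, and plugging back into the displayed identity gives $\relrk_w(f) \ge 2^{-(b-1)/2} \ge 2^{-b/2}$, as claimed. The only place that needs a moment of care is this last step: pinning down which negative block is the right-most one, and checking that its mandated overlap with a positive block really forces $|w_{N_w}| - |w_{P_w}| \le b-1$ rather than a weaker bound; everything else is a direct unpacking of \Cref{lem:rank-lower-bound} and the notation of \Cref{sec:prelim:setmulti}.
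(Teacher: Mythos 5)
Your proposal is correct and follows essentially the same route as the paper: invoke \Cref{lem:rank-lower-bound} to get $\rank M_w(f) = |M^P_w|$, plug into the definition of $\relrk_w$, and bound the exponent $|w_{N_w}| - |w_{P_w}|$ by $b$ using balancedness. The only difference is that you spell out why balancedness gives $|w_{N_w}| - |w_{P_w}| \le b-1$ (via the rightmost negative block $B_w^{(j^\ast)}$ needing to intersect some positive block), whereas the paper asserts $|w_{P_w}| - |w_{N_w}| \ge -b$ without elaboration; your justification is sound and in fact yields the slightly sharper $-(b-1)$, consistent with the paper's bound.
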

\begin{proof}
    Recall that, by the construction of $\ks_{w,p}$, we assume $|w_{N_w}| \geq |w_{P_w}|$.
    Since $w$ is balanced and satisfies $|w_i| \leq b$ for all $i \in [d]$, it follows that $|w_{P_w}| - |w_{N_w}| \geq -b$.
    By \Cref{lem:rank-lower-bound}, $M_w(f)$ has rank $|M^P_w|$.
    Therefore
    \[
        \relrk_w(f) = \sqrt{\frac{|M^P_w|}{|M^N_w|}} = \sqrt{2^{|w_{P_w}| - |w_{N_w}|}} \geq 2^{-b/2}.
    \]
\end{proof}

\begin{figure}
    \centering
    \includegraphics[scale=0.7]{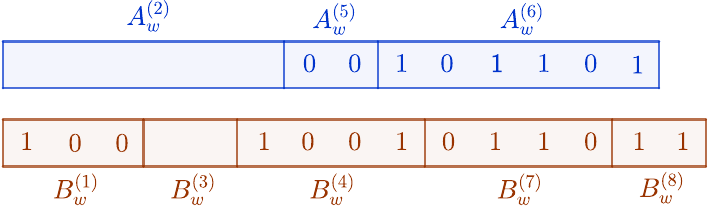}
    \caption{\small From \cite{GHT22}. In this example, $T = \{1,4,7,8\} \subseteq N_w$ and  $m = y_{100}^{(1)} \cdot y_{1001}^{(4)} \cdot y_{0110}^{(7)} \cdot y_{11}^{(8)}$ is a set-multilinear monomial over $w \vert_T$. Like \cite{GHT22}, since $S = \{5,6\}$ is the maximal subset of $P_w$ with $A_w^S \subseteq B_w^T$, we have that the leading monomial of $g_m(\vx)$  is less than or equal to $x_{00}^{(5)} \cdot x_{101101}^{(6)}$. However, in contrast to \cite{GHT22}, in our polynomial $\ks_{w,p}$, the partial assignment setting the $\vy$-variables in $m$ to $1$ and the remaining $\vy$-variables to $0$ results in the polynomial $(1-x_{00}^{(5)}) + (1-x_{101101}^{(6)}) - \beta$.}
    \label{fig:leading-monomial}
\end{figure}

\subsection{IPS Lower Bound}
We now state and prove our lower bound for constant-depth $\IPS$ over finite fields.
We begin by recalling notation from \cite{BDS24_journal}.
Let $F(n)$ denote the $n$-th Fibonacci number, defined by $F(0)=1, F(1)=2$ and $F(i)=F(i-1)+F(i-2)$ for $i \geq 2$; let $G(i) = F(i) - 1$ for all $i$.
Fix the product-depth $\Delta \leq \nicefrac{\log \log \log n}{4}$, and let $d = \lfloor \nicefrac{\log n}{4}  \rfloor$ and $\lambda = \lfloor d^{1/G(\Delta)} \rfloor$.
\begin{theorem}[\cite{GHT22} over Finite Fields]
\label{thm:ips-modp-main}
    Let $p \geq 5$ be a prime, and let $\FF$ be a field of characteristic $p$.
    Let $n, \Delta \in \NN_+$ with $\Delta \leq \nicefrac{\log \log \log n}{4} $.
    Then any product-depth at most $\Delta$ multilinear $\lbIPS$ refutation over $\FF$ of  $\ks_{w,p}$ has size  at least
    \[
    n^{\Omega(\lambda/\Delta)}.
    \]
\end{theorem}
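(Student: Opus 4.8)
The strategy follows the blueprint of \cite{GHT22}: convert a small constant-depth multilinear $\lbIPS$ refutation into a small constant-depth set-multilinear circuit computing a set-multilinear polynomial of high relative rank, and then invoke the algebraic circuit lower bound to derive the size bound. First I would apply the functional lower bound method (\Cref{thm:func_lb_method}): from a product-depth $\Delta$ multilinear $\lbIPS$ refutation of $\ks_{w,p}$ of size $S$, extract the multilinear polynomial $f$ with $f = 1/\ks_{w,p}$ over Boolean assignments, computed by a circuit of size $\poly(S)$ and product-depth $O(\Delta)$ (absorbing the constant-size overhead of substituting $\ks_{w,p}$, which has a $\poly(d, 2^{bp})$-size product-depth $3$ multilinear formula, into the circuit). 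The word $w$ should be chosen balanced with $|w_i| \le b$ for $b = \Theta(\log n / \lambda)$ (or whatever makes the parameters line up), and $d$ as above, so that the ambient number of variables is $\poly(n)$ and $\ks_{w,p}$ is genuinely an instance on $\poly(n)$ variables.

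Next I would set-multilinearize. Starting from the multilinear circuit for $f$, homogenize and then apply the projection $\Pi_{w}$ onto $\F_{\sml}[\xbar(w)]$; this is where the finite-field obstacle appears, and I would invoke the Forbes set-multilinearization result \cite{forbes24} (the all-fields extension of \cite{LST21}) to convert the constant-depth circuit for $f$ into a constant-depth set-multilinear circuit for $\Pi_w(f')$ (for an appropriate homogeneous-degree-$d$ part $f'$ of $f$) with only a quasi-polynomial blow-up in size, controlled in terms of $d$, $\Delta$, and $\lambda$. Crucially, the set-multilinear projection is linear and commutes with the functional-lower-bound argument, so $M_w(\Pi_w(f'))$ still has full rank — this is exactly \Cref{lem:rank-lower-bound} and \Cref{cor:relrank-lower-bound}, which give $\relrk_w(\Pi_w(f)) \ge 2^{-b/2}$.

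Then I would apply the set-multilinear-formula lower bound in terms of relative rank: by the reduction from set-multilinear formula size to coefficient-matrix rank (\cite{LST21}, with the improved parameters of \cite{BDS24_journal}), any product-depth $\Delta'$ set-multilinear circuit computing a polynomial $h$ with $\relrk_w(h)$ bounded below by $2^{-b/2}$ — which is large relative to the threshold the lower bound demands, given our choice of $b$ relative to $d$ and $\lambda = \lfloor d^{1/G(\Delta)} \rfloor$ — must have size at least $d^{\Omega(\lambda/\Delta)} = n^{\Omega(\lambda/\Delta)}$. Chaining the two size bounds (the set-multilinearization blow-up is subsumed), we conclude that the original refutation size $S$ satisfies $S \ge n^{\Omega(\lambda/\Delta)}$.

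\textbf{Main obstacle.} The delicate point is the finite-field set-multilinearization step and the bookkeeping of parameters: \cite{LST21}'s original argument needs large characteristic, so the whole theorem rests on \cite{forbes24} going through over $\FF_p$ with $p \ge 5$ (the hypothesis $p \ge 5$ presumably enters through the homogenization/interpolation there or through the $(p-1)$-th power in $\ks_{w,p}^{(i)}$ behaving well), and one must verify that the quasi-polynomial size overhead of set-multilinearization, together with the $2^{bp}$-type factors from computing $\ks_{w,p}$, does not swamp the $n^{\Omega(\lambda/\Delta)}$ lower bound for the chosen regime $\Delta \le \nicefrac{\log\log\log n}{4}$, $d = \lfloor \nicefrac{\log n}{4}\rfloor$, $\lambda = \lfloor d^{1/G(\Delta)}\rfloor$. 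A secondary subtlety is ensuring that the extraction of the multilinear $f$ from a multilinear $\lbIPS$ refutation — as opposed to $\lIPS$ — is legitimate; here the "important remark" and the intuition subsystem $C = \sum_i g_i(\vx) y_i + C'(\vx,\vz)$ in the preliminaries, together with \Cref{thm:func_lb_method}, do the work, since $\ks_{w,p}$ has no Boolean roots by the choice of $\beta$.
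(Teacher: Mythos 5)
Your proposal follows the paper's proof essentially step for step: extract $g$ with $g\cdot\ks_{w,p}=1$ over Boolean assignments from the multilinear $\lbIPS$ refutation via the functional-lower-bound observation, reduce to a set-multilinear circuit lower bound for $\Pi_w(g)$, invoke \cite{forbes24} for set-multilinearization over all fields, and chain with the relative-rank lower bound from \Cref{cor:relrank-lower-bound} together with the \cite{BDS24_journal} size-versus-rank bound (\Cref{clm_rel_rank_bound}). The structure and the key lemmas are all the right ones.

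Two corrections to your parenthetical comments, neither of which breaks the argument. First, no substitution of $\ks_{w,p}$ into the circuit occurs at the extraction step: since $\widehat C(\vx,y,\vnz)=g(\vx)\cdot y$ by linearity in $y$, the witness $g$ is computed by $C(\vx,1,\vnz)$, which has exactly the same size and product-depth as $C$; the fact that $\ks_{w,p}$ itself has a small product-depth-$3$ multilinear formula matters only to make the lower bound a non-placeholder one, not to bound $g$. Second, the hypothesis $p\geq 5$ does not enter through \cite{forbes24} or through the $(p-1)$-th powers; it enters because the construction of $\ks_{w,p}$ requires a scattered partition of $P_w$ into $r<p$ parts and the degree bound \Cref{lem:deg-bound-general} requires $\chara(\FF)>|I|=r$. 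Since $w$ is over the alphabet $\{\alpha k,-k\}$ with $\tfrac12\leq\alpha<1$, the negative overlap satisfies $\Delta_G(N_w)\leq 3$, so one can always build a scattered partition with $r\leq 3$, and $p\geq 5>3$ suffices both for the degree bound and for the existence of a $\beta$ outside $\{0,\dots,r\}$. Your guess $b=\Theta(\log n/\lambda)$ is also off (it is $b=\Theta(\log n)$, with $k\in[\lfloor\log n\rfloor/2,\lfloor\log n\rfloor]$), but you flagged it as tentative and the true value still yields $\poly(n)$ variables and makes the bookkeeping go through.
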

The proof of \Cref{thm:ips-modp-main} relies on the following result:
\begin{theorem}
\label{thm:knapsack-modp-lower-bound}
    Let $p \geq 5$ be a prime, and let $\FF$ be a field of characteristic $p$.
    Let $\Delta$ be as above.
    If $f$ is the multilinear polynomial that equals
    \[
    \frac{1}{\ks_{w,p}} \text{ over Boolean assignments,}
    \]
    then any circuit of product-depth at most $\Delta$ computing $f$ has size at least
    \[
    n^{\Omega(\lambda/\Delta)}.
    \]
\end{theorem}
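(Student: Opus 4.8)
The plan is to follow the template of \cite{GHT22}: reduce the lower bound for constant-depth circuits computing the multilinear polynomial $f$ to a relative-rank bound for the set-multilinear projection of $f$ onto an appropriately chosen word $w$, using the set-multilinearization machinery of \cite{forbes24} and the set-multilinear-formula-to-rank reduction of \cite{BDS24_journal}, and then to invoke the relative-rank lower bound \Cref{cor:relrank-lower-bound} that was already established above for $\ks_{w,p}$.

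\medskip

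\noindent\textbf{Step 1 (choice of parameters and word).} First I would fix a word $w\in\ZZ^d$ that is \emph{balanced}, has all $|w_i|\le b$ for a suitably small $b$ (say $b$ a small constant, or $b=O(1)$ chosen so that $2^{bp}=\poly(n)$, which is possible since $p$ is a fixed prime), admits a scattered partition $P_w=P_w^{(1)}\sqcup\cdots\sqcup P_w^{(r)}$ with $r<p$, and has total length $\sum_i|w_i|=\Theta(\log n)$ so that the number of variables of $\ks_{w,p}$ is $\poly(n)$ and its set-multilinear monomials have degree $d=\Theta(\log n)$. This matches the $d=\lfloor\log n/4\rfloor$, $\lambda=\lfloor d^{1/G(\Delta)}\rfloor$ regime in the statement. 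The point of $b$ being small is that \Cref{cor:relrank-lower-bound} gives $\relrk_w(g)\ge 2^{-b/2}$, a constant bounded away from $0$, for \emph{any} multilinear $g$ that inverts $\ks_{w,p}$ over Boolean points.

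\medskip

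\noindent\textbf{Step 2 (from a depth-$\Delta$ circuit for $f$ to a small set-multilinear circuit).} Suppose toward a contradiction that $f$ is computed by a circuit $C$ of product-depth $\le\Delta$ and size $s=n^{o(\lambda/\Delta)}$. Since $f$ is multilinear, I may first homogenize/multilinearize $C$ without blowing up the size or depth by more than a constant factor (the degree of $f$ is $O(pdb2^b)=\poly(\log n)$, which keeps homogenization cheap). Then apply the set-multilinearization theorem of \citeallauthors{forbes24} \cite{forbes24} — crucially this version holds over \emph{all} fields, in particular over $\FF$ of characteristic $p$ — to obtain, for the word $w$ fixed in Step~1, a \emph{set-multilinear} circuit $C'$ of product-depth $O(\Delta)$ and size $\poly(s)\cdot d^{O(d)}$, wait: more carefully, the size bound is of the form $s^{O(1)}$ times a factor that is quasi-polynomial in the degree $d$ and singly exponential in $\Delta$, all of which is absorbed into $n^{o(\lambda/\Delta)}$ given $d=\Theta(\log n)$ and $\Delta\le\frac{\log\log\log n}{4}$. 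The output polynomial is the set-multilinear projection $\Pi_w(f)\in\FF_{\sml}[\xbar(w)]$. One subtlety: $f$ is only defined as ``the multilinear polynomial equal to $1/\ks_{w,p}$ over Boolean assignments'', so $\Pi_w(f)$ is a genuine set-multilinear polynomial on $w$, and its coefficient matrix $M_w(\Pi_w(f)) = M_w(f)$ in the notation of the relative-rank section, since the set-multilinear coefficients of $f$ are exactly the entries of $M_w(f)$.

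\medskip

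\noindent\textbf{Step 3 (from the set-multilinear circuit to a relative-rank upper bound, and the contradiction).} Convert the product-depth-$O(\Delta)$ set-multilinear circuit $C'$ into a set-multilinear \emph{formula} of size $\poly(s)$ (standard, with only a polynomial blow-up since the depth is constant), and then apply the set-multilinear-formula $\Rightarrow$ relative-rank upper bound of \citeallauthors{BDS24_journal} \cite{BDS24_journal} (the quantitative improvement of the \cite{LST21} reduction, valid over every field): any product-depth-$\Delta$ set-multilinear formula of size $S$ computing a polynomial $h\in\FF_{\sml}[\xbar(w)]$ satisfies $\relrk_w(h)\le S\cdot \big(\text{something like } \lambda^{-G(\Delta)\cdot \Omega(1)}\big)$, i.e. the relative rank of anything computed by a size-$S$, product-depth-$\Delta$ set-multilinear formula is at most $S\cdot 2^{-\Omega(d/\lambda)}$ — smaller than $2^{-b/2}$ once $S=n^{o(\lambda/\Delta)}$ and $d=\Theta(\log n)$, provided the word $w$ is chosen so that its ``imbalance'' parameter feeding the \cite{BDS24_journal} bound is $\Theta(\lambda)$-like (this is exactly the regime those reductions are tuned for). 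But \Cref{cor:relrank-lower-bound} gives $\relrk_w(\Pi_w f)=\relrk_w(f)\ge 2^{-b/2}$. Comparing the two bounds forces $S\ge 2^{\Omega(d/\lambda)}/2^{b/2} = n^{\Omega(\lambda/\Delta)}$ after unwinding $d=\Theta(\log n)$ and the relation between $\lambda$, $d$, $\Delta$ — contradicting $s=n^{o(\lambda/\Delta)}$. Hence $f$ requires product-depth-$\Delta$ circuits of size $n^{\Omega(\lambda/\Delta)}$.

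\medskip

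\noindent\textbf{Main obstacle.} The delicate part is bookkeeping the size/depth blow-ups through the two reductions over a field of \emph{small positive characteristic} and making sure every constant absorbed into the $n^{\Omega(\lambda/\Delta)}$ / $n^{o(\lambda/\Delta)}$ thresholds is legitimate: the \cite{LST21} set-multilinearization originally needed large characteristic, so it is essential that we route through \cite{forbes24} for Step~2 and through the all-fields version in \cite{BDS24_journal} for Step~3, and then verify that the word length $d=\Theta(\log n)$, the bound $b=O(1)$ on $|w_i|$, the requirement $r<p$ for the scattered partition, and the constraint $\Delta\le\frac{\log\log\log n}{4}$ are mutually consistent with a balanced word actually existing. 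Everything else — homogenization, circuit-to-formula, and plugging in \Cref{cor:relrank-lower-bound} — is routine given the results already stated.
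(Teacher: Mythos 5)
Your high-level strategy is the same as the paper's: choose a balanced word $w$, set-multilinearize the hypothesized small circuit via \Cref{lem:forbes-set-multilinearization} (the all-fields version from \cite{forbes24}, which is indeed the crucial ingredient for positive characteristic), pass to a set-multilinear formula, apply the relative-rank upper bound of \cite{BDS24_journal} (\Cref{clm_rel_rank_bound}), and contradict the relative-rank lower bound from \Cref{cor:relrank-lower-bound}. That is exactly what the paper does via \Cref{lem:set-multilinear-circuit-lb} and the proof of \Cref{thm:ips-modp-main}.

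However, your Step~1 parameter regime is wrong in a way that is not merely cosmetic. You posit ``$b$ a small constant'' with $2^{bp}=\poly(n)$, so the relative-rank lower bound $\relrk_w(f)\ge 2^{-b/2}$ would be a constant. But \Cref{clm_rel_rank_bound} requires the word $w$ to be over the alphabet $\{\alpha k,-k\}$ with $k\in[\lfloor\log n\rfloor/2,\lfloor\log n\rfloor]$, i.e.\ $|w_i|=\Theta(\log n)$; with $d=\lfloor\log n/4\rfloor$ slots each of logarithmic width the total number of variables in $\ks_{w,p}$ is $\poly(n)$. So the correct regime is $b=\Theta(\log n)$, the lower bound $2^{-b/2}$ is $n^{-\Theta(1)}$ (not a constant), and the quantitative claim you attribute to \cite{BDS24_journal} is off: the upper bound is $\relrk_w(F)\le s\cdot 2^{-k\lambda/256}$ (so $\lambda$ appears in the \emph{numerator} of the exponent, multiplied by $k=\Theta(\log n)$), not $S\cdot 2^{-\Omega(d/\lambda)}$. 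Combining $2^{-k}\le s^{2\Delta}2^{-k\lambda/256}$ is what gives $s\ge 2^{k(\lambda/256-1)/(2\Delta)}\ge n^{\Omega(\lambda/\Delta)}$. With your parameters the two sides of the inequality simply cannot be produced: a word with $|w_i|=O(1)$ is not in the scope of \Cref{clm_rel_rank_bound}. Also, the homogenization step you insert before invoking \cite{forbes24} is unnecessary (the lemma already takes a general circuit and returns the set-multilinear projection), and your degree estimate $\poly(\log n)$ for $f$ is incorrect: $f$ is the multilinear inverse of $\ks_{w,p}$ over the Boolean cube and can have degree up to the number of variables, which is $\poly(n)$ in the correct regime. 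You do correctly flag parameter bookkeeping as the delicate part, but the specific parameters you propose would not make the argument go through as stated.
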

\begin{proof}[Proof of \Cref{thm:ips-modp-main} from \Cref{thm:knapsack-modp-lower-bound}]
Let $C(\vx, \vy, \vz)$ be a multilinear $\lbIPS$ refutation of $\ks_{w,p}$\footnote{$\ks_{w,p}$ involves both $\vx$- and $\vy$-variables. As this distinction will not play a role in the proof, we treat all variables in $\ks_{w,p}$ as $\vx$-variables. We therefore use the standard notation $C(\vx, \vy, \vz)$ for an $\IPS$ refutation, where $\vx$ are variables of the axioms, and $\vy$, $\vz$ serve as placeholder variables for the axioms.}.
As there is only one non-Boolean axiom, $C$ has a single $\ybar$-variable, which we denote by $y$.
Since $\widehat C(\vx, y, \vnz)$ is linear in the $y$-variable and satisfies $\widehat C(\vx, 0, \vnz) = 0$, it follows that
\[
\widehat C(\vx, y, \vnz) = g(\vx) \cdot y
\]
for some polynomial $g(\vx) \in \FF[\vx]$.
This polynomial $g(\vx)$ is computed by the circuit $C(\vx,1,\vnz)$, so the minimal product-depth-$\Delta$ circuit size of $g(\vx)$ lower bounds that of $C(\vx, \vy, \vz)$.
Therefore, it suffices to lower bound the size of product-depth at most $\Delta$ circuits computing $g(\vx)$.

We have
        $$
        \widehat C(\vx,y,\vz)= \widehat  C(\vx,y,\vnz) +\sum_{i} h_i\cdot z_i
        $$
        for some polynomials $h_i$ in $\vx,y,\vz$,
        hence $\widehat  C(\vx,y,\vz) =  g(\vx)\cdot y + \sum_{i} h_i\cdot z_i$. 
        Since
        \[
        \widehat C\left(\vx,\ks_{w,p},\vx^2-\vx\right) = 1, 
        \]
        we see that
        \[
        g(\vx) \cdot \left(\ks_{w,p}\right) + \sum_{i} (h_i \cdot (x_i^2-x_i)) = 1.
        \]
        Therefore, over Boolean assignments, $g(\vx)\cdot\ks_{w,p} \equiv 1$.
        The result now follows from \Cref{thm:knapsack-modp-lower-bound}.
\end{proof}
 
\begin{lemma}
\label{lem:set-multilinear-circuit-lb}
Let $p \geq 5$ be a prime, and let $\FF$ be a field of characteristic $p$.
Let $\Delta$, $d$ and $\lambda$ be as above.
There exist $\alpha \in \mathbb{Q}$ with $\nicefrac{1}{2} \leq \alpha < 1$, and $k \in \NN_+$ with $k \in [\nicefrac{\lfloor \log n \rfloor}{2}, \lfloor \log n \rfloor]$ and $\alpha k \in \ZZ$, such that if $w \in \ZZ^d$ is a balanced word over the alphabet $\{\alpha k, -k\}$, and $f$ is the multilinear polynomial which equals $\nicefrac{1}{\ks_{w,p}}$ over Boolean assignments, then any set-multilinear circuit of product-depth $\Delta$ computing the set-multilinear projection $\Pi_w(f)$ has size at least
    \[
    s \geq 2^{\frac{k(\lambda/256 - 1)}{2 \Delta}}.
    \]
\end{lemma}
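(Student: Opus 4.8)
The plan is to follow the template established in \cite{LST21,GHT22,forbes24}: convert the lower bound on the product-depth-$\Delta$ set-multilinear circuit size of a set-multilinear polynomial into a relative rank upper bound, and then contrast this with the relative rank lower bound we have already established in \Cref{cor:relrank-lower-bound}. Concretely, I would first fix the parameters. The exponents $\alpha k$ and $-k$ of the word must be chosen so that (i) $k$ is roughly $\log n$ so that each variable set $X(w_i)$ has the right (polynomial-in-$n$) size, (ii) $\alpha \in [\nicefrac12,1)$ is rational with $\alpha k \in \ZZ$ so that $w$ can actually be realised as an integer word, and (iii) the word can be made balanced over the two-letter alphabet $\{\alpha k, -k\}$. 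The choice of $\alpha$ bounded away from $1$ is what makes the relative rank of a ``small'' set-multilinear circuit drop below the threshold $2^{-b/2}$ with $b = k$; this is exactly the mechanism in \cite{LST21}.

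The core argument proceeds as follows. Suppose $C$ is a set-multilinear circuit of product-depth $\Delta$ and size $s$ computing $\Pi_w(f)$. By the set-multilinear structure-versus-rank machinery (the hardness-of-representation lemma of \cite{LST21} as sharpened over all fields by \cite{forbes24}, with the improved quantitative parameters of \cite{BDS24_journal}), such a circuit can be written as a sum of at most $s$ products, each factor being a set-multilinear polynomial over a subword, with the product structure forcing at least one ``unbalanced'' split at each level; this yields a bound of the shape
\[
\relrk_w\big(\Pi_w(f)\big) \;\le\; s \cdot 2^{-\Omega\big(\frac{k\lambda}{\Delta}\big)},
\]
where the $\lambda = \lfloor d^{1/G(\Delta)}\rfloor$ and the Fibonacci-number denominator $G(\Delta)$ enter precisely because the worst-case product-depth-$\Delta$ circuit distributes the imbalance across the levels in a Fibonacci-like pattern, as in \cite{BDS24_journal}. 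On the other hand, since $\Pi_w(f)$ is the set-multilinear projection of the multilinear polynomial $f = \nicefrac{1}{\ks_{w,p}}$ and $w$ is balanced with $|w_i| \le k$ for all $i$, \Cref{cor:relrank-lower-bound} (applied with $b = k$) gives $\relrk_w(\Pi_w(f)) \ge 2^{-k/2}$ — here I need to check that $M_w(\Pi_w(f)) = M_w(f)$, which holds because $M_w$ only sees the set-multilinear part. Combining the two bounds,
\[
2^{-k/2} \;\le\; s \cdot 2^{-\Omega(k\lambda/\Delta)},
\]
so $s \ge 2^{\Omega(k\lambda/\Delta) - k/2}$, and unwinding the constants in the structure-versus-rank bound (the $\nicefrac{1}{256}$ loss is the one carried in the \cite{BDS24_journal} version) yields $s \ge 2^{k(\lambda/256 - 1)/(2\Delta)}$, as claimed.

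The main obstacle — and the reason this is a genuine lemma rather than a black-box citation — is the interaction of three things that in the characteristic-$0$ literature could be handled more loosely: the set-multilinearization reduction must hold over $\FF$ of small positive characteristic (this is exactly why \cite{forbes24} is invoked rather than \cite{LST21}), the relative rank lower bound must survive the mod-$p$ modification of the knapsack instance (handled by \Cref{lem:rank-lower-bound}, whose proof uses $\chara(\FF) = p > r$ via \Cref{lem:deg-bound-special}, and here $p \ge 5$ is forced by needing $r \ge 2$ parts in a nontrivial scattered partition with $r < p$), and the parameter bookkeeping must be arranged so that a balanced word over $\{\alpha k,-k\}$ with the required $\alpha$ actually exists on $d = \lfloor \nicefrac{\log n}{4}\rfloor$ coordinates while keeping the alphabet size $2^k = \poly(n)$. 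I would spend most of the write-up nailing down the existence of such $\alpha$ and $k$ (an elementary but fiddly number-theoretic/pigeonhole argument, essentially as in \cite{GHT22}), then cite \cite{forbes24,BDS24_journal} for the structure-versus-rank inequality, and finally assemble the two rank bounds. A secondary subtlety is ensuring the word can be taken balanced: one picks $d$ coordinates alternating in sign with the overlap intervals $A_w^{(i)}, B_w^{(j)}$ arranged so every index has a neighbour in the overlap graph $G$, which is possible precisely because $\alpha < 1$ forces $|w_{N_w}|$ to exceed $|w_{P_w}|$ by a controlled amount.
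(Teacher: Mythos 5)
Your high-level plan is the right one and matches the paper (combine the hardness-of-representation bound with \Cref{cor:relrank-lower-bound}), but there is a concrete error in how you apply the rank-upper-bound step. The bound from \cite{LST21,BDS24_journal} (the paper's Claim 3.10) is stated for set-multilinear \emph{formulas}: a formula of product-depth $\delta$, degree at least $\lambda^{G(\delta)}/8$, and size $s_F$ satisfies $\relrk_w(F) \leq s_F \cdot 2^{-k\lambda/256}$, with no $\Delta$ in the exponent of $2$. You instead write $\relrk_w(\Pi_w(f)) \leq s \cdot 2^{-\Omega(k\lambda/\Delta)}$, which places the $\Delta$-dependence in the exponent of $2$ and is not what the lemma delivers. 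The missing step is that you must first unwind the product-depth-$\Delta$ set-multilinear \emph{circuit} of size $s$ into a set-multilinear \emph{formula} of size $s^{2\Delta}$ (preserving product-depth), and only then apply the claim, yielding $\relrk_w(\Pi_w(f)) \leq s^{2\Delta}\, 2^{-k\lambda/256}$. Combined with $\relrk_w(\Pi_w(f)) \geq 2^{-k}$ (or your sharper $2^{-k/2}$), this gives $s^{2\Delta} \geq 2^{k(\lambda/256-1)}$ and hence the stated bound $s \geq 2^{k(\lambda/256-1)/(2\Delta)}$; note the subtracted term is $k/(2\Delta)$ after dividing by $2\Delta$, not the $k/2$ that falls out of your calculation.

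You also attribute the $\Delta$ in the denominator of the final exponent to the Fibonacci-like distribution of imbalance across levels; that is not where it comes from. The Fibonacci structure is entirely absorbed into the definition of $\lambda = \lfloor d^{1/G(\Delta)}\rfloor$ (through $G(\Delta)$), and the claim's exponent $k\lambda/256$ already reflects it. The extra $2\Delta$ that ends up dividing the final exponent is purely the cost of the circuit-to-formula unwinding. Separating these two sources of $\Delta$-dependence is essential to getting the bound right. One further small point: the existence of $\alpha$ and $k$ with the stated properties is part of the conclusion of the cited \cite{BDS24_journal} lemma and does not need to be re-derived by a pigeonhole argument, and the construction of a balanced word $w$ over $\{\alpha k, -k\}$ is not part of this lemma at all — it is a hypothesis here and is discharged later in the proof of \Cref{thm:ips-modp-main}.
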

\begin{proof}
Let $C$ be a set-multilinear circuit of size $s$ and product-depth $\Delta$ computing $\Pi_w(f)$.
By unwinding $C$ into a formula, we obtain a set-multilinear formula $F$ of size $s^{2 \Delta}$ and product-depth $\Delta$ that also computes $\Pi_w(f)$.
We now make use of the following claim from \cite{BDS24_journal}:
\begin{claim}[\cite{LST21},\cite{BDS24_journal} Lemma 4.3]
\label{clm_rel_rank_bound}
    Let $\delta \leq \Delta$ be an integer.
    There exist $\alpha \in \mathbb{Q}$ with $\nicefrac{1}{2} \leq \alpha < 1$, and $k \in \NN_+$ with $k \in [\nicefrac{\lfloor \log n \rfloor}{2}, \lfloor \log n \rfloor]$ and $\alpha k \in \ZZ$, such that if $w \in \ZZ^d$ is a word over the alphabet $\{\alpha k, -k\}$, and $F$ is a set-multilinear formula of product-depth $\delta$, degree at least $\lambda^{G(\delta)}/8$ and size at most $s$, then 
    \[
    \relrk_w(F) \leq s 2^{-k \lambda/256}.
    \]
\end{claim}
As $w$ is balanced, by \Cref{lem:rank-lower-bound}, $M_w(f)$ has full rank and $\deg F \geq d \geq \lambda^{G(\delta)}/8$.
Thus, applying \Cref{cor:relrank-lower-bound} and \Cref{clm_rel_rank_bound}, we obtain 
\[
2^{-k} \leq \relrk_w(\Pi_w(f)) \leq s^{2 \Delta} 2^{-k \lambda/256}.
\]
We therefore see that
\[
    s^{2 \Delta} \geq 2^{k(\lambda/256 - 1)},
\]
from which the claim of the lemma follows.
\end{proof}

\begin{lemma}[\protect{\cite[Corollary~27]{forbes24}}]
    \label{lem:forbes-set-multilinearization}
    Let $\mathbb{F}$ be any field, and let the variables $\xbar$ be partitioned into $\xbar = \xbar_1 \sqcup \cdots \sqcup \xbar_d$.
    Suppose $f \in \mathbb{F}[\xbar]$ can be computed by a size $s$, product-depth $\Delta$ algebraic circuit.
    Then the set-multilinear projection $\Pi_{\textnormal{sml}}(f) \in \mathbb{F}[\xbar]$ can be computed by a size $\poly(s, \Theta (\frac{d}{\ln d} )^d)$, product-depth $2\Delta$ set-multilinear circuit.    
\end{lemma}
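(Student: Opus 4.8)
The plan is to follow \cite{forbes24}, whose argument ``graded-homogenizes'' the circuit with respect to the partition $\xbar = \xbar_1 \sqcup \cdots \sqcup \xbar_d$ while discarding, at every gate, any partial computation that can no longer reach the set-multilinear output. For a gate $v$ of the circuit and a subset $S \subseteq [d]$, let $v_S$ denote the polynomial obtained from $\widehat v$ by keeping exactly those monomials whose degree in the block $\xbar_i$ is $1$ for each $i \in S$ and $0$ for each $i \notin S$; thus $v_S$ is set-multilinear over the sub-partition indexed by $S$, and at the output gate $v_{[d]} = \Pi_{\textnormal{sml}}(f)$. The first point is that any monomial of $\widehat v$ with block-degree $\geq 2$ in some $\xbar_i$ may be deleted permanently: per-monomial block-degrees only increase at product gates and are preserved at sum gates, so such a monomial can never be a factor of a degree-$(1,\dots,1)$ monomial of $f$. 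Hence the $2^d$ polynomials $\{v_S\}_{S\subseteq[d]}$ already contain all the information about $\widehat v$ relevant to $\Pi_{\textnormal{sml}}(f)$. The crucial feature of this deletion is that it is \emph{purely combinatorial} and uses no field elements, which is exactly why the construction works over every field, in contrast to the interpolation-based set-multilinearization of \cite{LST21}, which needs at least $d+1$ distinct scalars.

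Next I would build the new circuit bottom-up with one node per pair $(v,S)$, via the obvious propagation rules: a leaf $x \in \xbar_i$ gives $(v,\{i\}) = x$ and $(v,S)=0$ otherwise; a leaf $c\in\mathbb{F}$ gives $(v,\emptyset)=c$ and $(v,S)=0$ otherwise; a sum gate $v = \sum_j c_j v_j$ becomes $v_S = \sum_j c_j (v_j)_S$ componentwise in $S$; and a product gate $v = \prod_{j=1}^{k} v_j$ becomes
\[
v_S = \sum_{S = S_1 \sqcup \cdots \sqcup S_k} \prod_{j=1}^{k} (v_j)_{S_j},
\]
the sum ranging over ordered partitions of $S$ into $k$ possibly-empty parts. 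Correctness follows by a routine induction on $v$, and since each new node computes a polynomial set-multilinear over a sub-partition, the output circuit is set-multilinear. Under this naive translation the product-depth is preserved (a sum-of-products placed above the translated children contributes exactly one product layer, matching the original product gate).

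The crux is controlling the size. The naive translation of a fan-in-$k$ product gate introduces roughly $(k+1)^d$ new nodes, since there are $k^{|S|}$ ordered partitions of $S$; summed over the circuit this is only $s^{O(d)}$, not the claimed $\poly(s,\Theta(d/\ln d)^d)$, and one does not control it by fan-in reduction (a balanced binary product tree blows the depth up by $\log s$; a constant number of batching levels only lowers the exponent of $s$, it does not remove it). The technical heart of \cite{forbes24} is a reorganization of the partition-sum at each product gate that trades one product layer for two and replaces the dependence on $k$ by a dependence on $d$ alone: one re-expresses $v_S$, for all $S$ simultaneously, as a two-layer set-multilinear formula of gate count $\poly(k)$ times a Bell-type quantity $B = \Theta(d/\ln d)^{d}$ (the number of set partitions of an at-most-$d$-element set, which appears when one sums $\binom{d}{|S|}$ over the ways the $|S|$ active blocks get distributed). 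Performing this at every product gate converts a product-depth-$\Delta$, size-$s$ circuit into a product-depth-$2\Delta$, size-$\poly(s,\Theta(d/\ln d)^d)$ set-multilinear circuit, which is the statement; one may also invoke the improved parameters of \cite{BDS24_journal} where relevant. Everything outside this blow-up analysis — the choice of state space $\{v_S\}$, the propagation rules, and the verification of correctness and set-multilinearity — is straightforward induction, so the main obstacle is precisely the fan-in-independent, depth-preserving bookkeeping of Step~3.
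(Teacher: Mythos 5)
The paper does not prove this lemma at all: it is imported as a black box, namely Corollary~27 of \cite{forbes24}, so there is no internal proof to compare your attempt against. Judged as a standalone proof, your write-up has a genuine gap, and you essentially concede it yourself. Everything you actually carry out --- the subset-indexed components $v_S$, the propagation rules at leaves, sums and products, correctness by induction, and the observation that monomials of block-degree $\geq 2$ can be discarded --- only yields the naive bound of roughly $s^{O(d)}$, which you rightly note falls short of the claimed $\poly(s,\Theta(d/\ln d)^d)$. The entire content of the lemma is the remaining step: re-expressing the ordered-partition sum at a fan-in-$k$ product gate by a set-multilinear gadget of size $\poly(k)\cdot d^{O(d)}$ at the cost of one extra product layer, uniformly over all fields. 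For that step you offer only an assertion (``one re-expresses $v_S$, for all $S$ simultaneously, as a two-layer set-multilinear formula of gate count $\poly(k)$ times a Bell-type quantity'') with no construction and no argument. This is exactly where the field assumption bites: the standard ways to kill the dependence on $k$ --- interpolating $\prod_j(1+z\,g_j)$ at $d+1$ points as in \cite{LST21}, Newton-type identities, or symmetric-function/color-coding expansions that pick up factorial coefficients --- all fail or need care in small positive characteristic, so a field-independent, depth-controlled version of this step is precisely the technical contribution of \cite{forbes24}, not routine bookkeeping that can be waved at.

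Two smaller remarks. First, your depth accounting for the naive translation (product-depth preserved, since the sum-of-products sits where the original product gate was) is fine, but it is not what is needed; the lemma already budgets for product-depth $2\Delta$, and the missing item is solely the size bound. Second, since the paper itself uses the statement as a citation, simply citing \cite{forbes24} (optionally with your exposition of the easy outer layer: the $v_S$ decomposition and why discarding non-set-multilinear monomials is harmless) is the appropriate form here; presenting that outer layer as ``the proof'' while deferring the crux to the cited paper leaves the lemma unproved.
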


\begin{proof}[Proof of \Cref{thm:ips-modp-main}]
    Let $C$ be a circuit of size $s \geq n$ and product-depth at most $\Delta$ computing $f$.
    Let $d = \lfloor \nicefrac{\log n}{4}  \rfloor$ and $\lambda = \lfloor d^{1/G(\Delta)} \rfloor$ be as defined above, and let $\nicefrac{1}{2} \leq \alpha < 1$ and $k \in [\nicefrac{\lfloor \log n \rfloor}{2}, \lfloor \log n \rfloor]$ be as constructed in \Cref{lem:set-multilinear-circuit-lb}.
    Construct, by induction, a balanced word $w \in \ZZ^d$ over the alphabet $\{\alpha k, -k\}$.
    
    By \Cref{lem:forbes-set-multilinearization}, there exists a set-multilinear circuit $C'$ of size $\poly(s, \Theta (\frac{d}{\ln d} )^d)$ and product-depth $2\Delta$ computing the set-multilinear projection $\Pi_w(f)$ of $f$.

    Moreover, by \Cref{lem:set-multilinear-circuit-lb}, any set-multilinear circuit of product-depth $2\Delta$ computing $\Pi_w(f)$ must have size at least
    \[
        2^{\frac{k(\lambda/256 - 1)}{2 \Delta}} \geq n^{\frac{\lambda/256 - 1}{8 \Delta}},
    \]
    where the inequality follows from the lower bound on $k$.
    Combining the two bounds above, we obtain
    \[
    \poly(s, \Theta (\frac{d}{\ln d} )^d) \geq n^{\frac{\lambda/256 - 1}{8 \Delta}},
    \]
    and therefore,
    \[
    d ^{O(d)} \poly(s) \geq n^{\Omega(\lambda/\Delta)}.
    \]
    Since $\Delta \leq \nicefrac{\log \log \log n}{4} \leq \nicefrac{\log \log d}{2}$, it follows that $\lambda \geq (\log d)^2$.
    Hence,
    \[
    n^{\Omega(\lambda/\Delta)} \geq d^{\omega(d)},
    \]
    from which the claim of the theorem follows.
\end{proof}

\textbf{Comment}
(constructing a scattered partition):
Our instance $\ks_{w,p}$ and rank lower bound require a scattered partition $P_w = P_w^{(1)} \sqcup \cdots\sqcup P_w^{(r)}$ of the positive indices having fewer than $p$ parts.
Let $\xi := \Delta_G(N_w)$ be the negative overlap of the overlap graph.
We construct a scattered partition with $r = \xi$ as follows.
For each positive index $i \in P_w$, define $\pi'(i) := |\{i' \in P_w \mid i' \leq i\}|$, and let $\pi(i)$ be the least residue of $\pi'(i) \pmod{\xi}$; that is, $\pi(i) \in [\xi]$ with $\pi(i) \equiv \pi'(i) \pmod{\xi}$.
The map $\pi$ partitions $P_w$ into $\xi$ parts such that in each part, the neighbourhoods of the vertices are pairwise disjoint, hence $\pi$ induces a scattered partition (see \Cref{fig:scattered-partition}).

Because our $\IPS$ lower bound assumes $p \geq 5$, it suffices to construct a scattered partition with $r < 5$.
Since the word $w \in \ZZ^d$ is over the alphabet $\{\alpha k, -k\}$ with  $\nicefrac{1}{2} \leq \alpha < 1$, it follows that $ \Delta_G(N_w) \leq 3$.
Therefore, $\pi$ yields a scattered partition with $r \leq 3$.

\begin{figure}
    \centering
    \includegraphics[scale=0.7]{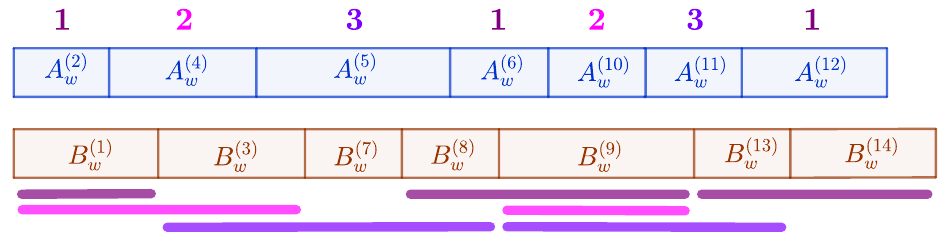}
    \caption{\small Illustration of the scattered partition induced by $\pi:P_w \to [\Delta_G(N_w)]$ with $\Delta_G(N_w) = 3$. The values of $\pi$ appear above the positive boxes, while the neighbourhoods of the positive indices are shown below the negative boxes. Since this is a scattered partition, vertices in the same part have pairwise disjoint neighbourhoods.}
    \label{fig:scattered-partition}
\end{figure}

\section{Upper Bounds for Constant-depth Multilinear IPS}
\label{sec:upper_bound}
\subsection{Elementary Symmetric Sums}

\begin{proposition}
Over any field $\mathbb{F}$, for $|\xbar| = n \geq l \geq d \geq 0$,
    \[ e_l (\xbar) \cdot e_d (\xbar) = \sum_{i=k}^{d} \binom{l+d - i}{l} \binom{l}{i} e_{l+d - i} (\xbar) \pmod {\xbar^2 - \xbar},\]
where $k \geq 0$ is the smallest integer such that $l+d - k \leq n$. 
\end{proposition}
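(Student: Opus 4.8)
The plan is to argue entirely at the level of monomials modulo the Boolean ideal. Recall that $e_m(\xbar)$ is the sum over all $m$-element subsets $M \subseteq [n]$ of the (already multilinear) monomial $x_M := \prod_{i\in M} x_i$. Since $x_i^2 \equiv x_i \pmod{\xbar^2-\xbar}$, for any subsets $S,T \subseteq [n]$ one has $x_S\cdot x_T \equiv x_{S\cup T}\pmod{\xbar^2-\xbar}$. Expanding the product of the two symmetric sums therefore gives
\[
e_l(\xbar)\cdot e_d(\xbar) \;\equiv\; \sum_{\substack{S\subseteq[n]\\|S|=l}}\ \sum_{\substack{T\subseteq[n]\\|T|=d}} x_{S\cup T} \pmod{\xbar^2-\xbar}.
\]

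Next I would regroup this double sum by the pair $(U,i)$, where $U := S\cup T$ and $i := |S\cap T|$, so that $|U| = l+d-i$. The admissible range of $i$ is $\max(0,\,l+d-n)\le i\le d$: the upper bound because $S\cap T\subseteq T$ and $d\le l$, the lower bound because $U\subseteq[n]$ forces $|U|=l+d-i\le n$. By the definition of $k$ in the statement this is exactly $k\le i\le d$. The key combinatorial step is then to count, for a fixed set $U$ with $|U|=l+d-i$, the number of pairs $(S,T)$ with $|S|=l$, $|T|=d$, and $S\cup T=U$ (which automatically yields $|S\cap T|=i$). One first chooses $S\subseteq U$ with $|S|=l$, in $\binom{l+d-i}{l}$ ways; having fixed $S$, the set $T$ must contain $U\setminus S$ (of size $d-i$), and its remaining $i$ elements must be chosen from the $l$ elements of $S$, in $\binom{l}{i}$ ways — and one checks that every such choice indeed gives $S\cup T=U$ and $|S\cap T|=i$, and that distinct choices give distinct pairs. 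Hence the monomial $x_U$ occurs with coefficient $\binom{l+d-i}{l}\binom{l}{i}$, depending only on $|U|$.

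Finally I would sum over all $U$ of each fixed size: the monomials $x_U$ with $|U|=l+d-i$ assemble into $e_{l+d-i}(\xbar)$, so
\[
e_l(\xbar)\cdot e_d(\xbar)\;\equiv\;\sum_{i=k}^{d}\binom{l+d-i}{l}\binom{l}{i}\,e_{l+d-i}(\xbar)\pmod{\xbar^2-\xbar},
\]
where the binomial coefficients are read as the images of these integers in $\mathbb F$, so the identity holds over an arbitrary field. There is no genuine obstacle here beyond bookkeeping; the one place that warrants care is the counting step — verifying that "choose $S$ inside $U$, then choose the part of $T$ lying inside $S$" parametrizes exactly the pairs with union $U$ and intersection size $i$, with no over- or under-counting, and that the index set for $i$ is precisely $\{k,\dots,d\}$.
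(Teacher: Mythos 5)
Your proof is correct and rests on the same combinatorial count as the paper's: for a fixed set $U$ with $|U|=l+d-i$, the number of pairs $(S,T)$ with $|S|=l$, $|T|=d$, $S\cup T=U$ is $\binom{l+d-i}{l}\binom{l}{i}$, obtained by choosing $S\subseteq U$ and then the $i$ elements of $T\cap S$. The only cosmetic difference is that the paper first invokes symmetry of $\ml(e_l\cdot e_d)$ to write it as a linear combination $\sum_i\gamma_i\,e_{l+d-i}$ and then reads off a single coefficient, whereas you directly regroup the double sum by the union $U$ and observe that the coefficient depends only on $|U|$; this is essentially the same argument.
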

\begin{proof}
    Since $\ml(e_l (\xbar) \cdot e_d (\xbar))$ is symmetric in $\xbar$, we have
    \[
        e_l (\xbar) \cdot e_d (\xbar) = \sum_{i=k}^{d} \gamma_i \cdot e_{l+d - i} (\xbar) \pmod {\xbar^2 - \xbar},
    \]
    for $\gamma_i \in \mathbb{F}$.
    Let $S \subseteq [n]$ with $|S| = l+d - i$.
    The coefficient of the monomial $x^S$ in $\ml(e_l (\xbar) \cdot e_d (\xbar))$ is $\binom{l+d - i}{l} \binom{l}{i}$.
    This is because each of the $\binom{l+d - i}{l}$ many sub-monomials $x^A$ of $x^S$ in $e_l (\xbar)$ combines with $\binom{l}{i}$ many monomials $x^B$ in $e_d (\xbar)$, where $|A \cap B| = i$, to produce $x^S$ in $\ml(e_l (\xbar) \cdot e_d (\xbar))$.
    Therefore, $\gamma_i = \binom{l+d - i}{l} \binom{l}{i}$.
\end{proof}

We observe that elementary symmetric polynomials are unsatisfiable even over constant characteristic fields, when their degree meets a simple condition on their $p$-base expansion.

\begin{lemma}[\cite{Lucas_Theorem_paper} Lucas's Theorem]
\label{thm:lucas}
    Let $p$ be a prime and $m,n \in \NN_+$.
    If $m = m_k p^k + \cdots + m_1p + m_0$ and $n = n_k p^k + \cdots + n_1p + n_0$ are the base $p$ expansions of $m$ and $n$ respectively (where $0 \leq m_i,n_i \leq p-1$ for $0 \leq i \leq k$), then 
    \[\binom{m}{n} \equiv \prod_{i=0}^{k} \binom{m_i}{n_i} \pmod p.\]
\end{lemma}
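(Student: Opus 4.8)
The final statement is the classical theorem of Lucas, so the plan is simply to reproduce its standard generating-function proof, carried out in the polynomial ring $\FF_p[X]$. The starting point is the ``freshman's dream'' congruence $(1+X)^p \equiv 1+X^p \pmod p$, which holds because $\binom{p}{k}\equiv 0\pmod p$ for every $0<k<p$: the numerator of $\binom{p}{k}=\tfrac{p!}{k!\,(p-k)!}$ contributes a factor $p$ that the denominator cannot cancel. From this, a short induction on $i$ gives $(1+X)^{p^i}\equiv 1+X^{p^i}\pmod p$ for all $i\ge 0$, since $(1+X)^{p^{i+1}}=\bigl((1+X)^{p^i}\bigr)^p\equiv (1+X^{p^i})^p\equiv 1+X^{p^{i+1}}\pmod p$, applying the base case once more with $X^{p^i}$ in place of $X$.

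Next I would use the base-$p$ expansion $m=\sum_{i=0}^k m_i p^i$ to factor
\[
(1+X)^m \;=\; \prod_{i=0}^k (1+X)^{m_i p^i} \;\equiv\; \prod_{i=0}^k \bigl(1+X^{p^i}\bigr)^{m_i} \pmod p,
\]
where the congruence uses the previous step. Expanding the right-hand side by the ordinary binomial theorem gives
\[
\prod_{i=0}^k \bigl(1+X^{p^i}\bigr)^{m_i} \;=\; \sum_{(j_0,\dots,j_k)} \;\prod_{i=0}^k \binom{m_i}{j_i}\, X^{\,\sum_{i=0}^k j_i p^i},
\]
the sum ranging over all tuples with $0\le j_i\le m_i\le p-1$.

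The last step is to compare the coefficient of $X^n$ on the two sides. On the left it is $\binom{m}{n}$ by definition. On the right, a monomial $X^{\sum_i j_i p^i}$ equals $X^n$ exactly when $\sum_i j_i p^i=n$; since each $j_i$ lies in $\{0,\dots,p-1\}$, uniqueness of the base-$p$ representation forces $j_i=n_i$ for all $i$, where $n=\sum_i n_i p^i$. Hence the coefficient of $X^n$ on the right is precisely $\prod_{i=0}^k \binom{m_i}{n_i}$ (which vanishes, matching $\binom{m}{n}\equiv 0$, exactly when $n_i>m_i$ for some $i$ so that no admissible tuple exists), and equating the two expressions yields $\binom{m}{n}\equiv \prod_{i=0}^k \binom{m_i}{n_i}\pmod p$.

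There is no real obstacle here; the only points needing care are the base-$p$ digit bookkeeping in the coefficient comparison and keeping every manipulation explicitly modulo $p$ (i.e.\ in $\FF_p[X]$, not $\ZZ[X]$). Alternatively, since this is a textbook fact and only Lucas's congruence itself---not its proof---feeds into the subsequent unsatisfiability observation for elementary symmetric polynomials, one could equally well just cite \cite{Lucas_Theorem_paper} and omit the argument altogether.
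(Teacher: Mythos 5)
Your proof is correct: it is the standard generating-function argument via $(1+X)^p \equiv 1+X^p \pmod p$, factoring $(1+X)^m$ along the base-$p$ digits of $m$, and matching the coefficient of $X^n$ using uniqueness of base-$p$ representation. Note, however, that the paper does not prove this lemma at all — it simply cites \cite{Lucas_Theorem_paper} — so your closing remark (that a citation alone would suffice) is exactly what the authors do; there is no internal proof to compare against.
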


\begin{lemma}
\label{lem:symmetric-sums-boolean}
    Let $\FF$ be a field with characteristic $p$. 
    If $d = d_k p^k + \cdots + d_1p + d_0$ is the base $p$ expansion of $d$, then 
    \[
        | \{e_d(\xbar) \mid x \in \{0,1\}^n \} | \leq \prod_{\substack{i \in \{0, \ldots, k\} \\ d_i \neq 0}} (p-d_i) + 1.
    \]
    In particular, if $d$ has only one non-zero digit $d_i$ in its base $p$ expansion and $d_i \geq 2$, then there exists $\beta \in \mathbb{F}$ such that $e_d(\xbar) - \beta = 0$ is unsatisfiable over Boolean assignments.
\end{lemma}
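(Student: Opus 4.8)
The plan is to reduce the evaluation of $e_d$ on Boolean inputs to the behaviour of a single binomial coefficient modulo $p$, and then to count the possible residues via Lucas's Theorem.

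First I would note that $e_d(\xbar)$ is a sum of squarefree degree-$d$ monomials, so on a Boolean assignment $\alpha$ with exactly $m$ coordinates equal to $1$, a monomial $\prod_{j\in S}x_j$ (with $|S|=d$) evaluates to $1$ iff $S$ is contained in the support of $\alpha$, and to $0$ otherwise. Hence $e_d(\alpha)=\binom{m}{d}$, interpreted in the prime subfield of $\FF$; equivalently $e_d(\alpha)=\binom{m}{d}\bmod p$. Letting $m$ range over $0,\dots,n$, this gives $\{\,e_d(\xbar)\mid \xbar\in\{0,1\}^n\,\}=\{\,\binom{m}{d}\bmod p : 0\le m\le n\,\}$, which is a subset of $\FF_p$.

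Next I would apply \Cref{thm:lucas}. Writing $m=\sum_{i\ge 0}m_ip^i$ and $d=\sum_{i=0}^{k}d_ip^i$ (so $d_i=0$ for $i>k$), Lucas gives $\binom{m}{d}\equiv\prod_{i\ge 0}\binom{m_i}{d_i}\pmod p$, where every factor with $d_i=0$ equals $1$. This residue is nonzero precisely when $m_i\ge d_i$ for all $i$, in which case it equals $\prod_{i:\,d_i\ne 0}\binom{m_i}{d_i}$, a value determined by the tuple $(m_i)_{i:\,d_i\ne 0}$, each entry of which lies in $\{d_i,d_i+1,\dots,p-1\}$. Distinct nonzero residues therefore arise from distinct such tuples, so there are at most $\prod_{i:\,d_i\ne 0}(p-d_i)$ of them; adding the residue $0$ yields the claimed bound.

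For the ``in particular'' part, if $d=d_ip^i$ with $d_i\ge 2$ the bound becomes $(p-d_i)+1\le p-1<p=|\FF_p|$, so the image of $e_d$ on Boolean inputs is a proper subset of $\FF_p$. Choosing any $\beta\in\FF_p\subseteq\FF$ outside this image gives $e_d(\xbar)-\beta\ne 0$ on every Boolean assignment, i.e.\ $e_d(\xbar)-\beta=0$ is unsatisfiable over the Booleans. I do not expect a genuine obstacle here; the only points needing care are bookkeeping ones: handling that $m$ may carry more base-$p$ digits than $d$ (harmless, since the extra digits of $d$ are $0$), recording that the image lies in the prime subfield so that counting residues mod $p$ suffices, and observing that we only need ``$\#(\text{distinct values})\le\#(\text{tuples})$'' rather than a bijection.
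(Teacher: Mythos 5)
Your proposal is correct and follows the paper's argument essentially exactly: evaluate $e_d$ at a Boolean point of Hamming weight $m$ to get $\binom{m}{d}$, apply Lucas's theorem to reduce to the product over nonzero digits, count the possible nonzero residues by counting admissible digit tuples, and conclude the ``in particular'' clause by noting the bound falls strictly below $p$. The minor bookkeeping points you flag (extra digits of $m$, working in the prime subfield) are indeed harmless, and the paper treats them the same way.
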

\begin{proof}
    If $m$ is the Hamming weight of a Boolean assignment $\alpha \in \{0,1\}^n$,
    then $e_d(\alpha) = \binom{m}{d}$.
    By \Cref{thm:lucas}, we have 
    \[
        \binom{m}{d} \equiv \prod_{i \in \{0, \ldots,k\}}\binom{m_i}{d_i} \equiv \prod_{\substack{i \in \{0, \ldots,k\} \\ d_i \neq 0} }\binom{m_i}{d_i} \pmod p.
    \]
    We see that $e_d(\alpha) = 0$ in $\FF$ if and only if $m_i < d_i$ for some $i$ with $d_i \neq 0$.
    Conversely, $e_d(\alpha)$ is non-zero in $\FF$ if and only if $d_i \leq m_i \leq p-1$ for all $i$ with $d_i \neq 0$.
    Therefore, $e_d(\alpha)$ can attain at most
    \[
    \prod_{\substack{i \in \{0, \ldots,k\} \\ d_i \neq 0} }((p-1)-d_i+1)
    \]
    distinct non-zero values in $\FF$.
    This completes the proof of the main claim of the lemma.

    Now if $d_i \geq 2$ is the only non-zero digit in the base $p$ expansion of $d$, then over Boolean assignments, $e_d(\xbar)$ can attain at most $p - d_i + 1 \leq p-1$ distinct values in $\FF$.
    The existence of $\beta$ follows. 
\end{proof}

Occasionally, instead of viewing $\xbar$ as Boolean variables, we consider them more generally as Boolean functions.
By a similar argument, it is straightforward to verify that \Cref{lem:symmetric-sums-boolean} continues to hold in this more general setting.

\subsection{Separation}

Here, we separate the constant-depth $\IPS$ subsystem over finite fields, as considered in this work, from the constant-depth $\IPS$ subsystem over large fields studied in \cite{GHT22}.

Let $p \geq 3$ be a prime and let $\F$ be a field of characteristic $p$.
We construct the \textit{symmetric knapsack} of degree $2$, denoted as $\ks_{w, e_2}$.
Over Boolean assignments, $\ks_{w, e_2}$ is unsatisfiable in $\F$ and in every field of characteristic $0$.
Moreover, constant-depth multilinear $\lbIPS$ over $\F$ admits a polynomial-size refutation of $\ks_{w, e_2}$, whereas constant-depth multilinear $\lbIPS$ over any characteristic $0$ field does not.

Using the notation from \Cref{sec:cd-IPS}, let $w \in \ZZ^d$ be a word with $|w_i| \leq b$ for every $i$.  
Our separating instance is defined as:
\[
    \ks_{w, e_2} := \ml \left(e_2 \left(\{x_\sigma^{(i)}f_\sigma^{(i)}\}_{i\in P_w, \sigma\in\{0,1\}^{A_w^{(i)}}} \right) \right) - \beta.
\]
where $\beta \in \FF$ is chosen such that $\ks_{w, e_2} = 0$ admits no satisfying Boolean assignment in $\FF$.

\textbf{Comment} (the existence of $\beta$):
The existence of $\beta$ follows from \Cref{lem:symmetric-sums-boolean}, specifically from the remark following the lemma concerning its application to Boolean functions rather than Boolean variables.

\textbf{Comment} (computing $\ks_{w,e_2}$ by a $\poly(d,2^b)$-size, product-depth $2$, multilinear formula):
Since
\[
    \ks_{w,e_2} = \sum_{\substack{(i_1,\sigma_1), (i_2, \sigma_2) \in S \\ (i_1,\sigma_1)\neq (i_2, \sigma_2)}} x_{\sigma_1}^{(i_1)}x_{\sigma_2}^{(i_2)} \ml(f_{\sigma_1}^{(i_1)}f_{\sigma_2}^{(i_2)}) - \beta
\]
where $S = \{(i, \sigma) \mid i\in P_w, \sigma\in\{0,1\}^{A_w^{(i)}}\}$, it suffices to verify that $\ml(f_{\sigma_1}^{(i_1)}f_{\sigma_2}^{(i_2)})$ can be computed by a suitable polynomial-size constant-depth multilinear formula.
Since the positive overlap of $w$ satisfies $\Delta_G(P_w) \leq b$, we see that each $f_{\sigma}^{(i)}$ can be written as $\sum \prod (1-y)$ where the fan-in of the sum gate is $O(2^b)$ and the product ranges over distinct $\ybar$-variables.
Moreover, because $\ml((1-y)^2) = 1-y$, each $\ml(f_{\sigma_1}^{(i_1)}f_{\sigma_2}^{(i_2)})$ can likewise be written in this form.
Altogether, $\ks_{w,e_2}$ can thus be written as a product-depth $2$, multilinear formula of $\poly(d,2^b)$-size.
Moreover, each $f_\sigma^{(i)}$ has degree at most $O(b2^b)$, so the overall degree of $\ks_{w,e_2}$ is therefore $O(b2^b)$.

\begin{lemma}
     Over $\FF$, there exists a product-depth $3$ multilinear $\lbIPS$ refutation of $\ks_{w,e_2}$ of size  $\poly(d^p,2^{bp})$.
\end{lemma}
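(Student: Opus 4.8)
The plan is to refute $\ks_{w,e_2}=0$ by writing down an explicit $\lbIPS$ certificate, i.e.\ a polynomial $g(\vx,\vy)$, multilinear in all the $\vx$- and $\vy$-variables, with $g\cdot \ks_{w,e_2}\equiv 1 \pmod{\vx^2-\vx,\vy^2-\vy}$, together with the auxiliary $h_i$'s absorbing the Boolean axioms; the constant-depth multilinear formula for $g$ will have product-depth $3$ and size $\poly(d^p,2^{bp})$. The whole point of working over characteristic $p$ is Fermat's little theorem: writing $E:=\ml\!\big(e_2(\{x^{(i)}_\sigma f^{(i)}_\sigma\})\big)$, the shifted sum $E-\beta$ takes only Boolean values over $\bits$-assignments (by \Cref{lem:symmetric-sums-boolean} applied to the Boolean functions $x^{(i)}_\sigma f^{(i)}_\sigma$, plus the choice of $\beta$), so $(E-\beta)^{q-1}\equiv 1$ over Boolean points where $q=p$ (or the field size). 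Hence $g:=\ml\!\big((E-\beta)^{p-2}\big)$ — or rather $(\ks_{w,e_2})^{p-2}$ multilinearized — already satisfies $g\cdot\ks_{w,e_2}\equiv 1$ over $\bits$-assignments of \emph{all} variables, which via the standard argument (reduce mod the Boolean axioms, collect the quotients into the $h_i$'s) gives the formal $\lbIPS$ identity.

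**The steps, in order.** First, I record that $\ks_{w,e_2}$ is computable by a product-depth $2$ multilinear formula of $\poly(d,2^b)$ size (this is the Comment just above the statement), and more importantly that $\ml\!\big((\ks_{w,e_2})^{p-2}\big)$ is computable by a product-depth $3$ multilinear formula of size $\poly(d^p,2^{bp})$: one takes the $(p-2)$-fold product of the depth-$2$ formula for $\ks_{w,e_2}$ — raising the product-depth by one to depth $3$ and the size by a factor polynomial in the original size raised to the power $p-2$, hence $\poly(d^{p},2^{bp})$ — and then observes that the resulting polynomial, once reduced mod $\vx^2-\vx,\vy^2-\vy$, is multilinear; since we only require the \emph{polynomial computed} by $C(\vx,\vy,\vnz)$ to be multilinear (the Important Remark in the preliminaries), we do not need the formula itself to be syntactically multilinear, so no blow-up from multilinearization is incurred. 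Second, I verify the key congruence: by the Comment on the existence of $\beta$ (invoking \Cref{lem:symmetric-sums-boolean} in its Boolean-function form), $\ks_{w,e_2}$ evaluates to a nonzero element of $\FF_p^\times$ on every $\bits$-assignment, so $(\ks_{w,e_2})^{p-1}\equiv 1$ and therefore $\ml\!\big((\ks_{w,e_2})^{p-2}\big)\cdot \ks_{w,e_2}\equiv 1$ on all of $\bits^{|\vx|+|\vy|}$. Third, I convert this functional identity into a formal $\lbIPS$ refutation in the sense of \Cref{def:IPS}: set $y$ to be the single placeholder for the axiom $\ks_{w,e_2}$, let $g(\vx,\vy)=\ml\!\big((\ks_{w,e_2})^{p-2}\big)$, define $C(\vx,\vy,\vz)=g\cdot y+\sum_i h_i z_i$ where the $h_i$ are the quotient polynomials witnessing $g\cdot\ks_{w,e_2}-1\in\langle\vx^2-\vx,\vy^2-\vy\rangle$; then $\widehat C(\vx,\vnz,\vnz)=0$ and $\widehat C(\vx,\ks_{w,e_2},\vx^2-\vx,\ldots)=1$, and $\widehat C(\vx,y,\vnz)=g\cdot y$ is multilinear in $\vx,y$, so this is a genuine product-depth $3$ multilinear $\lbIPS$ refutation. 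The only subtlety is bounding the depth and size of the $h_i z_i$ part, but since our lower bounds (and hence the matching convention in \Cref{thm:ips-modp-main}) ignore the circuit size of the $h_i$'s, and the depth of $C$ is governed by the $g\cdot y$ summand, this is immaterial; for a clean statement one notes $g\cdot\ks_{w,e_2}-1$ has $\poly(d^p,2^{bp})$-size constant-depth circuits and division by $x_j^2-x_j$ (resp.\ $y_j^2-y_j$) is performed one variable at a time, keeping everything $\poly(d^p,2^{bp})$ and constant depth.

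**The main obstacle** is essentially bookkeeping rather than conceptual: one must be careful that multilinearizing $(\ks_{w,e_2})^{p-2}$ — which syntactically is a depth-$3$ non-multilinear formula — does not force an exponential blow-up. This is handled exactly as in the definition of multilinear $\lbIPS$: we do not multilinearize the formula, we only require that the polynomial it computes, \emph{after} substituting the Boolean axioms, is multilinear — and $g\cdot\ks_{w,e_2}\equiv 1\pmod{\vx^2-\vx,\vy^2-\vy}$ already certifies this, since $1$ is multilinear. A secondary point requiring a line of justification is that $e_2$ of a family of Boolean functions is again (reducible to) a Boolean function taking values in $\{0,1,\ldots,\binom{|S|}{2}\}$ mod $p$, so that \Cref{lem:symmetric-sums-boolean} genuinely applies and $\beta$ exists; but this is precisely the content of the Comment on the existence of $\beta$, which we may cite. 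I expect the write-up to be short: state $g$, invoke Fermat, invoke the depth/size of the product, and package into \Cref{def:IPS}.
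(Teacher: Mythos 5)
Your overall strategy — taking $g=\ml\!\big((\ks_{w,e_2})^{p-2}\big)$, invoking Fermat's little theorem for the functional identity $g\cdot\ks_{w,e_2}\equiv 1$ over Boolean points, and packaging this into a $\lbIPS$ certificate — is exactly the one the paper uses. However, the way you justify the size and depth bound has a genuine gap.

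You construct a product-depth $3$ formula for $(\ks_{w,e_2})^{p-2}$ by taking the $(p-2)$-fold product of the depth-$2$ formula, and then argue that no multilinearization is needed because, by the Important Remark, ``we only require that the polynomial it computes, after substituting the Boolean axioms, is multilinear.'' That is not what the Important Remark (or \Cref{def:IPS}) says. Multilinear $\lbIPS$ requires $\widehat C(\vx,\vy,\vnz)$ — the polynomial computed with $\vz$ set to zero, \emph{not} reduced modulo $\vx^2-\vx,\vy^2-\vy$ — to be multilinear in $\vx,\vy$. With $C=g\cdot y+\sum_i h_i z_i$ and $g$ your syntactic formula, we get $\widehat C(\vx,y,\vnz)=\widehat g(\vx)\cdot y=(\ks_{w,e_2})^{p-2}\cdot y$, which is plainly not multilinear in $\vx$. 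The Important Remark relaxes \emph{syntactic} multilinearity of the circuit, not the requirement that the polynomial it computes be multilinear; it does not let you replace $\ml\!\big((\ks_{w,e_2})^{p-2}\big)$ by $(\ks_{w,e_2})^{p-2}$. So you still owe a small constant-depth circuit that computes the multilinearized polynomial itself. This is precisely the step the paper carries out: it shows $\ml\!\big((\ks_{w,e_2})^{p-2}\big)$ has a product-depth $2$ multilinear formula of size $\poly(d^p,2^{bp})$, exploiting the structure of $\ks_{w,e_2}$ — in particular that the $\ml\big(f_{\sigma_1}^{(i_1)}f_{\sigma_2}^{(i_2)}\big)$ factors stay in the $\sum\prod(1-y)$ form because $\ml\big((1-y)^2\big)=1-y$, which lets the multilinearization commute with the $(p-2)$-nd power without blow-up. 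Without an argument of that kind, the claimed depth and size bounds do not follow from what you wrote.
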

\begin{proof}
    By Fermat's little theorem, we see that
    \begin{equation}
    \label{eq:sym-ks-deg-2-refutation}
        \ml((\ks_{w,e_2})^{p-2}) \cdot \ks_{w,e_2} + \sum_{\psi \in \xbar \cup \ybar} h_\psi (\psi^2 - \psi) = 1
    \end{equation}
    for some polynomials $h_\psi \in \FF[\xbar,\ybar]$.
    From computing $\ks_{w,e_2}$ by a $\poly(d,2^b)$-size product-depth $2$, multilinear formula, we see that $\ml((\ks_{w,e_2})^{p-2})$ can be computed by a product-depth $2$, multilinear formula of size $\poly(d^p,2^{bp})$.
    Moreover, we see that each $h_\psi$ can be computed by a product-depth $2$ formula of size $\poly(d^p,2^{bp})$.
    Therefore, over $\FF$, \Cref{eq:sym-ks-deg-2-refutation} is a product-depth $3$ multilinear $\lbIPS$ refutation of $\ks_{w,e_2}$. 
\end{proof}

Let $E$ be a field of characteristic $0$.
Since $\ks_{w, e_2}$ admits no satisfying Boolean assignment in $\F$, it likewise admits none in $E$.
Over $E$, we will prove a lower bound against constant-depth multilinear $\lbIPS$ for $\ks_{w, e_2}$.
We first prove a degree lower bound.
\begin{lemma}
\label{lem:deg-lower-bound-e2-large-field}
    Let $\textnormal{char}(\mathbb{F}) = 0$, $n > 1$ and $\beta \in \ZZ^+$ such that $e_2 (x_1, \ldots, x_n) - \beta = 0$ is unsatisfiable over $\mathbb{F}$ for $\xbar \in \{0,1\}^n$.
    If $f \in \mathbb{F}[x_1, \ldots, x_n]$ is the multilinear polynomial such that
    \[
    f(\xbar) \left(e_2 (\xbar) - \beta \right) = 1 \pmod {\xbar^2 - \xbar},
    \]
    then $\deg f = n$.
\end{lemma}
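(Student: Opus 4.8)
The plan is to reduce everything to showing that the coefficient of the top monomial $x_1x_2\cdots x_n$ in the multilinear polynomial $f$ is nonzero over $\mathbb{F}$, which is exactly the statement $\deg f = n$. This mirrors the proof of \Cref{subsetsumdegree}, the degree-$1$ elementary-symmetric case. Since $f$ is the multilinear interpolation of the function $\alpha \mapsto \bigl(e_2(\alpha)-\beta\bigr)^{-1}$ on $\{0,1\}^n$, Möbius inversion over the Boolean lattice expresses this top coefficient as $\sum_{S\subseteq[n]}(-1)^{\,n-|S|}\bigl(e_2(\mathbf{1}_S)-\beta\bigr)^{-1}$, where $\mathbf{1}_S$ is the indicator vector of $S$. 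As $e_2$ is symmetric, $e_2(\mathbf{1}_S)=\binom{|S|}{2}$ depends only on $|S|$, so the top coefficient equals
\[
c \;:=\; \sum_{k=0}^{n}(-1)^{\,n-k}\binom{n}{k}\,\frac{1}{\binom{k}{2}-\beta}.
\]
The hypothesis that $e_2-\beta=0$ is unsatisfiable over the cube says precisely that $\binom{k}{2}\neq\beta$ for all $k\in\{0,1,\dots,n\}$, so every denominator is a nonzero integer and $c$ is a well-defined element of $\mathbb{F}$ (using $\mathrm{char}(\mathbb{F})=0$). It remains to prove $c\neq 0$.

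To evaluate $c$, I would factor $\binom{k}{2}-\beta=\tfrac12(k-r_1)(k-r_2)$ where $r_{1,2}=\tfrac{1\pm\sqrt{1+8\beta}}{2}$ are the roots of $k^2-k-2\beta$. Since $\beta\in\mathbb{Z}^+$ these roots are real and distinct with $r_1+r_2=1$, $r_1\geq 2$, $r_2=1-r_1\leq -1$, and (by unsatisfiability) neither lies in $\{0,\dots,n\}$. Partial fractions give $\frac{1}{\binom{k}{2}-\beta}=\frac{2}{r_1-r_2}\bigl(\frac{1}{k-r_1}-\frac{1}{k-r_2}\bigr)$, and the standard finite-difference identity $\sum_{k=0}^{n}(-1)^{\,n-k}\binom{n}{k}\frac{1}{k-r}=\frac{(-1)^{n}n!}{\prod_{j=0}^{n}(j-r)}$ (valid since $r\notin\{0,\dots,n\}$) then yields
\[
c \;=\; \frac{2\,(-1)^n n!}{r_1-r_2}\left(\frac{1}{\prod_{j=0}^{n}(j-r_1)}-\frac{1}{\prod_{j=0}^{n}(j-r_2)}\right),
\]
so it suffices to show the two nonzero products $\prod_{j=0}^{n}(j-r_1)$ and $\prod_{j=0}^{n}(j-r_2)$ are distinct.

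This last comparison is the step I expect to require the most care. Using $r_2=1-r_1$, I would rewrite $\prod_{j=0}^{n}(j-r_2)=\prod_{j=0}^{n}\bigl((j-1)+r_1\bigr)$, so that in absolute value the two products are $r_1(r_1-1)\prod_{i=2}^{n}\lvert r_1-i\rvert$ and $r_1(r_1-1)\prod_{i=1}^{n-1}(r_1+i)$; cancelling the common factor $r_1(r_1-1)$ and matching terms, it remains to check $\lvert r_1-(m+1)\rvert<r_1+m$ for every $m\in\{1,\dots,n-1\}$, which follows from a short case split on whether $r_1\geq m+1$, using only $r_1>\tfrac12$. Hence $\bigl\lvert\prod_{j=0}^{n}(j-r_1)\bigr\rvert<\bigl\lvert\prod_{j=0}^{n}(j-r_2)\bigr\rvert$, so the two products are unequal, $c\neq 0$, and $\deg f=n$. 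The point worth underlining is that the assumption $\beta\geq 1$ is what forces the clean sign information $r_1\geq 2$, $r_2\leq -1$ driving the magnitude comparison, and also what rules out $r_1=r_2$ and hence legitimizes the partial-fraction step; likewise the hypothesis $n>1$ is essential, since for $n=1$ the two products coincide (consistent with $e_2$ being identically $0$ in one variable).
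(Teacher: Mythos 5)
Your proof is correct and takes essentially the same route as the paper: Möbius/multilinear interpolation to isolate the top coefficient, partial-fraction decomposition of $1/\bigl(\binom{k}{2}-\beta\bigr)$ using the two real roots of $k^2-k-2\beta$, the FSTW finite-difference identity, and a magnitude comparison of the two resulting products. Your version is in fact slightly more carefully written: you have the correct partial-fraction constant $\tfrac{2}{r_1-r_2}=\tfrac{2}{\sqrt{1+8\beta}}$ (the paper has a typo, writing $2\sqrt{1+8\beta}$ in place of $\tfrac{2}{\sqrt{1+8\beta}}$, though this does not affect nonvanishing), and your re-indexing via $r_2=1-r_1$ makes the termwise comparison $\lvert r_1-(m+1)\rvert<r_1+m$ fully explicit, including the observation that $n>1$ is needed so that at least one strict inequality survives the cancellation of the common factor $r_1(r_1-1)$. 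The paper instead phrases the same step as $\gamma_1(\gamma_1-1)=\gamma_2(\gamma_2-1)$ combined with $\lvert\gamma_1-k\rvert>\lvert\gamma_2-k\rvert$ for $k>1$; the two formulations are equivalent.
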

We note that a degree lower bound of $ \deg f \geq n-1$ follows from \cite{HLT24} Corollary 1.2; however, we need the tight bound of $\deg f \geq n$.
\begin{proof}[Proof of \Cref{lem:deg-lower-bound-e2-large-field}]
    As $f(\xbar)$ is multilinear, we have
    \[
    f(\xbar) = \sum_{T \subseteq [n]} f(\mathbbm{1}  _T) \prod_{i \in T} x_i \prod_{i \notin T}(1-x_i),
    \]
    where $\mathbbm{1}_T \in \{0,1\}^n$ is the indicator vector of the set $T$.
    Therefore
    \[
    f(\xbar) = \sum_{T \subseteq [n]} \frac{1}{\binom{|T|}{2} - \beta} \prod_{i \in T} x_i \prod_{i \notin T}(1-x_i).
    \]
    The coefficient of $\prod_{i \in [n]} x_i$ in $f(\xbar)$ is thus
    \begin{equation}
    \label{eqn_lead_coef_e2-b}
        \sum_{T \subseteq [n]} \frac{1}{\binom{|T|}{2} - \beta} (-1)^{n - |T|} = \sum_{j = 0}^n \binom{n}{j} \frac{1}{\binom{j}{2} - \beta} (-1)^{n - j}.
    \end{equation}
    We show that \Cref{eqn_lead_coef_e2-b} is nonzero.
    As \Cref{eqn_lead_coef_e2-b} lies in the subfield of $\mathbb{F}$ that is isomorphic to $\mathbb{Q}$, it suffices to show that it is nonzero over $\mathbb{Q}$.
    It therefore suffices to show that \Cref{eqn_lead_coef_e2-b} is nonzero over $\mathbb{R}$.
    We have, over $\mathbb{R}$,
    \[
        \frac{1}{\binom{j}{2} - \beta} = \frac{2}{j^2 - j - 2\beta} = \frac{2}{(j-\gamma_1)(j-\gamma_2)} = 2\sqrt{1+8\beta} \left(\frac{1}{j-\gamma_2} - \frac{1}{j-\gamma_1}\right),
    \]
    where $\gamma_1 = (1 - \sqrt{1+8\beta})/2$ and $\gamma_2 = (1 + \sqrt{1+8\beta})/2$.
    Hence
    \begin{align*}
        \sum_{j = 0}^n \binom{n}{j} \frac{1}{\binom{j}{2} - \beta} (-1)^{n - j} &= 2\sqrt{1+8\beta} \sum_{j = 0}^n \binom{n}{j}\left(\frac{1}{j-\gamma_2} - \frac{1}{j-\gamma_1}\right) (-1)^{n - j} \\
        &= 2\sqrt{1+8\beta} \left(- \frac{n!}{\prod_{j=0}^n (\gamma_2 - j)} + \frac{n!}{\prod_{j=0}^n (\gamma_1 - j)} \right),
    \end{align*}
    where the last equality follows from
    \begin{claim}[\cite{FSTW21} Subclaim B.2]
        \[
        \sum_{j=0}^k \binom{k}{j} \frac{1}{j-\beta} (-1)^{k-j} = - \frac{k!}{\prod_{j=0}^k (\beta - j)}.
        \]
    \end{claim}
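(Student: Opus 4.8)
The plan is to deduce the identity from the partial-fraction decomposition of the rational function $1/\prod_{j=0}^{k}(x-j)$, viewing both sides as rational functions of $\beta$ (equivalently, as an equality valid for every $\beta\notin\{0,1,\dots,k\}$, which in particular covers the values $\beta=\gamma_1,\gamma_2$ appearing above). Since the denominator $\prod_{j=0}^{k}(x-j)$ has $k+1$ distinct simple zeros, there are scalars $A_0,\dots,A_k$ with
\[
\frac{1}{\prod_{j=0}^{k}(x-j)} \;=\; \sum_{j=0}^{k}\frac{A_j}{x-j},
\qquad A_j=\frac{1}{\prod_{\substack{0\le i\le k\\ i\ne j}}(j-i)}.
\]

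The next step is to evaluate the residues explicitly. Splitting the product over $i<j$ and $i>j$ gives $\prod_{\substack{0\le i\le k\\ i\ne j}}(j-i)=\bigl(\prod_{i=0}^{j-1}(j-i)\bigr)\bigl(\prod_{i=j+1}^{k}(j-i)\bigr)=j!\cdot(-1)^{k-j}(k-j)!$, whence $A_j=(-1)^{k-j}/\bigl(j!(k-j)!\bigr)=(-1)^{k-j}\binom{k}{j}/k!$. Multiplying the decomposition through by $k!$ yields
\[
\frac{k!}{\prod_{j=0}^{k}(x-j)} \;=\; \sum_{j=0}^{k}\frac{(-1)^{k-j}\binom{k}{j}}{x-j}.
\]
Specializing $x=\beta$ and rewriting $\tfrac{1}{\beta-j}=-\tfrac{1}{j-\beta}$ on the right-hand side gives $\sum_{j=0}^{k}\binom{k}{j}\frac{(-1)^{k-j}}{j-\beta}=-\frac{k!}{\prod_{j=0}^{k}(\beta-j)}$, which is precisely the claim.

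I do not expect a real obstacle here; the only point requiring care is that the displayed identity is an equality of rational functions, so the substitution $x=\beta$ is legitimate exactly because $\beta$ avoids the poles $0,1,\dots,k$. If one prefers to avoid partial fractions, the same identity follows by recognizing the left-hand side as the $k$-th forward finite difference at $0$ of $t\mapsto\tfrac{1}{t-\beta}$, combined with the elementary induction $\Delta^{k}\tfrac{1}{t-\beta}=\tfrac{(-1)^{k}k!}{\prod_{j=0}^{k}(t+j-\beta)}$ (each step using $\tfrac{1}{t+1-\beta}-\tfrac{1}{t-\beta}=\tfrac{-1}{(t-\beta)(t+1-\beta)}$), evaluated at $t=0$; or directly by induction on $k$ via Pascal's rule $\binom{k}{j}=\binom{k-1}{j}+\binom{k-1}{j-1}$.
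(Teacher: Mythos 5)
Your proof is correct. Note that the paper does not actually prove this claim: it is imported verbatim as Subclaim B.2 of \cite{FSTW21}, so there is no in-paper argument to compare against. Your partial-fraction derivation is a clean, self-contained justification: the residue computation $\prod_{i\neq j}(j-i)=j!\,(-1)^{k-j}(k-j)!$ is right, multiplying through by $k!$ gives exactly the displayed rational-function identity, and the sign flip $\frac{1}{\beta-j}=-\frac{1}{j-\beta}$ lands precisely on the stated formula. Your caveat about substituting $x=\beta$ only away from the poles $0,1,\dots,k$ is the right one, and it is satisfied in the context where the paper applies the claim (with $\beta=\gamma_1,\gamma_2$, neither of which lies in $\{0,\dots,n\}$ under the paper's unsatisfiability assumption on $e_2(\xbar)-\beta$). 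The original source proves the identity by a more hands-on inductive/telescoping manipulation, which is essentially the second route you sketch via Pascal's rule; the partial-fraction view buys a shorter argument and makes the appearance of the product $\prod_{j=0}^{k}(\beta-j)$ conceptually transparent, while the inductive route avoids any appeal to uniqueness of partial-fraction decompositions. Either way the claim stands as you argue it.
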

Finally, to show that \Cref{eqn_lead_coef_e2-b} is nonzero, it suffices to show that
\begin{equation}
\label{eqn_lead_coef_e2-b_aux_prod}
    \prod_{j=0}^n (\gamma_1 - j) \neq \prod_{j=0}^n (\gamma_2 - j).
\end{equation}
We note that $\gamma_1 (\gamma_1 - 1) = \gamma_2 (\gamma_2 - 1) = 2\beta$; however, for $k > 1$, we have $|\gamma_1 - k| > |\gamma_2 - k|$.
We therefore have
\[
    \left|\prod_{j=0}^n (\gamma_1 - j) \right| > \left|\prod_{j=0}^n (\gamma_2 - j) \right|,
\]
hence \Cref{eqn_lead_coef_e2-b_aux_prod} holds and \Cref{eqn_lead_coef_e2-b} is nonzero.
\end{proof}

\begin{lemma}
    Let $E$ be a field of characteristic $0$, and $n,\Delta \in \NN_+$ with $\Delta \leq \nicefrac{\log \log \log n}{4}$.
    Then any product-depth at most $\Delta$ multilinear $\lbIPS$ refutation of $\ks_{w,e_2}$ is of size at least
    \[
        n^{\Omega(\lambda/\Delta)},
    \]
    where $d = \lfloor \nicefrac{\log n}{4} \rfloor$ and $\lambda = \lfloor d^{1/G(\Delta)} \rfloor$.
\end{lemma}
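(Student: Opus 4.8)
## Proof Proposal

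The plan is to mirror exactly the structure of the finite-field argument for $\ks_{w,p}$ (Theorems~\ref{thm:ips-modp-main} and~\ref{thm:knapsack-modp-lower-bound}), substituting the degree bound of \Cref{lem:deg-lower-bound-e2-large-field} for \Cref{lem:deg-bound-special} at the appropriate place. First, via the functional lower bound method (\Cref{thm:func_lb_method}) in its multilinear incarnation, it suffices to lower-bound the product-depth-$\Delta$ circuit size of the multilinear polynomial $f$ with $f \equiv \nicefrac{1}{\ks_{w,e_2}}$ over Boolean assignments: reducing the $\lbIPS$ refutation to such an $f$ proceeds verbatim as in the proof of \Cref{thm:ips-modp-main} from \Cref{thm:knapsack-modp-lower-bound} (single $\ybar$-placeholder, linearity, multiplying through by the axiom). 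So the whole theorem reduces to: any product-depth-$\Delta$ circuit computing $f$ has size $n^{\Omega(\lambda/\Delta)}$.

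Second, I would establish the analogue of the rank lower bound (\Cref{lem:rank-lower-bound}): for a balanced word $w$, the coefficient matrix $M_w(f)$ has full rank, hence (via \Cref{cor:relrank-lower-bound}) $\relrk_w(f) \geq 2^{-b/2}$. The inductive claim and its proof carry over with a single change. In the inductive step, after applying the partial assignment $\tau_m$ that sends the $\ybar$-variables of $m$ to $1$ and the rest to $0$, one gets
\[
\tau_m(f)\left( e_2\bigl(\{1 - x_{\sigma_i}^{(i)}\}_{i}\bigr) - \beta \right) = 1 \quad \text{over Boolean assignments,}
\]
where $i$ ranges over those positive indices with $A_w^{(i)} \subseteq B_w^T$ (this is where the construction of $\ks_{w,e_2}$ as a degree-$2$ elementary symmetric sum of the $x_\sigma^{(i)} f_\sigma^{(i)}$ enters, just as $\ks_{w,p}$ embedded a subset-sum). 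Since $\tau_m(f)$ is multilinear in the $x_{\sigma_i}^{(i)}$'s and these are distinct variables, substituting $\psi_i := 1 - x_{\sigma_i}^{(i)}$ into the univariate situation of \Cref{lem:deg-lower-bound-e2-large-field} — or more directly, applying that lemma after the affine change of variables $x \mapsto 1-x$, which preserves multilinearity and full degree — forces $\tau_m(f)$ to have full degree in exactly those variables; its leading monomial is $\prod_i x_{\sigma_i}^{(i)} = m(\sigma(m)|_{A_w^S})$. The rest of the induction (cancellation argument using the hypothesis on proper submonomials, and the balanced-word step upgrading ``$\leq$'' to ``$=$'' when $m$ is set-multilinear on $w|_{N_w}$) is identical to the proof of \Cref{lem:rank-lower-bound}. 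One subtlety to check: \Cref{lem:deg-lower-bound-e2-large-field} requires $n > 1$ and $\beta \in \ZZ^+$ with $e_2 - \beta$ unsatisfiable; here the relevant ``$n$'' is $|S|$, so one needs $|S| \geq 2$, which holds for all but trivially small $T$, and the base/near-base cases are handled separately as in \Cref{lem:rank-lower-bound}.

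Third, having the full-rank statement and $\relrk_w(f) \geq 2^{-b/2}$ with $b = k$ (the word is over the alphabet $\{\alpha k, -k\}$, up to the $\alpha$-scaling), I would run the set-multilinearization pipeline unchanged: \Cref{lem:forbes-set-multilinearization} converts a size-$s$, product-depth-$\Delta$ circuit for $f$ over the characteristic-$0$ field $E$ into a size-$\poly(s, \Theta(d/\ln d)^d)$, product-depth-$2\Delta$ \emph{set-multilinear} circuit for $\Pi_w(f)$; then unwinding to a formula and invoking \Cref{clm_rel_rank_bound} (which needs $\deg F \geq \lambda^{G(\delta)}/8$, satisfied since $\deg F \geq d$) gives $2^{-k} \leq \relrk_w(\Pi_w(f)) \leq s^{2\Delta} 2^{-k\lambda/256}$, i.e. $s^{2\Delta} \geq 2^{k(\lambda/256 - 1)}$, and with $k \geq \nicefrac{\lfloor\log n\rfloor}{2}$ and $\Delta \leq \nicefrac{\log\log\log n}{4}$ the same arithmetic as in the proof of \Cref{thm:ips-modp-main} yields $\poly(s) \geq n^{\Omega(\lambda/\Delta)}$ after absorbing the $d^{O(d)}$ term (since $\lambda \geq (\log d)^2$ forces $n^{\Omega(\lambda/\Delta)} \geq d^{\omega(d)}$). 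I also need to confirm the small bookkeeping items for this instance: $\ks_{w,e_2}$ is itself computable in the required product-depth (shown in the Comment, product-depth $2$), and that \Cref{lem:forbes-set-multilinearization} applies over $E$ since it holds over any field.

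The main obstacle is the rank lemma, specifically making the substitution argument in the inductive step fully rigorous: unlike the subset-sum case where \Cref{lem:deg-bound-special} was tailored to a sum-of-products-of-$(1-x)$ axiom with a clean partition structure, here I must invoke the degree-$2$ symmetric bound \Cref{lem:deg-lower-bound-e2-large-field}, which is stated only for the bare variables $x_1, \dots, x_n$ and the polynomial $e_2(\xbar) - \beta$. Transferring it to $e_2(\{1 - x_{\sigma_i}^{(i)}\}) - \beta$ requires observing that the affine substitution $x \mapsto 1-x$ is an automorphism of the Boolean cube preserving multilinearity, leading monomials (up to the obvious relabeling), and full degree — and that $\beta$ being chosen to make $\ks_{w,e_2}$ unsatisfiable over $\FF$ indeed makes $e_2(\{1-x_{\sigma_i}^{(i)}\}) - \beta$ unsatisfiable over $E$ on the relevant sub-cube (which follows from the Comment on the existence of $\beta$ and the fact that unsatisfiability over $\FF$ implies unsatisfiability over characteristic-$0$ fields, applied to Boolean functions). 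Getting this interface clean, and dealing with the edge cases where the number of active positive indices is $0$ or $1$, is where the real work lies; everything downstream is a transcription of the $\ks_{w,p}$ argument.
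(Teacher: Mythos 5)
Your proposal matches the paper's intended argument exactly: the paper's own proof of this lemma is essentially a one-line remark that it proceeds verbatim as the $\ks_{w,p}$ argument (reduce $\IPS$ lower bound $\to$ rank lower bound $\to$ degree lower bound), with \Cref{lem:deg-lower-bound-e2-large-field} replacing \Cref{lem:deg-bound-special} at the base. Your step-by-step outline is the correct instantiation of this.

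One correction worth making: the ``affine change of variables $x \mapsto 1-x$'' you introduce is a phantom concern stemming from a miscalculation of $\tau_m(\ks_{w,e_2})$. Unlike $\ks_{w,p}$, whose atoms $\ks_{w,p}^{(i)} = 1 - \ml(\cdots)^{p-1}$ carry an explicit ``$1-{}$'' wrapper, the atoms of $\ks_{w,e_2}$ are the raw terms $x_\sigma^{(i)} f_\sigma^{(i)}$. Under $\tau_m$ each such atom becomes $x_{\sigma_i}^{(i)}$ (for $i\in S$, $\sigma = \sigma_i$) or $0$, so
\[
\tau_m(\ks_{w,e_2}) = e_2\bigl(\{x_{\sigma_i}^{(i)}\}_{i\in S}\bigr) - \beta,
\]
not $e_2(\{1-x_{\sigma_i}^{(i)}\}) - \beta$. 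Hence \Cref{lem:deg-lower-bound-e2-large-field} applies directly with no substitution, which is cleaner than what you propose; the remaining leading-monomial and cancellation argument is unchanged. Your flagged edge cases $|S|\le 1$ are indeed harmless: the inductive claim is an inequality ``$\mathrm{LM}(g_m)\le\cdots$,'' and for $|S|\le 1$ the polynomial $\tau_m(f)$ is constant, so the inequality holds trivially; the tight equality is only needed when $T=N_w$, where $S=P_w$ has $|P_w|=\Theta(d)\geq 2$.
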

\begin{proof}
    The proof of this lemma is essentially the same as the proof of \Cref{thm:ips-modp-main} (and \cite{GHT22}) and is omitted here.
    We recall the general strategy of reducing an $\IPS$ lower bound to a rank lower bound, to a degree lower bound, which for this instance is \Cref{lem:deg-lower-bound-e2-large-field}).
\end{proof}

\begin{theorem}[Separation: \cite{GHT22} over Finite Fields vs.\ \cite{GHT22}]
\label{thm:separation}
    Let $p \geq 3$ be a prime, and let $\F$ be a field of characteristic $p$.
    Let $n, \Delta \in \NN_+$ with $\Delta \leq \nicefrac{\log \log \log n}{4} $.
    Then there exists a product-depth $2$, multilinear formula $\ks_{w,e_2}$ of size $\poly(n)$ such that:
    \begin{itemize}
        \item $\ks_{w,e_2}$ has no satisfying Boolean assignment over $\F$, and over any field of characteristic $0$;
        \item there is a $\poly(n)$-size, product-depth-$3$ multilinear $\lbIPS$ refutation of $\ks_{w,e_2}$ over $\F$;
        \item for every field of characteristic $0$, any product-depth at most $\Delta$ multilinear $\lbIPS$ refutation of $\ks_{w,e_2}$ requires size at least
    \[
        n^{\Omega(\lambda/\Delta)},
    \]
    where $d = \lfloor \nicefrac{\log n}{4} \rfloor$ and $\lambda = \lfloor d^{1/G(\Delta)} \rfloor$.
    \end{itemize}
\end{theorem}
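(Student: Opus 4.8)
The plan is to fix a single word $w$ and a single scalar $\beta$ that simultaneously witness all three items, and then read off each item from a supporting lemma. First I would set $d = \lfloor \nicefrac{\log n}{4} \rfloor$, $\lambda = \lfloor d^{1/G(\Delta)} \rfloor$, and let $\nicefrac{1}{2} \le \alpha < 1$ and $k \in [\nicefrac{\lfloor \log n \rfloor}{2}, \lfloor \log n \rfloor]$ be the parameters produced by \Cref{clm_rel_rank_bound} (equivalently \Cref{lem:set-multilinear-circuit-lb}), and build, by the same inductive procedure as in the proof of \Cref{thm:ips-modp-main}, a balanced word $w \in \ZZ^d$ over the alphabet $\{\alpha k, -k\}$; in particular $|w_i| \le b$ with $b = \lfloor \log n \rfloor$. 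For the scalar: since $p \ge 3$, the integer $2$ has a single nonzero base-$p$ digit, equal to $2 \ge 2$, so \Cref{lem:symmetric-sums-boolean} (in the Boolean-functions form noted after it) supplies a residue $\beta \in \{1, \dots, p-1\}$ that is \emph{not} attained by $\ml(e_2(\{x_\sigma^{(i)} f_\sigma^{(i)}\}))$ over Boolean assignments in $\FF$. With $w$ and $\beta$ fixed, $\ks_{w,e_2}$ is specified.

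The first item is then immediate. The displayed ``Comment'' on computing $\ks_{w,e_2}$ gives a product-depth-$2$ multilinear formula of size $\poly(d,2^b) = \poly(\log n, n) = \poly(n)$. Unsatisfiability over $\FF$ holds by the choice of $\beta$. Over characteristic $0$: for every Boolean assignment to all of $\xbar$ and $\ybar$, the value of $\ml(e_2(\{x_\sigma^{(i)} f_\sigma^{(i)}\}))$ equals $\binom{m}{2}$ for some nonnegative integer $m$, hence is an \emph{integer}; if $\binom{m}{2} = \beta$ held over $\ZZ$ it would hold mod $p$ as well, making $\beta$ an attained residue, a contradiction. Thus $\ks_{w,e_2}$ has no Boolean root over $\ZZ$, hence none over any characteristic-$0$ field. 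The second item is exactly the upper-bound lemma proved above: by Fermat's little theorem $\ml((\ks_{w,e_2})^{p-2}) \cdot \ks_{w,e_2} + \sum_\psi h_\psi(\psi^2-\psi) = 1$, and since $p = O(1)$ while $d, b = \Theta(\log n)$, this is a product-depth-$3$ multilinear $\lbIPS$ refutation of size $\poly(d^p, 2^{bp}) = \poly(n)$.

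The third item is the characteristic-$0$ lower-bound lemma stated immediately before the theorem, whose proof follows that of \Cref{thm:ips-modp-main}: apply the functional method (\Cref{thm:func_lb_method}) to reduce the $\IPS$ size bound to a product-depth-$\Delta$ circuit-size bound for the multilinear $f$ with $f = \nicefrac{1}{\ks_{w,e_2}}$ over Boolean assignments; set-multilinearize via \Cref{lem:forbes-set-multilinearization}; and invoke the relative-rank-versus-size bound \Cref{clm_rel_rank_bound}. What remains is to show $M_w(f)$ has full rank, the $e_2$-analogue of \Cref{lem:rank-lower-bound}. I would rerun that induction on $|T|$: the partial assignment $\tau_m$ sending the $\ybar$-variables of $m$ to $1$ and all other $\ybar$-variables to $0$ collapses each $f_\sigma^{(i)}$ to a $0/1$ value (identically as a polynomial), yielding $\tau_m(\ks_{w,e_2}) \equiv e_2(\{x_{\sigma_i}^{(i)}\}_{i\,:\,A_w^{(i)} \subseteq B_w^T}) - \beta \pmod{\xbar^2 - \xbar}$, an elementary symmetric sum of degree $2$ that is still unsatisfiable (a Boolean root would give $\binom{m}{2} = \beta$ with $m$ at most the total number of pairs, contradicting that $\beta$ is a missed residue). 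Then \Cref{lem:deg-lower-bound-e2-large-field}, in place of \Cref{lem:deg-bound-special}, forces $\deg \tau_m(f)$ to be full, pinning the leading monomial at $m(\sigma(m)|_{A_w^S})$; balancedness of $w$ and the leading-monomial bookkeeping then give $\rank M_w(f) = |M^P_w|$, hence $\relrk_w(f) \ge 2^{-b/2}$ exactly as in \Cref{cor:relrank-lower-bound}. Combining with \Cref{clm_rel_rank_bound}, the bound $k \ge \nicefrac{\lfloor \log n \rfloor}{2}$, and $\lambda \ge (\log d)^2$ (as in \Cref{thm:ips-modp-main}) yields the claimed $n^{\Omega(\lambda/\Delta)}$ lower bound.

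The main obstacle is coordination rather than a single hard step: one must choose $w$ so that it is at once balanced over the alphabet $\{\alpha k, -k\}$ demanded by \Cref{clm_rel_rank_bound} for the characteristic-$0$ lower bound and still satisfies $|w_i| \le b$ with $b, d = \Theta(\log n)$, so that the formula size $\poly(d, 2^b)$ and the refutation size $\poly(d^p, 2^{bp})$ both stay $\poly(n)$; and one must check that the missed residue $\beta$ over $\FF$ remains a non-root over $\ZZ$ and over every sub-instance $e_2(\{x_{\sigma_i}^{(i)}\}_{i \in S}) - \beta$ arising in the rank induction (which follows because, setting all $\ybar$-variables to $1$, every Hamming weight up to the total number of pairs is realizable, so each $\binom{m}{2}$ occurring is an attained value of $e_2$). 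The genuinely technical content behind the third item is already isolated in \Cref{lem:deg-lower-bound-e2-large-field}—the nonvanishing of the top coefficient \eqref{eqn_lead_coef_e2-b} via the partial-fraction and magnitude comparison of $\prod_{j=0}^n(\gamma_1 - j)$ and $\prod_{j=0}^n(\gamma_2 - j)$—and in the routine adaptation of the leading-monomial induction of \Cref{lem:rank-lower-bound} to the $e_2$-restriction.
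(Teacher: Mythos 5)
Your proposal is correct and follows the paper's own route essentially step by step: the three bullets are read off from the section's supporting lemmas exactly as the paper intends (the Lucas-theorem choice of a missed residue $\beta$ via \Cref{lem:symmetric-sums-boolean}, the $\poly(d,2^b)$-size product-depth-$2$ multilinear formula for $\ks_{w,e_2}$, the Fermat's-little-theorem upper bound over $\F$, and the characteristic-$0$ lower bound lemma whose proof the paper omits, noting it mirrors \Cref{thm:ips-modp-main}). You also correctly reconstruct the omitted rank-induction details---the $\tau_m$-collapse of $\ks_{w,e_2}$ to $e_2(\{x_{\sigma_i}^{(i)}\}_{i\in S}) - \beta$, the observation that $\beta$ remains a missed value for the sub-instances, and the substitution of \Cref{lem:deg-lower-bound-e2-large-field} for \Cref{lem:deg-bound-special}---which is precisely the adaptation the paper has in mind.
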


\section{Lower Bounds for roABP-IPS}
\label{sec:roABP}
In this section, we work over field $\F_q$, where $q$ is a constant greater than 2. In addition to proving a lower bound over finite fields, this work significantly simplifies \cite{HLT24}, though we note that we are still working in the placeholder model. Consider the following hard instance
\begin{equation}
    \label{equa: hard instance for roABP}
    f(\xbar) \coloneqq \prod_{i=1}^n (1 - x_i) - 2.
\end{equation}
Clearly this function never evaluates to 0 over boolean assignments in $\F_q$ (when $q>2$). In contrast, the hard instance from \cite{HLT24} is a subset-sum instance, therefore requiring large characteristic to be defined. 

\subsection{roABP-IPS Lower Bounds in Fixed Order}
We begin by proving a lower bound where the roABPs are given a fixed order of the variables. By \Cref{lem:deg-bound-special}, we have the following lemma.

\begin{lemma}
    \label{lemma: degree lower bounds for roABP}
    Let $\F_q$ be a finite field with $q >2$, and let $f(\xbar)$ be as in \Cref{equa: hard instance for roABP}. If $g(\xbar)$ is the multilinear polynomial such that
    \[
        g(\xbar) \cdot f(\xbar) = 1 \mod{\xbar^2- \xbar},
    \]
    then $\deg(g) = n$.
\end{lemma}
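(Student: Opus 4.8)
The plan is to obtain this as an immediate specialisation of \Cref{lem:deg-bound-special}. I would view the hard instance $f(\xbar) = \prod_{i=1}^n(1-x_i) - 2$ as the degenerate case of the partitioned product-sum $\sum_{i \in I}\prod_{x \in \xbar_i}(1-x) - \beta$ in which $I$ is a singleton. Concretely, take $I = \{1\}$, let the partition be trivial with $\xbar_1 = \xbar = \{x_1,\dots,x_n\}$, and set $\beta = 2$. Then $\sum_{i\in I}\prod_{x\in\xbar_i}(1-x) - \beta$ is literally $\prod_{i=1}^n(1-x_i) - 2 = f(\xbar)$, and the hypothesis $g(\xbar)\,f(\xbar) = 1 \pmod{\xbar^2-\xbar}$ of the present lemma is exactly the hypothesis that \Cref{lem:deg-bound-special} imposes on the multilinear polynomial $g$.

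Before applying the corollary I would verify its two side conditions. First, the characteristic condition: $\chara(\F_q) = q$ is a positive prime and $q > 2 > 1 = |I|$, so $\chara(\F_q) > |I|$. Second, the condition on $\beta$: we need $\beta = 2 \in \F_q \setminus \{0,1,\dots,|I|\} = \F_q \setminus \{0,1\}$. Here $2 \neq 1$ holds in every field (since $1 \neq 0$), and $2 \neq 0$ holds precisely because $\chara(\F_q) = q \neq 2$ — this is the single place where the assumption $q > 2$ enters, and it is the same reason $f$ has no Boolean root, since $\prod_{i=1}^n(1-x_i) - 2$ takes only the values $-1$ and $-2$ on $\{0,1\}^n$, both nonzero in $\F_q$. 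With both conditions checked, \Cref{lem:deg-bound-special} yields $\deg(g) = n$.

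Since the argument is a one-line invocation, I do not anticipate a genuine obstacle. The only point worth a second look is that the index set $I$ has size $1$: neither \Cref{lem:deg-bound-general} nor \Cref{lem:deg-bound-special} imposes a lower bound on $|I|$, and the base case \Cref{subsetsumdegree} on which they rest is stated for all $n \geq 1$, so the auxiliary single-variable subset-sum instance $w_1 - \beta$ arising inside that reduction is covered. Hence nothing beyond a routine check of hypotheses is required.
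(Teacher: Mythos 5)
Your proposal is exactly the paper's approach: the paper proves this lemma by the same one-line invocation of \Cref{lem:deg-bound-special} with the trivial partition $|I|=1$, $\xbar_1 = \xbar$, $\beta = 2$, and your verification of the characteristic and $\beta$ side conditions is the routine check being left implicit there.
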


For any $\xbar, \ybar$ variables, with $|\xbar| = |\ybar|= n$, we use $\xbar \circ \ybar$ to denote the entry-wise product $(x_1 y_1, \dots, x_n y_n)$. In other words, the \emph{gadget} we use is the mapping
\[
    x_i \mapsto x_i y_i,
\]
which substitutes the variables $x_i$ by $x_i y_i$, for every $i$. We use $\mathbbm{1}_S \in \{0,1\}^n$ to denote the indicator vector for a set $S$.

\begin{theorem}
    \label{theorem: leading monomials lower bounds}
    Let $f(\xbar)$ be as in \Cref{equa: hard instance for roABP}. Let $g(\xbar,\ybar) \cdot f(\xbar \circ \ybar) =1 \mod{\xbar^2- \xbar}$. Then,
    \begin{equation}
        \label{equa: leading monomial lower bounds}
        \Big | \LM \big ( \{ \ml(g(\xbar,\mathbbm{1}_S)) : S\subseteq [n]\}\big)\Big| = 2^{n}.
    \end{equation}
    
\end{theorem}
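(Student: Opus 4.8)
The plan is to pin down, for each fixed $S\subseteq[n]$, exactly what the polynomial $\ml(g(\xbar,\mathbbm{1}_S))$ is, by specializing $\ybar=\mathbbm{1}_S$ in the defining identity and then invoking the degree lower bound. First I would compute the specialization of the axiom: substituting $\ybar=\mathbbm{1}_S$ into $f(\xbar\circ\ybar)=\prod_{i=1}^n(1-x_iy_i)-2$ kills every factor with $i\notin S$, leaving the multilinear polynomial $f(\xbar\circ\mathbbm{1}_S)=\prod_{i\in S}(1-x_i)-2$. Since substituting for the $\ybar$-variables preserves the congruence modulo $\xbar^2-\xbar$, and since $f(\xbar\circ\mathbbm{1}_S)$ is already multilinear in $\xbar$, the hypothesis $g(\xbar,\ybar)\cdot f(\xbar\circ\ybar)=1\pmod{\xbar^2-\xbar}$ specializes to
\[
\ml\big(g(\xbar,\mathbbm{1}_S)\big)\cdot\Big(\prod_{i\in S}(1-x_i)-2\Big)=1\pmod{\xbar^2-\xbar}.
\]

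Next I would observe that the factor $\prod_{i\in S}(1-x_i)-2$ depends only on the variables $\{x_i:i\in S\}$ and takes only the values $-1$ and $-2$ over Boolean assignments, hence (using $2\neq 0$ in $\F_q$) is a unit in $\F_q[\xbar]/(\xbar^2-\xbar)$; its inverse, viewed as a function on the cube, again depends only on $\{x_i:i\in S\}$. The unique multilinear representative of such a function uses only those variables, because the factor $\prod_{i\notin S}\big(x_i+(1-x_i)\big)=1$ in the interpolation formula telescopes away. Consequently $\ml(g(\xbar,\mathbbm{1}_S))$ is a multilinear polynomial in the variables $\{x_i:i\in S\}$ alone, and in particular every monomial occurring in it divides $x^S:=\prod_{i\in S}x_i$.

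Now I would apply the degree lower bound \Cref{lemma: degree lower bounds for roABP} — equivalently \Cref{lem:deg-bound-special} with the singleton partition $\xbar_1=\{x_i:i\in S\}$ and $\beta=2$ — but to the instance on the $|S|$ variables $\{x_i:i\in S\}$ rather than on all $n$ of them, obtaining $\deg\ml(g(\xbar,\mathbbm{1}_S))=|S|$. Since this polynomial is multilinear in exactly those $|S|$ variables, the only monomial of degree $|S|$ that can appear is $x^S$, so $x^S$ occurs with nonzero coefficient. As recalled in the preliminaries, any monomial order extends the submonomial (divisibility) relation, and every monomial of $\ml(g(\xbar,\mathbbm{1}_S))$ divides $x^S$; hence $\LM\big(\ml(g(\xbar,\mathbbm{1}_S))\big)=x^S$.

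Finally, the $2^n$ monomials $x^S$ for $S\subseteq[n]$ are pairwise distinct, so the set of leading monomials $\{\LM(\ml(g(\xbar,\mathbbm{1}_S))):S\subseteq[n]\}$ has exactly $2^n$ elements, which is \Cref{equa: leading monomial lower bounds}. I expect the only genuinely delicate point to be the second paragraph — verifying that passing from $g$ to $\ml(g(\xbar,\mathbbm{1}_S))$ really yields the multilinear inverse of $\prod_{i\in S}(1-x_i)-2$ supported on the variables of $S$ — since everything afterward is a direct application of the already-established degree bound and the elementary properties of monomial orders.
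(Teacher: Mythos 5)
Your proof is correct and takes essentially the same route as the paper's: specialize $\ybar\mapsto\mathbbm{1}_S$, invoke the degree bound (\Cref{lem:deg-bound-special} with the one-part partition $\{x_i:i\in S\}$ and $\beta=2$) on the $|S|$-variable instance $\prod_{i\in S}(1-x_i)-2$ to get $\deg\ml(g(\xbar,\mathbbm{1}_S))=|S|$, and conclude that the leading monomial is $x^S=\prod_{i\in S}x_i$. Two minor improvements over the paper's write-up: your divisibility argument for $\LM\bigl(\ml(g(\xbar,\mathbbm{1}_S))\bigr)=x^S$ works for any monomial order (the paper instead inserts an assumption that the order respects degree, which sits uneasily with its own preliminaries' remark that the results hold for arbitrary monomial orders), and you supply the interpolation-formula justification, left implicit in the paper, that $\ml(g(\xbar,\mathbbm{1}_S))$ is supported only on $\{x_i:i\in S\}$.
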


\begin{proof}
We first need the claim below.
    \begin{claim} Each $S \subseteq [n]$ induces a distinct leading monomial in $\ml(g(\xbar, \mathbbm{1}_S))$.
    \end{claim}

    \begin{proof}[Proof of claim:]
        Let $S \subseteq [n]$. By the assumption $g(\xbar, \ybar) \cdot f(\xbar \circ \ybar) =1 \mod{\xbar^2 - \xbar}$, we also have
        \begin{equation}
            \label{equa: condition for degree lower bounds}
            \ml(g(\xbar, \mathbbm{1}_S)) \cdot f(\xbar \circ \mathbbm{1}_S) =1 \mod{\xbar^2-\xbar},
        \end{equation}
         since multilinearizing $g(\xbar, \mathbbm{1}_S)$ does not affect the equality (as we work modulo $\xbar^2 -\xbar$). By the lifting defined above, $\ml(g(\xbar, \mathbbm{1}_S))$ is a (multilinear symmetric) polynomial that \emph{depends on the variables $x_i$, for $i \in S$}. Similarly, $f(\xbar \circ \mathbbm{1}_S)$ is a polynomial of the same form as \Cref{equa: hard instance for roABP} that \emph{depends on the variables $x_i$, for $i \in S$}. In addition, $f(\xbar)$ has no Boolean roots, so neither does $f(\xbar \circ \ybar)$. This together with \Cref{equa: condition for degree lower bounds} means the conditions of \Cref{lemma: degree lower bounds for roABP} are met, so we have
        \[
            \deg(\ml(g(\xbar, \mathbbm{1}_S))) = |S|.
        \]
        Since we assumed that our monomial ordering respects degree,
        \begin{equation}
            \label{equa: leading monomial degree lower bounds}
            \deg(\LM(\ml(g(\xbar, \mathbbm{1}_S)))) = |S|.
        \end{equation}
        There is only one possible multilinear monomial of degree $|S|$ on $|S|$ variables; it follows that every $S$ induces a unique leading monomial (consisting exactly of all variables in $S$).
    \end{proof}
    This concludes the proof of \Cref{theorem: leading monomials lower bounds}.
\end{proof}

\begin{theorem}
    \label{theorem: roABP IPS size lower bounds for fixed order}
    Let $f(\xbar)$ be as in \Cref{equa: hard instance for roABP}. Then, any $\roAlbIPS$refutation of $f(\xbar \circ \ybar) =0$ is of size $2^{\Omega(n)}$, when the variables are ordered such that $\xbar < \ybar$ (i.e., $\xbar$-variables come before $\ybar$-variables).
\end{theorem}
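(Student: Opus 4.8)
The plan is to run the functional lower bound method already set up in this section: reduce the size of a $\roAlbIPS$ refutation to the width of a roABP computing a single ``inverse'' polynomial $g$, and then lower bound that width by the coefficient dimension of $g$ at the cut separating the $\xbar$-variables from the $\ybar$-variables. All the ingredients --- \Cref{theorem: leading monomials lower bounds}, \Cref{lem:span}, \Cref{lem:eval_dim_lb_coeff_dim}, and \Cref{lem:roABP_width_equals_coeff_dim} --- are already available; the work is to chain them correctly.

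First I would unfold the definition of a $\roAlbIPS$ refutation of $f(\xbar\circ\ybar)=0$. Since $f(\xbar\circ\ybar)$ is the only non-Boolean axiom, the refutation polynomial is linear in the single axiom placeholder $u$ and vanishes when all placeholder variables are set to $0$; setting $u=1$ and the Boolean-axiom placeholders to $0$ therefore produces a polynomial $g(\xbar,\ybar)$ that is computed by a roABP of no larger width (substituting constants neither increases width nor disturbs the induced order on $\xbar,\ybar$, so $\xbar<\ybar$ is preserved) and that satisfies $g(\xbar,\ybar)\cdot f(\xbar\circ\ybar)\equiv 1\pmod{\xbar^2-\xbar,\ybar^2-\ybar}$. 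Substituting any $\ybar=\mathbbm{1}_S$ kills the $\ybar^2-\ybar$ terms and yields $g(\xbar,\mathbbm{1}_S)\cdot f(\xbar\circ\mathbbm{1}_S)\equiv 1\pmod{\xbar^2-\xbar}$, so $g$ meets the hypothesis of \Cref{theorem: leading monomials lower bounds}. It thus suffices to lower bound by $2^{\Omega(n)}$ the width of any roABP, in an order with $\xbar<\ybar$, computing such a $g$.

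For the width bound, \Cref{theorem: leading monomials lower bounds} gives that the polynomials $\ml(g(\xbar,\mathbbm{1}_S))$, $S\subseteq[n]$, have $2^n$ pairwise distinct leading monomials, so by \Cref{lem:span} they span a $2^n$-dimensional space. Multilinearization (reduction modulo the ideal $(\xbar^2-\xbar)$, then taking the multilinear representative) is an $\F$-linear map, and a linear map cannot increase dimension, so $\dim\spant\{g(\xbar,\mathbbm{1}_S):S\subseteq[n]\}\ge\dim\spant\{\ml(g(\xbar,\mathbbm{1}_S)):S\subseteq[n]\}=2^n$. The left-hand span is precisely $\spant\evals{\xbar\mid\ybar,\{0,1\}}(g)$, so \Cref{lem:eval_dim_lb_coeff_dim} gives $\dim\coeffs{{\xbar\mid\ybar}}(g)\ge\dim\evals{\xbar\mid\ybar,\{0,1\}}(g)\ge 2^n$. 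Finally, because the roABP orders every $\xbar$-variable before every $\ybar$-variable, the cut between the last $\xbar$-variable and the first $\ybar$-variable separates exactly $\xbar$ from $\ybar$; applying \Cref{lem:roABP_width_equals_coeff_dim} at this cut, the roABP has width at least $\dim\coeffs{{\xbar\mid\ybar}}(g)\ge 2^n$, hence size $2^{\Omega(n)}$, which transfers back to the refutation (whose size is at least that of the roABP for $g$). I expect the only delicate points to be the bookkeeping of the first step --- extracting $g$ and a roABP for it in the placeholder model, and checking that the constant substitutions preserve both the width and the order $\xbar<\ybar$ --- together with the observation that the internal orderings within $\xbar$ and within $\ybar$ are irrelevant, since only the single cut between the two blocks is used; granted the cited lemmas, the dimension chain itself is routine.
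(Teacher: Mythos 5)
Your proposal is correct and follows essentially the same approach as the paper: reduce to a functional lower bound on the polynomial $g$ with $g\cdot f(\xbar\circ\ybar)\equiv 1$ over Boolean assignments, then chain \Cref{theorem: leading monomials lower bounds}, \Cref{lem:span}, \Cref{lem:eval_dim_lb_coeff_dim}, and \Cref{lem:roABP_width_equals_coeff_dim} to turn the $2^n$ distinct leading monomials into a coefficient-dimension lower bound at the $\xbar\mid\ybar$ cut. The only cosmetic difference is that you unpack the placeholder-substitution step by hand, while the paper delegates it to \Cref{thm:func_lb_method}; both give the same reduction.
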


\begin{proof}
    Let $g(\xbar,\ybar)$ be a polynomial such that $g(\xbar, \ybar) \cdot f(\xbar \circ \ybar) =1$ over $\xbar, \ybar \in \{0,1\}^n$. Hence,
    \[
        g(\xbar, \ybar) = \frac{1}{f(\xbar \circ \ybar)} \ \text{ over } \xbar,\ybar \in \{0,1\}^n. 
    \]
    We show that $\dim\coeffs{\xbar| \ybar}g \geq 2^{\Omega(n)}$. This will conclude the proof by Lemma \ref{lem:roABP_width_equals_coeff_dim} which will give the roABP size (width) lower bound and by the functional lower bound in \Cref{thm:func_lb_method}.

    By lower bounding coefficient dimension by the evaluation dimension over the Boolean cube (\Cref{lem:eval_dim_lb_coeff_dim}),
    \begin{align*}
        \dim\coeffs{\xbar|\ybar}g &\geq \dim\evals{\xbar|\ybar,\{0,1\}}g \\
                                                &=\dim\{g(\xbar, \mathbbm{1}_S): S \subseteq [n]\} \\
                                                &\geq \dim\{\ml(g(\xbar,\mathbbm{1}_S)): S\subseteq [n]\}.
    \end{align*}
    Here we used that dimension is non-increasing under linear maps. For $S \subseteq [n]$, denoted by $\xbar_S \coloneqq \{x_i : i \in S\}$ and note that for $\xbar \in \{0,1\}^n$,
    \[  
        g(\xbar, \mathbbm{1}_S) = \frac{1}{f(\xbar_S)},
    \]
    and that $\ml(g(\xbar,\mathbbm{1}_S))$ is a multilinear polynomial only depending on $\xbar_S$.
    By \Cref{theorem: leading monomials lower bounds}, we can lower bound the number of distinct leading monomials of $\ml(g(\xbar, \mathbbm{1}_S)$, where $S$ ranges over subsets of $[n]$:
    \[
        \Big | \LM \big ( \{ \ml(g(\xbar,\mathbbm{1}_S)) : S\subseteq [n]\}\big)\Big| =2^{n}.
    \]
    Therefore, we can lower bound the dimension of the above space by the number of leading monomials (\Cref{lem:span}),
    \begin{align*}
        \dim\coeffs{\xbar|\ybar}g &\geq \dim\{\ml(g(\xbar,\mathbbm{1}_S)): S \subseteq [n]\}\\
                                                &\geq \Big | \LM \big ( \{ \ml(g(\xbar,\mathbbm{1}_S)) : S\subseteq [n]\}\big)\Big| \\
                                                &= 2^{n}.
    \end{align*}
\end{proof}

\subsection{roABP-IPS Lower Bounds in Any Order}
We now extend this previous result to roABPs in any variable order. Consider a polynomial $f(\wbar)$ over $m$ variables, where $m = \binom{2n}{2}$ and $\wbar = \{w_{i,j}\}_{i < j \in [2n]}$. We apply the same gadget from \cite{HLT24}, defined by the mapping
\[
    w_{i,j} \mapsto z_{i,j} x_i x_j,
\]
which substitutes the $m$ variables $w_{i,j}$ by $m+2n$ variables $\{z_{i,j}\}_{i <j \in [2n]}, x_1,\dots,x_{2n}$ such that:
\begin{equation}
    \label{equa: hard instance for roABP for any order}
    f^\star(\zbar,\xbar) \coloneqq f(\wbar)_{w_{i,j} \mapsto z_{i,j} x_i x_j},
\end{equation}
where $f(\wbar)_{w_{i,j} \mapsto z_{i,j} x_i x_j}$ means that we apply the lifting $w_{i,j} \mapsto z_{i,j} x_i x_j$ to the $\wbar$ variables.

Let $f \in \F[\xbar,\ybar,\zbar]$. We denote by $f_{\zbar}$ the polynomial $f$ considered as a polynomial in $\F[\zbar](\xbar,\ybar)$, namely as a polynomial whose indeterminates are $\xbar,\ybar$ and whose scalars are from the ring $\F[\zbar]$. We will consider the dimension of a (coefficient) matrix when the entries are taken from the ring $\F[\zbar]$, and where the dimension is considered over the field of rational functions $\F(\zbar)$. Note that for any $\abar \in \F^{\zbar}$, we have $f_{\abar}(\xbar,\ybar) = f(\xbar, \ybar, \abar) \in \F[\xbar,\ybar]$. We reference the following simple lemma.
\begin{lemma}[\cite{FSTW21}]
    \label{lemma: dimension non-increase under assignment}
    Let $f \in \F[\xbar, \ybar, \zbar]$. Then for any $\abar \in \F^{|\overline{z}|}$
    \[
        \dim_{\F(\zbar)}\coeffs{\xbar|\ybar}f_{\zbar}(\xbar,\ybar) \geq \dim_{\F}\coeffs{\xbar|\ybar}f_{\abar}(\xbar,\ybar).
    \]
\end{lemma}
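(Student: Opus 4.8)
The plan is to translate both dimensions into ranks of a single coefficient matrix with entries in $\F[\zbar]$, and then invoke the elementary fact that the rank of a matrix with polynomial entries cannot increase when constants are substituted for its indeterminates. This reduces the lemma to a one-line minor argument.

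Concretely, I would let $C$ be the coefficient matrix of $f$ viewed as a polynomial in $\xbar,\ybar$ with coefficients in $\F[\zbar]$: rows indexed by $\xbar$-monomials $\xbar^{\bar a}$, columns by $\ybar$-monomials $\ybar^{\bar b}$ (ranging over $\sum_i a_i + \sum_j b_j \leq \deg_{\xbar,\ybar} f$), and with $(\bar a,\bar b)$-entry $\coeff_{\xbar^{\bar a}\ybar^{\bar b}}(f)\in\F[\zbar]$. Applying the coefficient-matrix/rank lemma (Nisan, quoted above) over the field $\F(\zbar)$ to $f_{\zbar}\in\F(\zbar)[\xbar,\ybar]$ gives $\dim_{\F(\zbar)}\coeffs{\xbar|\ybar}f_{\zbar} = \rank_{\F(\zbar)}(C)$. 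Applying the same lemma over $\F$ to $f_{\abar}=f(\xbar,\ybar,\abar)\in\F[\xbar,\ybar]$ gives $\dim_{\F}\coeffs{\xbar|\ybar}f_{\abar} = \rank_{\F}(C')$, where $C'$ is the coefficient matrix of $f_{\abar}$. Here one only checks the routine fact that $C'$ is obtained from $C$ by the entrywise substitution $\zbar\mapsto\abar$ (up to discarding some now-superfluous zero rows and columns, which does not change the rank), since extracting the coefficient of a fixed monomial $\xbar^{\bar a}\ybar^{\bar b}$ commutes with plugging constants into the disjoint variables $\zbar$.

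It then remains to show $\rank_{\F(\zbar)}(C)\geq \rank_{\F}(C|_{\zbar=\abar})$. Put $r=\rank_{\F(\zbar)}(C)$. Every $(r+1)\times(r+1)$ minor of $C$ is an element of $\F[\zbar]$ which is identically zero as a polynomial; substituting $\zbar=\abar$, every $(r+1)\times(r+1)$ minor of $C|_{\zbar=\abar}$ equals $0$ in $\F$, so $\rank_{\F}(C|_{\zbar=\abar})\leq r$. Chaining this with the two equalities above yields the lemma.

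The only point that needs care — rather than being genuinely hard — is the direction of this final step: one must argue via vanishing of the size-$(r+1)$ minors (a polynomial identity, hence preserved by \emph{any} substitution) and not via the existence of a nonzero size-$r$ minor (which need not survive the substitution, precisely because the inequality can be strict, the rank possibly dropping at a "bad" choice of $\abar$). Everything else is unwinding the definitions of coefficient matrix and coefficient space.
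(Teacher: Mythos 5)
Your proof is correct. The paper does not supply its own proof of this lemma—it cites \cite{FSTW21} and moves on—so there is nothing in-text to compare against, but your argument is the standard one and, to the best of my knowledge, matches the one in \cite{FSTW21}: identify each coefficient dimension with the rank of the same coefficient matrix $C$ over $\F[\zbar]$ (once over $\F(\zbar)$, once after the specialization $\zbar\mapsto\abar$), then observe that rank can only drop under specialization because the vanishing of every $(r{+}1)\times(r{+}1)$ minor of $C$ is a polynomial identity in $\F[\zbar]$ and hence survives any substitution. Your parenthetical about discarding the now-zero rows and columns (those indexed by monomials of degree above $\deg f_{\abar}$) correctly handles the only bookkeeping mismatch between $C'$ and $C|_{\zbar=\abar}$, and your closing caveat—argue via vanishing of size-$(r{+}1)$ minors, not survival of a nonzero size-$r$ minor—pinpoints exactly where a careless version of this argument would break.
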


We now prove the proposition below.
\begin{proposition}
    \label{prop: dimesnion lower bounds for any order}
    Let $n \geq 1$, $m = \binom{2n}{2}$, and $\F_q$ be a finite field of constant characteristic $q$. Let $f \in \F[\wbar]$ be as in \Cref{equa: hard instance for roABP}, and $f^\star(\zbar, \xbar)$ be as in \Cref{equa: hard instance for roABP for any order}. Let $g \in \F[z_1,\dots,z_m, x_1, \dots, x_{2n}]$ be a polynomial such that
    \[
        g(\zbar, \xbar)  = \frac{1}{f^\star(\zbar, \xbar)},
    \]
    for $\zbar \in \{0,1\}^m$ and $\xbar \in \{0,1\}^{2n}$. Let $g_{\zbar}$ denote $g$ as a polynomial in $\F[\zbar][\xbar]$. Then, for any partition $\xbar = (\ubar, \vbar)$ with $|\ubar| = |\vbar| =n$,
    \[
        \dim_{\F(\zbar)}\coeffs{{\ubar|\vbar}}g_{\zbar} \geq 2^{\Omega(n)}.
    \]
\end{proposition}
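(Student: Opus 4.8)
The plan is to reduce this to the fixed-order lower bound already established in \Cref{theorem: roABP IPS size lower bounds for fixed order}, via the gadget substitution and the dimension-monotonicity lemma \Cref{lemma: dimension non-increase under assignment}. Fix a partition $\xbar = (\ubar, \vbar)$ with $|\ubar| = |\vbar| = n$; write $\ubar = \{x_{i_1}, \ldots, x_{i_n}\}$ and $\vbar = \{x_{j_1}, \ldots, x_{j_n}\}$. The key observation is that among the $\wbar$-variables $w_{i,j}$ (indexed by pairs from $[2n]$), the ones of the form $w_{i,j}$ with $i$ an index of $\ubar$ and $j$ an index of $\vbar$ form a set of size exactly $n^2$, and after the gadget substitution $w_{i,j} \mapsto z_{i,j} x_i x_j$ these become $z_{i,j} x_i x_j$ — a monomial involving one $\ubar$-variable and one $\vbar$-variable, moderated by the scalar $z_{i,j} \in \F[\zbar]$. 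So the idea is to choose an assignment $\abar \in \{0,1\}^m$ to the $\zbar$-variables that "switches on" a matching between $\ubar$ and $\vbar$ and switches off everything else, so that $f^\star(\abar, \xbar)$ collapses to (a polynomial of the same form as) the fixed-order instance $f(\ubar' \circ \vbar')$ from \Cref{equa: hard instance for roABP for any order}, for suitable relabelings $\ubar', \vbar'$ of $\ubar, \vbar$.

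Concretely, I would proceed as follows. First, fix a perfect matching between the index set of $\ubar$ and that of $\vbar$ inside $[2n]$ — say pairing $i_k$ with $j_k$ for $k \in [n]$ — and set the $\zbar$-assignment $\abar$ to have $a_{i_k, j_k} = 1$ for $k \in [n]$ and $a_{i,j} = 0$ for all other pairs. Under this assignment, every gadget term $z_{i,j} x_i x_j$ with $\{i,j\}$ not one of the chosen matched pairs vanishes, and the surviving variables are $w_{i_k, j_k} \mapsto x_{i_k} x_{j_k}$. Examine what $f^\star(\abar, \xbar)$ becomes: since $f(\wbar) = \prod (1 - w_\ell) - 2$ over the $m$ variables, setting all but $n$ of the $w$-variables to $0$ and the remaining $n$ to $x_{i_k} x_{j_k}$ yields exactly $\prod_{k \in [n]} (1 - x_{i_k} x_{j_k}) - 2$, which is precisely the fixed-order hard instance $f(\ubar' \circ \vbar')$ after renaming $u_k := x_{i_k}$, $v_k := x_{j_k}$. (One should double-check the "$-2$" and the empty-product conventions are consistent with \Cref{equa: hard instance for roABP}; this is routine.) Hence $g_{\abar}(\xbar) = 1/f^\star(\abar, \xbar) = 1/f(\ubar' \circ \vbar')$ over the Boolean cube, so $g_{\abar}$ (restricted to the relevant variables) coincides with the polynomial $g$ analysed in \Cref{theorem: roABP IPS size lower bounds for fixed order}.

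Second, invoke \Cref{theorem: leading monomials lower bounds} (more precisely the coefficient-dimension bound extracted in the proof of \Cref{theorem: roABP IPS size lower bounds for fixed order}) to conclude $\dim_\F \coeffs{\ubar|\vbar} g_{\abar} \geq 2^{\Omega(n)}$, where the $\ubar$-side plays the role of the $\xbar$-variables and the $\vbar$-side the role of the $\ybar$-variables (the matching $i_k \leftrightarrow j_k$ makes the split $\ubar$-before-$\vbar$ exactly the split needed there). Finally, apply \Cref{lemma: dimension non-increase under assignment} with the $\zbar$-variables as the parameters being assigned to $\abar$: this gives
\[
    \dim_{\F(\zbar)} \coeffs{\ubar|\vbar} g_{\zbar} \;\geq\; \dim_\F \coeffs{\ubar|\vbar} g_{\abar} \;\geq\; 2^{\Omega(n)},
\]
which is the claim. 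The main obstacle — really the only non-bookkeeping point — is verifying that the restriction of $g$ to the Boolean cube, after the $\zbar$-assignment, genuinely agrees with the fixed-order function $1/f(\ubar' \circ \vbar')$ as a function on $\{0,1\}^{\ubar \cup \vbar}$, and that the leftover $\xbar$-variables outside the matching (there are none, since the matching is perfect and covers all of $[2n]$) do not interfere; once the matching is chosen to be perfect this is immediate, but one must be careful that $\ubar \cup \vbar$ exhausts $x_1, \ldots, x_{2n}$ so that no stray variables appear in $g_{\abar}$. After that, everything is a direct citation of the earlier lemmas.
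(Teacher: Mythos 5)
Your proposal is correct and follows essentially the same route as the paper's own proof: both pick the $\zbar$-assignment that selects the natural matching between $\ubar$ and $\vbar$ so that $f^\star$ collapses to the $n$-variable fixed-order instance $f(\ubar\circ\vbar)$, and both then combine the fixed-order coefficient-dimension lower bound from \Cref{theorem: roABP IPS size lower bounds for fixed order} with \Cref{lemma: dimension non-increase under assignment} to transfer the bound to $g_{\zbar}$ over $\F(\zbar)$. The only cosmetic difference is that the paper phrases the final step as a proof by contradiction, whereas you argue directly.
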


\begin{proof}
    We embed $\frac{1}{f(\ubar \circ \vbar)}$ in this instance via a restriction of $\zbar$. Define the $\zbar$-evaluation $\abar \in \{0,1\}^{\binom{2n}{2}}$ to restrict $g$ to sum over those $x_ix_j$ in the natural matching between $\ubar$ and $\vbar$, so that
    \begin{equation*}
        \alpha_{ij} = \begin{cases}
            1 \quad x_i=u_k, x_j = v_k,\\
            0 \quad \text{otherwise.}
        \end{cases}
    \end{equation*}
    It follows that $g(\ubar, \vbar, \abar) = \frac{1}{f(\ubar \circ \vbar)}$ for $\ubar, \vbar \in \{0,1\}^n$. Suppose for contradiction that there exists a partition $\xbar = (\ubar, \vbar)$ with $|\ubar| = |\vbar| =n$, such that
    \[
        \dim_{\F(\zbar)}\coeffs{{\ubar|\vbar}}g_{\zbar}(\ubar, \vbar) < 2^{\Omega(n)}.
    \]
    By \Cref{lemma: dimension non-increase under assignment}, we get the relation between the coefficient dimensions of $g_{\zbar}$ and $g_{\abar}$
    \begin{align*}
        \dim_{\F}\coeffs{{\ubar|\vbar}}g_{\abar}(\ubar, \vbar) &\leq \dim_{\F(\zbar)}\coeffs{{\ubar|\vbar}}g_{\zbar}(\ubar, \vbar)\\
            &< 2^{\Omega(n)},
    \end{align*}
    which contradicts our lower bound for a fixed partition (\Cref{theorem: roABP IPS size lower bounds for fixed order}).
\end{proof}

\begin{corollary}
    \label{corollary: roABP lower bounds for any order}
    Let $n \geq 1$, $m = \binom{2n}{2}$, and $\F_q$ be a finite field with constant characteristics $q$. Let $f \in \F[\wbar]$ be as in \Cref{equa: hard instance for roABP}, and $f^\star(\zbar, \xbar)$ be as in \Cref{equa: hard instance for roABP for any order}. Then, any $\roAlbIPS$ refutation (in any variable order) of $f^\star(\zbar, \xbar)$ requires $2^{\Omega(n)}$-size.
\end{corollary}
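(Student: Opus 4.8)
The plan is to derive the corollary from \Cref{prop: dimesnion lower bounds for any order} using the functional lower bound method (\Cref{thm:func_lb_method}) together with an intermediate-value argument that, for \emph{any} variable order, isolates a cut splitting the $\xbar$-variables into two equal halves.

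First I would apply \Cref{thm:func_lb_method} with $\mathcal{C}$ taken to be the class of polynomials computed by roABPs of width $2^{o(n)}$ in some variable order; it then suffices to show that every polynomial $g(\zbar,\xbar)$ with $g = 1/f^\star(\zbar,\xbar)$ over Boolean assignments requires roABP width $2^{\Omega(n)}$ in every order of its $m+2n$ variables. So I fix such a $g$ and an arbitrary order $\pi$, assume $g$ is computed by a width-$r$ roABP in order $\pi$, and aim to show $r \ge 2^{\Omega(n)}$.

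Next I would locate a balanced cut of $\pi$: as the prefix of $\pi$ grows one variable at a time, the number of $\xbar$-variables it contains increases by $0$ or $1$ and runs from $0$ to $2n$, so some prefix contains exactly $n$ of the $\xbar$-variables. Let $\ubar$, $\vbar$ be the $\xbar$-variables in this prefix and in the complementary suffix (so $|\ubar|=|\vbar|=n$), and let $\zbar = \zbar^L \sqcup \zbar^R$ be the induced split of the $\zbar$-variables. By \Cref{lem:roABP_width_equals_coeff_dim} and the identification of coefficient dimension with coefficient-matrix rank, $r \ge \dim_\F \coeffs{(\ubar,\zbar^L)\mid(\vbar,\zbar^R)}(g)$, and hence $g = \sum_{k=1}^{r} h_k(\ubar,\zbar^L)\,h'_k(\vbar,\zbar^R)$ for suitable polynomials $h_k,h'_k$ (padding with zero terms if the coefficient dimension is smaller than $r$). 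Reading this identity over the rational function field $\F(\zbar)$, each $h_k(\ubar,\zbar^L)$ lies in $\F(\zbar)[\ubar]$ and each $h'_k(\vbar,\zbar^R)$ in $\F(\zbar)[\vbar]$, so the $\F(\zbar)$-coefficient matrix of $g_\zbar$ for the partition $\ubar \mid \vbar$ is a sum of $r$ rank-one matrices and therefore $\dim_{\F(\zbar)}\coeffs{\ubar\mid\vbar}(g_\zbar) \le r$. Since $|\ubar|=|\vbar|=n$, \Cref{prop: dimesnion lower bounds for any order} gives $\dim_{\F(\zbar)}\coeffs{\ubar\mid\vbar}(g_\zbar) \ge 2^{\Omega(n)}$, so $r \ge 2^{\Omega(n)}$, as desired.

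The step I expect to require the most care is the passage from the roABP decomposition of $g$ over $\F$ — where the $\zbar$-variables are genuinely distributed across both sides of the cut — to the $\F(\zbar)$-coefficient-dimension bound of \Cref{prop: dimesnion lower bounds for any order}, in which all $\zbar$-variables are treated as scalars: one must check that this remains a valid polynomial identity after absorbing $\zbar^L$ and $\zbar^R$ into the coefficient ring, and that rank over $\F(\zbar)$ can only drop. The balanced-cut argument and the invocation of \Cref{thm:func_lb_method} are routine. If the $\F(\zbar)$ bookkeeping proves awkward, an alternative is to substitute for $\zbar$ the Boolean assignment $\abar$ realizing the perfect matching $u_k \leftrightarrow v_k$ between $\ubar$ and $\vbar$, apply \Cref{lemma: dimension non-increase under assignment}, and reduce directly to the fixed-order lower bound \Cref{theorem: roABP IPS size lower bounds for fixed order}.
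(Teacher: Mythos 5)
Your proposal is correct and follows essentially the same route as the paper: both apply the functional lower bound method, split the order at the cut containing exactly $n$ of the $\xbar$-variables, pass from a width-$r$ roABP over $\F$ to a bound $\dim_{\F(\zbar)}\coeffs{\ubar\mid\vbar}(g_\zbar)\le r$, and then contradict \Cref{prop: dimesnion lower bounds for any order}. The only cosmetic difference is that you obtain the $\F(\zbar)$-coefficient-dimension bound via the explicit rank-$\le 1$ decomposition $g=\sum_k h_k(\ubar,\zbar^L)h'_k(\vbar,\zbar^R)$ read over $\F(\zbar)$, whereas the paper keeps the argument at the roABP level (pushing $\zbar$ into $\F(\zbar)$ and reapplying \Cref{lem:roABP_width_equals_coeff_dim}); these are two phrasings of the same step, and your fallback via \Cref{lemma: dimension non-increase under assignment} is exactly what \Cref{prop: dimesnion lower bounds for any order} does internally.
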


\begin{proof}
    Consider the polynomial $g \in \F[z_1,\dots,z_m, x_1, \dots, x_{2n}]$ such that
    \[
        g(\zbar, \xbar)  = \frac{1}{f^\star(\zbar, \xbar)}.
    \]
    for $\zbar \in \{0,1\}^m$ and $\xbar \in \{0,1\}^{2n}$. We will show that any roABP computing $g$ requires width $\geq 2^{\Omega(n)}$ in any variable order. The $\roAlbIPS$ lower bound follows immediately from this functional lower bound on $g$ along with the reduction (\Cref{thm:func_lb_method}).

    Suppose that $g(\zbar, \xbar)$ is computable by a width-$r$ roABP in \emph{some} variable order. By pushing the $\zbar$ variables into the fraction field, it follows that $f_{\zbar}$ ($f$ as a polynomial in $\F[\zbar][\xbar]$) is also computable by a width-$r$ roABP over $\F(\zbar)$ in the induced variable order on $\xbar$ (\Cref{fact:roABP}).
    By splitting $\xbar$ in half along its variable order and by the relation between the width of a roABP and its coefficient dimension (\Cref{lem:roABP_width_equals_coeff_dim}), we obtain
    \[
         \dim_{\F(\zbar)}\coeffs{{\ubar|\vbar}}g_{\zbar} < 2^{\Omega(n)},
    \]
     which contradicts the coefficient dimension lower bound of \Cref{prop: dimesnion lower bounds for any order}.
\end{proof}

\subsection{roABP-IPS Lower Bounds by Multiple}
Here we present another roABP-$\IPS$ lower bound over finite fields, but this time using the lower bound for multiples method from \cite{FSTW21}. We introduce the following two lemmas from their paper.

\begin{lemma}[Corollary 6.23 in \cite{FSTW21}]
    \label{lemma: rtABP lower bounds for multiple}
    Let $f \in \F[x_1,\dots, x_n]$ be defined by $f(\xbar) \coloneqq \prod_{i < j} (x_i+x_j +\alpha_{i,j})$ for $\alpha_{i,j} \in \F$. Then for any $0 \neq g \in \F[\xbar]$, $g \cdot f$ requires width-$2^{\Omega(n)}$ as a read-twice oblivious ABP.
\end{lemma}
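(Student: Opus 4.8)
The plan is to reproduce the structure of the \cite{FSTW21} argument, which is an instance of their \emph{lower bounds for multiples} method. The first step is a reduction, via the standard read-$k$ machinery of \cite{FSTW21}, from read-twice ABP width to a coefficient/evaluation-dimension lower bound. Given a width-$r$ read-twice oblivious ABP computing $g\cdot f$, for each $i$ rename the second occurrence of $x_i$ (the layer that reads it for the second time) to a fresh variable $x_i'$; this yields a width-$r$ \emph{read-once} oblivious ABP on the $2n$ variables $\xbar\cup\xbar'$ whose polynomial $h'$ satisfies $h'|_{\xbar'=\xbar}=g\cdot f$. Cutting this roABP at a layer boundary and using Nisan's characterization (\Cref{lem:roABP_width_equals_coeff_dim}) bounds $r$ below by the coefficient dimension of $h'$ across the induced split, and then substituting $x_i'=x_i$, together with a rectangular/averaging argument over cut positions, transfers this to an $\Omega(n)$-balanced variable partition $\xbar=\ubar\sqcup\vbar$ with $\dim\evals{\ubar\mid\vbar,\F}(g\cdot f)\le r^{O(1)}$ (cf.\ \Cref{lem:eval_dim_lb_coeff_dim}). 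So it suffices to prove the \emph{multiples-robust} dimension bound: for every nonzero $g\in\F[\xbar]$ and every $\Omega(n)$-balanced partition $\xbar=\ubar\sqcup\vbar$, $\dim\evals{\ubar\mid\vbar,\F}(g\cdot f)\ge 2^{\Omega(n)}$.

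For this core claim I would exploit that $f$ factors as $f=p\cdot f_{\ubar}\cdot f_{\vbar}$, where $p=\prod_{x_i\in\ubar,\ x_j\in\vbar}(x_i+x_j+\alpha_{i,j})$ is the complete bipartite product of the linear forms straddling the partition, $f_{\ubar}\in\F[\ubar]$, and $f_{\vbar}\in\F[\vbar]$. The plan is to produce a family of $2^{\Omega(n)}$ assignments $\beta$ to $\vbar$ for which the restricted polynomials $(g\cdot f)(\ubar,\beta)\in\F[\ubar]$ are $\F$-linearly independent, through a triangular (``diagonal'') argument: for each pair $\beta\ne\beta'$ in the family one of the straddling linear forms $x_i+x_j+\alpha_{i,j}$ can be forced to vanish at a chosen point $\gamma\in\F^{|\ubar|}$, so that $f$—hence $g\cdot f$—vanishes at $(\gamma,\beta)$ (or at $(\gamma,\beta')$), while the matching diagonal evaluation stays nonzero, the latter needing only $g\cdot f\not\equiv 0$ together with a Schwartz--Zippel-type estimate and the bipartite product providing enough free coordinates to realize the pattern. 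Equivalently, one may invoke \Cref{lem:span} and count the distinct leading monomials, under a fixed monomial order, of the restricted polynomials, which the bipartite factor forces to be $2^{\Omega(n)}$ in number.

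I expect the main obstacle to be this last step—showing the dimension measure is \emph{robust under multiplication by the arbitrary polynomial $g$}. This is the characteristic difficulty of the lower-bounds-for-multiples method (in contrast to the functional method, which applies to a fixed hard polynomial): since the zero set of $g$ is uncontrolled, one must arrange the $\beta$-family and the witnessing points $\gamma$ so that every off-diagonal vanishing is caused by a \emph{factor of $f$}, while keeping all diagonal entries clear of the zero set of $g$. For the present lemma we simply import this argument from \cite{FSTW21}, where it is established over an arbitrary field $\F$; it is only downstream—when this lemma is combined with a companion lemma to build an unsatisfiable system $f,g,\xbar^2-\xbar$ over the fixed finite field $\F_q$ and to deduce the $\roAlbIPS$ size lower bound via \Cref{thm:func_lb_method}—that our choice of $g$, different from that of \cite{FSTW21}, removes the large-characteristic hypothesis.
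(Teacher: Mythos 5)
The paper does not prove this lemma at all: it is imported verbatim as Corollary~6.23 of \cite{FSTW21} and then used as a black box in the proof of \Cref{theorem: lower bounds by multiple}. Your proposal, after sketching what you take to be the \cite{FSTW21} argument, also ends by importing the result, so you and the paper agree on the treatment, and you correctly locate the paper's actual contribution in the companion choice of $g$ (which removes the large-characteristic hypothesis downstream) rather than in this lemma.

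That said, the reduction step in your sketch is not right as stated, and it is worth knowing why. Renaming second occurrences does produce a width-$r$ roABP on $\xbar\cup\xbar'$ computing some $h'$ with $h'|_{\xbar'=\xbar}=g\cdot f$, and Nisan's characterization bounds $\dim\coeffs{A\mid B}(h')\le r$ for the induced partition $A\sqcup B$ of the $2n$ variables. But after the back-substitution $x_i'=x_i$, the sets $A,B$ need not induce a partition of $\xbar$ at all, let alone an $\Omega(n)$-balanced one: with variable order $x_1,\dots,x_n,x_1,\dots,x_n$, \emph{every} cut position leaves one of the two sides with no variable read twice, so the collapsed split is degenerate, and no averaging over cut positions fixes this. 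The decomposition $g\cdot f=\sum_{k\le r}\tilde a_k(\xbar)\tilde b_k(\xbar)$ one actually gets has overlapping variable supports, and the FSTW21 argument has to contend with $h'$ directly rather than with a clean partition of $\xbar$. Relatedly, your appeal to a ``Schwartz--Zippel-type estimate'' to keep the diagonal entries of the triangular pattern nonzero would fail precisely over the small finite fields $\F_q$ that this paper cares about; the field-agnostic route is the leading-monomial count you mention only parenthetically via \Cref{lem:span}, and that is the version to keep if you wish to present a self-contained sketch. None of this affects the correctness of the paper's (and your) ultimate move of citing \cite{FSTW21}, but the sketch as written would not compile into a proof.
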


\begin{lemma}[Lemma 7.1 in \cite{FSTW21}]
    \label{lemma: IPS refutation is a multiple}
    Let $f,\overline{g}, \xbar^2- \xbar \in \F[x_1,\dots, x_n]$ be an unsatisfiable systems of equations, where $\overline{g}, \xbar^2- \xbar$ is satisfiable. Let $C \in \F[\xbar, y, \zbar, \wbar]$ be an $\IPS$ refutation of $f, \overline{g}, \xbar^2-\xbar$. Then
    \[
        1-C(\xbar, 0 , \overline{g}, \xbar^2- \xbar)
    \]
    is a nonzero multiple of $f$.
\end{lemma}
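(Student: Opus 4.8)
The plan is to verify the two assertions separately — divisibility by $f$, and then nonvanishing — each following directly from one of the two defining conditions of an $\IPS$ refutation in \Cref{def:IPS}. Throughout, recall that here $y$ is the single placeholder for $f$, the block $\zbar$ consists of placeholders for $\overline{g}$, and the block $\wbar$ of placeholders for the Boolean axioms $\xbar^2-\xbar$.

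\emph{Divisibility.} First I would carry out the substitutions $\zbar \mapsto \overline{g}(\xbar)$ and $\wbar \mapsto \xbar^2-\xbar$, and view the resulting polynomial $C(\xbar, y, \overline{g}(\xbar), \xbar^2-\xbar)$ as a univariate polynomial in the placeholder $y$ with coefficients in $\F[\xbar]$. For any polynomial $p(y)$ over a commutative ring and any two ring elements $a,b$, the element $a-b$ divides $p(a)-p(b)$; applying this with $a=f(\xbar)$ and $b=0$ shows that $f(\xbar)$ divides $C(\xbar,f(\xbar),\overline{g}(\xbar),\xbar^2-\xbar) - C(\xbar,0,\overline{g}(\xbar),\xbar^2-\xbar)$ in $\F[\xbar]$. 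Since $C$ is an $\IPS$ refutation, its second defining condition gives $C(\xbar,f(\xbar),\overline{g}(\xbar),\xbar^2-\xbar)=1$, so the displayed difference equals exactly $1 - C(\xbar,0,\overline{g}(\xbar),\xbar^2-\xbar)$. Hence $f$ divides $1 - C(\xbar,0,\overline{g}(\xbar),\xbar^2-\xbar)$.

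\emph{Nonvanishing.} Next I would use the first defining condition, $C(\xbar,0,0,0)=0$ as a formal polynomial identity. This says that the polynomial $C(\xbar,0,\zbar,\wbar) \in \F[\xbar,\zbar,\wbar]$ has vanishing constant term with respect to the $\zbar,\wbar$-variables, hence lies in the ideal they generate. Substituting $\zbar \mapsto \overline{g}(\xbar)$ and $\wbar \mapsto \xbar^2-\xbar$ then places $C(\xbar,0,\overline{g}(\xbar),\xbar^2-\xbar)$ inside the ideal $(\overline{g}(\xbar),\xbar^2-\xbar) \subseteq \F[\xbar]$. If $1 - C(\xbar,0,\overline{g}(\xbar),\xbar^2-\xbar)$ were the zero polynomial, we would obtain $1 \in (\overline{g}(\xbar),\xbar^2-\xbar)$, and evaluating such an explicit ideal membership at a Boolean assignment satisfying $\overline{g}$ — one exists because $\overline{g},\xbar^2-\xbar$ is assumed satisfiable — would give $1=0$, a contradiction. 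Therefore $1 - C(\xbar,0,\overline{g}(\xbar),\xbar^2-\xbar)$ is a nonzero multiple of $f$, as claimed.

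I do not anticipate a genuine obstacle: this is essentially the argument of \cite{FSTW21}, and nothing particular to finite fields enters (in fact linearity of $C$ in the placeholders is not even needed). The only point to watch is the order of operations — the ``difference in $y$'' divisibility step must be applied before, or independently of, the $\zbar,\wbar$ substitutions, and the first $\IPS$ axiom is used only to place the $y=0$ slice inside the ideal $(\zbar,\wbar)$; conflating the roles of the three blocks of placeholder variables is the one thing that could derail a careless write-up. With this lemma available, the subsequent roABP-$\IPS$ lower bound by multiples will follow by taking $f$ of the product form of \Cref{lemma: rtABP lower bounds for multiple} together with a choice of $\overline{g}$ for which the substituted refutation remains a read-twice oblivious ABP, so that \Cref{lemma: rtABP lower bounds for multiple} applied to the nonzero multiple $1 - C(\xbar,0,\overline{g},\xbar^2-\xbar)$ forces $\exp(\Omega(n))$ size.
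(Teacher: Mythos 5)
Your proof is correct, and the two pieces fit together exactly as you describe: the $y$-slice difference gives divisibility by $f$ once the second IPS condition is invoked, and the first IPS condition places $C(\xbar,0,\overline{g},\xbar^2-\xbar)$ inside the ideal $(\overline{g},\xbar^2-\xbar)$, so that nonvanishing follows by evaluating at the promised Boolean satisfying assignment of $\overline{g}$. The paper itself does not reprove this lemma — it is imported verbatim from \cite{FSTW21} (Lemma 7.1 there) — and your argument is the same standard one given in that source, so there is nothing to reconcile.
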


From here, consider the field $\F_q$ for some constant $q$, and let the following two polynomials be our hard system of equations.
\begin{equation}
    f \coloneqq \prod_{i < j} (x_i + x_j +1), \quad
    g \coloneqq \prod_{i=1}^n (1-x_i) -1.
\end{equation}

Note that our $f$ above is the same as in \cite{FSTW21}, but our $g$ differs (they use $\sum_{i=1}^n x_i - n$, which is why they must work in fields of characteristic $>n$). We now state our lower bound.

\begin{theorem}
    \label{theorem: lower bounds by multiple}
    Let $\F_q$ be a finite field of constant characteristic $q$. Let $f,g \in \F[x_1,\dots, x_n]$, where $f \coloneqq \prod_{i < j} (x_i + x_j +1)$ and $g \coloneqq \prod_{i=1}^n (1-x_i) -1$. Then, the system of equations $f, g, \xbar^2- \xbar$ is unsatisfiable, and any $\roAlbIPS$ refutation (in any order of the variables) requires size $\exp\big({\Omega(n)}\big)$.
\end{theorem}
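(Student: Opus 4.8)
The plan is to follow the structure of the \cite{FSTW21} lower-bound-by-multiples argument, reusing \Cref{lemma: rtABP lower bounds for multiple} and \Cref{lemma: IPS refutation is a multiple} essentially verbatim and only redoing the combinatorial step that depended on large characteristic. First I would check unsatisfiability of the system $f, g, \xbar^2 - \xbar$: over a Boolean assignment $\alpha$, $g(\alpha) = 0$ forces $\prod_i (1-\alpha_i) = 1$, i.e.\ $\alpha = \bar 0$; but then every factor $x_i + x_j + 1$ of $f$ evaluates to $1$, so $f(\bar 0) = 1 \neq 0$. Hence no common Boolean root, while $g, \xbar^2-\xbar$ alone is clearly satisfiable (take $\alpha = \bar 0$, which kills $g$). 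This is where the choice of $g = \prod_i(1-x_i) - 1$ rather than $\sum_i x_i - n$ pays off: unsatisfiability now holds over any field of characteristic $>2$ (indeed any field where $f$ has no Boolean root, which is automatic here), with no constraint relating the characteristic to $n$.

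Next I would invoke \Cref{lemma: IPS refutation is a multiple}: given a $\roAlbIPS$ refutation $C$ of $f, g, \xbar^2 - \xbar$, the polynomial $h := 1 - C(\xbar, 0, \bar g, \xbar^2 - \xbar)$ is a nonzero multiple of $f$, say $h = q \cdot f$ with $q \neq 0$. Now I need to argue that $h$ — equivalently $q \cdot f$ — requires large roABP width in every variable order. By \Cref{lemma: rtABP lower bounds for multiple}, $q \cdot f$ requires width $2^{\Omega(n)}$ as a \emph{read-twice} oblivious ABP in every order (this lemma applies directly: $f = \prod_{i<j}(x_i + x_j + \alpha_{i,j})$ with all $\alpha_{i,j} = 1$). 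So it remains to bound the read-twice roABP complexity of $h$ in terms of the size of the $\roAlbIPS$ refutation $C$. The polynomial $h$ is built from a substitution instance of $C$ into the axioms: substituting the placeholder variable for $f$ by $0$, the placeholders $\bar z$ for the $g$-axiom by $g$ itself (a read-once, in fact low-width roABP in any order — it is a product of univariates $(1-x_i)$ minus a constant), and the Boolean-axiom placeholders $\bar w$ by $x_i^2 - x_i$. Since $C$ is a roABP (in its own variable order on all of $\xbar$ together with the placeholders) of size $r$, and each axiom substituted in is a roABP of small width in a \emph{compatible} order, the composed object $h$ is computed by an \emph{oblivious} ABP that reads each $x_i$ a bounded (constant, coming from the $x_i^2 - x_i$ and the degree-$\le 1$ dependence on the $g$-placeholders) number of times — i.e.\ a read-$O(1)$ oblivious ABP — of width $\poly(r)$. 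A read-$O(1)$ oblivious ABP of width $W$ can be converted into a read-twice oblivious ABP of width $W^{O(1)}$ by grouping consecutive repeated layers, so $h$ has read-twice roABP width $\poly(r)$. Combining with the $2^{\Omega(n)}$ lower bound gives $\poly(r) \geq 2^{\Omega(n)}$, hence $r \geq 2^{\Omega(n)}$, and unwinding the size convention for roABPs yields the claimed $\exp(\Omega(n))$ size bound on $C$.

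The main obstacle I anticipate is the bookkeeping in the substitution step: one must be careful that after substituting $g$ and the $x_i^2 - x_i$ for the placeholder variables, the resulting ABP really is oblivious with a bounded read count, and that the variable order on the $\xbar$ induced from $C$'s order is consistent across all the substituted pieces (so that one genuinely gets a read-$O(1)$ \emph{oblivious} ABP and not merely a general ABP). Concretely, because $C$ is linear in the $g$-placeholder $z$ (it is an $\lbIPS$ refutation, hence individual degree $\le 1$ in the axiom placeholders) and $g$ is itself a read-once oblivious ABP, the composition only multiplies the per-layer read count by a constant; the $x_i^2 - x_i$ substitutions add at most two more reads per variable. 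This is exactly the kind of closure argument recorded in \Cref{fact:roABP}, extended from roABPs to read-$k$ oblivious ABPs, and it is the one place where the argument must be spelled out rather than quoted. Everything else is a direct transcription of \cite{FSTW21}, with the single point of novelty being that the new $g$ makes the unsatisfiability argument characteristic-robust, so the whole chain goes through over any fixed finite field $\F_q$ with $q > 2$.
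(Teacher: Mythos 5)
Your overall plan coincides with the paper's: verify unsatisfiability via $g(\bar 0)=0$, $f(\bar 0)=1$; apply \Cref{lemma: IPS refutation is a multiple} to obtain a nonzero multiple of $f$; and bound the read-twice oblivious ABP complexity of that multiple in terms of the refutation size so that \Cref{lemma: rtABP lower bounds for multiple} yields the lower bound. The unsatisfiability argument and the identification of the new $g$ as the source of characteristic-robustness are exactly as in the paper.

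There is, however, a genuine gap in your central step. You claim that the composed object $h=1-C(\xbar,0,g,\xbar^2-\xbar)$ is a read-$O(1)$ oblivious ABP of polynomial width, and then that ``a read-$O(1)$ oblivious ABP of width $W$ can be converted into a read-twice oblivious ABP of width $W^{O(1)}$ by grouping consecutive repeated layers.'' This conversion is false in general: a read-$k$ oblivious ABP has its $k$ reads of $x_i$ interleaved with other variables' layers (e.g.\ order $x_1,x_2,x_1,x_2,\ldots$), so the repeated layers are not consecutive and cannot be merged without a potentially exponential width blowup. The statement is also unnecessary, because the correct composition gives read-twice directly. Concretely: since $C$ is $\lbIPS$, $\widehat C(\xbar,0,z,\wbar)$ has degree $\le 1$ in $z$, so writing $C(\xbar,0,z,\wbar)=C_0(\xbar,\wbar)+C_1(\xbar,\wbar)z$ and interpolating over $z$ (two evaluations) yields roABPs for $C_0,C_1$ of width $\poly(s)$ in the order induced on $\xbar,\wbar$, with the $z$-layer removed entirely; one cannot simply substitute the multivariate $g$ into $C$'s $z$-layer, as that destroys obliviousness, which is why the interpolation step is essential. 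Next, $g$ is an roABP in the induced $\xbar$-order, and the product $C_1\cdot g$ is an roABP of width $\poly(s,n)$ by the standard tensor construction (Fact~\ref{fact:roabp_width}); note this does \emph{not} add extra reads of $x_i$, since the two univariate edge labels at the $x_i$-layer are multiplied within a single layer. Thus $1-C_0-C_1g$ is an roABP on $\xbar,\wbar$. Finally, substituting $w_i\leftarrow x_i^2-x_i$ turns each $w_i$-layer into a single additional $x_i$-layer (one more read, not two: the edge label $p(w_i)$ becomes $p(x_i^2-x_i)$, still a univariate polynomial in $x_i$), giving a read-\emph{twice} oblivious ABP of width $\poly(s,n)$. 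With that correction the chain closes exactly as you intend, and the rest of your reasoning matches the paper.
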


\begin{proof}
    The system $g(\xbar)=0$ and $\xbar^2- \xbar=0$ is satisfiable and has the unique satisfying assignment $\overline{0}$. However, this single assignment does not satisfy $f$ as $f(\overline{0}) = \prod_{i<j} (0+0+1) =1 \neq 0$, so the entire system is unsatisfiable. Thus by \Cref{lemma: IPS refutation is a multiple}, for any $\roAlbIPS$ refutation $C(\xbar,y,z, \wbar)$ of $f, g, \xbar^2- \xbar$, $1- C(\xbar, 0 , g, \xbar^2-\xbar)$ is a nonzero multiple of $f$.
    
    Let $s$ be the size of $C$ as an roABP. We now argue that $1-C(\xbar, 0 ,g, \xbar^2- \xbar)$ has a small read-\emph{twice} oblivious ABP. First, note that we can expand $C(\xbar, 0, z, \wbar)$ into powers of $z$: 
    $$
    C(\xbar, 0 ,z, \wbar)  = C_0(\xbar, \wbar) + C_1(\xbar, \wbar) z.
    $$
    There are only two terms because $C(\xbar, y,z, \wbar)$ is a $\roAlbIPS$ refutation, implying the degree of $z$ in $C(\xbar, y, z, \wbar)$ is at most $1$. Each $C_i(\xbar,\wbar)$ has a $\poly(s)$-size roABP (in the order of the variables of $C$ where $z$ is omitted), as we can compute $C_i$ via interpolation over $z$ (since each evaluation preserves roABP size by \Cref{fact:roABP}). Furthermore, as $g$ can also be computed by a $\poly(n)$-size roABP, we see that 
    $$
    1-C(\xbar,0,g ,\wbar) = 1- C_0(\xbar,\wbar) -C_1(\xbar,\wbar) g
    $$
    has a $\poly(s,n)$-size roABP in the order of variables that $C$ induces on $\xbar, \wbar$. As each Boolean axiom $x_i^2- x_i$ only refers to a single variable, substituting $\wbar \leftarrow \xbar^2-\xbar$ for $1-C(\xbar, 0 ,g, \wbar)$ in the roABP will preserve obliviousness, but now each variable is read twice. Therefore, $1-C(\xbar, 0 ,g, \xbar^2- \xbar)$ has a $\poly(s,n)$-size read-twice oblivious ABP. Finally, using the fact that a nonzero multiple of $f$ requires $\exp(\Omega(n))$ size to be computed as read-twice oblivious ABPs (\Cref{lemma: rtABP lower bounds for multiple}), it follows that $\poly(s,n) \geq \exp(\Omega(n))$, implying $s \geq \exp(\Omega(n))$ as desired.
\end{proof}

\subsection{Limitations}

The following discusses the limitations of the functional lower bound method for roABP-$\IPS$. Namely, we show that it is impossible to get a non-placeholder functional lower bound against roABP-$\IPS$ over finite fields, even if the refutation is restricted to a multilinear polynomial. We first recall this fact about roABPs.

\begin{fact}
\label{fact:roabp_width}
If $f,g\in\mathbb{F}[\overline{x}]$ are computable by width-$r$ and width-$s$ roABPs respectively, then
\begin{itemize}
    \item $f+g$ is computable by a width-$(r+s)$ roABP.
    \item $f\cdot g$ is computable by a width-$(rs)$ roABP.
\end{itemize}
\end{fact}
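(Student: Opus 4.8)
The plan is to establish both closure properties by direct constructions on the underlying layered graphs, using the hypothesis (implicit in the statement, and necessary for it to hold) that the roABPs for $f$ and $g$ are written in the \emph{same} variable order, say $x_1 < \cdots < x_n$; both then have exactly $n+1$ layers $V_0, \ldots, V_n$, and for each $1 \le \ell \le n$ all edge labels between $V_{\ell-1}$ and $V_\ell$ are univariate in the same variable $x_{i_\ell}$.

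For $f+g$, I would take the disjoint union of the two layered graphs and then identify their two source nodes into a single source $s$ and their two sink nodes into a single sink $t$, leaving every other edge and every edge label unchanged. The result is again layered with $n+1$ layers; the interior layers carry at most $r+s$ nodes, while the source and sink layers carry a single node, so the width is at most $r+s$. Obliviousness and the read-once property are inherited layer by layer from the two constituents --- this is exactly where the common-order hypothesis is used, since layer $\ell$ of both original programs reads $x_{i_\ell}$. Because the new program has no edges joining the two components except at $s$ and $t$, and every path is monotone in the layers, its $s$--$t$ paths are precisely the disjoint union of the $s$--$t$ paths of the two original programs, with unchanged weights, so it computes $f+g$.

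For $f \cdot g$, I would form the layer-wise tensor product: put $V_\ell := V^f_\ell \times V^g_\ell$ with source $(s_f, s_g)$ and sink $(t_f, t_g)$, and include an edge from $(u,u')$ to $(v,v')$ labeled by the product of the $u \to v$ label in the $f$-program and the $u' \to v'$ label in the $g$-program whenever both of those edges are present. Each such label is a product of two univariate polynomials in $x_{i_\ell}$, hence itself univariate in $x_{i_\ell}$, so the construction is again oblivious and read-once, of width at most $rs$. The $s$--$t$ paths of the product biject with pairs consisting of an $s$--$t$ path of the $f$-program and an $s$--$t$ path of the $g$-program, and the weight of such a product path is the product of the two path weights; summing over all product paths therefore yields $(\sum_P w(P))(\sum_Q w(Q)) = f \cdot g$.

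There is no genuine obstacle here --- this is a standard closure fact already used in \cite{FSTW21} --- but two points deserve a word of care. First, the common-variable-order hypothesis is essential: if layer $\ell$ of the two programs read different variables, neither construction would preserve the read-once structure. Second, the tensor-product construction adds the individual degrees of $f$ and $g$, so it affects the degree parameter (and hence the notion of \emph{size}) even though the \emph{width} bound claimed in the statement is preserved; since the statement asserts only width bounds, this is immaterial.
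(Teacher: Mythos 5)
Your proof is correct and gives the standard constructions (parallel composition with merged endpoints for $f+g$, layer-wise tensor product for $f\cdot g$), and the paper itself states this as a fact without any proof, so there is no alternative argument to compare against. Your side remarks are also well placed: the common-variable-order hypothesis is indeed necessary for both constructions to remain oblivious and read-once, and your note that the product construction increases the degree parameter (hence ``size'' in the paper's convention $n\cdot r\cdot d\cdot D$) while only the width bounds are claimed is a correct and useful caveat.
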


Now, as discussed in \Cref{sec:upper_bound}, for a given unsatisfiable instance $f$ in finite field $\mathbb{F}_p$, by Fermat's Little Theorem we have the following refutation:  
\begin{equation}
\label{eq:finite_ub}
    f(\overline{x})^{p-2}f(\overline{x})=1 \mod \overline{x}^2-\overline{x}.
\end{equation}

Thus, if $f$ is easy for roABPs then by \Cref{fact:roabp_width}, so is $f(\overline{x})^{p-2}$ (as $p$ is constant), so in this case it is impossible to acheive a lower bound on roABP-$\IPS$ refutations. Now, consider the case where refutations must be multilinear (that is, an analogue to the constant-depth multilinear $\IPS$ proof system from \Cref{sec:cd-IPS}). In this proof system, the refutation in \Cref{eq:finite_ub} cannot work, as it is not multilinear. However, it is shown in \cite{FSTW21} that roABPs are closed under multilinearization. We restate their result for concreteness.

\begin{proposition}[Proposition 4.5 from \cite{FSTW21}]
    Let $f\in\mathbb{F}[\overline{x}]$ be computable by a width-$r$ roABP, in order of the variables $x_1<\cdots<x_n$, and with individual degrees at most $d$. Then, ml$(f)$ has a poly$(r,n,d)$-explicit width-$r$ roABP in order of the variables $x_1<\cdots<x_n$.
\end{proposition}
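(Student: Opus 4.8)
The plan is to exploit the defining feature of a roABP: each variable is read in exactly one layer, so multilinearization modulo $\overline{x}^2-\overline{x}$ acts independently on each layer. First I would write the width-$r$ roABP for $f$ in matrix form, $f = v_0^{T}\, M_1(x_1)\, M_2(x_2)\cdots M_n(x_n)\, v_n$, where each $M_i(x_i)$ is an $r\times r$ matrix whose $(a,b)$ entry is the label of the edge from the $a$-th node of layer $i-1$ to the $b$-th node of layer $i$ (a univariate polynomial in $x_i$ of degree at most $d$), and $v_0,v_n$ are the source and sink indicator vectors.

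For $e\in\mathbb{N}$ let $M_i^{(e)}$ denote the scalar coefficient matrix of $x_i^{e}$ in $M_i(x_i)$. Since matrix multiplication is multilinear in its arguments and $x_i$ occurs only in the factor $M_i$, the coefficient in $f$ of a monomial $\prod_{i} x_i^{e_i}$ equals $v_0^{T} M_1^{(e_1)}\cdots M_n^{(e_n)} v_n$. Grouping these monomials according to their image under multilinearization and invoking multilinearity once more yields the key identity
\[
\ml(f) \;=\; v_0^{T}\,\widetilde M_1(x_1)\cdots \widetilde M_n(x_n)\, v_n, \qquad \widetilde M_i(x_i):=(1-x_i)\,M_i(0)+x_i\,M_i(1),
\]
i.e.\ multilinearization commutes with the layered product and is applied entrywise, replacing the constant term of $M_i$ by $M_i^{(0)}=M_i(0)$ and the $x_i$-coefficient by $\sum_{e\ge 1}M_i^{(e)}=M_i(1)-M_i(0)$.

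The right-hand side is, by inspection, a roABP in the same variable order $x_1<\cdots<x_n$: the underlying layered DAG is unchanged and only the edge labels are modified, so the width is still $r$; the new labels are affine in their variable, so the individual degrees are at most $1$ and the roABP computes the multilinear polynomial $\ml(f)$. It is $\poly(r,n,d)$-explicit because each label matrix $\widetilde M_i$ is obtained from $M_i$ by two evaluations (at $0$ and at $1$) followed by $O(r^2)$ additions. The only point requiring care---the ``obstacle'', such as it is---is the grouping identity above, which relies essentially on the read-once property that each $x_i$ sits in a single layer; the same argument would incur a width blow-up for read-$k$ oblivious ABPs with $k\ge 2$.
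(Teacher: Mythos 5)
Your proof is correct. The paper itself states this as a cited result (Proposition 4.5 of FSTW21) without reproducing the proof, but your argument — expressing the roABP as an iterated matrix product $v_0^T M_1(x_1)\cdots M_n(x_n) v_n$, exploiting multilinearity of the matrix product together with the read-once structure to push $\ml$ into each factor, and replacing each $M_i(x_i)$ by $(1-x_i)M_i(0)+x_iM_i(1)$ — is exactly the standard argument used in FSTW21, so there is nothing to add.
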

Thus, we simply consider the multilinear polynomial $g=\text{ml}(f^{p-2})$ to be our refutation (as $g$ agrees with $f^{p-2}$ over the Boolean cube, implying \Cref{eq:finite_ub} holds). By the above proposition, since $f^{p-2}$ has a small roABP computing it, so does $g$. This leads to the following theorem.

\begin{theorem}
\label{thm:flbm_limitation}
    The functional lower bound method cannot establish non-placeholder lower bounds on the size of roABP-$\IPS$ refutations when working in finite fields.
\end{theorem}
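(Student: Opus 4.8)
The plan is to show that the functional lower bound method, as formalized in \Cref{thm:func_lb_method}, can only succeed against $\roAlbIPS$ when \emph{every} polynomial $g$ agreeing with $1/f$ on $\{0,1\}^n$ is hard for roABPs, and then to observe that the \emph{non-placeholder} hypothesis — namely that the axiom $f$ itself has a small roABP — directly produces such a $g$ that is easy. First I would recall the exact shape of the method: to certify that $f$ (which must have no Boolean roots for the method to apply) and $\overline{x}^2-\overline{x}$ admit no small $\roAlbIPS$ refutation, one exhibits a circuit lower bound asserting that no $g\in\mathbb{F}[\overline{x}]$ with $g(\overline{x}) = 1/f(\overline{x})$ for all $\overline{x}\in\{0,1\}^n$ is computable by a small (resp.\ small multilinear) roABP. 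Hence, to defeat the method it suffices to construct a single such $g$ that \emph{is} so computable.

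Next I would invoke Fermat's little theorem over $\mathbb{F}_q$. Since the method requires $f$ to have no Boolean roots, $f(\alpha)\in\mathbb{F}_q^{\times}$ for every $\alpha\in\{0,1\}^n$, so $f(\alpha)^{q-2} = f(\alpha)^{-1}$ for all such $\alpha$; thus $f^{q-2}$ already agrees with $1/f$ on the Boolean cube. Under the non-placeholder hypothesis that $f$ has a width-$r$ roABP in the relevant variable order, \Cref{fact:roabp_width} yields a width-$r^{q-2}$ roABP for $f^{q-2}$, which is $\poly(r)$ since $q$ is constant; one also notes that the individual degrees of $f^{q-2}$ are at most $(q-1)$ times those of $f$, hence polynomially bounded, which is needed for the next step.

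For the statement with the refutation restricted to a multilinear polynomial, the witness $f^{q-2}$ is not itself multilinear, so I would instead take $g := \ml(f^{q-2})$. This $g$ still agrees with $f^{q-2}$, and therefore with $1/f$, on $\{0,1\}^n$, and by Proposition~4.5 of \cite{FSTW21} (roABPs are closed under multilinearization with only polynomial blow-up in width, length, and degree) $g$ retains a $\poly(r,n)$-size roABP in the same variable order. In either regime we have exhibited a small-roABP $g$ equal to $1/f$ over the Boolean cube, contradicting the premise of the functional lower bound method; and the argument applies verbatim in the ``any order'' setting, because there the non-placeholder hypothesis supplies a small roABP for $f$ in every order and both closure facts preserve the order. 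Hence the method cannot establish a non-placeholder lower bound, which is the claim.

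The step I expect to demand the most care is not any individual inequality — Fermat's theorem and the two roABP closure facts are routine — but pinning down the formal statement: making precise what ``non-placeholder'' means for $\roAlbIPS$ (small roABP for $f$ in the fixed order, versus in all orders) and phrasing ``the functional lower bound method cannot establish'' so that the contradiction with \Cref{thm:func_lb_method} is clean, in line with the formulation used in \cite[Theorem~1 in full version]{HLT24}. The only genuine technical point to double-check is the individual-degree bound on $f^{q-2}$, so that Proposition~4.5 of \cite{FSTW21} applies with polynomial parameters.
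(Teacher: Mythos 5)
Your proposal is correct and matches the paper's argument essentially verbatim: both use Fermat's little theorem to identify $f^{q-2}$ as a Boolean-cube inverse of $f$, then closure of roABPs under products (\Cref{fact:roabp_width}) to keep $f^{q-2}$ small when $f$ is, and closure under multilinearization (Proposition~4.5 of \cite{FSTW21}) to handle the multilinear regime via $\ml(f^{q-2})$. Your added remark about checking the individual-degree bound on $f^{q-2}$ so that the multilinearization has polynomial parameters is a reasonable precision the paper leaves implicit, but it does not change the substance of the argument.
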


\section{Towards Lower Bounds for CNF Formulas}\label{sec:translation-lemmas}

    We now turn to the problem of establishing lower bounds for CNF formulas. In the previous sections, the lower bounds we presented were for algebraic instances. In contrast, we show that an $\IPS$ lower bound against an unsatisfiable set of one or more polynomial equations over finite fields implies the existence of a hard Boolean instance. However, this implication requires a subsystem of $\IPS$ which can reason with large degree, therefore our results do not meet this criteria. Accordingly, the existence of any hard instance for $\IPS$ over finite fields (even when the equations are given as algebraic circuits), allowing refutations of possibly exponential total degree, implies the existence of hard Extended Frege instances. Similarly, if the hard instance is only against $\IPS$ refutations of polynomial total degree, then there are hard instances against Frege. 
    \medskip 
    
    We work in finite field $\mathbb{F}_q$ where $q$ is a constant (independent of the size of the formulas and their number of variables). When we work with $\CNF$ formulas in $\IPS$ we assume that the $\CNF$ formulas are translated according to the following definition.
    \begin{definition}[Algebraic translation of CNF formulas]
        \label{def: Algebraic translation of CNF formulas}
        Given a $\CNF$ formula in the variables $\overline{x}$, every clause $\bigvee_{i \in P}x_i \lor \bigvee_{j \in N} \neg x_j$ is translated into $\prod_{i \in P}(1-x_i) \cdot \prod_{j \in N}x_j=0$. (Note that these expressions are represented as algebraic circuits, where the products are not expanded.)
    \end{definition}
    Notice that a $\CNF$ formula is satisfiable by 0-1 assignment if and only if the assignment satisfies all the equations in the algebraic translation of the $\CNF$. The following definitions are taken from \cite{ST25}, and we supply them here for completeness. 
    \begin{definition}[Algebraic extension axioms and unary bits \cite{ST25}] 
        \label{def: algebraic extension axioms and unary bits}
        Given a circuit $C$ and a node $g$ in $C$, we call the equation
        \begin{equation*}
            x_g = \sum_{i=0}^{q-1} i \cdot x_{g_i}
        \end{equation*}
        the algebraic extension axiom of $g$, with each variable $x_{g_i}$ being the $i$th unary-bit of $g$.
    \end{definition}
    
    \begin{definition}[Plain CNF encoding of constant-depth algebraic circuit $\cnf(C(\overline{x}))$ \cite{ST25}]
        \label{def: plain CNF encoding of constant-depth algebraic circuit cnf(C(x))}
        Let $C(\overline{x})$ be a circuit in the variables $\overline{x}$. The plain $\CNF$ encoding of the circuit $C(\overline{x})$, denoted $\cnf(C(\overline{x}))$ consists of the following $\CNF$s in the unary-bits variables of all the gates in $C$ and extra extension variables (and only in the unary-bit variables):
        \begin{enumerate}
            \item If $x_i$ is an input node in $C$, the plain $\CNF$ encoding of $C$ uses the variables $x_{x_{i0}},\cdots,x_{x_{i(q-1)}}$ that are the unary-bits of $x_i$, and includes clauses ensuring that exactly one unary bit is 1 and all others are 0:
            \begin{equation*}
                    \bigvee_{j=0}^{q-1} x_{x_i j} \land \bigwedge_{j \neq l \in \{0,\cdots,q-1\}} (\neg x_{x_i j} \lor \neg x_{x_i l}).
            \end{equation*}
            \item If $\alpha \in \mathbb{F}_q$ is a scalar input node in $C$, the plain $\CNF$ encoding of $C$ contains the $\{0,1\}$ constants corresponding to the unary-bits of $\alpha$. These constants are used when fed to (translation of) gates according to the wiring of $C$ in item 4.
            \item For every node $g$ in $C(\overline{x})$ and every satisfying assignment $\overline{\alpha}$ to the plain $\CNF$ encoding, the corresponding unary-bit $x_{g_i}$ evaluates to $1$ if and only if the value of $g$ equals $i \in {0,\dots, q-1}$ (when the algebraic inputs $\overline{x} \in (\mathbb{F}_q)^\ast$ to $C(\overline{x})$ take on the values corresponding to the Boolean assignment $\overline{\alpha}$; "$\ast$" here means the Kleene star). This is ensured with the following equations: if $g$ is a $\circ \in \{+, \times\}$ node that has inputs $u_1,\cdots, u_t$. Then we consider the following equations:
                \begin{align*}
                     &u_1 \circ u_2 = v_1^g \\
                    &u_{i+2} \circ v_{i}^g = v_{i+1}^g, \qquad 1 \leq i \leq t-3 \\
                    &u_{t} \circ v_{t-2}^g = g.
                \end{align*}
                Then, for each equation above, for simplicity, we denote as $x \circ y =z$. For each $x+y=z$ we have a $\CNF$ $\phi$ in the unary-bits variables of $x, y,z$ that is satisfied by assignment precisely when the output unary-bits of $z$ get their correct values based on the (constant-size) truth table of $\circ$ over $\mathbb{F}_q$ and the input unary-bits of $x$ and $y$ (we ensure that if more than one unary-bit is assigned $1$ in any of the unary-bits of  $x,y,z$ then the $\CNF$ is unsatisfiable).
            \item For every unary-bit variable $x_{g_i}$, we have the Boolean axiom (recall we write these Boolean axioms explicitly since we are going to work with $\IPS^\alg$):
            \begin{equation*}
                x_{g_i}^2 - x_{g_i} = 0.
            \end{equation*}
        \end{enumerate}
        
    \end{definition}
    Therefore, we can see that the formula size of $\cnf(C(\overline{x})=0)$ is $\poly(q^2 \cdot |C|)$.
    \begin{definition}[Plain CNF encoding of a constant-depth circuit equation $\cnf(C(\overline{x})=0)$ \cite{ST25}]
        \label{def: plain CNF encoding of a constant-depth circuit equation cnf(C(x)=0)}
         Let $C(\overline{x})$ be a circuit in the variables $\overline{x}$. The plain $\CNF$ encoding of the circuit equation $C(\overline{x})=0$ denoted $\cnf(C(\overline{x})=0)$ consists of the following $\CNF$ encoding from \Cref{def: plain CNF encoding of constant-depth algebraic circuit cnf(C(x))} in the unary-bits variables of all the gates in $C$ ( and only in the unary-bit variables), together with the equations:
         \begin{equation*}
                x_{g_{out}0} =1 \quad \text{and} \quad x_{g_{out}i} = 0, \quad \text{for all }i=1,\cdots,q-1,
            \end{equation*}
            which express that $g_{out} = 0$, where $g_{out}$ is the output node of $C$.
        
    \end{definition}
 
    \begin{definition}[Extended CNF encoding of a circuit
    equation (circuit, resp.); $\ecnf(C(\overline{x})=0)$ ($\ecnf(C(\overline{x}))$, resp.) \cite{ST25}]
        \label{def: Extended CNF encoding of a circuit equation cnf(C(x)=0)}
         Let $C(\overline{x})$ be a circuit in the variables $\overline{x}$ over the finite field $\mathbb{F}_q$. The extended $\CNF$ encoding of the circuit equation $C(\overline{x})=0$ (circuit $C(\overline{x})$, resp.), in symbols $\ecnf(C(\overline{x})=0)$ ($\ecnf(C(\overline{x}))$, resp.), is defined to be a set of algebraic equations over $\mathbb{F}_q$ in the variables $x_g$ and $x_{g0},\cdots,x_{gq-1}$ which are the unary-bit variables corresponding to the node $g$ in $C$, that consist of:
         \begin{enumerate}
             \item the plain $\CNF$ encoding of the circuit equation $C(\overline{x})=0$ (circuit $C(\overline{x})$, resp.), namely, $\cnf(C(\overline{x})=0)$ ($\cnf(C(\overline{x}))$, resp.); and
             \item the algebraic extension axiom of $g$, for every gate $g$ in $C$.
         \end{enumerate}
         
    \end{definition}

    Since we work with extension variables, it is more convenient to express a circuit equation $C(\overline{x})=0$ as a set of equations encoding each gate of $C$, which we call the straight line program of $C(\overline{x})$ (and is equivalent in strength to algebraic circuits).
    
    \begin{definition}[Straight line program ($\SLP$)]
        \label{def: Straight line program SLP}
        An $\SLP$ of a circuit $C(\overline{x})$, denoted by $\SLP(C(\overline{x}))$, is a sequence of equations between variables such that the extension variable for the output node computes the value of the circuit assuming all equations hold. Formally, we choose any topological order $g_1,g_2, \cdots,g_i,\cdots,g_{|C|}$ on the nodes of the circuit $C$ (that is, if $g_j$ has a directed path to $g_k$ in $C$ then $j < k$) and define the following set of equations to be the $\SLP$ of $C(\overline{x})$:
        \begin{center}
            $g_i = g_{j1} \circ g_{j2} \circ \cdots \circ g_{jt}$ for $\circ \in \{+, \times \}$ iff $g_i$ is a $\circ$ node in $C$ with $t$ incoming edges from $g_{j1},\cdots,g_{jt}$.
        \end{center}
        An $\SLP$ representation of a circuit equation $C(\overline{x})=0$ means that we add to the $\SLP$ above the equation $g_{|C|}=0$, where $g_{|C|}$ is the output node of the circuit.
    \end{definition}
    
    The following lemma, which we refer to as the \emph{translation lemma} throughout this paper, shows that we can derive (with some additional axioms) the circuit equation $C(\overline{x})=0$ given the extended $\CNF$ encoding of this circuit equation $\ecnf(C(\overline{x})=0)$, and vice versa.
    \begin{lemma}[Translating between extended $\CNF$s and circuit equations in Fixed Finite Fields \cite{ST25}] 
        \label{lemma: Translating between extended CNF formulas and circuit equations in Fixed finite fields}
        Let $\mathbb{F}_q$ be a finite field, and let $C(\overline{x})$ be a circuit of depth $\Delta$ in the $\overline{x}$ variables over $\mathbb{F}_q$. Then, both of the following hold:
        \begin{align}
            \ecnf(C(\overline{x})=0) \sststile{\IPS^\alg}{*,O(\Delta)} C(\overline{x})=0.  
        \end{align}
\begin{equation}
\hspace*{-6.5em}
\begin{aligned}
\begin{array}{@{}r@{}}
\{x_g  =\sum_{i=0}^{q-1} i \cdot x_{gi}: g \text{ is a node in } C\}, \\[0.7em]
\{x_{gi}^2 - x_{gi} =0: g \text{ is a node in } C, 0 \leq i < q\}, \\[0.7em]
\{\sum_{i=0}^{q-1} x_{gi} =1 : g \text{ is a node in } C\}, \\[0.7em]
\SLP(C(\overline{x})), \\[0.7em]
C(\overline{x})=0
\end{array}
\quad
\sststile{\IPS^\alg}{*,O(\Delta)}
\quad
\ecnf(C(\overline{x})=0).
\end{aligned}
\label{eq:your_label}
\end{equation}
    \end{lemma}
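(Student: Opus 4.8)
The plan is to prove both directions by reducing the global claim to a family of \emph{local, constant-size} $\IPS^\alg$ derivations, one attached to each gate of $C$, and then composing them along the circuit; the composition is what produces the $O(\Delta)$ depth bound. Two structural facts drive the argument. First, since $q$ is a fixed constant, every gate ``gadget'' involves only $O(q)$ unary-bit variables and the relevant $\F_q$-truth tables have constant size, so any \emph{semantic} implication among constantly many $\{0,1\}$-constrained variables has an $\IPS^\alg$ certificate of size and depth $O(1)$ --- this is just completeness of $\IPS$ (equivalently, Lagrange interpolation or a brute-force Nullstellensatz certificate) on a constant-size instance. Second, we read $\Gamma \sststile{\IPS^\alg}{*,O(\Delta)}(p=0)$ as the existence of an $\IPS$ proof of $p=0$ from $\Gamma$ in the sense of \Cref{def:IPS} whose certificate circuit has depth $O(\Delta)$ (and, as it will turn out, size $\poly(|C|)$, though possibly of large formal degree, which is all that is needed for the Extended-Frege-level consequences).

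For the direction $\ecnf(C(\overline{x})=0)\vdash C(\overline{x})=0$, the key observation is that the only algebraic link between the output value variable $x_{g_{out}}$ and the original inputs is carried by the clause polynomials together with the algebraic extension axioms $x_g=\sum_i i\,x_{gi}$. For each internal gate, serialised into binary steps, the local claim to establish is: the constant-size CNF encoding of a binary operation $a\circ b=c$ on unary bits, together with the Boolean axioms $x_{gi}^2-x_{gi}=0$ and the ``exactly-one-unary-bit'' sums for $a,b,c$, $\IPS^\alg$-derives the single equation $x_c=x_a\circ x_b$ (the operation in $\F_q$); by the first structural fact this has a constant-size, constant-depth certificate. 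Applying the gadget at $g_{out}$ and then substituting the analogous identities for its children, recursing down to the input gates (where $x_g=x_i$), telescopes to a depth-$O(\Delta)$ derivation of $x_{g_{out}}=\widehat{C}(\overline{x})$. Finally the output clauses $x_{g_{out}0}=1$, $x_{g_{out}i}=0$ ($1\le i<q$), combined with the extension axiom of $g_{out}$, give $x_{g_{out}}=0$; subtracting yields $C(\overline{x})=0$.

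For the reverse direction the algebraic extension axioms are themselves premises (so re-derived for free), and the content is to derive (i) every clause polynomial of the plain CNF encoding and (ii) the output equations $x_{g_{out}0}=1$, $x_{g_{out}i}=0$. For (i): fixing a gate and one of its binary sub-operations $a\circ b=c$ (with the auxiliary variable, its unary bits, and its extension axiom), the corresponding equation of $\SLP(C(\overline{x}))$ (see \Cref{def: Straight line program SLP}) yields $x_c=x_a\circ x_b$, and then \emph{every} clause of the encoding of $a\circ b=c$ is a semantic consequence of this equation together with the local Boolean and exactly-one axioms --- these force $x_a,x_b,x_c\in\{0,\dots,q-1\}$ and pin the unary bits to their indicators --- so, involving only $O(q)$ variables, it has a constant-size, constant-depth $\IPS^\alg$ certificate. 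Since distinct clauses use only local axioms, no chaining occurs and $O(1)$ depth already suffices. For (ii): the premises $C(\overline{x})=0$ and $\SLP(C(\overline{x}))$ give $x_{g_{out}}=0$, and the same constant-size argument on the $q$ unary bits of $g_{out}$ extracts the forced values. Taking the (depth-additively trivial) union of all these local certificates completes the derivation.

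The main obstacle I expect is bookkeeping rather than ideas: pinning down the exact shape of the clause encodings --- in particular how unbounded fan-in is serialised and which auxiliary variables carry their own value variable and extension/Boolean/exactly-one axioms --- and then checking, gate by gate, that each gadget is simultaneously a genuine semantic implication over $\F_q$ and admits a \emph{bounded-depth} $\IPS^\alg$ certificate (here one uses crucially that $|\F_q|$ is constant, so the interpolation/ideal-membership certificates are constant-size; minor care is needed identifying $\{0,\dots,q-1\}$ with $\F_q$ when $q$ is a prime power). The one genuinely delicate quantitative point is confirming that the telescoping substitution in the forward direction accumulates depth proportional to the \emph{depth} $\Delta$ of $C$, not to its number of gates. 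All of these are exactly the points resolved in \cite{ST25}, and the argument above follows their treatment.
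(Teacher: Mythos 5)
This lemma is stated in the paper with a citation to \cite{ST25} and is not reproved there, so there is no ``paper's own proof'' to compare against: both you and the paper ultimately defer to the argument in \cite{ST25}. Your sketch --- decompose into gate-local, constant-size $\IPS^\alg$ certificates over $O(q)$ unary-bit variables (using that any semantic implication on a constant-size $\F_q$-instance has a constant-size Nullstellensatz certificate), and then compose these along the circuit --- is a faithful reconstruction of the shape of that argument, and your worry about serialization of unbounded fan-in is correctly aimed at the definitions in \Cref{def: plain CNF encoding of constant-depth algebraic circuit cnf(C(x))}.

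Where the sketch leaves the one genuinely nontrivial point unresolved is exactly the quantitative claim you flag at the end: why the forward-direction composition yields depth $O(\Delta)$ rather than depth proportional to $\lvert C\rvert$. A fan-in-$t$ gate is serialized into a length-$(t-1)$ chain $u_1\circ u_2=v_1^g,\ u_{i+2}\circ v_i^g=v_{i+1}^g,\ldots$, so if one naively substitutes step by step the certificate for a single gate already has depth $\Theta(t)$, and $t$ can be $n$. The resolution is not ``telescoping'' but \emph{aggregation at a single level}: writing, e.g.\ for a $+$-gate,
\begin{equation*}
x_g-\textstyle\sum_i x_{u_i}=(x_g-x_{u_t}-x_{v_{t-2}^g})+(x_{v_{t-2}^g}-x_{u_{t-1}}-x_{v_{t-3}^g})+\cdots+(x_{v_1^g}-x_{u_1}-x_{u_2}),
\end{equation*}
(and the analogous product formula with cumulative multipliers for $\times$-gates) and plugging in the $t-1$ constant-size local certificates under a \emph{single unbounded-fan-in sum gate} of the $\IPS$ proof circuit. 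This makes each gate cost $O(1)$ additional depth regardless of fan-in, and only the $\Delta$ gate \emph{levels} contribute multiplicatively. Your sketch never states this, and without it the claimed $O(\Delta)$ bound doesn't follow. This is a real gap in the write-up, even though you correctly identify its location and correctly punt to \cite{ST25} for its resolution; if you want the sketch to stand on its own you should spell out the aggregation step above.
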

\smallskip

    \begin{proposition}[Proposition 3.7 in \cite{ST25}]
        \label{prop: C <-> cnf <-> ecnf fixed finite fields}
        Let $C(\overline{x})=0$ be a circuit equation over $\mathbb{F}_q$ where $q$ is any constant prime. Then, $C(\overline{x})=0$ is unsatisfiable over $\mathbb{F}_q$ iff $\cnf(C(\overline{x})=0)$ is an unsatisfiable $\CNF$ iff $\ecnf(C(\overline{x})=0)$ is an unsatisfiable set of equations over $\mathbb{F}_q$.
    \end{proposition}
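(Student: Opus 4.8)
The plan is to reduce both biconditionals to their contrapositives and prove a cycle of implications between the three \emph{satisfiability} statements: (A) $C(\overline{x})=0$ is satisfiable over $\mathbb{F}_q$; (B) $\cnf(C(\overline{x})=0)$ is satisfiable; (C) $\ecnf(C(\overline{x})=0)$ is satisfiable. I would prove $\text{(A)}\Rightarrow\text{(C)}\Rightarrow\text{(B)}\Rightarrow\text{(A)}$. The middle step is immediate: by \Cref{def: Extended CNF encoding of a circuit equation cnf(C(x)=0)}, $\ecnf(C(\overline{x})=0)$ consists of $\cnf(C(\overline{x})=0)$ together with the algebraic extension axioms, so restricting any satisfying assignment of the former to the unary-bit variables satisfies the latter. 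The content is in the other two steps, both of which rest on a single \emph{gate-propagation} fact: an $\mathbb{F}_q$-assignment $\overline{\alpha}$ to $\overline{x}$ induces a unique value $v_g(\overline{\alpha})\in\{0,1,\dots,q-1\}$ at each node $g$ of $C$ (and at each auxiliary cascade node), and the single-hot unary encoding $x_{g_i}\mapsto[\,v_g(\overline{\alpha})=i\,]$ is the \emph{unique} Boolean assignment to the unary bits of $g$ consistent with the gadget clauses of \Cref{def: plain CNF encoding of constant-depth algebraic circuit cnf(C(x))}.

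For $\text{(A)}\Rightarrow\text{(C)}$, I would start from $\overline{\alpha}\in\mathbb{F}_q^{|\overline{x}|}$ with $C(\overline{\alpha})=0$, set $x_{g_i}:=[\,v_g(\overline{\alpha})=i\,]$ for every node $g$ and index $i$ (including the cascade nodes, whose value is the partial $\circ$-combination $u_1\circ\cdots\circ u_{k+1}$ under $\overline{\alpha}$), set the unary bits of scalar nodes to encode those scalars, and set $x_g:=v_g(\overline{\alpha})$. Then I would verify $\ecnf(C(\overline{x})=0)$ clause by clause: the ``exactly one unary bit'' clauses at inputs hold because $v_{x_i}(\overline{\alpha})=\alpha_i$ is a single field element; each truth-table gadget CNF for a binary step $x\circ y=z$ is satisfied because the single-hot encodings of $x,y,z$ agree with the table of $\circ$ over $\mathbb{F}_q$; the Boolean axioms $x_{g_i}^2-x_{g_i}=0$ hold as the assignment is $0/1$; the extension axioms $x_g=\sum_i i\cdot x_{g_i}$ hold since exactly the $v_g(\overline{\alpha})$-th bit is $1$; and the output equations $x_{g_{out}0}=1$, $x_{g_{out}i}=0$ for $i\geq1$ hold precisely because $v_{g_{out}}(\overline{\alpha})=C(\overline{\alpha})=0$.

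For $\text{(B)}\Rightarrow\text{(A)}$, I would take a Boolean assignment $\overline{\beta}$ satisfying $\cnf(C(\overline{x})=0)$. The input-node clauses force, for each input $x_i$, exactly one unary bit $x_{x_i j}$ to equal $1$; letting $\alpha_i:=j$ defines $\overline{\alpha}\in\mathbb{F}_q^{|\overline{x}|}$. Then I would induct along a topological order of $C$, descending through the binary cascades inside each unbounded-fan-in gate: by design each gadget CNF for $x\circ y=z$ is \emph{unsatisfiable} unless $x,y,z$ each carry exactly one hot unary bit respecting the table of $\circ$, so the hot bit of every node $g$ sits at position $v_g(\overline{\alpha})$. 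Finally the output equations force $v_{g_{out}}(\overline{\alpha})=0$, i.e.\ $C(\overline{\alpha})=0$, so (A) holds. Closing the cycle $\text{(A)}\Rightarrow\text{(C)}\Rightarrow\text{(B)}\Rightarrow\text{(A)}$ then yields $\text{(A)}\Leftrightarrow\text{(B)}\Leftrightarrow\text{(C)}$, which is exactly the claimed chain of unsatisfiability equivalences.

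The main obstacle is genuinely only bookkeeping: making the step ``the gadget CNFs enforce single-hot, truth-table-consistent encodings'' precise for \emph{unbounded} fan-in. One must unfold each fan-in-$t$ addition or multiplication node into the length-$(t-1)$ cascade of binary operations of \Cref{def: plain CNF encoding of constant-depth algebraic circuit cnf(C(x))}, carry the induction through the auxiliary variables $v^g_1,\dots,v^g_{t-2}$, and check that the constant-size truth-table CNF for a single binary step indeed rejects any assignment in which two unary bits of the same node are simultaneously set. Once this is in place, both directions of propagation are routine, and---crucially for how the proposition is used---the whole argument is purely semantic, requiring no $\IPS$ derivation.
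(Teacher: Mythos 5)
The paper does not prove this proposition itself: it is imported from \cite{ST25} (Proposition 3.7 there) and stated without proof, so there is no local proof to compare against. Your argument is a correct, self-contained semantic proof by the contrapositive cycle $(\mathrm{A})\Rightarrow(\mathrm{C})\Rightarrow(\mathrm{B})\Rightarrow(\mathrm{A})$. The one point worth making fully explicit in $(\mathrm{B})\Rightarrow(\mathrm{A})$ is that the gadget CNF of \Cref{def: plain CNF encoding of constant-depth algebraic circuit cnf(C(x))} only stipulates an \emph{at most one} hot-bit property at intermediate nodes (via the ``if more than one unary bit is assigned $1$ then the $\CNF$ is unsatisfiable'' clause), whereas the \emph{exactly one} hot-bit property is imposed explicitly only at input nodes; you must therefore propagate the exactly-one-hot invariant inductively through the cascade auxiliaries $v_1^g,\dots,v_{t-2}^g$ to each gate output, using the fact that a single-hot $x$ and a single-hot $y$, together with the truth-table constraint, force a single-hot $z$ at the position encoding $x\circ y$. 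You rightly flag this as the bookkeeping content of the proof. Everything else checks out: $(\mathrm{C})\Rightarrow(\mathrm{B})$ is immediate because $\cnf(C(\overline{x})=0)$ is a literal subset of the constraints in $\ecnf(C(\overline{x})=0)$ and the Boolean axioms guarantee the restricted assignment is $0$--$1$; and $(\mathrm{A})\Rightarrow(\mathrm{C})$ simply instantiates the intended semantics of the encoding, so the three-way equivalence of the unsatisfiability claims follows.
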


    From here, we extend their result by eliminating these additional axioms in both directions. The only additional axioms we need are the \emph{field axioms} $\{x^q -x =0: x \text{ is a variable in } C\}$, which can be easily derived from the Boolean axioms if the variables in the circuit are Boolean (as we are working in finite fields). We use $\UBIT_j(x)$ to denote the following Lagrange polynomial:
    \begin{equation*}
        \label{UBITs}
        \frac{\prod_{i=0, i \neq j}^{q-1}(x-i)}{\prod_{i=0, i \neq j}^{q-1}(j-i)},
    \end{equation*}
    where $x$ can be a single variable or an algebraic circuit. Hence, it is easy to observe that
    \begin{equation*}
        \UBIT_j(x) = \begin{cases}
            1, \quad x = j,\\
            0, \quad \text{otherwise}.
        \end{cases}
    \end{equation*}
    Also, suppose $x$ has size $|x|$ and depth $\depth(x)$ (when $x$ is a single variable, it has size $1$ and depth $1$). Then, $\UBIT_j(x)$ can be computed by an algebraic circuit of size $O(|x|^{q-1})$ and depth $\depth(x)+2$. In addition, we introduce a Semi-CNF $\SCNF$, which is a substitution instance of a $\CNF$.

    \begin{definition}[Semi-CNF $\SCNF$ encoding of a constant-depth circuit equation $\SCNF(C(\overline{x})=0)$]
        \label{def: semi-CNF encoding}
        Let $C(\overline{x})$ be a circuit in the variables $\overline{x}$. The semi-CNF encoding of the circuit equation $C(\overline{x})=0$ denoted $\SCNF(C(\overline{x}))$ is a substitution instance of the plain CNF encoding in \Cref{def: plain CNF encoding of a constant-depth circuit equation cnf(C(x)=0)} where each unary-bits $x_{uj}$ of all the gates and extra extension variables $u$ is substituted with $\UBIT_j(C_u)$ where $C_u$ is the constant-depth algebraic circuit computes $u$.\footnote{This $C_u$ can be constructed from $\SLP$s easily.}
    \end{definition}
    We now demonstrate the connection between semi-CNFs and circuit equations.

    \begin{theorem}[Translate semi-CNFs from circuit equations in Fixed Finite Fields]
        \label{lemma: Translate semi-CNFs from circuit equations in Fixed Finite Fields}
        Let $\mathbb{F}_q$ be a finite field, and let $C(\overline{x})$ be a circuit of depth $\Delta$ in the $\overline{x}$ variables over $\mathbb{F}_q$. Then, the following holds
        \begin{gather*}
            \{x^q -x =0: x \text{ is a variable in } C\}, \;C(\overline{x})=0  \sststile{\IPS^\alg}{*,O(\Delta)} \SCNF(C(\overline{x})=0).
        \end{gather*}
    \end{theorem}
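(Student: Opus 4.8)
The plan is to reduce the new statement to the already-established translation lemma of \cite{ST25}, namely \Cref{lemma: Translating between extended CNF formulas and circuit equations in Fixed finite fields}, by exhibiting $\SCNF(C(\overline{x})=0)$ as a \emph{substitution instance} of $\ecnf(C(\overline{x})=0)$ under the substitution $x_{g_i}\mapsto \UBIT_i(C_g)$, where $C_g$ is the constant-depth subcircuit computing the gate $g$ (constructible from $\SLP(C(\overline{x}))$). The point is that $\IPS^{\alg}$ derivations are closed under substitution of circuits for variables: if $\Gamma \sststile{\IPS^\alg}{*,O(\Delta')} \psi$, then applying a substitution $\rho$ that replaces variables by algebraic circuits of depth at most $t$ yields $\Gamma\rho \sststile{\IPS^\alg}{*,O(\Delta'+t)} \psi\rho$, because one substitutes $\rho$ into the $\IPS$ certificate circuit itself, and the degree blow-up is benign in $\IPS^\alg$ (which allows arbitrary degree). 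Since $\UBIT_j(C_g)$ has depth $\depth(C_g)+2 = O(\Delta)$ and size $O(|C_g|^{q-1}) = \poly(|C|)$ for constant $q$, this substitution costs only $O(\Delta)$ in depth.

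Concretely, I would proceed as follows. First, take the left-to-right direction of the $\SLP$-augmented translation lemma \eqref{eq:your_label}, which states that the $\SLP$ of $C$, together with the algebraic extension axioms $\{x_g = \sum_i i\cdot x_{gi}\}$, the Boolean axioms $\{x_{gi}^2-x_{gi}=0\}$, the unary-sum axioms $\{\sum_i x_{gi}=1\}$, and the target equation $C(\overline{x})=0$, prove $\ecnf(C(\overline{x})=0)$ in $\IPS^\alg$ with depth $O(\Delta)$. Second, apply the substitution $\rho: x_{gi}\mapsto \UBIT_i(C_g)$ (and $x_g\mapsto C_g$) to this whole derivation. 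Under $\rho$: (i) each algebraic extension axiom $x_g=\sum_i i\cdot x_{gi}$ becomes the polynomial identity $C_g = \sum_{i=0}^{q-1} i\cdot\UBIT_i(C_g)$, which holds over $\mathbb{F}_q$ for inputs in $\mathbb{F}_q$ — it follows from the field axioms $\{x^q-x=0\}$ since $\UBIT$ is the Lagrange interpolation basis, so these are derivable from our hypotheses; (ii) each Boolean axiom $x_{gi}^2-x_{gi}=0$ becomes $\UBIT_i(C_g)^2-\UBIT_i(C_g)=0$, again a consequence of the field axioms because $\UBIT_i$ is $\{0,1\}$-valued on $\mathbb{F}_q$; (iii) each unary-sum axiom $\sum_i x_{gi}=1$ becomes $\sum_{i=0}^{q-1}\UBIT_i(C_g)=1$, once more following from the field axioms; (iv) each $\SLP$ equation $g_i = g_{j_1}\circ\cdots\circ g_{j_t}$ becomes $C_{g_i} = C_{g_{j_1}}\circ\cdots\circ C_{g_{j_t}}$, which is a true polynomial identity by construction of the $C_g$'s (so it is derivable in $\IPS^\alg$ with $O(\Delta)$ depth, or simply is an axiom after unfolding); and (v) $C(\overline{x})=0$ is unchanged (it already mentions only the $\overline{x}$-variables, for which $C_{x} = x$). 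Third, observe that $\ecnf(C(\overline{x})=0)\rho$ is by \Cref{def: semi-CNF encoding} exactly $\SCNF(C(\overline{x})=0)$ together with the substituted algebraic extension axioms; discarding the latter (they are now derived lemmas, not needed as conclusions) leaves precisely $\SCNF(C(\overline{x})=0)$.

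The main obstacle — and the place where the write-up needs care — is item (i)–(iii): verifying that the substituted auxiliary axioms (algebraic extension, Boolean, unary-sum) are themselves efficiently derivable in constant-depth $\IPS^\alg$ from the field axioms $\{x^q-x=0\}$, uniformly over all gates. The key identity is that for any algebraic circuit (or variable) $t$, one has the Lagrange identities $\sum_{i} \UBIT_i(t)=1$, $\UBIT_i(t)\UBIT_j(t)=0$ for $i\neq j$, $\UBIT_i(t)^2=\UBIT_i(t)$, and $\sum_i i\cdot\UBIT_i(t)=t$, \emph{modulo} $t^q-t$; and $t^q - t$ vanishes as a consequence of $x^q-x=0$ for the input variables together with the $\SLP$ equations (a routine induction on circuit depth, each step contributing $O(1)$ depth, so $O(\Delta)$ total). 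Once this is in place, the substitution argument is mechanical, and the depth bookkeeping gives $O(\Delta)$ overall as claimed. I would also note that the field axioms $\{x^q-x=0\}$ for the $\overline{x}$-variables are themselves derivable from the Boolean axioms $\overline{x}^2-\overline{x}=0$ over $\mathbb{F}_q$ in $O(1)$ depth, so the statement as phrased (which only takes field axioms as extra hypotheses) is indeed what we need for the downstream \Cref{theorem: lower bounds for circuit equation implies lower bounds for CNFs, intro}.
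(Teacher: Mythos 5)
Your proposal is correct and follows essentially the same strategy as the paper: substitute $x_{gi}\mapsto\UBIT_i(C_g)$ into the left-to-right direction of the \cite{ST25} translation lemma, derive the field axioms $C_g^q-C_g=0$ for each internal gate by induction on circuit depth, and observe that the substituted auxiliary hypotheses are then either polynomial identities or lie in the ideal generated by these field axioms. One small cosmetic point: $\sum_j\UBIT_j(t)=1$ and $\sum_j j\cdot\UBIT_j(t)=t$ are already polynomial identities (both sides have degree $\le q-1$ and agree on all of $\mathbb{F}_q$), so the field axioms are not actually needed for those two — only for $\UBIT_j(C_g)^2-\UBIT_j(C_g)=0$, which has degree $2(q-1)>q-1$.
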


Since the field equations $x^q-x=0$ are efficiently derivable from the Boolean axioms, we get the following for $\IPS$ (which by default contains the Boolean axioms):
\begin{gather*}
        C(\overline{x})=0  \sststile{\IPS^\alg}{*,O(\Delta)} \SCNF(C(\overline{x})=0).
        \end{gather*}

    \begin{proof}
        In \Cref{lemma: Translating between extended CNF formulas and circuit equations in Fixed finite fields}, the given axioms include:
        \begin{enumerate}
        \centering
            \item[$(i)$] $\{x_g = \sum_{i=0}^{q-1}i \cdot x_{gi}: g \text{ is a node in } u\circ v=w \}$,
            \item[$(ii)$] $\{x_{gi}^2 - x_{gi} = 0: g \text{ is a node in } u\circ v =w\}$,
            \item[$(iii)$] $\{\sum_{i=0}^{q-1} x_{gi} =1 : g \text{ is a node in } u\circ v = w\}$,
            \item[$(iv)$] $\sum_{i=0}^{q-1}i \cdot x_{ui} \circ \sum_{i=0}^{q-1}i \cdot x_{vi} = \sum_{i=0}^{q-1}i \cdot x_{wi}$,   
         \end{enumerate}
        and there is a constant-depth constant-size $\IPS$ derivation of the plain CNF encoding of $u \circ v = w$. Thus, we must show that we can derive the above four axioms when we substitute $x_{gi}$ with $\UBIT_i(C_g)$. Due to the standard property of Lagrange polynomials, the following circuit equation is a polynomial identity, which can be proved freely in $\IPS$ (in finite field $\mathbb{F}_q$):
        \begin{equation*}
            x = \sum_{j=0}^{q-1} j \cdot \UBIT_j(x),
        \end{equation*}
        which is exactly the axiom in $(i)$. Hence, we know that $C_u = \sum_{j=0}^{q-1} j \cdot \UBIT_j(C_u)$, $C_v = \sum_{j=0}^{q-1} j \cdot \UBIT_j(C_v)$ and $C_u \circ C_v = C_w = \sum_{j=0}^{q-1} j \cdot \UBIT_j(C_w)$. These polynomial identities give us the substitution instance of the last equation $(iv)$:
        \[
             \sum_{j=0}^{q-1} j \cdot \UBIT_j(C_u) \circ  \sum_{j=0}^{q-1} j \cdot \UBIT_j(C_v) =  \sum_{j=0}^{q-1} j \cdot \UBIT_j(C_w).
        \]

        The second set of equations $(ii)$ is in the ideal of the field axioms for $g$. We show that in depth-$O(\Delta)$ and polynomial size, we can derive the field axioms $C_g^q - C_g =0$ for all circuits that compute the nodes and extension variables (using the field axioms $x^q-x=0$, for every input variable). We derive the field axioms for nodes and extension variables by induction on depth.  
        When $g$ is a $\circ \in \{+, \times\}$ node that has inputs $u_1,\cdots, u_t$, the $\SLP$s includes:
        \begin{align*}
            &u_1 \circ u_2 = v_1^g \\
            &u_{i+2} \circ v_{i}^g = v_{i+1}^g, \qquad 1 \leq i \leq t-3 \\                &u_{t} \circ v_{t-2}^g = g.
        \end{align*}
        For each $v_i^{g} = u_1 \circ \cdots \circ u_{i+2}$, $C_{v_i^g} =C_{u_1} \circ \cdots \circ C_{u_{i+2}}$ is a polynomial identity. By induction, we already have the field axioms for all $C_{u_i}$. We show that we can derive the field axioms for all $C_{v_i^g}$ and $C_g$ simultaneously. Now suppose $\circ = +$, then the following equations hold over $\F_q$:
        \begin{align*}
            C_{v_i^g}^q &\equiv (C_{u_1}+ \cdots + C_{u_{i+2}})^q \\
                        & \equiv C_{u_1}^q + \cdots + C_{u_{i+2}}^q \\
                        & \equiv C_{u_1} + \cdots + C_{u_{i+2}} \\
                        & \equiv C_{v_i^g}.
        \end{align*}
        The proof for the node $g$ is the same. We can therefore conclude that if $\circ =+$, we can derive the field axioms for all $C_{v_i^g}$ and $C_g$ simultaneously. Suppose $\circ = \times$, then the following equations hold over $\F_q$:
        \begin{align*}
            C_{v_i^g}^q &\equiv (C_{u_1}\times \cdots \times C_{u_{i+2}})^q \\
                        & \equiv C_{u_1}^q \times \cdots \times C_{u_{i+2}}^q \\
                        & \equiv C_{u_1} \times \cdots \times C_{u_{i+2}} \\
                        & \equiv C_{v_i^g}.
        \end{align*}
        Again, the proof for the node $g$ is the same and thus we conclude that given the field axioms for the input variables,  we can derive the field axioms for all circuits that compute the nodes and extension variables in depth $O(\Delta)$ and polynomial size. It remains to show that $\UBIT_j(x)^2 - \UBIT_j(x) =0$ is in the ideal of the field axiom of $x$, for any $x$. The equation $x^q-x=0$ is the unique monic polynomial of degree $q$ that has all elements of $\mathbb{F}_q$ as roots. Therefore, any polynomial $f(x) \in \mathbb{F}_q[x]$ that vanishes when evaluated to any $x \in \mathbb{F}_q$ must be divisible by $x^q-x$. It is easy to check that $\UBIT_j(x)^2 - \UBIT_j(x) $ vanishes at all points, implying it is in the ideal generated by $x^q-x$. Hence, there is a degree (of $x$) $q-2$ polynomial $Q(x)$ such that $Q(x) \cdot (x^q-x) = \UBIT_j(x)^2 - \UBIT_j(x)$, and as a result there is a depth-$\Delta$ polynomial-size proof for $\UBIT_j(x)^2 - \UBIT_j(x) =0$ from $x^q-x$.

        Finally, $\sum_{j=0}^{q-1} \UBIT_j(x) =1$ is a polynomial identity, for every $x$. This follows from the fact that it is a single-variable polynomial with degree $q-1$, but has $q$ many distinct roots. By the fundamental theorem of algebra, it must be a zero polynomial, and consequently we get axioms in $(iii)$ for free in $\IPS$. All together, we can conclude that
        \[
            \{x^q -x =0: x \text{ is a variable in } C\}, C(\overline{x})=0  \sststile{\IPS^\alg}{*,O(\Delta)} \SCNF(C(\overline{x})=0
        \]
        by first deriving the substitution instance above, and then substituting the derivation for the CNF to get the derivation for the semi-CNF.
    \end{proof}

    Since $\SCNF$s are substitution instances of $\CNF$s, lower bounds for $\CNF$s imply lower bounds for $\SCNF$s, which gives the following corollary.
    \begin{corollary}[Lower bounds for circuit equations imply lower bounds for CNFs]
        \label{corollary: lower bounds for circuit equation implies lower bounds for CNFs}
         Let $\mathbb{F}_q$ be a finite field, and let $\{C(\overline{x})\}$ be a set of circuits of depth at most $\Delta$ in the Boolean variable $\overline{x}$. Then, if a set of circuit equations $\{C(\overline{x}) =0\}$ cannot be refuted in $S$-size, $O(\Delta^\prime)$-depth $\IPS$, then the CNF encoding of the set of circuit equations $\{\CNF(C(\overline{x})=0)\}$ cannot be refuted in ($S-\poly(|C|)$)-size, $O(\Delta^\prime + \Delta)$-depth $\IPS$.
    \end{corollary}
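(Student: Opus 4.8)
The plan is to prove the contrapositive: starting from an $\IPS$ refutation $\Pi$ of the CNF encoding $\{\CNF(C(\overline{x})=0)\}$, I would construct an $\IPS$ refutation of the set of circuit equations $\{C(\overline{x})=0\}$ itself, with size larger by only an additive $\poly(|C|)$ and depth larger by only an additive $O(\Delta)$; the corollary then follows by unwinding the $O$-notation against the assumed hardness of $\{C(\overline{x})=0\}$. The whole argument is a two-step \emph{substitute-then-compose}.

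First, I would move from $\CNF$ to $\SCNF$ purely by substitution. By \Cref{def: semi-CNF encoding}, $\SCNF(C(\overline{x})=0)$ is \emph{exactly} the instance obtained from $\CNF(C(\overline{x})=0)$ by replacing every unary-bit variable $x_{uj}$ (ranging over all nodes and extension variables $u$ of $C$ and over $0\le j<q$) with the Lagrange circuit $\UBIT_j(C_u)$, where $C_u$ is the size-$\poly(|C|)$, depth-$O(\Delta)$ subcircuit of $C$ computing $u$. Performing this same substitution on the variables of $\Pi$ turns $\Pi$ into an $\IPS$ refutation $\Pi'$ of $\{\SCNF(C(\overline{x})=0)\}$, now written over the original Boolean variables $\overline{x}$, because substitution of variables preserves the two defining polynomial identities of an $\IPS$ proof. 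Since a single shared copy of each $\UBIT_j(C_u)$ suffices, this costs only an additive $\poly(|C|)$ in size and an additive $O(\Delta)$ in depth.

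Second, I would compose $\Pi'$ with the translation theorem. By \Cref{lemma: Translate semi-CNFs from circuit equations in Fixed Finite Fields}, from $\{C(\overline{x})=0\}$ together with the field axioms $\overline{x}^q-\overline{x}$ — which $\IPS$ derives for free from the Boolean axioms via a constant-degree cofactor — there is an $\IPS$ derivation, of size $\poly(|C|)$ and depth $O(\Delta)$, of every polynomial of $\SCNF(C(\overline{x})=0)$. Plugging these derivations into the axiom-placeholders of $\Pi'$ (and handling the placeholders that previously held the Boolean axioms $x_{uj}^2-x_{uj}$ as described below) yields an $\IPS$ refutation of $\{C(\overline{x})=0\}$. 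Composing one $\IPS$ derivation into the placeholder inputs of another adds the depths and, with sharing, adds only $\poly(|C|)$ to the size, so the final refutation has size $|\Pi|+\poly(|C|)$ and depth $\depth(\Pi)+O(\Delta)$, which gives the stated bounds. That the objects involved are genuinely unsatisfiable, so that the hypotheses of these lemmas apply, is \Cref{prop: C <-> cnf <-> ecnf fixed finite fields}.

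The hard part will be the bookkeeping that makes $\Pi'$ a \emph{bona fide} $\IPS$ refutation. Two points need care. (i) After the substitution $x_{uj}\mapsto\UBIT_j(C_u)$, the placeholders of $\Pi$ that held the clause-polynomials of $\CNF(C(\overline{x})=0)$ now hold their substitution instances, which are precisely the clause-polynomials of $\SCNF(C(\overline{x})=0)$; this is where it matters that $\SCNF$ is defined \emph{as} a substitution instance rather than merely agreeing with one on the Boolean cube. (ii) The placeholders that previously held $x_{uj}^2-x_{uj}$ now must receive $\UBIT_j(C_u)^2-\UBIT_j(C_u)$, which is not of the form ``a variable squared minus that variable'', and so must be re-derived inside $\IPS$. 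As in the proof of \Cref{lemma: Translate semi-CNFs from circuit equations in Fixed Finite Fields}, $\UBIT_j(x)^2-\UBIT_j(x)$ vanishes on all of $\mathbb{F}_q$ and hence equals $Q(x)\,(x^q-x)$ for an explicit degree-$(q-2)$ polynomial $Q$, while $x^q-x$ is a constant-degree multiple of the Boolean axiom $x^2-x$; this re-derivation adds only $O(\Delta)$ to the depth and $\poly(|C|)$ to the size, so it is absorbed into the stated bounds.
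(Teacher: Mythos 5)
Your proposal is correct and takes essentially the same approach as the paper: the paper's own justification is the one-liner that since $\SCNF$s are substitution instances of $\CNF$s and $\SCNF(C(\overline{x})=0)$ is derivable from $\{C(\overline{x})=0\}$ (plus field axioms, which follow from the Boolean axioms) via \Cref{lemma: Translate semi-CNFs from circuit equations in Fixed Finite Fields}, a small refutation of $\CNF(C(\overline{x})=0)$ would yield a small refutation of $\{C(\overline{x})=0\}$. Your substitute-then-compose plan, and in particular the bookkeeping point (ii) on re-deriving the substituted Boolean axioms $\UBIT_j(C_u)^2-\UBIT_j(C_u)$ from the Boolean/field axioms of $\overline{x}$, unfolds exactly the details that the paper leaves implicit.
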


    \begin{lemma}[Translate circuit equations from semi-CNFs in fixed finite fields]
        \label{lemma: Translate circuit equations from semi-CNFs in Fixed Finite Fields}
        Let $\mathbb{F}_q$ be a finite field, and let $C(\overline{x})$ be a circuit of depth $\Delta$ in the $\overline{x}$ variables over $\mathbb{F}_q$. Then, the following holds:
        \begin{gather*}
            \{x^q -x =0: x \text{ is a variable in } C\}, \; \SCNF(C(\overline{x})=0)  \sststile{\IPS^\alg}{*,O(\Delta)} C(\overline{x})=0.
        \end{gather*}
    \end{lemma}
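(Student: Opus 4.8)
The plan is to reduce to the forward direction of the \cite{ST25} translation lemma, \Cref{lemma: Translating between extended CNF formulas and circuit equations in Fixed finite fields}, which already supplies an $\IPS^\alg$ derivation $\ecnf(C(\overline{x})=0) \sststile{\IPS^\alg}{*,O(\Delta)} C(\overline{x})=0$, and then to push that derivation through the substitution $\rho$ that replaces every unary-bit variable $x_{gj}$ (for a node or extra extension variable $g$ of $C$) by $\UBIT_j(C_g)$ and every extension variable $x_g$ by the circuit $C_g$ computing $g$, while fixing the original variables $\overline{x}$. Since substitution of algebraic circuits for variables is a ring homomorphism on $\mathbb{F}_q[\overline{x}]$, it preserves every polynomial identity and ideal-membership certificate appearing in an $\IPS^\alg$ derivation; moreover the size of each line grows by at most a factor $\max_g |\UBIT_j(C_g)| = \poly(|C|)$ (as $q$ is constant and $|C_g|\le |C|$), and the depth grows additively by at most $\max_g \depth(\UBIT_j(C_g)) = \depth(C_g)+2 = O(\Delta)$. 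Hence $\rho$ turns the \cite{ST25} derivation into an $\IPS^\alg$ derivation, still of polynomial size and depth $O(\Delta)$, of the equation $C(\overline{x})=0$ (which is fixed by $\rho$) from the $\rho$-images of the axioms of $\ecnf(C(\overline{x})=0)$.

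It then remains to discharge those $\rho$-images. By the very definition of the semi-CNF encoding, $\rho$ maps the plain CNF part $\cnf(C(\overline{x})=0)$ of $\ecnf(C(\overline{x})=0)$ onto $\SCNF(C(\overline{x})=0)$, so those equations are precisely the hypotheses of the lemma. The algebraic extension axioms $x_g = \sum_{i=0}^{q-1} i\cdot x_{gi}$ are sent to $C_g = \sum_{i=0}^{q-1} i\cdot \UBIT_i(C_g)$, which are instances of the Lagrange identity $x = \sum_{i=0}^{q-1} i\cdot\UBIT_i(x)$ and hence polynomial identities derivable for free in $\IPS^\alg$, exactly as in the proof of \Cref{lemma: Translate semi-CNFs from circuit equations in Fixed Finite Fields}. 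Wherever the \cite{ST25} derivation invokes a Boolean axiom $x_{gi}^2 - x_{gi} = 0$ on a unary bit, its $\rho$-image $\UBIT_i(C_g)^2 - \UBIT_i(C_g) = 0$ lies in the ideal generated by the field axiom $C_g^q - C_g = 0$ — via the degree-$(q-2)$ multiplier $Q$ with $Q(x)(x^q-x) = \UBIT_i(x)^2 - \UBIT_i(x)$ — and the field axioms $C_g^q - C_g = 0$ for all nodes and extension variables $g$ of $C$ are themselves derivable from the input field axioms $\{x^q-x=0\}$ in depth $O(\Delta)$ and polynomial size, by induction on depth using $(a+b)^q \equiv a^q+b^q$ and $(ab)^q \equiv a^q b^q$ over $\mathbb{F}_q$; this is exactly the sub-argument already carried out in that proof. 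Prepending these auxiliary derivations to the substituted \cite{ST25} derivation yields $\{x^q-x=0:x\text{ a variable in }C\},\ \SCNF(C(\overline{x})=0)\sststile{\IPS^\alg}{*,O(\Delta)} C(\overline{x})=0$.

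The main point to get right is the closure of $\IPS^\alg$ under the circuit-substitution $\rho$ with the claimed size and depth parameters, together with the bookkeeping that every axiom of $\ecnf(C(\overline{x})=0)$ actually used by the \cite{ST25} derivation is recovered after substitution — either verbatim as a clause of $\SCNF(C(\overline{x})=0)$, as a free polynomial identity, or from the field axioms. Everything else is routine size/depth accounting, and is already present in the companion proof of \Cref{lemma: Translate semi-CNFs from circuit equations in Fixed Finite Fields}.
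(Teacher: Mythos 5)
Your proof is correct and follows essentially the same strategy as the paper's: substitute $\UBIT_j(C_g)$ for the unary-bit variables (and $C_g$ for the extension variables) into a known $\IPS^\alg$ derivation, observe that the algebraic extension axioms become instances of the Lagrange identity $x = \sum_j j\cdot\UBIT_j(x)$ (free polynomial identities), that the CNF part maps onto $\SCNF(C(\overline{x})=0)$, and that the Boolean axioms on unary bits become $\UBIT_j(C_g)^2 - \UBIT_j(C_g) = 0$, dischargeable from the field axioms $C_g^q - C_g = 0$, which are themselves obtained by induction on depth from $\{x^q-x=0\}$.

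The one stylistic difference is scope: you push the \emph{entire} $\ecnf(C(\overline{x})=0)\vdash C(\overline{x})=0$ derivation of \Cref{lemma: Translating between extended CNF formulas and circuit equations in Fixed finite fields} through $\rho$ as a black box, whereas the paper instead substitutes into the constant-size per-gate derivation of $\sum_j j\,x_{uj}\circ\sum_j j\,x_{vj}=\sum_j j\,x_{wj}$ from the plain CNF of each $\SLP$ line $u\circ v=w$, recovers the SLP equation $C_u\circ C_v=C_w$ via the Lagrange identity, and then chains these up to $C(\overline{x})=0$. The two decompositions are interchangeable; the paper's per-gate version makes the size and depth accounting a bit more transparent (each substituted line is handled locally), while your black-box version is cleaner in exposition but requires the explicit remark that $\IPS^\alg$ derivations are closed under circuit substitution with multiplicative size overhead $\poly(|C|)$ and additive depth overhead $O(\Delta)$, which you supply. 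One small redundancy worth noting: since $\SCNF(C(\overline{x})=0)$ is by \Cref{def: semi-CNF encoding} the substitution instance of all of $\cnf(C(\overline{x})=0)$, including its Boolean axioms, the $\rho$-images $\UBIT_j(C_g)^2-\UBIT_j(C_g)=0$ are already hypotheses and need not be re-derived from the field axioms; the paper carries out the same extra derivation, so this does not distinguish your argument from theirs.
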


    \begin{proof}
        From the CNF encoding of each $\SLP$ axiom $u\circ v =w$ and the Boolean axioms for each unary bit, we have
        \[
            \sum_{j=0}^{q-1} j\cdot x_{uj} \circ \sum_{j=0}^{q-1} j\cdot  x_{vj} = \sum_{j=0}^{q-1} j\cdot  x_{wj}
        \]
        in constant-depth polynomial-size $\IPS$. As we showed in the proof of \Cref{lemma: Translate semi-CNFs from circuit equations in Fixed Finite Fields}, the field axioms for all circuits that compute nodes and extension variables can be derived from the field axioms of the input variables, in constant-depth polynomial-size $\IPS$. We also showed that these derived field axioms can in turn also derive the Boolean axioms for all $\UBIT$ polynomials (of circuits that compute nodes and extension variables) in constant-depth polynomial-size $\IPS$. As a result, substituting each $x_{gj}$ above with $\UBIT_j(C_g)$ for $g \in \{u,v,w\}$ and $0 \leq j \leq q-1$, we get a constant-depth polynomial-size $\IPS$ derivation of
        \[
            \sum_{j=0}^{q-1} j\cdot  \UBIT_j(C_u) \circ \sum_{j=0}^{q-1} j \cdot \UBIT_j(C_v) = \sum_{j=0}^{q-1} j \cdot \UBIT_j(C_w)
        \]
        from the semi-CNF encoding of $u \circ v =w$ and the Boolean axioms for each $\UBIT$. Lastly, as mentioned in the proof of  \Cref{lemma: Translate semi-CNFs from circuit equations in Fixed Finite Fields}, in finite field $\mathbb{F}_q$ we get the following circuit equation for free in $\IPS$ (as it is a polynomial identity):
        \begin{equation*}
            x = \sum_{j=0}^{q-1} j \cdot \UBIT_j(x).
        \end{equation*}

        Therefore, we get the full $\SLP$ for the circuit equation $C(\overline{x})=0$, and consequently the circuit equation can easily be obtained from this $\SLP$.
    \end{proof}


\section*{Acknowledgment}
We are thankful to Rahul Santhanam for helpful discussions on the translation lemmas (\Cref{sec:translation-lemmas}) and for \citeallauthors{BLRS25} for letting us know about their independent work.

\renewcommand{\mathbf}[1]{#1} 

\small 
\printbibliography

@Book{CoxLittleOShea15,
	author = {Cox, David and Little, John and O'Shea, Donal},
	title = {Ideals, Varieties, and Algorithms: An Introduction to Computational Algebraic Geometry and Commutative Algebra},
	series = {Undergraduate Texts in Mathematics},
	edition = {Fourth},
	publisher = {Springer Cham},
	year = 2015,
	pages = {XVI, 646},
	doi = {10.1007/978-3-319-16721-3},
}

@InProceedings{ForbesShpilka13b,
	author = {Forbes, Michael A. and Shpilka, Amir},
	title = {Quasipolynomial-Time Identity Testing of Non-commutative and Read-Once Oblivious Algebraic Branching Programs},
	booktitle = {FOCS {2013}},
	year = 2013,
	pages = {243--252},
	doi = {10.1109/FOCS.2013.34},
	note = {ArXiv {1209.2408}},
	toupdate = {journal}
}

@InProceedings{Nis91,
	author = {Nisan, Noam},
	title = {Lower Bounds for Non-Commutative Computation},
	booktitle = {STOC {1991}},
	year = 1991,
	pages = {410-418},
	doi = {10.1145/103418.103462},
	missing = {journal}
}

@Misc{Saptharishi12,
	author = {Saptharishi, Ramprasad},
	year = 2012,
	note = {Personal communication to Forbes-Shpilka~\cite{ForbesShpilka13b}}
}

@Article{GP14,
	author = {Joshua A. Grochow and                Toniann Pitassi},
	title = {Circuit Complexity, Proof Complexity, and Polynomial Identity Testing:                The Ideal Proof System},
	journal = {J. {ACM}},
	volume = 65,
	number = 6,
	pages = {37:1--37:59},
	year = 2018,
	url = {https://doi.org/10.1145/3230742},
	doi = {10.1145/3230742},
	timestamp = {Thu, 09 May 2019 13:10:56 +0200},
	biburl = {https://dblp.org/rec/bib/journals/jacm/GrochowP18},
	bibsource = {dblp computer science bibliography, https://dblp.org}
}

@Article{SY10,
	author = {Shpilka, Amir and Yehudayoff, Amir},
	title = {Arithmetic Circuits: A survey of recent results and open questions},
	journal = {Foundations and Trends in Theoretical Computer Science},
	volume = 5,
	number = {3-4},
	year = 2010,
	pages = {207--388},
	doi = {10.1561/0400000039}
}

@Article{BeameIKPP96,
	author = {Beame, Paul and Impagliazzo, Russell and Kraj{\'{\i}}{\v{c}}ek, Jan and Pitassi, Toniann and Pudl{\'a}k, Pavel},
	title = {Lower bounds on {H}ilbert's {N}ullstellensatz and propositional proofs},
	journal = {Proc. London Math. Soc. (3)},
	fjournal = {Proceedings of the London Mathematical Society. Third Series},
	volume = 73,
	year = 1996,
	number = 1,
	pages = {1--26},
	doi = {10.1112/plms/s3-73.1.1},
	old-note = {\pFOCS{1994}},
	toupdate = {journal name}
}

@Article{GH03,
	author = {Grigoriev, Dima and Hirsch, Edward A.},
	title = {Algebraic proof systems over formulas},
	note = {Logic and complexity in computer science (Cr\'eteil, 2001)},
	journal = {Theoret. Comput. Sci.},
	fjournal = {Theoretical Computer Science},
	volume = 303,
	year = 2003,
	number = 1,
	pages = {83--102},
	issn = {0304-3975},
	coden = {TCSDI},
	mrclass = {03F20 (03D15 68Q15)},
	mrnumber = {1 990 743}
}

@InProceedings{Pit98,
	author = {Pitassi, Toniann},
	title = {Unsolvable systems of equations and proof complexity},
	booktitle = {Proceedings of the International Congress of Mathematicians, Vol. III (Berlin, 1998)},
	journal = {Doc. Math.},
	fjournal = {Documenta Mathematica},
	year = 1998,
	number = {Vol. III},
	pages = {451--460},
	issn = {1431-0643},
	mrclass = {03F20 (03B05 13P10)},
	mrreviewer = {}
}

@InProceedings{CEI96,
	author = {Clegg, Matthew and Edmonds, Jeffery and Impagliazzo, Russell},
	title = {Using the {G}roebner basis algorithm to find proofs of               unsatisfiability},
	booktitle = {Proceedings of the 28th Annual ACM Symposium on the               Theory of Computing (Philadelphia, PA, 1996)},
	pages = {174--183},
	publisher = {ACM},
	address = {New York},
	year = 1996,
	mrclass = {68T15 (68Q35 68Q40)},
	mrnumber = {MR1427512}
}

@Article{RT06,
	author = {Ran Raz and                Iddo Tzameret},
	title = {The Strength of Multilinear Proofs},
	journal = {Computational Complexity},
	volume = 17,
	number = 3,
	pages = {407--457},
	year = 2008,
	url = {http://dx.doi.org/10.1007/s00037-008-0246-0},
	doi = {10.1007/s00037-008-0246-0},
	timestamp = {Wed, 26 Nov 2008 13:19:01 +0100},
	biburl = {http://dblp.uni-trier.de/rec/bib/journals/cc/RazT08},
	bibsource = {dblp computer science bibliography, http://dblp.org}
}

@Article{RT07,
	author = {Ran Raz and                Iddo Tzameret},
	title = {Resolution over linear equations and multilinear proofs},
	journal = {Ann. Pure Appl. Logic},
	volume = 155,
	number = 3,
	pages = {194--224},
	year = 2008,
	url = {http://dx.doi.org/10.1016/j.apal.2008.04.001},
	doi = {10.1016/j.apal.2008.04.001},
	timestamp = {Wed, 29 Oct 2008 14:53:43 +0100},
	biburl = {http://dblp.uni-trier.de/rec/bib/journals/apal/RazT08},
	bibsource = {dblp computer science bibliography, http://dblp.org}
}

@InCollection{Pit97,
	author = {Pitassi, Toniann},
	title = {Algebraic propositional proof systems},
	booktitle = {Descriptive complexity and finite models (Princeton, NJ,               1996)},
	series = {DIMACS Ser. Discrete Math. Theoret. Comput. Sci.},
	volume = 31,
	pages = {215--244},
	publisher = {Amer. Math. Soc.},
	address = {Providence, RI},
	year = 1997,
	mrclass = {03F20 (03B05 03D15 68Q15)},
	mrnumber = {MR1451386 (98d:03060)},
	mrreviewer = {M. Tetruashvili}
}

@Article{Bus12,
	author = {Samuel Buss},
	title = {Towards \textbf{NP}-\textbf{P} via Proof Complexity and Search},
	journal = {Annals of Pure and Applied Logic},
	year = 2012,
	volume = 163,
	number = 7,
	pages = {906--917}
}

@InProceedings{LTW18,
	crossref = {DBLP:conf/coco/2015},
	author = {Fu Li and                Iddo Tzameret and                Zhengyu Wang},
	title = {Characterizing Propositional Proofs as Noncommutative Formulas},
	booktitle = {SIAM Journal on Computing},
	volume = 47,
	number = 4,
	pages = {1424--1462},
	year = 2018,
	note = {Full Version: \url{http://arxiv.org/abs/1412.8746}}
}

@Article{Tza11-I&C,
	author = {Iddo Tzameret},
	title = {Algebraic proofs over noncommutative formulas},
	journal = {Inf. Comput.},
	volume = 209,
	number = 10,
	pages = {1269--1292},
	year = 2011,
	url = {http://dx.doi.org/10.1016/j.ic.2011.07.004},
	doi = {10.1016/j.ic.2011.07.004},
	timestamp = {Thu, 25 Aug 2011 10:44:10 +0200},
	biburl = {http://dblp.uni-trier.de/rec/bib/journals/iandc/Tzameret11},
	bibsource = {dblp computer science bibliography, http://dblp.org}
}

@Article{PT16,
	title = {Algebraic Proof Complexity: Progress,             Frontiers and Challenges},
	author = {Tonnian Pitassi and Iddo Tzameret},
	editor = {Murawski, Andrzej},
	journal = {ACM SIGLOG News},
	year = 2016,
	volume = 3,
	number = 3,
	issue_date = {July 2016},
	publisher = {ACM},
	address = {New York, NY, USA}
}

@Misc{Sap17-survey,
	author = {Ramprasad Saptharishi},
	title = {A survey of lower bounds in arithmetic circuit complexity},
	year = {2016-2022},
	url = {\url{https://github.com/dasarpmar/lowerbounds-survey/releases}}
}

@InProceedings{IMP20,
	author = {Russell Impagliazzo and                Sasank Mouli and                Toniann Pitassi},
	editor = {Holger Hermanns and                Lijun Zhang and                Naoki Kobayashi and                Dale Miller},
	title = {The Surprising Power of Constant Depth Algebraic Proofs},
	booktitle = {{LICS} '20: 35th Annual {ACM/IEEE} Symposium on Logic in Computer                Science, Saarbr{\"{u}}cken, Germany, July 8-11, 2020},
	pages = {591--603},
	publisher = {{ACM}},
	year = 2020,
	url = {https://doi.org/10.1145/3373718.3394754},
	doi = {10.1145/3373718.3394754},
	timestamp = {Tue, 02 Jun 2020 17:19:31 +0200},
	biburl = {https://dblp.org/rec/conf/lics/ImpagliazzoMP20.bib},
	bibsource = {dblp computer science bibliography, https://dblp.org}
}

@InProceedings{AGHT20,
	author = {Yaroslav Alekseev and                Dima Grigoriev and                Edward A. Hirsch and                Iddo Tzameret},
	title = {Semi-algebraic proofs, {IPS} lower bounds, and the {\(\tau\)}-conjecture:                can a natural number be negative?},
	booktitle = {Proceedings of the 52nd Annual {ACM} {SIGACT} Symposium on Theory of Computing, {STOC}  2020},
	pages = {54--67},
	publisher = {{ACM}},
	year = 2020
}

@Article{FSTW21,
	author = {Michael A. Forbes and                Amir Shpilka and                Iddo Tzameret and                Avi Wigderson},
	title = {Proof Complexity Lower Bounds from Algebraic Circuit Complexity},
	journal = {Theory Comput.},
	volume = 17,
	pages = {1--88},
	year = 2021,
	url = {https://theoryofcomputing.org/articles/v017a010/},
	timestamp = {Tue, 07 Dec 2021 17:03:04 +0100},
	biburl = {https://dblp.org/rec/journals/toc/ForbesSTW21.bib},
	bibsource = {dblp computer science bibliography, https://dblp.org}
}

@InProceedings{Ale21,
	author = {Yaroslav Alekseev},
	editor = {Valentine Kabanets},
	title = {A Lower Bound for Polynomial Calculus with Extension Rule},
	booktitle = {36th Computational Complexity Conference, {CCC} 2021, July 20-23,                2021, Toronto, Ontario, Canada (Virtual Conference)},
	series = {LIPIcs},
	volume = 200,
	pages = {21:1--21:18},
	publisher = {Schloss Dagstuhl - Leibniz-Zentrum f{\"{u}}r Informatik},
	year = 2021,
	url = {https://doi.org/10.4230/LIPIcs.CCC.2021.21},
	doi = {10.4230/LIPIcs.CCC.2021.21},
	timestamp = {Thu, 13 Jan 2022 08:44:00 +0100},
	biburl = {https://dblp.org/rec/conf/coco/Alekseev21.bib},
	bibsource = {dblp computer science bibliography, https://dblp.org}
}

@InProceedings{AF22,
	author = {Robert Andrews and                Michael A. Forbes},
	title = {Ideals, Determinants, and Straightening: Proving and Using Lower Bounds                for Polynomial Ideals},
	booktitle = {54th Annual {ACM} {SIGACT} Symposium on Theory of Computing, {STOC} 2022},
	year = 2022,
	url = {https://arxiv.org/abs/2112.00792},
	eprinttype = {arXiv},
	eprint = {2112.00792},
	timestamp = {Thu, 09 Dec 2021 07:40:24 +0100}
}

@InProceedings{LST21,
	author = {Nutan Limaye and                Srikanth Srinivasan and                S{\'{e}}bastien Tavenas},
	title = {Superpolynomial Lower Bounds Against Low-Depth Algebraic Circuits},
	booktitle = {62nd {IEEE} Annual Symposium on Foundations of Computer Science, {FOCS}                2021, Denver, CO, USA, February 7-10, 2022},
	pages = {804--814},
	publisher = {{IEEE}},
	year = 2021,
	url = {https://doi.org/10.1109/FOCS52979.2021.00083},
	doi = {10.1109/FOCS52979.2021.00083},
	timestamp = {Wed, 09 Mar 2022 12:12:54 +0100},
	biburl = {https://dblp.org/rec/conf/focs/Limaye0T21.bib},
	bibsource = {dblp computer science bibliography, https://dblp.org}
}

@INPROCEEDINGS {GHT22,
author = {N. Govindasamy and T. Hakoniemi and I. Tzameret},
booktitle = {2022 IEEE 63rd Annual Symposium on Foundations of Computer Science (FOCS)},
title = {Simple Hard Instances for Low-Depth Algebraic Proofs},
year = {2022},
volume = {},
issn = {},
pages = {188-199},
keywords = {computer science;ip networks},
doi = {10.1109/FOCS54457.2022.00025},
url = {https://doi.ieeecomputersociety.org/10.1109/FOCS54457.2022.00025},
publisher = {IEEE Computer Society},
address = {Los Alamitos, CA, USA},
month = {nov}
}

@incollection {forbes24,
    AUTHOR = {Forbes, Michael A.},
     TITLE = {Low-depth algebraic circuit lower bounds over any field},
 BOOKTITLE = {39th {C}omputational {C}omplexity {C}onference},
    SERIES = {LIPIcs. Leibniz Int. Proc. Inform.},
    VOLUME = {300},
     PAGES = {Art. No. 31, 16},
 PUBLISHER = {Schloss Dagstuhl. Leibniz-Zent. Inform., Wadern},
      YEAR = {2024},
   MRCLASS = {68Q06},
  MRNUMBER = {4774710},
       DOI = {10.4230/lipics.ccc.2024.31},
       URL = {https://doi.org/10.4230/lipics.ccc.2024.31},
}

@inproceedings{IMP23_ccc,
author = {Impagliazzo, Russell and Mouli, Sasank and Pitassi, Toniann},
title = {Lower Bounds for Polynomial Calculus with Extension Variables over Finite Fields},
year = {2023},
isbn = {9783959772822},
publisher = {Schloss Dagstuhl--Leibniz-Zentrum fuer Informatik},
address = {Dagstuhl, DEU},
url = {https://doi.org/10.4230/LIPIcs.CCC.2023.7},
doi = {10.4230/LIPIcs.CCC.2023.7},
articleno = {7},
numpages = {24},
keywords = {proof complexity, algebraic proof systems, polynomial calculus, extension variables, AC0[p]-frege},
location = {Warwick, United Kingdom},
series = {CCC '23}
}

@incollection {DMM24_sat,
    AUTHOR = {Dahiya, Yogesh and Mahajan, Meena and Mouli, Sasank},
     TITLE = {New {L}ower {B}ounds for {P}olynomial {C}alculus over
              {N}on-{B}oolean {B}ases},
 BOOKTITLE = {27th {I}nternational {C}onference on {T}heory and
              {A}pplications of {S}atisfiability {T}esting ({SAT} 2024)},
    SERIES = {LIPIcs. Leibniz Int. Proc. Inform.},
    VOLUME = {305},
     PAGES = {Art. No. 10, 20},
 PUBLISHER = {Schloss Dagstuhl. Leibniz-Zent. Inform., Wadern},
      YEAR = {2024},
   MRCLASS = {Prelim},
  MRNUMBER = {4802094},
       DOI = {10.4230/lipics.sat.2024.10},
       URL = {https://doi.org/10.4230/lipics.sat.2024.10},
}

@inproceedings{HLT24,
author = {Hakoniemi, Tuomas and Limaye, Nutan and Tzameret, Iddo},
title = {Functional Lower Bounds in Algebraic Proofs: Symmetry, Lifting, and Barriers},
year = {2024},
isbn = {9798400703836},
publisher = {Association for Computing Machinery},
address = {New York, NY, USA},
url = {https://doi.org/10.1145/3618260.3649616},
doi = {10.1145/3618260.3649616},
booktitle = {Proceedings of the 56th Annual ACM Symposium on Theory of Computing},
pages = {1396–1404},
numpages = {9},
keywords = {IPS, Nullstellensatz, Proof complexity, algebraic circuits, algebraic complexity, algebraic proof systems, lower bounds},
location = {Vancouver, BC, Canada},
series = {STOC 2024}
}

@article{BDS24_journal,
author = {Bhargav, C.S. and Dutta, Sagnik and Saxena, Nitin},
title = {Improved Lower Bound, and Proof Barrier, for Constant Depth Algebraic Circuits},
year = {2024},
issue_date = {December 2024},
publisher = {Association for Computing Machinery},
address = {New York, NY, USA},
volume = {16},
number = {4},
issn = {1942-3454},
url = {https://doi.org/10.1145/3689957},
doi = {10.1145/3689957},
journal = {ACM Trans. Comput. Theory},
month = nov,
articleno = {23},
numpages = {22},
keywords = {Polynomials, lower bounds, algebraic circuits, proof barrier, fibonacci numbers}
}

@article {Lucas_Theorem_paper,
    AUTHOR = {Lucas, Edouard},
     TITLE = {Theorie des {F}onctions {N}umeriques {S}implement
              {P}eriodiques},
   JOURNAL = {Amer. J. Math.},
  FJOURNAL = {American Journal of Mathematics},
    VOLUME = {1},
      YEAR = {1878},
    NUMBER = {2},
     PAGES = {184--196},
      ISSN = {0002-9327},
   MRCLASS = {DML},
  MRNUMBER = {1505161},
       DOI = {10.2307/2369308},
       URL = {https://doi.org/10.2307/2369308},
}

@inproceedings{Sok20,
author = {Sokolov, Dmitry},
title = {(Semi)Algebraic proofs over {±1} variables},
year = {2020},
isbn = {9781450369794},
publisher = {Association for Computing Machinery},
address = {New York, NY, USA},
url = {https://doi.org/10.1145/3357713.3384288},
doi = {10.1145/3357713.3384288},
booktitle = {Proceedings of the 52nd Annual ACM SIGACT Symposium on Theory of Computing},
pages = {78–90},
numpages = {13},
keywords = {sum-of-squares, random formulas, proof complexity, polynomial calculus, lower bounds},
location = {Chicago, IL, USA},
series = {STOC 2020}
}

@InProceedings{AGK+23,
  author =	{Amireddy, Prashanth and Garg, Ankit and Kayal, Neeraj and Saha, Chandan and Thankey, Bhargav},
  title =	{{Low-Depth Arithmetic Circuit Lower Bounds: Bypassing Set-Multilinearization}},
  booktitle =	{50th International Colloquium on Automata, Languages, and Programming (ICALP 2023)},
  pages =	{12:1--12:20},
  series =	{Leibniz International Proceedings in Informatics (LIPIcs)},
  ISBN =	{978-3-95977-278-5},
  ISSN =	{1868-8969},
  year =	{2023},
  volume =	{261},
  editor =	{Etessami, Kousha and Feige, Uriel and Puppis, Gabriele},
  publisher =	{Schloss Dagstuhl -- Leibniz-Zentrum f{\"u}r Informatik},
  address =	{Dagstuhl, Germany},
  URL =		{https://drops.dagstuhl.de/entities/document/10.4230/LIPIcs.ICALP.2023.12},
  URN =		{urn:nbn:de:0030-drops-180642},
  doi =		{10.4230/LIPIcs.ICALP.2023.12},
  annote =	{Keywords: arithmetic circuits, low-depth circuits, lower bounds, shifted partials}
}

@inproceedings{BLRS25,
  author    = {Amik Raj Behera and Nutan Limaye and Varun Ramanathan and
               Srikanth Srinivasan},
  title     = {New Bounds for the Ideal Proof System in Positive Characteristic},
  booktitle = {52nd International Colloquium on Automata,
               Languages, and Programming (ICALP 2025)},
  address   = {Aarhus, Denmark},
  month     = {July},
  year      = {2025},
  note      = {To appear}
  % series  = {Lecture Notes in Computer Science}, % add when Springer volume known
  % volume  = {...},
  % pages   = {...},
  % doi     = {...},
}

@article{ST25,
author = {Santhanam, Rahul and Tzameret, Iddo},
title = {Iterated Lower Bound Formulas: A Diagonalization-Based Approach to Proof Complexity},
journal = {SIAM Journal on Computing},
volume = {0},
number = {0},
pages = {STOC21-313-STOC21-349},
year = {2025},
doi = {10.1137/21M1447519},
URL = { 
        https://doi.org/10.1137/21M1447519
},
eprint = { 
        https://doi.org/10.1137/21M1447519
},
    abstract = { Abstract. We propose a diagonalization-based approach to several important questions in proof complexity. We illustrate this approach in the context of the algebraic proof system IPS (Ideal Proof System) and in the context of propositional proof systems more generally. We give an explicit sequence of formulas in conjuctive normal form (CNF) \(\{\phi\_n\}\) such that \(\mathsf{VNP} \neq \mathsf{VP}\) iff there are no polynomial-size IPS proofs for the formulas \(\phi\_n\). This provides a natural equivalence between proof size lower bounds and standard algebraic complexity lower bounds. Our proof of this fact uses the implication from IPS lower bounds to algebraic complexity lower bounds due to Grochow and Pitassi together with a diagonalization argument: the formulas \(\phi\_n\) themselves assert the nonexistence of short IPS proofs for formulas encoding \(\mathsf{VNP} \neq \mathsf{VP}\) at a different input length. Our result also has meta-mathematical implications: it gives evidence for the difficulty of proving strong lower bounds for IPS within IPS. More generally, for any strong enough propositional proof system \(R\) we propose a new explicit hard candidate, the iterated \(R\)-lower bound formulas, which inductively asserts the nonexistence of short \(R\) proofs for formulas encoding this same statement at a different input length. We show that these formulas are unconditionally hard for resolution following recent results of Atserias and Müller and of Garlik. We further give evidence in favor of this hypothesis for other proof systems. }
}

@misc{Gro23,
  author = {Joshua A. Grochow},
  title = {Polynomial {Identity} {Testing} and the {Ideal} {Proof} {System}: 
           {PIT} is in {NP} if and only if {IPS} can be p-simulated by 
           a {Cook}--{Reckhow} proof system},
  howpublished = {arXiv:2306.02184 [cs.CC]},
  year =	 {2023},
}

\end{document}